\documentclass[draft]{article}
\usepackage{amsmath,amsfonts,latexsym, amssymb,   mathrsfs, eucal}

\usepackage{color}

\oddsidemargin=0in
\evensidemargin=0in
\textwidth=6.5in

\newcommand{\fa}{{\frak a}} 

\newcommand{\ttau}{\vartheta}

\newcommand{\fn}{{\mathfrak n}}

\newcommand{\wt}{\widetilde}

\newcommand{\beqa}{\begin{eqnarray}}
\newcommand{\eeqa}{\end{eqnarray}}
\newcommand{\e}{\varepsilon}
\newcommand{\eps}{\varepsilon}

\newcommand{\rd}{{\rm d}}
\newcommand{\bR}{{\mathbb R}}
\newcommand{\bC}{{\mathbb C}}

\newcommand{\bZ}{{\mathbb Z}}
\newcommand{\non}{\nonumber}
\newcommand{\wH}{{K}}

\newcommand{\abs}[1]{\left| #1 \right|}

\newcommand{\tr}{\mbox{Tr\,}}

\newcommand{\unu}{{\underline {\nu}}}
\newcommand{\umu}{{\underline {\mu}}}

\renewcommand{\Im}{\mathfrak Im}

\newcommand{\ba}{{\bf{a}}}

\newcommand{\bx}{{\bf{x}}}

\newcommand{\bu}{{\bf{u}}}
\newcommand{\bv}{{\bf{v}}}
\newcommand{\bw}{{\bf{w}}}

\newcommand{\bq}{{\bf {q}}}

\newcommand{\bh}{{\bf{h}}}
\newcommand{\bn}{{\bf{n}}}

\newcommand{\fq}{{\frak q}}

\newcommand{\sa}{{[\alpha ]}}

\newcommand{\bT}{{\T}}

\newcommand{\bnu}{\mbox{\boldmath $\nu$}}

\newcommand{\mg}{{m_N}}

\newcommand{\al}{\alpha}

\newcommand{\be}{\begin{equation}}
\newcommand{\ee}{\end{equation}}

\newcommand{\barv}{ {[ v ]}}
\newcommand{\barZ}{ {[ Z ]}}

\newcommand{\la}{\lambda}

\newcommand{\cL}{{\mathscr L}}

\newcommand{\cG}{{\mathcal G}}

\newcommand{\cM}{{\mathcal M}}
\newcommand{\cN}{{\mathcal N}}
\newcommand{\cH}{{\mathcal H}}

\newcommand{\ov}{\overline}

\newcommand{\re}{{\mathfrak{Re} \, }}
\newcommand{\im}{{\mathfrak{Im} \, }}
\newcommand{\E}{{\mathbb E }}
\newcommand{\R}{{\mathbb R }}
\newcommand{\N}{{\mathbb N}}

\newcommand{\Ci}{{ C_{inf}}}
\newcommand{\Cs}{{ C_{sup}}}

\renewcommand{\P}{{\mathbb P}}

\newcommand{\C}{{\mathbb C}}

\newcommand{\T}{\mathbb T}

\newcommand{\bU}{ {\bf  U}}
\newcommand{\bS}{ {\bf  S}}

\newtheorem{theorem}{Theorem}
\newtheorem{corollary}[theorem]{Corollary}
\newtheorem{lemma}[theorem]{Lemma}
\newtheorem{proposition}[theorem]{Proposition}

\newtheorem{definition}{Definition}
\newcommand{\qed}{\hfill\fbox{}\par\vspace{0.3mm}}
\newenvironment{proof}{{\bf Proof.}} {\hfill\qed}

% NUMBERING SCHEME
\numberwithin{equation}{section}
\numberwithin{theorem}{section}
\numberwithin{definition}{section}
%\numberwithin{proposition}{section}
%\numberwithin{corollary}{section}
%\numberwithin{lemma}{section}
\numberwithin{remark}{section}
% set the depth for the table of contents (0-2)

{\bf \title{Rigidity of Eigenvalues  of Generalized  Wigner Matrices}}

\author{
L\'aszl\'o Erd\H os${}^1$\thanks{Partially supported
by SFB-TR 12 Grant of the German Research Council}, 
Horng-Tzer Yau${}^2$\thanks{Partially supported
by NSF grants DMS-0757425, 0804279}  \; and Jun Yin${}^2$\thanks{Partially supported by NSF grants DMS-1001655} 
 \\\\
Institute of Mathematics, University of Munich, \\
Theresienstr. 39, D-80333 Munich, Germany \\ lerdos@math.lmu.de ${}^1$ \\ \\
Department of Mathematics, Harvard University\\
Cambridge MA 02138, USA \\  htyau@math.harvard.edu,  jyin@math.harvard.edu ${}^2$ \\ 
\\}

\begin{document}
\date{Oct 25, 2011}

\maketitle

\begin{abstract}

Consider  $N\times N$ hermitian or symmetric random matrices $H$
with independent entries, 
where the  distribution  of the $(i,j)$ matrix element is given by
the probability  measure $\nu_{ij}$  with zero expectation
and with variance $\sigma_{ij}^2$.  
 We assume that the variances satisfy the normalization
condition  $\sum_{i} \sigma^2_{ij} = 1$ for all $j$ and
that  there is a  positive constant $c$ such that 
$c\le N \sigma_{ij}^2 \le c^{-1}$. We further assume that
the probability distributions $\nu_{ij}$ have a uniform subexponential decay.
We prove that the Stieltjes transform of the empirical 
eigenvalue distribution of $H$ 
is given by the Wigner semicircle law uniformly up to the edges of the spectrum
 with  an error of order $ (N \eta)^{-1}$ 
where $\eta$ is  the imaginary part of the spectral parameter 
in the Stieltjes transform. There are three  corollaries to this strong local semicircle law: 
(1) Rigidity of eigenvalues:  If  $\gamma_j =\gamma_{j,N}$ denotes 
 the {\it classical location} of the $j$-th eigenvalue 
under the semicircle law ordered in increasing order, then 
 the $j$-th eigenvalue $\lambda_j$ is 
close to  $\gamma_j$ in 
the sense that for some positive 
constants $C, c$
\[
\P \Big ( \exists \,  j : \; |\lambda_j-\gamma_j| 
\ge (\log N)^{C\log\log N}
 \Big [ \min \big ( \, j ,  N-j+1 \,  \big) \Big  ]^{-1/3}
    N^{-2/3} \Big ) \le C\exp{\big[-(\log N)^{c\log\log N} \big]}
\]
for $N$ large enough.  (2)  The proof of 
{\it Dyson's conjecture}  \cite{Dy} which states that 
the time scale of the Dyson Brownian motion to  reach local 
equilibrium is of order $N^{-1}$  up to logarithmic corrections.
(3)  The edge universality holds in the sense that the probability
 distributions of the largest (and the smallest) 
eigenvalues of two generalized Wigner  ensembles are the same in the
 large $N$ limit  provided that the second moments of the two ensembles are identical.

\end{abstract}

{\bf AMS Subject Classification (2010):} 15B52, 82B44

\medskip

\medskip

{\it Keywords:}  Random matrix, Local semicircle law,
Tracy-Widom distribution, Dyson Brownian motion.

\medskip

\newpage 

\section{Introduction}

Random matrices were introduced by E.  Wigner to model the excitation spectrum of  large nuclei. 
The central idea is based on the observation that
the eigenvalue gap distribution   for a large complicated system is universal
 in the sense that it depends only on the symmetry class
 of the physical system but not  on other  detailed structures.  As a  special 
 case of this general belief, 
 the eigenvalue gap distribution of random matrices should be independent of the
 probability distributions of the ensembles and thus is given by the classical Gaussian ensembles.  
Besides the eigenvalue gap distribution, similar predictions 
 hold also for short distance correlation functions of the eigenvalues. Since the gap distribution 
can be expressed in terms 
 of correlation functions, mathematical analysis is usually performed on correlation functions.
 {F}rom now on, we refer to {\it  universality} for the fact  that the  short distance behavior of
the eigenvalue correlation functions of a random matrix ensemble are the same as those of 
 the Gaussian ensemble of the same symmetry class
(Gaussian unitary, orthogonal or symplectic ensemble, i.e., GUE, GOE,
GSE).

 The universality question can be roughly divided into the bulk universality 
in the interior of the spectrum and 
 the edge universality near the spectral edges.   
 Over the past two decades,  spectacular  progress on  bulk and  edge universality   
 was made for invariant ensembles, see, e.g., 
\cite{BI, DKMVZ1, DKMVZ2, PS} and \cite{AGZ, De1, De2} for a review.   For non-invariant ensembles
with i.i.d. matrix elements ({\it Standard Wigner ensembles})  edge universality can be  proved 
via the moment method and its various  generalizations, see, e.g.,  \cite{SS, Sosh, So1}. 
In a striking contrast, the only rigorous  results  for the bulk universality of 
non-invariant Wigner ensembles
were the work by Johansson \cite{J} and subsequent improvements \cite{BP, J1} on
  {\it  Gaussian divisible Hermitian ensembles}, i.e., Hermitian ensembles   of the form
\be
H_s=  H_0+ s V,
\label{HaV}
\ee
where $H_0$  is a Wigner matrix, 
$V$ is an independent standard GUE matrix
and $s$ is a fixed positive constant independent of $N$. The Hermitian assumption is essential 
since  the key formula used  in \cite {J} 
and the earlier work \cite{BH} is valid only for {\it Hermitian ensembles}.

The bulk universality, however, was expected to hold  for general classes of  Wigner matrices, see  
Mehta's book \cite{M}, {Conjectures 1.2.1 and 1.2.2 on page 7.
 We will refer to these two conjectures as the Wigner-Dyson-Gaudin-Mehta conjecture due to their pioneering work.} 
Until a few years ago this conjecture
remained unsolved,  mainly due to the fact that  all { existing methods on local eigenvalue statistics 
depended} on explicit formulas which were not available for Wigner matrices.  
In a series of papers \cite{ESY1, ESY2, ESY3,  EPRSY, ESY4,  ESYY, EYY, EYY2}, we  developed a new approach 
to understand local eigenvalue statistics.  
This approach,  in particular,  led to the first proof  \cite{ EPRSY} of the  Wigner-Dyson-Gaudin-Mehta conjecture 
for Hermitian Wigner matrices with { smooth distributions for the matrix elements.} 
We now give a brief summary of this approach which motivates the current paper. 

The first step  was to derive a 
{\it local semicircle law}, a precise estimate of the local eigenvalue 
density,   down to energy scales containing   around $ N^\e$ eigenvalues. 
In fact, we also obtain precise bounds on the matrix elements of
the Green function.
The second step is a general approach for the universality of Gaussian divisible ensembles
by embedding the matrix \eqref{HaV} into a stochastic flow of matrices
and use that the eigenvalues evolve according to a distinguished coupled
system of stochastic differential equations, called the Dyson Brownian motion \cite{Dy}.
 The central idea is to estimate the time to local equilibrium  for 
the Dyson Brownian motion 
with the introduction of a new stochastic flow, 
the {\it local relaxation flow}, which locally behaves like a Dyson Brownian motion
but has a faster decay to global equilibrium.  
This approach \cite{ESY4, ESYY}
entirely eliminates  the usage of explicit formulas and  it  provides 
 a  unified proof for the  universality 
of Gaussian divisible ensembles for all symmetry classes.
Furthermore,  it also gives  a conceptual interpretation that the origin of the  universality 
is due to the local ergodicity of  Dyson Brownian motion.

More precisely, we will use a slightly different version  of \eqref{HaV}, namely
\be
  H_t = e^{-t/2}H_0+(1-e^{-t})^{1/2} V,
\label{HaVOU}
\ee
to ensure that the variance of $H_t$ remains independent of $t$.
Denote by $\lambda_j$ the $j$-th eigenvalue of the random matrix $H_t$, labelled in
increasing order, $\la_1\le \la_2 \le \ldots \le \la_N$,
and $\gamma_j$ 
the {\it classical location} of the $j$-th eigenvalue, i.e., $\gamma_j$ is defined by 
\be
 N \int_{-\infty}^{\gamma_j} \varrho_{sc}(x) \rd x = j, \qquad 1\leq j\le N,  \quad
\label{def:gamma}
\ee
where $\varrho_{sc}(x)=\frac{1}{2\pi}\sqrt{(4-x^2)_+}$
 is the semicircle law. Our main result on the universality for the Dyson Brownian motion 
states that, roughly speaking,  the short distance correlation functions for $H_t$ 
at the  time $t \sim N^{-2\fa}$ and $H_{t = \infty}$ are identical in weak sense provided
 that  the following main condition holds: 

\medskip

{\bf Assumption III.} There exists an $\fa>0$ such that
\be
 \sup_{ t \ge N^{-2\fa}}  \frac{1}{N} \E_t \sum_{j=1}^N(\la_j-\gamma_j)^2
  \le CN^{-1-2\fa}
\label{assum3}
\ee
with a constant $C$ uniformly in $N$. Here $\E_t$ is the expectation w.r.t. Dyson Brownian motion at the time $t$. 
The condition \eqref{assum3} has been derived from a sufficiently strong
version of the local semicircle law.

Once the universality for the Gaussian divisible ensemble is established, the last
 step is to approximate 
all matrix ensembles by  Gaussian divisible ones.  This step can be done via a 
{\it reverse heat flow} 
argument \cite{EPRSY, ESYY}  for ensembles with  smooth probability distributions 
or more generally via  the {\it Green function comparison theorem} \cite{EYY} 
which compares the distributions of eigenvalues of two ensembles around a fixed energy.
 The key input
for the latter approach was to prove a-priori  estimates on the 
 matrix elements of the Green function. These estimates  have been
obtained together with the local semicircle law.

To summarize, 
our approach to universality consists of  the following three  main steps: 
{\it Step 1.  Local semicircle law.  Step 2.   Universality for Gaussian divisible ensembles. 
Step 3.  Approximation by Gaussian divisible ensembles. }
Both Step 2 and 3 rely on a strong  local semicircle law from the Step 1.

Shortly after the preprint \cite{EPRSY}  appeared, 
another  method for the universality  was posted  by Tao and Vu \cite{TV}. 
 This method contains similar three ingredients as in  \cite{EPRSY}; 
their key result,  prior to the 
Green function comparison theorem appeared in \cite{EYY},
states that the
probability distributions of the $j$-th eigenvalue of two ensembles  for a fixed label 
 $j$ in the bulk 
are identical as $N\to\infty$ 
 provided that the first four moments of the matrix elements
 of the  two ensembles  are identical.  
This result also implies the universality of the  correlation functions for  Hermitian
Wigner ensembles \cite{TV} 
by combining it with 
the Gaussian divisible results of \cite{J, BP} for the Step 2.  For symmetric ensembles, 
 it requires the first four moments 
matching those of GOE. 
As in our approach,  a  key analytic input for \cite{TV} is the local semicircle law established in \cite{ESY2}.
The bulk universality  in the case of symmetric matrices in the  generality  as stated in Mehta's book \cite{M}  (in particular, 
without the assumption to match four moments),   was proved in  \cite{ESY4, EYY2}. The key input is to link 
universality to local ergodicity of Dyson Brownian motion,   reviewed in the previous paragraphs.

Due to the fundamental role
of the local semicircle law,   its error estimates  were improved many times since its
 first proof in \cite{ESY2}. Furthermore, it was extended to sample 
covariance ensembles \cite{ESYY} 
and  {\it generalized Wigner ensembles} \cite{EYY}  whose matrix elements are allowed to
 have different  but comparable variances.  
The best existing error estimates for local semicircle law of generalized Wigner ensembles, given 
in \cite{EYY2},  are already  almost optimal   in the bulk of the spectrum, but not near 
the edges.   
In this paper, we will prove a {\it strong local semicircle law}, Theorem \ref{45-1},  which, 
up to  $\log N$ factors, gives   optimal error estimates 
 everywhere in the spectrum.   There are four important consequences of this result:

\begin{enumerate}
\item  It implies that  
Assumption III  holds with the right hand side of \eqref{assum3} given by
 $N^{-2} (\log N)^{C\log\log N}$ for some constant $C$,
i.e., $\fa$ can be chosen arbitrary close to 1/2. Thus  the Dyson 
Brownian motion reaches local equilibrium at 
 $t \sim N^{-1 + \delta}$ for arbitrary 
small $\delta$. Up to the factor $N^{\delta}$,  this  is optimal.  Since  the time to the global 
equilibrium for the Dyson Brownian motion is order one, we have thus
 established {\it Dyson's conjecture} 
 \cite{Dy} that 
the Dyson Brownian motion reaches equilibrium in two well-separated 
stages with time scales of order one and $N^{-1}$.
As a historical  note, we mention that  
Dyson had obtained the two time scales via heuristic  physical argument and commented that 
a rigorous proof of his prediction is lacking.  Furthermore, the notion of local equilibrium was used by Dyson in a very vague 
sense, see \cite{ESY4} for a more detailed discussion.

\item  It  implies certain explicit error estimates 
for the universality of correlation functions in short scales. 

\item  It implies the {\it rigidity of eigenvalues}  in the sense that
\be\label{71.1}
\P \Bigg\{ \exists j\; : \; |\lambda_j-\gamma_j| 
\ge (\log N)^{C\log\log N}
 \Big [ \min \big ( \, j ,  N-j+1 \,  \big) \Big  ]^{-1/3}   N^{-2/3} \Bigg\}
 \le  C\exp{\big[-(\log N)^{c\log \log N} \big]}
\ee
for some positive constants $C$ and  $c$.
In other words, the eigenvalue is near its classical location 
with an error of at most $N^{-1} (\log N)^{C\log\log N}$
for generalized Wigner matrices in the bulk and the estimate deteriorates
 by a factor $ \big (\frac N {j} \big )^{1/3} $ near the edge  $j \ll N$.

 \item It implies the {\it edge universality} in the sense that the probability 
distributions of the largest (and the smallest)  eigenvalues of two generalized 
Wigner  ensembles are equal  in the large $N$ limit  provided that the second
 moments  of the two ensembles are identical. We recall the standard assumption
 that the first moments of the matrix elements are always  zero 
for all generalized Wigner ensembles. 
The comparison between our  edge universality  theorem and the previous results will be given 
at the end of Section 2 after the statement of Theorem \ref{twthm}.

\end{enumerate}

It is well-known that  the gaps between extremal eigenvalues  and their fluctuations 
 are of order $N^{-2/3}$. 
Thus the edge deterioration factor in \eqref{71.1} is the natural interpolation between $N^{-1}$ 
in the bulk and  $N^{-2/3}$ on the edges.   
The surprising feature 
of the rigidity estimate is that even if one eigenvalue is at a slightly 
 wrong location, the probability is already  extremely small.  
We remark that, without  the 
$(\log N)^{C\log \log N}$ factor, the rigidity estimate \eqref{71.1} would be wrong 
since, at  least for the classical   GUE or GOE ensembles,
 the eigenvalues are known to fluctuate on a scale $\sqrt{\log N}/N$, see \cite{GEG1, GEG2}.
For these ensembles,  the distribution of $\lambda_j - \gamma_j$ is Gaussian in the bulk.
However, 
the rigidity estimate  \eqref{71.1} in this strong probabilistic
form was not available even for the classical 
Gaussian ensembles.

\section{Main results}

Let $H=(h_{ij})_{i,j=1}^N$  be an $N\times N$  hermitian or symmetric matrix  where the
 matrix elements $h_{ij}=\ov{h}_{ji}$, $ i \le j$, are independent 
random variables given by a probability measure $\nu_{ij}$ 
with mean zero and variance $\sigma_{ij}^2\ge 0$:
\be
  \E \, h_{ij} =0, \qquad \sigma_{ij}^2:= \E |h_{ij}|^2.
\label{aver}
\ee
The distribution $\nu_{ij}$ and its variance $\sigma_{ij}^2$ may depend on $N$,
 but we omit this fact in the notation. 
Denote by $B:=\{ \sigma^2_{ij}\}_{i,j=1}^N$ the matrix of variances. 
The following  assumptions on $B$ are made throughout the paper:
\begin{description}
\item[(A)] For any $j$ fixed
\be
   \sum_{i=1}^N \sigma^2_{ij} = 1 \, .
\label{sum}
\ee
Thus $B$ is symmetric and double stochastic and, in particular, it satisfies
$-1\leq B\leq 1$.

\item[(B)]  We assume that there exists two positive constants, $\delta_-$ and
$\delta_+$, independent of $N$, such that
 \be\label{speccond}
\mbox{ 1 is a simple eigenvalue of $B$ and $\mbox{Spec}(B)
\subset [-1+\delta_-, 1-\delta_+]\cup\{1\}$.}
\ee

\item[(C)]   There is a constant $C_0$, independent of $N$, such that 
\be\label{1.3}
\max_{ij}\{\sigma_{ij}^2\}\leq \frac{C_0}{ N}.
\ee

\end{description} 

For the orientation of the reader, we  mention two special cases  that provided the main motivation
for our work.

\bigskip

\noindent {\it Example 1. Generalized Wigner matrix.}  Define $\Ci(N)$ and $\Cs(N)$ by
\be\label{defCiCs}
      \Ci(N):= \inf_{i,j}\{N\sigma^2_{ij}\}\leq \sup_{ i,j}\{N\sigma^2_{ij}\}=:\Cs(N).
\ee
The ensemble is called  {\it generalized Wigner ensemble} provided that 
\be\label{VV}
	0<C_-\le \Ci(N)\leq \Cs (N) \le C_+ <\infty,
\ee
for some $C_\pm$ independent of $N$. 
In this case, one can easily prove that $1$ is a simple eigenvalue of $B$
and \eqref{speccond} holds with some
\be\label{de-de+2}
	\delta_\pm\ge C_-,
\ee 
i.e., apart from the trivial eigenvalue, the spectrum of $B$ is separated away $\pm 1$
by  positive constants that are independent of $N$. 
The special case  $\Ci=\Cs=1$ reduces  to the standard Wigner matrices.

\bigskip

\noindent {\it Example 2. Certain band matrices with bandwidth of order $N$.}  
Band matrices are characterized by the property that $\sigma_{ij}^2$ is
a function of $|i-j|$ on scale $W$, which is called the bandwidth. 
 More precisely, the variances of a band matrix with bandwidth $1\le W\le N/2$ are given by 
\be\label{BM}
   \sigma^2_{ij} = W^{-1} f\Big(\frac{ [i-j]_N}{W}\Big),
\ee
where  $f:\bR\to \bR_+$ is a bounded nonnegative symmetric function with 
$\int f =1$ 
and we defined $[i-j]_N\in\bZ$  by the property
that  $[i-j]_N\equiv i-j \; \mbox{mod}\,\,\, N$ and
$-\frac{1}{2}N < [i-j]_N \le\frac{1}{2}N $.
We often consider the case when $W=W(N)$, i.e. the bandwidth is a function of $N$.
The condition {\bf (A)} holds only asymptotically as $W(N)\to\infty$ 
but it can be remedied by an irrelevant rescaling. If the bandwidth is
comparable with $N$, then we also have to assume that $f(x)$ is supported in $|x|\le N/(2W)$.

It is easy to see that many band matrices satisfy the spectral
 assumption \eqref{speccond}. The lower spectral bound, $-1+\delta_-\le B$ with  some
$\delta_->0$ depending only on $f$, holds for any sufficiently large $W$, see Appendix A
of \cite{ESYY}.   The parameter $\delta_+$ in the upper spectral bound typically 
behaves as of order $(W/N)^2$. Thus, for the condition {\bf (B)} to hold,
we need to assume that the bandwidth is comparable with $N$, i.e.,
it satisfies $W\ge cN$ with some positive constant $c$.
The same assumption  also guarantees that condition {\bf (C)}  holds.

We remark that the special case $W=N/2$ and $f(x)\ge c>0$ for $|x|\le 1$
was already covered by Example 1, but Example 2 allows more general band matrices that
 may have vanishing variances.  For example, with the choice of 
 $f(x) = \frac{1}{2}\cdot {\bf 1}(|x|\le 1)$,
the ensemble  with variances
\be
\sigma_{ij}^2=(N/2)^{-1}{\bf 1}\left([i-j]_N\leq N/4 \right)
\ee
is  a band matrix with bandwidth $W=N/4$.

\bigskip

Define  the Green function of $H$ by 
\be\label{green}
G_{ij}(z) =\left(\frac1{H-z}\right)_{ij}, \qquad z=E+i\eta, \qquad E\in \bR, \quad \eta>0.
\ee
The Stieltjes transform of the empirical 
eigenvalue distribution of  $H $ is given by  
\be
 m(z)=\mg (z): =   \frac{1}{N}    \sum_j G_{jj}(z) = \frac{1}{N} \tr\, \frac{1}{H-z}\, .\,\,\, 
\label{mNdef}
\ee
Define $m_{sc} (z)$ as the unique solution of
\be\label{defmsc} 
m_{sc} (z)  + \frac{1}{z+m_{sc} (z)} = 0,
\ee
with positive imaginary part for all $z$ with $\im z > 0$, i.e.,
\be\label{temp2.8}
m_{sc}(z)=\frac{-z+\sqrt{z^2-4}}{2},
\ee
where the square root function is chosen with a branch cut in the segment
$[-2,2]$ so that asymptotically $\sqrt{z^2-4}\sim z$ at infinity.
This guarantees that the imaginary part of $m_{sc}$ is non-negative for   $\eta=\im  z > 0$ and 
in the $\eta\to 0$ limit  it is the 
Wigner semicircle distribution
\be
   \varrho_{sc}(E) : = \lim_{\eta\to 0+0}\frac{1}{\pi}\im \, m_{sc}(E+i\eta)
 = \frac{1}{2\pi}
  \sqrt{ (4-E^2)_+}.
\label{def:sc}
\ee
 The Wigner semicircle law \cite{W} states that  $m_N(z) \to m_{sc} (z) $
for any fixed $z$, i.e.,
 provided that $\eta=\im z>0$ is independent of $N$. 
 Let $z=E+i\eta$ $(\eta>0)$ and denote   $\kappa:= ||E|-2|$
the distance of $E$ to the spectral edges  $\pm 2$.
 We have proved \cite{EYY2} a  local version of this result for generalized Wigner
matrices in the form of  the following probability estimate: 
\be\label{mainlsresultfake}
\P\left(|\mg(z)-m_{sc}(z)| \geq
 \frac{N^\e}{{N\eta}\, \kappa}\right)\leq \frac{C(\e, K)}{N^K}
\ee
that holds for  any fixed positive constants $\e$ and  $K$ and for
 any  $z= E+i\eta$ such that $ |E|\le 10, \; N\eta\kappa^{3/2}\ge N^\e$.
Note that this estimate deteriorates near the spectral edges as $\kappa\ll 1$. 

\medskip

In this paper we prove  the following 
local semicircle law that provides essentially the optimal estimate
{\it uniformly} in  $E = \re z$. We will estimate not only 
the deviation of $m(z)$ from $m_{sc}(z)$,  but also
the deviation of each diagonal matrix element of the resolvent, $G_{kk}(z)$,
from $m_{sc}(z)$. Moreover, we show that the off-diagonal
elements of the resolvent are small.

Let
$$
  v_k := G_{kk}-m_{sc}, \qquad m:=\frac{1}{N}\sum_{k=1}^N G_{kk}, \qquad  \barv:=\frac{1}{N}
  \sum_{k=1}^N v_k = m-m_{sc} . 
$$
Our goal is to  estimate the following  quantities
\be\label{defLambda}
  \Lambda_d:=\max_k |v_k| = \max_k |G_{kk}-m_{sc}|, \qquad
 \Lambda_o:=\max_{k\ne \ell} |G_{k\ell}|, \qquad \Lambda:=|m-m_{sc}|,
\ee
where the subscripts refer to ``diagonal'' and ``off-diagonal'' matrix elements. 
All these quantities depend on the spectral parameter $z$ and on $N$ but for simplicity we often omit
this fact from the notation.

\begin{theorem}[Strong local semicircle law] \label{45-1} 
Let $H=(h_{ij})$ be a hermitian or symmetric $N\times N$ random matrix, $N\ge 3$,
with $\E\, h_{ij}=0$, $1\leq i,j\leq N$,  and assume that the variances $\sigma_{ij}^2$ 
satisfy Assumptions {\bf (A)}, {\bf (B)}, {\bf (C)}, i.e. \eqref{sum}, \eqref{speccond} and 
 \eqref{1.3}. 
 Suppose that the distributions of the matrix elements have a uniformly 
  subexponential decay
in the sense that  there exists a constant  $\ttau>0$, independent 
of $N$, such that for any $x\ge 1$ and $1\le i,j \le N$ we have
\be\label{subexp}
\P(|h_{ij}|> x \sigma_{ij})\le \ttau^{-1} \exp{ \big( - x^\ttau \big)} .
\ee
 Then  there exist positive constants   $A_0 > 1$,  $ C, c$ and $\phi < 1$  
depending only on $\ttau$, on $\delta_\pm$ from Assumption {\bf (B)} and
on $C_0$ from Assumption {\bf (C)}, such that
 for all  $L$ with 
\be
A_0\log\log N\le L\le \frac{\log (10 N)}{10\log\log N}
\label{Lbound}
\ee
 the following estimates hold for any sufficiently large $N\ge N_0(\ttau, \delta_\pm, C_0)$:

(i) The Stieltjes transform of the empirical 
eigenvalue distribution of  $H $  satisfies 
\be\label{Lambdafinal} 
\P \Big ( \bigcup_{z\in \bS_L} \Big\{ \Lambda(z) 
 \ge \frac{(\log N)^{4L}}{N\eta} \Big\}   \Big )\le  C\exp{\big[-c(\log N)^{\phi L} \big]}, 
\ee
where
\be
{\bf  S}:={\bf  S}_L=\Big\{ z=E+i\eta\; : \;
 |E|\leq 5,  \quad  N^{-1}(\log N)^{10L} < \eta \le  10  \Big\}.
\label{defS}
\ee

(ii) The individual  matrix elements of
the Green function  satisfy that 
\be\label{Lambdaodfinal}
\P \left  ( \bigcup_{z\in \bS_L} \left\{ \Lambda_d(z)  + \Lambda_o (z) \geq 
(\log N)^{4L} \sqrt{\frac{\im m_{sc}(z)  }{N\eta}} + \frac{(\log N)^{4L}}{N\eta}
   \right\}    \right)
\leq  C\exp{\big[-c(\log N)^{\phi L} \big]}.
\ee

(iii)  The largest eigenvalue of $H$ is bounded by $2+N^{-2/3}(\log N)^{ 9L} $ in the sense that 
\be\label{443}
\P \Big (  \max_{j=1, \ldots, N} |\lambda_j  |\ge 2+N^{-2/3}(\log N)^{ 9 L} 
  \Big) 
  \le   C\exp{\big[-c(\log N)^{{\phi L}  } \big]}.  
\ee

\end{theorem}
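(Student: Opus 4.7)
The plan is to derive and analyze a self-consistent vector equation for the diagonal Green-function entries and then run a continuity (bootstrap) argument in $\eta$ down from $\eta=O(1)$. Fix $k$ and apply the Schur complement formula to obtain
\[
 G_{kk}^{-1} \;=\; -z - \sum_{i\ne k} \sigma_{ki}^2\, G_{ii}^{(k)} \;+\; h_{kk} \;-\; Z_k,
 \qquad Z_k:=\sum_{i,j\ne k}\bigl( h_{ki} G_{ij}^{(k)} h_{jk} - \delta_{ij}\sigma_{ki}^2 G_{ii}^{(k)}\bigr),
\]
where $G^{(k)}$ is the resolvent of the minor. Using the defining relation $-z = m_{sc}^{-1}+m_{sc}$ and the minor expansion $G_{ii}^{(k)} = G_{ii}-G_{ik}G_{ki}/G_{kk}$, I would rewrite this as $v_k = m_{sc}^2\sum_i \sigma_{ki}^2 v_i + m_{sc}^2(Z_k-h_{kk}) + \mathcal{E}_k$, where $\mathcal{E}_k$ collects quadratic corrections of order $\Lambda_d^2+\Lambda_o^2$. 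In vector form this gives
\[
 (I-m_{sc}^2\, B)\,\bv \;=\; m_{sc}^2\bigl(\bZ - \bh^{\rm diag}\bigr) \;+\; \mathcal{E},
\]
with $\bv=(v_k)$, $\bZ=(Z_k)$. A parallel Schur computation for $i\ne j$ yields $G_{ij} \approx -m_{sc}^2(h_{ij}-Z_{ij})$ with $Z_{ij}$ a similar quadratic fluctuation.

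Assumption \textbf{(B)} is now decisive: $B$ has $+1$ as a simple eigenvalue with eigenvector $N^{-1/2}(1,\ldots,1)$, and its other spectrum lies in $[-1+\delta_-,1-\delta_+]$. Hence $(I-m_{sc}^2 B)$ has a bounded inverse on the orthogonal complement of the constant mode, while the constant mode is inverted by $(1-m_{sc}^2)^{-1}$. Projecting the self-consistent equation onto both subspaces,
\[
 \barv \;=\; \frac{m_{sc}^2}{1-m_{sc}^2}\,\bigl(\,\overline{Z}-\overline{h^{\rm diag}}\,\bigr) + \text{(corrections)}, \qquad v_k-\barv \;=\; O\bigl(\,|Z_k-\overline{Z}|+|h_{kk}-\overline{h^{\rm diag}}|+\mathcal{E}\,\bigr).
\]
The crucial numerical fact used throughout the edge analysis is $|1-m_{sc}^2|\asymp\sqrt{\kappa+\eta}$ while $\im m_{sc}\asymp\sqrt{\kappa+\eta}$ inside the spectrum and $\im m_{sc}\asymp \eta/\sqrt{\kappa+\eta}$ outside, so the $(1-m_{sc}^2)^{-1}$ loss is exactly compensated whenever the driving term is of size $\im m_{sc}\cdot(\text{small})$.

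Next, I would establish two large-deviation inputs at moments of order $p\asymp (\log N)^L$: an \emph{individual} bound $|Z_k|+|h_{kk}|\lesssim (\log N)^{CL}\sqrt{\im m/(N\eta)}$, and a \emph{fluctuation-averaging} bound for any deterministic weights $w_k$ of the form $|\sum_k w_k Z_k| \lesssim (\log N)^{CL}\,\|w\|_\infty\,\im m/(N\eta)$. The first uses standard polynomial Hanson--Wright/moment estimates applied to quadratic forms in the sub-exponential entries and the Ward identity $\sum_i |G_{ki}^{(k)}|^2 = \eta^{-1}\im G_{kk}^{(k)}$. The second is the technical heart of the argument: one expands $|\bar Z|^{2p}$, keeps track of which indices coincide after conjugate pairing, and gains one factor $\im m/(N\eta)$ each time the constraint graph gains a vertex rather than a multiplicity. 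I would also record the same averaging bound for $\sum_k w_k(|G_{kj}|^2+|G_{jk}|^2)$-type sums. Combined with the spectral decomposition above, these yield $|\barv|\lesssim (\log N)^{CL}/(N\eta)$ while each $|v_k|+\Lambda_o\lesssim (\log N)^{CL}\sqrt{\im m_{sc}/(N\eta)}+(\log N)^{CL}/(N\eta)$.

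Finally I would run a continuity argument on a lattice $\mathcal{L}\subset\bS_L$ of mesh $N^{-3}$, exploiting the Lipschitz continuity of $\Lambda,\Lambda_d,\Lambda_o$ in $z$ (derivative bounded by $\eta^{-2}$). The starting point is $\eta=10$, where the bounds are trivial. At each $z$ descend to smaller $\eta$, assuming the weak bound $\Lambda_d+\Lambda_o\le(\log N)^{-1}$ holds at the previous scale. Use it to justify the expansion of $1/(1-m_{sc}\Delta_k)$, plug the moment estimates into Markov's inequality to obtain the improved \emph{strong} bound at the new $z$ with probability $\ge 1-\exp[-c(\log N)^{\phi L}]$, and take the union bound over $\mathcal{L}$. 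Part (iii) then follows from (ii) by contradiction: if $|\lambda_j|\ge 2+N^{-2/3}(\log N)^{9L}$, take $E=\lambda_j$ and $\eta=N^{-2/3}(\log N)^{8L}$; then $\im G_{kk}\ge \eta/((\lambda_j-E)^2+\eta^2)|u_j(k)|^2$ would force $\Lambda_d\gtrsim(\log N)^L$ on some $k$, contradicting (ii). I expect the main obstacle to be the fluctuation-averaging estimate at moment $p\asymp (\log N)^L$: combinatorial control of the graph expansion must produce the extra cancellation uniformly and robustly under the inhomogeneous variances $\sigma_{ij}^2$, without losing more than $(\log N)^{CL}$ powers, so that the edge stability factor $(1-m_{sc}^2)^{-1}$ is exactly matched by $\im m_{sc}$ and no $\kappa$-dependent deterioration survives in $\Lambda$.
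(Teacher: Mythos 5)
Your outline follows the paper's architecture closely: Schur complement to get a self-consistent equation for $v_k = G_{kk}-m_{sc}$; spectral decomposition of $(\mathbb I - m_{sc}^2 B)$ into the constant mode and its orthogonal complement, exploiting Assumption \textbf{(B)}; a Hanson--Wright input for individual $Z_k$'s; and, crucially, a fluctuation-averaging moment bound for $\overline Z$ established via a graphical/combinatorial expansion. These are exactly the paper's ingredients (Lemmas~\ref{basicIG}, \ref{generalHWT}, \ref{infinf}, \ref{selfeq1}, and the key Lemma~\ref{motN}), and your observation that the factor $(1-m_{sc}^2)^{-1}\asymp(\kappa+\eta)^{-1/2}$ is matched by $\im m_{sc}\lesssim\sqrt{\kappa+\eta}$ is the correct reason that no $\kappa$-deterioration survives in $\Lambda$. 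Your single continuity sweep in $\eta$ with the a priori bound carried forward from the previous lattice point is a different bookkeeping from the paper's explicit nested loop (an apriori $(N\eta)^{-1/3}$ law from Theorem~\ref{thm:detailed}, then an iteration $\tau\mapsto(1+\tau)/2$ over $O(\log\log N)$ rounds in Lemma~\ref{y4}, each round running a fresh $\eta$-continuity inside Lemma~\ref{selfKK}). Both can be made to work, but the single-sweep version is silent about two things the paper handles explicitly: (a) the fluctuation-averaging threshold depends on $\Lambda_o^2$, hence on $\Lambda$ at the current $z$, so you must carry a \emph{quantitative} propagated a priori (not merely $\Lambda_d+\Lambda_o\le(\log N)^{-1}$) through every lattice step, and argue that the map $\Lambda\mapsto\sqrt{(\Lambda+\im m_{sc})/(N\eta)}$ closes on it; and (b) at each lattice point you run a Markov inequality at moment $p\asymp(\log N)^{\psi L}$, accruing a probability loss that must be summed over $\sim N^8$ points and still beaten by $\exp[-c(\log N)^{\phi L}]$.

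The proof you sketch for part (iii), however, does not go through. With $E=\lambda_j$ and $\eta=N^{-2/3}(\log N)^{8L}$ while $\kappa=|\lambda_j|-2\ge N^{-2/3}(\log N)^{9L}$, one has $\kappa\ge\eta$, so $\im m_{sc}(z)\asymp\eta/\sqrt{\kappa}\asymp N^{-1/3}(\log N)^{7L/2}$, which is \emph{far larger} than the ``signal'' $1/(N\eta)\asymp N^{-1/3}(\log N)^{-8L}$ produced by an eigenvalue sitting at $E$; so the delta contribution is invisible inside $\im G_{kk}$ and there is no contradiction. Even after choosing $\eta$ small enough that $\im m_{sc}\ll(N\eta)^{-1}$ (which requires $\eta\lesssim\kappa^{1/4}N^{-1/2}$), parts (i)--(ii) alone are insufficient: their error bars carry a prefactor $(\log N)^{4L}$, so $|m-m_{sc}|\le(\log N)^{4L}/(N\eta)\gg 1/(N\eta)$ and the putative eigenvalue contribution is again drowned. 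What the paper actually uses is the sharper estimate of Lemma~\ref{etaf}, namely $\Lambda(z)\le(\log N)^{-1}(N\eta)^{-1}$ on the grid $\bU_L$: this gain comes from the linear branch of the self-consistent equation, \eqref{spo3}, where one divides through by $\alpha\asymp\sqrt{\kappa}\gg N^{-1/3}(\log N)^{4L}$ and thereby kills the $(\log N)^{CL}$ prefactor. You need that extra $\alpha^{-1}$ beyond what (i)--(ii) provide before the contradiction via $\im m\ge\tfrac14(N\eta)^{-1}$ can close.
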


The subexponential decay condition \eqref{subexp} can be
 weakened  if we are not aiming 
at error estimates faster than any power law of $N$. 
This can be easily carried out and we will not pursue  it in this paper. 
We also note that the upper bound on $L$ originates from the natural requirement that
$\bS_L\neq\emptyset$.  
\medskip

Prior  to our results in \cite{EYY} and \cite{EYY2}, a central limit theorem  for
the semicircle law on macroscopic scale for band matrices was established
by Guionnet \cite{gui} and Anderson and Zeitouni \cite{AZ}; a
 semicircle law  for Gaussian band matrices  was proved 
by Disertori, Pinson and Spencer \cite{DPS}.
 For a review on band matrices, see  the recent  article \cite{Spe} by Spencer.

The local semicircle estimates imply that the empirical counting function of
the eigenvalues is close to the semicircle counting function 
and that the locations of the eigenvalues are close to their classical 
location in mean square deviation sense.
Recall that $\la_1\le \la_2 \le \ldots \le \la_N$ are the ordered
eigenvalues of $H$.
We define the {\it normalized empirical counting function}  by
\be
 {\mathfrak n}(E):= \frac{1}{N}\# \{ \lambda_j\le E\}.
\label{deffn}
\ee
Let
\be
 n_{sc}(E) :  = \int_{-\infty}^E \varrho_{sc}(x)\rd x
\label{nsc}
\ee
be the distribution function
of the semicircle law and recall that $\gamma_j =\gamma_{j,N}$ denote the 
classical location of the $j$-th point
under the semicircle law, see \eqref{def:gamma}.

\begin{theorem} [Rigidity of Eigenvalues] \label{7.1}
Suppose that Assumptions {\bf (A)}, {\bf (B)}, {\bf (C)}  and  the condition \eqref{subexp} hold. 
Then  there exist positive constants   $A_0 > 1$, $C, c$ and $\phi < 1$  
depending only on $\ttau$, on $\delta_\pm$ from Assumption {\bf (B)} and
on $C_0$ from Assumption {\bf (C)}
such that  for  any $L$ with 
$$
A_0\log\log N\le L\le \frac{\log (10 N)}{10\log\log N}
$$
we have
\be\label{rigidity}
\P \Bigg\{  \exists j\; : \; |\lambda_j-\gamma_j| 
\ge (\log N)^{ L}  \Big [ \min \big ( \, j ,  N-j+1 \,  \big) \Big  ]^{-1/3}   N^{-2/3} \Bigg\}
 \le  C\exp{\big[-c(\log N)^{\phi L} \big]}
\ee
and
\be
   \P\Bigg\{ \sup_{|E|\le 5} \big| {\mathfrak n} (E)-n_{sc}(E)\big| \,  \ge
 \frac{(\log N)^{L}}{N} \Bigg\}\le  C\exp{\big[-c(\log N)^{\phi L } \big]}  
\label{nn}
\ee
for any sufficiently large $N\ge N_0(\ttau, \delta_\pm, C_0)$. 
\end{theorem}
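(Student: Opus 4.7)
The plan is to deduce Theorem \ref{7.1} from Theorem \ref{45-1} in two stages: first derive the uniform counting function estimate (\ref{nn}) from the strong local semicircle bound (\ref{Lambdafinal}), then invert this to obtain the individual-eigenvalue rigidity (\ref{rigidity}), using the no-outlier estimate (\ref{443}) at the edges. Since the hypotheses of Theorem \ref{45-1} are available for every $L$ in the range (\ref{Lbound}), any polylogarithmic loss along the way can be absorbed by replacing $L$ with a constant multiple of itself.

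For (\ref{nn}), I fix $E\in[-5,5]$ and introduce a smoothed indicator $f_E$ of $(-\infty,E]$ that equals $1$ on $(-\infty,E-\eta_0]$ and vanishes on $[E+\eta_0,\infty)$, with $\eta_0:=N^{-1}(\log N)^{10L}$ so that (\ref{Lambdafinal}) is available down to scale $\eta_0$. By the Helffer--Sj\"ostrand functional calculus,
$$
\sum_j f_E(\la_j)-N\int f_E\,\varrho_{sc}
=\frac{N}{\pi}\int_{\bC}\pt_{\bar z}\tilde f_E(z)\,\big(m_N(z)-m_{sc}(z)\big)\,\rd^2 z,
$$
where $\tilde f_E$ is a quasi-analytic extension supported in $|\im z|\le 1$. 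Splitting the integral at $|\im z|=\eta_0$, the $|\im z|\ge \eta_0$ part is estimated by the pointwise bound $\Lambda(z)\le (\log N)^{4L}/(N\eta)$ from (\ref{Lambdafinal}) and yields an error of order $(\log N)^{CL}/N$; the $|\im z|\le \eta_0$ contribution, together with the gap between $\sum_j f_E(\la_j)$ and $N\fn(E)$, is controlled by the number of eigenvalues in $[E-\eta_0,E+\eta_0]$, itself bounded by $CN\eta_0\,\im m_N(E+i\eta_0)=O((\log N)^{10L})$ via (\ref{Lambdafinal}). Uniformity in $E$ follows from a union bound over an $N^{-10}$-net combined with the Lipschitz continuity of $n_{sc}$ and the monotonicity of $\fn$.

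To pass from (\ref{nn}) to (\ref{rigidity}), I use that $\la_j\le \ga_j+t$ is equivalent to $\fn(\ga_j+t)\ge j/N=n_{sc}(\ga_j)$, which by (\ref{nn}) follows provided $n_{sc}(\ga_j+t)-n_{sc}(\ga_j)\ge (\log N)^L/N$. In the range $\min(j,N-j+1)\ge(\log N)^{3L}$ the density satisfies $\varrho_{sc}(\ga_j)\asymp [\min(j,N-j+1)/N]^{1/3}$, so $n_{sc}(\ga_j+t)-n_{sc}(\ga_j)\asymp \varrho_{sc}(\ga_j)\,t$ for $t$ smaller than the distance of $\ga_j$ to the spectral edge, and the required inequality is satisfied by
$$
t=(\log N)^L\,[\min(j,N-j+1)]^{-1/3}N^{-2/3};
$$
a symmetric argument yields the matching lower bound $\la_j\ge \ga_j-t$. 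For the remaining boundary range $\min(j,N-j+1)\le(\log N)^{3L}$, I would invoke (\ref{443}): there $\ga_j$ lies within $O((\log N)^{2L}N^{-2/3})$ of $\pm 2$, so the no-outlier bound $|\la_j|\le 2+N^{-2/3}(\log N)^{9L}$ immediately yields $|\la_j-\ga_j|\le CN^{-2/3}(\log N)^{9L}$, of the form required by (\ref{rigidity}) after rescaling $L$.

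The main obstacle is the Helffer--Sj\"ostrand step, where polylogarithmic losses must be tracked carefully: the bound (\ref{Lambdafinal}) only holds for $N\eta\ge (\log N)^{10L}$, so the contribution of the narrow strip $|\im z|\le \eta_0$ and the number of eigenvalues within distance $\eta_0$ of any given energy have to be handled via $\im m_N$ bounds, which contribute an extra $(\log N)^{10L}$ factor in (\ref{nn}). The wide range (\ref{Lbound}) is precisely what allows one to absorb all such factors while keeping $L$ in the admissible range. A secondary point is that a pure counting-function argument cannot localize the extreme eigenvalues (e.g.\ $\la_1$): the no-outlier estimate (\ref{443}) is essential in the boundary regime, which is precisely why the $[\min(j,N-j+1)]^{-1/3}$ factor appears as the natural edge degradation in (\ref{rigidity}).
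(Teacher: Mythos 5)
Your proposal follows the same route as the paper: a Helffer--Sj\"ostrand argument converts the bound on the Stieltjes transform in \eqref{Lambdafinal} into the counting-function estimate \eqref{nn}, and inverting $n_{sc}$ (with the density asymptotics $\varrho_{sc}(\gamma_j)\sim[\min(j,N-j+1)/N]^{1/3}$ and $n_{sc}(x)\sim(x\pm 2)^{3/2}$ near the edges) then yields \eqref{rigidity} in the regime where $\gamma_j$ is not too close to $\pm 2$. The paper packages the Helffer--Sj\"ostrand step into a cited lemma and extends the needed bound $|m^\Delta(x+iy)|\lesssim U/(Ny)$ down to $y\to 0$ by monotonicity of $y\mapsto y\,\im m(x+iy)$, rather than splitting the integral at $\eta_0$ and bookkeeping the narrow strip by an eigenvalue count; these are two valid presentations of the same computation and the polylog bookkeeping you describe is consistent with the paper's.

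There is, however, one genuine gap in the boundary regime $\min(j,N-j+1)\le(\log N)^{3L}$. Take $j$ small so that $\gamma_j$ is within $O\bigl((\log N)^{2L}N^{-2/3}\bigr)$ of $-2$. The no-outlier estimate \eqref{443} only gives $\lambda_j\ge -2-N^{-2/3}(\log N)^{9L}$ (a one-sided control); it does \emph{not} exclude $\lambda_j$ lying at, say, $0$, which is perfectly compatible with $\max_i|\lambda_i|\le 2+N^{-2/3}(\log N)^{9L}$ but has $|\lambda_j-\gamma_j|\asymp 2$. To rule this out you must also use \eqref{nn} at $E=\lambda_j$: since $\fn(\lambda_j)=j/N$, \eqref{nn} forces $n_{sc}(\lambda_j)\le j/N+(\log N)^L/N\lesssim(\log N)^{3L}/N$, hence $\lambda_j+2\lesssim(\log N)^{2L}N^{-2/3}$, and only together with \eqref{443} do you get a two-sided bound $|\lambda_j-\gamma_j|\lesssim(\log N)^{9L}N^{-2/3}$. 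This is precisely how the paper proceeds: it sets $E=\gamma_j$, $E'=\lambda_j$, derives $n_{sc}(E)=n_{sc}(E')+O(t_N/N)$ from the counting-function bound, and then splits into cases according to whether \emph{both} $E,E'$ are within $t_NN^{-2/3}$ of $-2$ (using \eqref{443} and the no-small-gap in $n_{sc}$) or at least one of them is not (in which case \eqref{nnn} forces $E+2\sim E'+2$ and the Taylor step applies). Your step-function claim \textquotedblleft immediately yields\textquotedblright\ is therefore too fast as stated, but the fix is small and uses only tools you already have.
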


For standard Wigner matrices, \eqref{nn} with the factor $N^{-1}$ replaced by  $N^{-2/5}$
 (in a weaker sense with some modifications in the statement)  was established in \cite{BMT}
and  a  stronger $N^{-1/2}$ control was proven for $\E \fn (E)-n_{sc}(E)$.
 If we replaced $(\log N)^{L }$ factor by $N^\delta$ for arbitrary $\delta>0$, 
\eqref{nn} was proved in \cite{EYY2} (Theorem 6.3)  with some deterioration near the 
spectral edges 
and with a slightly weaker probability estimate. 
In Theorem 1.3 of a recent preprint \cite{TV4}, the following estimate (in our scaling)
\be\label{tv}
\Big (\E \big[  |\lambda_j-\gamma_j|^2 \big] \Big )^{1/2} \le 
\Big [ \min \big ( \, j ,  N-j+1 \,  \big) \Big  ]^{-1/3}   N^{-1/6 - \e_0}
\ee
with some small positive $\e_0$ 
was  proved under the assumption that the third moment of the matrix element vanishes and all variances 
of the matrix elements are identical, i.e., for the standard Wigner matrices 
with vanishing third moment.
In the same paper, it was conjectured that the factor $N^{-1/6 - \e_0}$ on 
the right hand side of \eqref{tv} should be replaced by $N^{-2/3 + \e}$. 
Prior to the work \cite{TV4}, the estimate \eqref{rigidity} 
away from the edges with a slightly weaker probability estimate 
and with the $(\log N)^{L}$ factor replaced by $N^\delta$ for arbitrary $\delta>0$
was proved in \cite{EYY2} (see the equation before (7.8) in \cite{EYY2}).  
For Wigner matrices whose matrix element distributions matching the 
standard Gaussian random variable  
up to the third  moment, it was proved in \cite{TV} that 
$ |\lambda_j-\gamma_j| \le N^{-1 + \e}$ holds in the bulk in  probability
(Theorem 32). More detailed behavior can be obtained if one assumes 
further that the fourth moment also matches  
the standard Gaussian random variable, see Corollary 21 of \cite{TV}
 for more details. Near the  edge,  \eqref{rigidity}
with  $N^{-2/3}$  replaced by $N^{-1/2}$ and  the probability estimate on the right side 
replaced by a Gaussian type estimate
was proved in \cite{Alon}.

We remark that all results in this paper are stated for both the hermitian or symmetric case, but
the statements hold for quaternion self-dual random matrices as well
(see, e.g., Section 3.1 of \cite{ESYY}). The proofs will be
presented for the hermitian case for definiteness
but with obvious modifications
they are valid for the other two cases as well.

We will frequently use the notation $C$ and $c$
for generic positive constants and $N_0$ for the lower threshold for $N$
in this paper. We adopt the convention that, unless stated otherwise,
 these constants
and also the implicit constants in the  $O(\cdot)$ notation
may depend on the basic parameters of our model, namely on $\ttau$,
$\delta_\pm$ and $C_0$. The values of these generic constants may
change from line to line.

\subsection{Bulk Universality}

We now  use Theorem \ref{7.1}   to establish the speed of convergence 
 for local statistics  of Dyson Brownian motion.
 In fact, we will replace the Brownian motion in the definition
of Dyson Brownian motion by an Ornstein-Uhlenbeck  process. We thus
consider a flow of random matrices $H_t$ satisfying 
the following matrix valued stochastic differential equation
\be
   \rd H_t = \frac{1}{\sqrt{N}}\rd\beta_t - \frac{1}{2}H_t \rd t,
\label{OUflow}
\ee
where $\beta_t$ is a hermitian matrix valued process 
whose diagonal matrix elements are standard real Brownian motions
and the off-diagonal elements are independent standard complex Brownian motions;
with all Brownian motions being independent. The initial condition $H_0$
is the original  hermitian Wigner matrix.
For any fixed $t\ge 0$, the distribution of $H_t$ coincides with that of
\be\label{matrixdbm}
e^{-t/2} H_0 + (1-e^{-t})^{1/2}\, V,
\ee
where 
$V$ is an independent  GUE 
 matrix whose matrix elements are centered Gaussian random variables with variance $1/N$. 
For the symmetric case, the matrix elements of $\beta_t$ in \eqref{OUflow} 
are real Brownian motions and $V$ in \eqref{matrixdbm} is a GOE matrix.
 It is well-known that the eigenvalues of $H_t$ follow a process
that is also called the Dyson Brownian motion
 (in our case with a drift but we will still call it Dyson Brownian motion).
 
More precisely, let 
\be\label{H}
\mu=\mu_N(\rd{\bf x})=
\frac{e^{-\cH({\bf x})}}{Z_\beta}\rd{\bf x},\qquad \cH({\bf x}) =
N \left [ \beta \sum_{i=1}^N \frac{x_{i}^{2}}{4} -  \frac{\beta}{N} \sum_{i< j}
\log |x_{j} - x_{i}| \right ]
\ee
be the probability measure of the eigenvalues of the general $\beta$ ensemble,
with $\beta\ge 1$ ($\beta=2$ for GUE, $\beta=1$ for GOE). Here $Z_\beta$ is the normalization factor so that $\mu$ is probability measure. 
In this section, we often use the notation $x_j$ instead of $\lambda_j$ for 
the eigenvalues to follow the notations of \cite{ESYY}.
 Denote the distribution of 
the eigenvalues  at  time $t$
by $f_t ({\bf x})\mu(\rd {\bf x})$.
Then $f_t$ satisfies
\be\label{dy}
\partial_{t} f_t =  \cL f_t.
\ee
where 
\be
\cL=   \sum_{i=1}^N \frac{1}{2N}\partial_{i}^{2}  +\sum_{i=1}^N
\Bigg(- \frac{\beta}{4} x_{i} +  \frac{\beta}{2N}\sum_{j\ne i}
\frac{1}{x_i - x_j}\Bigg) \partial_{i}.
\label{Lgen}
\ee
For any $n\ge 1$ we define the $n$-point
correlation functions (marginals) of the probability measure $f_t\rd\mu$ by
\be
 p^{(n)}_{t,N}(x_1, x_2, \ldots,  x_n) = \int_{\R^{N-n}}
f_t(\bx) \mu(\bx) \rd x_{n+1}\ldots
\rd x_N.
\label{corr}
\ee
With a slight abuse of notations, we will sometimes  also use $\mu$ to denote
the density of the measure $\mu$ with respect to the Lebesgue measure.
The correlation functions of the equilibrium measure  are denoted by
\be
 p^{(n)}_{\mu,N}(x_1, x_2, \ldots,  x_n) = \int_{\R^{N-n}}
\mu(\bx) \rd x_{n+1}\ldots
\rd x_N.
\label{correq}
\ee

The main result in \cite{ESYY} concerning Dyson Brownian motion, Theorem 2.1, 
 states that the  local ergodicity of Dyson Brownian motion  holds for 
 $t \ge N^{-2\fa+\delta}$  for any $\delta > 0$ 
provided that the  Assumption III \eqref{assum3} holds.
 In fact,   the  estimate on the relaxation to the local equilibrium  \cite{ESYY} is not restricted to 
Dyson Brownian motion;  it applies to all flows satisfying { four general assumptions, labelled as Assumption I-IV 
in \cite{ESYY}. Instead of repeating  these assumptions in their general  forms, we will give only 
simple sufficient conditions.
Assumption I requires that the probability density of the global equilibrium measure} is given by 
a  Hamiltonian  of the form 
\be
  \cH = \cH_N(\bx)  = \beta\Big[ \sum_{j=1}^N U(x_j)
-\frac{1}{N}\sum_{i<j} \log|x_i-x_j| \Big], 
\label{ham}
\ee
where $\beta\ge 1$ and  the function $U:\bR\to \bR$ is smooth with
$U'' \ge \delta $ {for some positive $\delta$.}
 This is clearly satisfied since the equilibrium measures  are either GUE or GOE
in the setting of this paper.
 Assumption II requires  a limiting continuous density for the eigenvalue distribution. 
In our case, the density is given by the semicircle law. 
Assumption IV  asserts that the { local density of eigenvalues is bounded down to scale $\eta=N^{-1 + \sigma}$} for
any $\sigma > 0$. This assumption 
follows from the large deviation estimate 
\eqref{Lambdafinal} since a bound on $\Lambda(z)$, 
 $z=E+i\eta$, can be easily used to prove
an upper bound  
on the local density of eigenvalues in a window of size $\eta$ about $E$. 
 As usual, the additional
condition in \cite{ESYY} on the entropy  $S_\mu(f_{t_0})\le CN^m$ for $t_0= N^{-2\fa}$
holds due to the regularization property of the Ornstein-Uhlenbeck process. Thus  for a 
given $0<\e'<1$, choosing  $\fa = 1/2 - \e'/2, \; A = \e'$ 
in the second part of Theorem 2.1 in \cite{ESYY} and using \eqref{rigidity},
we have the following theorem. 

\begin{theorem}  [Strong local ergodicity of Dyson Brownian motion] \label{thm:main} 
Let $H$ be a hermitian or symmetric $N\times N$ random matrix with $\E\, h_{ij}=0$
and suppose that Assumptions {\bf (A)}, {\bf (B)}, {\bf (C)}  and   \eqref{subexp} hold
with parameters $\delta_\pm, C_0$ and $\vartheta$.
 Then for any $\e'> 0$, $\delta>0$, $c>0$
positive numbers, for
any integer $n\ge 1$ and for any compactly supported continuous test function
$O:\bR^n\to \bR$ there exists a constant $C$ depending on all these parameters and on $O$
such that
\be
\begin{split}
&  \sup_{t\ge N^{-1 + \delta + \e'}}  \Big | \int_{E-b}^{E+b}\frac{\rd E'}{2b}
\int_{\R^n}   \rd\alpha_1
\ldots \rd\alpha_n \; O(\alpha_1,\ldots,\alpha_n) 
 \frac{1}{\varrho(E)^n} \\
& \quad \times   \Big ( p_{t,N}^{(n)}  - p_{\mu, N} ^{(n)} \Big )
\Big (E'+\frac{\alpha_1}{N\varrho(E)},
\ldots, E'+\frac{\alpha_n}{ N\varrho(E)}\Big) \Big | 
\leq 
C N^{2 \e'} \Big[ b^{-1}  N^{ - 1+\e' } +  b^{-1/2} N^{-\delta/2} \Big],
\label{abstrthm2}
\end{split}
\ee
holds for any fixed $E\in [2 -c, 2 + c] $ and for
any  $b=b_N\in (0,c/2)$ that may depend on $N$.
Here  $ p_{t,N}^{(n)}$ and  $ p_{\mu,N}^{(n)}$, \eqref{corr}--\eqref{correq},
 are the correlation functions of the
eigenvalues of the Dyson Brownian motion flow \eqref{matrixdbm}  and those of
the equilibrium measure, respectively. 
\end{theorem}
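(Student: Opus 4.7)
The plan is to deduce this theorem directly from Theorem~2.1 of the companion work \cite{ESYY} on local ergodicity of Dyson Brownian motion, using Theorems \ref{45-1} and \ref{7.1} as the analytic inputs. The strategy reduces to verifying the four Assumptions I--IV of \cite{ESYY} in the present setting, with the key task being the verification of Assumption~III \eqref{assum3} via the rigidity estimate \eqref{rigidity}; once this is done, the quoted Theorem~2.1 of \cite{ESYY} applies as a black box with the parameters $\fa=1/2-\e'/2$ and $A=\e'$, yielding exactly \eqref{abstrthm2}.

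Assumptions I and II are immediate as noted in the paper: the equilibrium measure $\mu$ is GUE or GOE, whose Hamiltonian has the uniformly convex external potential $U(x)=x^{2}/4$, and the limiting density of states is the semicircle law $\varrho_{sc}$. Assumption~IV (an upper bound on the local density down to scale $\eta=N^{-1+\sigma}$) follows from the strong local semicircle law: on the complement of the exceptional event in \eqref{Lambdafinal}, we have $\im\, m(z)\leq \im\, m_{sc}(z)+(\log N)^{4L}/(N\eta)$ uniformly on $\bS_L$, and the number of eigenvalues in a window of width $\eta$ about $E$ is bounded by $CN\eta\,\im\, m(E+i\eta)$; choosing $L=A_0\log\log N$, this gives the required bound for any $\eta\geq N^{-1+\sigma}$ provided $N$ is large. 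The entropy hypothesis $S_\mu(f_{t_0})\leq CN^{m}$ at $t_0=N^{-2\fa}$ is standard for the Ornstein--Uhlenbeck flow \eqref{OUflow}, since the Gaussian convolution provides an immediate smoothing estimate.

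The main step is Assumption~III. Fix $\fa=1/2-\e'/2$, so that $t_0=N^{-2\fa}=N^{-1+\e'}$, and choose $L=A_0\log\log N$ in Theorem~\ref{7.1}. On the complement of the bad event in \eqref{rigidity},
\be\non
\frac{1}{N}\sum_{j=1}^{N}(\la_j-\ga_j)^{2}
\;\leq\;\frac{(\log N)^{2L}}{N^{1+4/3}}\sum_{j=1}^{N}\frac{1}{[\min(j,N-j+1)]^{2/3}}
\;\leq\;C\,N^{-2}(\log N)^{2L},
\ee
since $\sum_{j=1}^{N}\min(j,N-j+1)^{-2/3}=O(N^{1/3})$. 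On the bad event, whose probability is $C\exp[-c(\log N)^{\phi L}]$, we use the spectral bound \eqref{443} together with the subexponential tail \eqref{subexp} to bound $\sum_j\la_j^{2}$ by $CN$ with overwhelming probability, so that its contribution to the expectation is super-polynomially small. Because the Ornstein--Uhlenbeck flow \eqref{OUflow} preserves the class of generalized Wigner matrices (with variances unchanged), Theorem~\ref{7.1} applies uniformly to $H_t$ for all $t\ge 0$; thus for all such $t$,
\be\non
\E_t\,\frac{1}{N}\sum_{j=1}^{N}(\la_j-\ga_j)^{2}
\;\leq\;C\,N^{-2}(\log N)^{2L}\;\leq\;CN^{-2+2\e'/3}\;\leq\;CN^{-1-2\fa},
\ee
after adjusting $\e'$ slightly; this is \eqref{assum3}.

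With Assumptions I--IV now all verified, the second part of Theorem~2.1 of \cite{ESYY} with parameters $\fa=1/2-\e'/2$ and $A=\e'$ gives the bound \eqref{abstrthm2} with the stated error $CN^{2\e'}[b^{-1}N^{-1+\e'}+b^{-1/2}N^{-\delta/2}]$ in the regime $t\ge N^{-2\fa+\delta}=N^{-1+\delta+\e'}$. The only genuine obstacle is the verification of Assumption~III at the optimal scale $\fa\approx 1/2$, which is precisely the role of the rigidity theorem \ref{7.1}; everything else is bookkeeping and invocation of the machinery already set up in \cite{ESYY}.
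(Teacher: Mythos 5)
Your proposal is correct and follows exactly the route indicated in the paper: verify Assumptions I--IV of \cite{ESYY} (with Assumption III supplied by the rigidity estimate \eqref{rigidity} and Assumption IV by \eqref{Lambdafinal}) and then invoke Theorem~2.1 of \cite{ESYY} with $\fa=1/2-\e'/2$, $A=\e'$. The paper itself leaves the Assumption~III verification implicit, and your explicit computation $\frac{1}{N}\sum_j(\lambda_j-\gamma_j)^2\le C N^{-2}(\log N)^{2L}$ using $\sum_j[\min(j,N-j+1)]^{-2/3}=O(N^{1/3})$, together with the observation that the OU flow preserves Assumptions (A)--(C), is precisely the bookkeeping the authors intend.
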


Besides a weaker version of Theorem \ref{thm:main} was proved in \cite{EYY2},
a similar result, with no error estimate,  was obtained in \cite{EPRSY} for the hermitian case
by using an explicit formula related to 
Johansson's formula \cite{J}.
 Theorem \ref{thm:main}, however,
 contains  explicit estimates 
and is valid for a time range much bigger than the previous results.
In particular,  we mention  the following  
three special cases: 

\begin{itemize}
\item If we  choose $\delta = 1- 2 \e'$ and thus $t = N^{- \e'}$,   
then we can choose $b \sim N^{-1}$ and 
the universality is valid with essentially no averaging in $E$.

\item If we choose  the energy
window of size $b\sim 1$ and the time $t = N^{- \e'}$, then 
the error estimate is of order  $\sim N^{-1/2}$.

\item  If we choose $b \sim 1$, then the smallest time scale for which
we can prove the universality is 
$t = N^{-1 + \e'}$. This scale, up to the arbitrary small exponent $\e'$, is optimal 
in accordance 
with the time scale to local equilibrium conjectured by Dyson \cite{Dy}.

\end{itemize}

For generalized Wigner matrices with a subexponential
decay, i.e. assuming \eqref{VV}
in addition to  the conditions of Theorem \ref{thm:main}, 
the universality result with no explicit error 
estimate  holds  for any time $t\ge0$. More precisely,
for any fixed $b>0$ we have
\be
\begin{split}
 \lim_{N\to0}\sup_{t\ge 0}\; & \Big | \int_{E-b}^{E+b}\frac{\rd E'}{2b}
\int_{\R^n}   \rd\alpha_1
\ldots \rd\alpha_n \; O(\alpha_1,\ldots,\alpha_n) 
 \frac{1}{\varrho(E)^n} \\
& \quad \times   \Big ( p_{t,N}^{(n)}  - p_{\mu, N} ^{(n)} \Big )
\Big (E'+\frac{\alpha_1}{N\varrho(E)},
\ldots, E'+\frac{\alpha_n}{ N\varrho(E)}\Big) \Big | 
=0.
\label{abstrthm21}
\end{split}
\ee
This result, with slightly stronger conditions on
the distributions of the ensemble,  was already proved   in  \cite{EYY2}.
Similarly to \cite{EYY2}, the extension of the universality from a small positive time
to zero time requires a different method, the Green function comparison 
theorem \cite{EYY}   in our approach. The reasons of  universality  for zero time and
 time bigger than $1/N$ are very different:
Theorem \ref{thm:main} shows that 
the local correlation functions have already reached their equilibrium 
under the Dyson Brownian motion flow for any time larger than $1/N$. 
For time smaller than $1/N$,  in particular the important case $t=0$, the universality is valid 
because  we can compare the local correlation functions at time $t=0$ with 
the ones generated by the flow at time  $t=N^{-\e}$ with 
{\it specially adjusted initial data} (see, e.g., the Matching Lemma 3.4 \cite{EYY2}).
The same argument as in Section 3 of \cite{EYY2} can be used to prove \eqref{abstrthm21} from 
\eqref{abstrthm2}.
In fact, since our new version of  the strong local ergodicity of Dyson Brownian motion, 
Theorem \ref{thm:main},   holds for very short times,  the two ensembles to be compared 
are already  very close to each other. 
 Furthermore,  
effective error estimates instead of a limiting statement \eqref{abstrthm21} can also be obtained
and the parameter $b$ may also be chosen $N$-dependent. For the case that $b$ is $N$-independent, 
the time to local equilibrium as remarked above is $N^{-1+ \e}$. Hence the condition \eqref{VV} can be replaced by
the following condition: there are  constants $c, \e  > 0$ such that 
\be
| \{ (i, j): N \sigma_{ij}^2  \le c \} | \le N^{2 - \e}  \; .
\ee
 Since these extensions require only minor modifications of the current method, 
 we will not pursue these directions in this paper.

\subsection{Edge distribution}

Recall that $\lambda_N$ is the largest eigenvalue of the random matrix.  The 
probability distribution  functions of $\lambda_N$ for the classical Gaussian ensembles are
identified by Tracy and Widom  \cite{TW, TW2} to be  
\be\label{Fb}
\lim_{N \to \infty} \P( N^{2/3} ( \lambda_N -2) \le s ) =  F_\beta (s), 
\ee
where the function $F_\beta(s)$ can be computed in terms 
of Painlev\'e equations and $\beta=1, 2, 4$
 corresponds to the standard classical  ensembles. The distribution of $\lambda_N$ 
is believed to be universal and independent of the Gaussian structure. 
 The strong local semicircle law, Theorem \ref{45-1},
combined with a modification of the Green function comparison theorem
(Theorem \ref{GFCT})
implies  the following 
version of universality of the extreme eigenvalues.

\bigskip

\begin{theorem}[Universality of extreme eigenvalues] \label{twthm}  
Suppose that we have 
two  $N\times N$  matrices, $H^{(v)}$ and $H^{(w)}$, with matrix elements $h_{ij}$
given by the random variables $N^{-1/2} v_{ij}$ and 
$N^{-1/2} w_{ij}$, respectively, with $v_{ij}$ and $w_{ij}$ satisfying
the uniform subexponential decay condition \eqref{subexp}. Let $\P^\bv$ and
$\P^\bw$ denote the probability and $\E^\bv$ and $\E^\bw$ 
the expectation with respect to these collections of random variables.
Suppose that Assumptions {\bf (A)}, {\bf (B)}, {\bf (C)}  hold for both ensembles.  If 
the first two moments of
 $v_{ij}$ and $w_{ij}$ are the same, i.e.
\be\label{2m}
    \E^\bv \bar v_{ij}^l v_{ij}^{u} =  \E^\bw \bar w_{ij}^l w_{ij}^{u},
  \qquad 0\le l+u\le 2,
\ee
then there is an $\e>0$ and $\delta>0$
depending on $\vartheta$ in \eqref{subexp} 
 such that 
for any real parameter $s$ (may depend on $N$)   
we have
 \be\label{tw}
 \P^\bv ( N^{2/3} ( \lambda_N -2) \le s- N^{-\e} )- N^{-\delta}  
  \le   \P^\bw ( N^{2/3} ( \lambda_N -2) \le s )   \le 
 \P^\bv ( N^{2/3} ( \lambda_N -2) \le s+ N^{-\e} )+ N^{-\delta}  
\ee 
for $N\ge N_0$ sufficiently  large, where $N_0$ is independent of $s$. 
Analogous result holds for the smallest eigenvalue $\lambda_1$.

\end{theorem}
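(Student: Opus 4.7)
The plan is to reduce the event $\{N^{2/3}(\lambda_N - 2) \le s\}$ to a smooth functional of the Green function $G(z)=(H-z)^{-1}$ and then apply a Lindeberg-type comparison between $H^{(\bv)}$ and $H^{(\bw)}$ via Theorem \ref{GFCT}, in combination with the strong local semicircle law (Theorem \ref{45-1}) and the rigidity of eigenvalues (Theorem \ref{7.1}). The argument for $\lambda_1$ is identical, applied at the edge $-2$ instead of $+2$, so I focus on $\lambda_N$.

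Set $E := 2 + sN^{-2/3}$ and $E_R := 2 + N^{-2/3}(\log N)^{9L}$. By Theorem \ref{45-1}(iii), all eigenvalues of $H$ lie below $E_R$ except on an event of probability $C\exp[-c(\log N)^{\phi L}]$, so $\{\lambda_N \le E\}$ agrees with $\{\cN_H([E,E_R])=0\}$ up to this negligible error, where $\cN_H(I)$ counts eigenvalues of $H$ in $I$. Pick a smooth $K:\R\to[0,1]$ with $K(x)=1$ for $x\le 1/9$ and $K(x)=0$ for $x\ge 2/9$; since $\cN_H$ is integer valued, $K(\cN_H([E,E_R]))=\mathbf{1}(\cN_H([E,E_R])=0)$. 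Now set $\eta := N^{-2/3-9\e}$ and $\ell := N^{-2/3-\e}$ for a small $\e>0$, and let $\chi^\ell_E$ be a smooth cutoff equal to $1$ on $[E+\ell,E_R+1]$ and vanishing outside $[E-\ell,E_R+2]$. A Poisson-kernel computation yields
\[
\cN_H([E,E_R]) \;=\; \frac{N}{\pi}\int \chi^\ell_E(x)\,\im m(x+i\eta)\,dx \;+\; R(H),
\]
with $R(H)$ controlled by the number of eigenvalues inside the transition window $[E-\ell,E+\ell]$. By rigidity (Theorem \ref{7.1}), this window is empty of eigenvalues for $s$ in any fixed compact set, except on an event of probability $\exp[-c(\log N)^{\phi L}]$ ($s$ outside a compact set gives a trivial bound via Theorem \ref{45-1}(iii)). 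Hence $K(\cN_H([E,E_R]))$ coincides, up to negligible error, with a smooth functional $K(\mathcal{F}(H))$ of the Green function matrix elements $G_{ij}(x+i\eta)$ with $x$ ranging over a compact interval near the edge $2$.

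I then apply Theorem \ref{GFCT} to $K\circ\mathcal{F}$, replacing $H^{(\bv)}$ by $H^{(\bw)}$ entry by entry via Lindeberg swaps. At each swap a Taylor expansion of $K\circ\mathcal{F}$ in the changing entry $h_{ij}$ to fourth order has zeroth-, first- and second-order pieces that cancel by the two-moment matching \eqref{2m}. The surviving third- and fourth-order terms are products of three to four Green function matrix elements multiplied by third- and fourth-order moments of the entries, both controlled by Theorem \ref{45-1}(ii) and by the subexponential decay \eqref{subexp}. For our choice of $\eta$ one has $\im m_{sc}(E+i\eta)\lesssim N^{-1/3}$ near the edge, and consequently $|G_{ij}|\lesssim (\log N)^{4L}\sqrt{\im m_{sc}/(N\eta)}\lesssim (\log N)^{4L} N^{-1/3+O(\e)}$ for $i\ne j$ by Theorem \ref{45-1}(ii). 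A careful bookkeeping shows that the cumulative error over the $N^2$ swaps is $O(N^{-\delta})$ for some $\delta>0$, which yields \eqref{tw}.

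The main obstacle is this last step: one must verify by precise power counting that the edge smallness of $\im m_{sc}$, propagated through the off-diagonal bound of Theorem \ref{45-1}(ii), absorbs the combinatorial factor $N^2$ while relying only on two-moment matching. This ``edge gain'' is precisely the feature distinguishing edge from bulk universality, where $\im m_{sc}$ is of order one and the Lindeberg argument requires matching four moments. A secondary subtlety is balancing $\eta$ against $\ell$: the scale $\ell$ must be large enough so that rigidity empties the transition window $[E-\ell,E+\ell]$, yet small enough that the derivatives of $\mathcal{F}$ (which scale like inverse powers of $\eta$) do not inflate the Taylor remainders beyond the $N^{-\delta}$ target. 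Parts (ii) and (iii) of Theorem \ref{45-1} are exactly what make this balance achievable.
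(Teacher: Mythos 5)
Your overall structure mirrors the paper's: convert $\{\lambda_N\le E\}$ to a statement about the counting function, smooth it via a Poisson kernel at scale $\eta\sim N^{-2/3-O(\e)}$ and a cutoff $K$, and then invoke a Green-function comparison theorem near the edge. However, two steps in your argument are genuinely incorrect as stated.

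First, the claim that rigidity (Theorem~\ref{7.1}) forces the transition window $[E-\ell,E+\ell]$ of width $\ell\sim N^{-2/3-\e}$ to be empty of eigenvalues, except on an event of probability $\exp[-c(\log N)^{\phi L}]$, is false. Rigidity confines $\lambda_j$ to a $(\log N)^L N^{-2/3}$-ball around $\gamma_j$, but it does not repel eigenvalues from any particular fixed energy~$E$. Since the typical edge spacing is $N^{-2/3}$ and the window has width $N^{-2/3-\e}$, the correct probability of an eigenvalue landing in the window is $O(N^{-\e})$, not stretched-exponentially small. The paper handles this differently: it keeps the term $\cN(E-\ell_1,E+\ell_1)$ as part of the error in Lemma~\ref{lem:21} and then, in Corollary~\ref{23}, \emph{averages} the counting estimate over an interval of $E$-values of length $\ell$; combining this averaging with the bound \eqref{nn} gives an error $\ell_1/\ell\cdot\cN\le CN^{-\e}$ with overwhelming probability. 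You must either adopt that averaging mechanism or replace your claim with a correct $O(N^{-\e})$ bound, propagating it as an explicit additive error.

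Second, the ``careful bookkeeping showing that the cumulative error over the $N^2$ swaps is $O(N^{-\delta})$'' is not a formality: the naive power count fails. For a single off-diagonal swap, the surviving third-order Taylor term contains (generically) two off-diagonal resolvent factors and three matrix elements, giving a size $\eta\cdot N\cdot N^{-3/2}\cdot (N^{-1/3})^2 \sim N^{-11/6+O(\e)}$ per swap, which multiplied by the $N^2$ pairs yields $N^{1/6+O(\e)}$, blowing up rather than vanishing. The paper's proof of the edge Green function comparison (Lemma~\ref{lemGamma} and Lemma~\ref{lem: 52}) shows that the leading third-order contribution has almost-vanishing expectation: after extracting the diagonal Green-function factors as $m_{sc}$, one is left with an expression whose conditional expectation with respect to one row/column vanishes, improving $N^{-11/6}$ to $N^{-13/6}$. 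This cancellation, not a brute-force power count, is what makes the two-moment result go through at the edge. Your proposal must incorporate this cancellation, or the argument does not close.
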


Theorem \ref{twthm} can be extended to finite correlation functions of  extreme eigenvalues.
 For example, 
we have the following extension to \eqref{tw}:
 \begin{align}\label{twa}
&  \P^\bv \Big ( N^{2/3}  ( \lambda_N -2) \le s_1- N^{-\e}, \ldots, N^{2/3} ( \lambda_{N-k} -2) 
\le s_{k+1}- N^{-\e} \Big )- N^{-\delta}   \nonumber \\
& 
 \le   \P^\bw \Big ( N^{2/3} (  \lambda_N -2) \le s_1,  \ldots, N^{2/3} ( \lambda_{N-k} -2) 
\le s_{k+1}  \Big )  \\
 &  \le 
 \P^\bv \Big ( N^{2/3} ( \lambda_N -2) \le s_1+ N^{-\e}, \ldots, N^{2/3} ( \lambda_{N-k} -2)
 \le s_{k+1}+ N^{-\e}   \Big )+ N^{-\delta}  \nonumber 
\end{align}  
for all $k$ fixed and $N$ sufficiently  large. The proof of \eqref{twa} is similar to that
 of \eqref{tw} and we will 
not provide details except stating  the general form of the Green function comparison 
theorem (Theorem \ref{GFCT2}) 
needed in this case.  We remark  that edge universality is usually
formulated in terms of joint distributions of edge eigenvalues
in the form \eqref{twa} with fixed parameters $s_1, s_2, \ldots $. Our result 
holds uniformly in these parameters, i.e., 
they may depend on $N$. 
However, the interesting regime is $|s_j|\le (\log N)^{C\log\log N}$, otherwise
the rigidity estimate \eqref{rigidity} gives a stronger control than
\eqref{twa}.

The edge universality for Wigner matrices was first  proved 
via the moment method by Soshnikov \cite{Sosh} (see also the earlier work \cite{SS})
for Hermitian and orthogonal  ensembles with  symmetric distributions
to ensure that all odd moments vanish.  
By combining the moment 
method and Chebyshev polynomials, Sodin 
proved edge universality of band matrices and some special class
of sparse matrices \cite{So1, So2}.

The removal of the symmetry assumption was not straightforward. 
The approach of  \cite{So1, So2} is  restricted to ensembles with symmetric distributions. 
The symmetry assumption  was partially removed in 
\cite{P1, P2} and significant progress was made in  \cite{TV2} which assumes only  
that the first three moments of two Wigner ensembles  are identical. In other words, 
the symmetry assumption was replaced 
by the vanishing third moment condition for Wigner matrices.  For a special class of
 ensembles, the Gaussian divisible  Hermitian ensembles, edge universality was proved \cite {J1} 
under the sole condition that the  fourth moment is finite, which
in our scaling means  that $\E |\sqrt{N}h_{ij}|^4$ is a positive constant. 
Using this result \cite{J1}, one can remove the vanishing third moment condition in 
\cite{TV2}  for   Hermitian Wigner ensembles.

In comparison with these results,  Theorem 
\ref{twthm} does not imply  the edge universality of band matrices or sparse matrices 
\cite{So1, So2},  but it implies in particular that, for the purpose  to identify 
the distribution of the top 
eigenvalue for a generalized Wigner matrix with the subexponential decay condition, 
it suffices to consider generalized Wigner ensembles with Gaussian distribution. 
Since the distributions 
of the top eigenvalues of the  Gaussian Wigner ensembles 
are given by $F_\beta$ \eqref{Fb},  Theorem 
\ref{twthm} implies  the edge universality of the standard 
Wigner matrices under the subexponential decay assumption alone. 
We remark that one can use Theorem \ref{7.1} as an input
in the approach of \cite{J1} to prove that the distributions of the top eigenvalues of
the generalized hermitian Wigner ensembles with Gaussian distributions  are given by $F_2$.
Therefore the  Tracy-Widom distribution also holds for any generalized hermitian
Wigner ensemble with subexponential decay.
But for ensembles in different symmetry classes (e.g., symmetric Wigner ensembles),
 there is no corresponding  result to identify  the distribution  of the top 
eigenvalue with $F_\beta$ if the variances are allowed to vary. 

Finally, we comment that  the subexponential  decay 
assumption in our approach, though can be weakened, is far from optimal,  
see  \cite{ABP, BBP, Ruz, Sosh2}
for discussions on optimal  moment assumptions.
  Our approach based on the local semicircle law, however,  gives both the bulk and 
edge universality and the symmetry 
of the distribution of matrix elements  plays no role.

\section{Apriori bound for the strong local semicircle law}\label{ld-semi}

We first prove a weaker form of Theorem \ref{45-1},
and in Section~\ref{sec:optimal}
we will use this apriori bound to obtain the stronger form
as claimed in  Theorem \ref{45-1}.

\begin{theorem}\label{thm:detailed}
Let $H=(h_{ij})$ be a hermitian  $N\times N$ random matrix, $N\ge 3$,
with $\E\, h_{ij}=0$, $1\leq i,j\leq N$,  and assume that the variances $\sigma_{ij}^2$ 
satisfy Assumptions {\bf (A)}, {\bf (B)}, {\bf (C)}
and assume the uniform subexponential decay \eqref{subexp}. 
Then there exist constants  $0<\phi<1 $, $C\ge 1$ and $c>0$,
 depending only on $\ttau$ from \eqref{subexp}, $\delta_\pm$ from Assumption {\bf (B)} and
on $C_0$ is from Assumption {\bf (C)}  such that 
 for any $\ell$ with $4/\phi\le \ell\le C\log N/\log\log N$ 
 and for any   $z=E+i\eta \in \bS_\ell$   we have 
\be\label{mainlsresult}
\P\left\{ \max_{i}|G_{ii}(z)-m_{sc}(z)|\geq 
\frac{(\log N)^\ell }{(N\eta)^{1/3}}\,\right\}\leq  C\exp{\big[-c(\log N)^{\phi  \ell } \big]}
\ee
and
\be\label{mainlsresult2}
\P\left\{\max_{i\ne j}|G_{ij}(z)|\geq 
\frac{ (\log N)^\ell  }{(N\eta)^{1/2}}\right\}\leq C\exp{\big[-c(\log N)^{\phi \ell } \big]}
\ee
 for any sufficiently large $N\ge N_0(\theta, \delta_\pm, C_0)$.
\end{theorem}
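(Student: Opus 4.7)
I would derive a self-consistent vector equation for the deviations $v_i := G_{ii} - m_{sc}$ from Schur's complement formula, invert the linearized operator using the spectral structure of $B$ guaranteed by Assumption (B), and close the loop by a continuity argument in $\eta$. Starting from the Schur identity
$$\frac{1}{G_{ii}} \;=\; h_{ii} - z - \sum_{k,\ell \neq i} h_{ik}\, G^{(i)}_{k\ell}\, h_{\ell i},$$
where $G^{(i)}$ is the resolvent of the minor of $H$ obtained by removing the $i$-th row and column, I split the quadratic form into its conditional expectation $\sum_{k\neq i}\sigma_{ik}^2 G^{(i)}_{kk}$ and a mean-zero remainder $Z_i$. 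Replacing $G^{(i)}_{kk}$ by $G_{kk}$ up to a resolvent-identity error of order $\Lambda_o^2/\Lambda_d$, and using the defining equation $-z = m_{sc} + 1/m_{sc}$, the relation rearranges into the linearized form
$$(1 - m_{sc}^2\, B)\,v \;=\; m_{sc}^2\, Z \;+\; O\!\left(\Lambda_d^2 + \Lambda_o^2\right),$$
where $v = (v_i)_{i=1}^N \in \C^N$. A parallel Schur expansion for $i\neq j$ writes each $G_{ij}$ as a bilinear form in the matrix entries $h$ against the doubly reduced resolvent $G^{(ij)}$, directly amenable to large deviation estimates.

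\textbf{Fluctuation estimates.} The remainder $Z_i$ splits into the off-diagonal part $\sum_{k\neq \ell} h_{ik}\,\bar h_{i\ell}\,G^{(i)}_{k\ell}$ and the diagonal part $\sum_k (|h_{ik}|^2 - \sigma_{ik}^2)\,G^{(i)}_{kk}$, both linear or bilinear in the independent subexponential entries. Standard high-moment (Markov) estimates under \eqref{subexp} yield, for deterministic coefficients $a_{k\ell}$ and any $\xi \ge 1$,
$$\P\Big(\big|{\textstyle\sum}_{k\neq\ell} a_{k\ell}\,h_{ik}\,\bar h_{i\ell}\big| \;\ge\; \xi\big({\textstyle\sum}|a_{k\ell}|^2\sigma_{ik}^2\sigma_{i\ell}^2\big)^{1/2}\Big) \;\le\; C\exp\!\left[-c\,\xi^{\phi}\right]$$
for some $\phi \in (0,1)$ depending on $\ttau$, with the analogous statement for linear forms. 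Taking $\xi = (\log N)^{\ell}$ produces the target tail $\exp[-c(\log N)^{\phi\ell}]$. Ward's identity $\sum_\ell |G^{(i)}_{k\ell}|^2 = \eta^{-1}\im G^{(i)}_{kk}$ reduces the relevant coefficient sum to $\eta^{-1}\im m_{sc}/N$ on a good event, giving $|Z_i| \lesssim (\log N)^\ell \sqrt{\im m_{sc}/(N\eta)}$ together with the off-diagonal bound $|G_{ij}| \lesssim (\log N)^\ell (N\eta)^{-1/2}$ needed for \eqref{mainlsresult2}.

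\textbf{Stability and cubic equation.} To invert $1 - m_{sc}^2 B$, I decompose along the spectrum of $B$. The top eigenvector $N^{-1/2}(1,\ldots,1)^\top$ with eigenvalue $1$ is handled by a scalar equation for $\bar v := N^{-1}\sum_i v_i$, while on its orthogonal complement $\|(1 - m_{sc}^2 B)^{-1}\|\le C(\delta_\pm)$ since $|m_{sc}|\le 1$ and $\mathrm{Spec}(B)\setminus\{1\}$ is separated from $\pm 1$. Retaining one further order of expansion, the scalar equation becomes
$$(1-m_{sc}^2)\,\bar v \;-\; m_{sc}^3\,\bar v^{\,2} \;=\; m_{sc}^2\,[Z] \;+\; (\text{higher order errors}),$$
a polynomial perturbation problem. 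Near the spectral edges $|E|\to 2$ one has $1 - m_{sc}^2 \to 0$; the quadratic correction takes over and, once the subleading errors of order $(N\eta)^{-1}$ are incorporated, the effective equation is essentially cubic, whose stability analysis yields $|\bar v| \lesssim \text{error}^{1/3}$. Combined with the fluctuation size, this is the source of the $(N\eta)^{-1/3}$ factor in \eqref{mainlsresult}. Once $\bar v$ is controlled, the remaining components of $v$ are recovered by inverting the non-degenerate part of $1-m_{sc}^2 B$ on the complement.

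\textbf{Bootstrap and main obstacle.} The fluctuation bounds presuppose an a priori smallness of $\Lambda_d + \Lambda_o$ to justify the linearization; this is secured by continuity in $z$. I start at $\eta = 10$, where $\Lambda_d, \Lambda_o, \Lambda$ are trivially $O(N^{-1/2})$ via Neumann expansion, discretize $\bS_\ell$ on a lattice of width $N^{-K}$, use the Lipschitz estimate $\|G(z_1) - G(z_2)\| \le |z_1 - z_2|/\eta^2$ to interpolate, take a union bound over the $O(N^{O(K)})$ lattice points (absorbed by $\exp[-c(\log N)^{\phi\ell}]$ provided $\ell \ge 4/\phi$), and iterate $\eta \mapsto \eta(1 - (\log N)^{-c})$ downward to the required scale $\eta > N^{-1}(\log N)^{10\ell}$. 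The main obstacle is the stability analysis of the self-consistent equation at the spectral edges: the vanishing of the linear coefficient $1 - m_{sc}^2$ forces the subleading quadratic correction to dominate, producing the $1/3$ exponent, and one must verify that the fluctuation input $[Z]$, the off-diagonal contributions, and the deterministic $O(\Lambda_d^2)$ corrections are consistently small enough to close the iteration uniformly across the whole range of $E$. A secondary difficulty is the unavoidable degradation $\phi < \ttau$ coming from bounding subexponential random variables via the moment method, which dictates the lower bound $\ell \ge 4/\phi$ needed for the union bound to be absorbed.
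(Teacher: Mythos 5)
Your proposal is correct and follows essentially the same route as the paper: Schur complement to derive the self--consistent vector equation for $v_i=G_{ii}-m_{sc}$, large--deviation bounds for $Z_i$ and $Z_{ij}^{(ij)}$ sharpened by the Ward identity so that $\im m_{sc}$ (rather than $|m_{sc}|$) controls the fluctuation size, inversion of $1-m_{sc}^2 B$ on $\{{\bf e}\}^\perp$ plus a scalar quadratic equation for $\bar v$ whose degenerate-near-the-edge stability produces the $(N\eta)^{-1/3}$ threshold, and a continuity argument descending in $\eta$ from $\eta=10$ with a lattice/union-bound to pass to all $z\in\bS_\ell$. The only imprecision is in the bootstrap stepping: a multiplicative decrement $\eta\mapsto\eta(1-(\log N)^{-c})$ changes $G$ by $\sim(\log N)^{-c}/\eta$, which at the lowest scales exceeds the target threshold $\beta(z)\sim(\log N)^{2\ell}(N\eta)^{-1/3}$; one should instead use an additive $N^{-O(1)}$ grid (as the paper does, spacing $N^{-8}$) so the Lipschitz change is $O(N^{-6})$, uniformly negligible.
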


 We remark that the probabilistic estimates
in Theorem \ref{thm:detailed} are stated for fix $z\in {\bf S}_\ell $, but it is
easy to deduce from them probabilistic 
statements that hold simultaneously for all $z$, e.g.
$$
   \P\left( \bigcup_{z\in \bS_\ell } \Big\{ \max_{i}|G_{ii}(z)-m_{sc}(z)|\geq 
\frac{(\log N)^\ell  }{(N\eta)^{1/3}}\, \Big\}\right)\leq C\exp{\big[-c(\log N)^{\phi  \ell } \big]}.
$$
 This holds true because in the set $\bS_\ell$ 
the Green function and $m_{sc}(z)$ 
are Lipschitz continuous  in $z$
with a Lipschitz constant bounded by $\eta^{-2}\ll N^2$; for
 example $|\partial_z G_{ij}(z)|\le 
|\im z|^{-2}\le N^2$.  Consider  an $N^{-10}$-net 
in the compact set ${\bf S}_\ell $, i.e., a set of points $\{ z_k\} \subset \bS_\ell $
such that $\min_k |z-z_k|\le N^{-10}$ for any $z\in \bS_\ell $ and such that
the cardinality of  $\{ z_k\}$ is at most $CN^{20}$.
Using that the estimates \eqref{mainlsresult}--\eqref{mainlsresult2} hold
simultaneously for all points $z_k$ (since
these estimates decay faster than any polynomial in $N$ by $\phi \ell>1$),
we see that  similar estimates,  with a smaller $c$, hold
simultaneously for any $z\in {\bf S}_\ell$.

\medskip

We will follow the self consistent perturbation ideas  initiated in \cite{EYY, EYY2}.
We first introduce some notations. 

\begin{definition}\label{basicd}
 Let ${\T}=\{k_1$, $k_2$, $\ldots$, $k_t\}\subset \{1,2, \ldots ,N\}$
 be  an unordered set of $|\bT|=t$ elements and let 
 $H^{(\bT)}$ be the $N-t$ by $N-t$ minor of $H$ after removing the
 $k_i$-th $(1\leq i\leq t)$ rows and columns. For $\bT=\emptyset$, we define $H^{(\emptyset)}=H$.
Similarly, we define $\ba^{(j;\,\,{\T})}$ to be
 $j$-th column of $H$ with the $k_i$-th $(1\leq i\leq t)$
 elements removed. Sometimes, we just use the short notation 
$\ba^{j}$=$\ba^{(j;\,\,{\T})}$. Note that the $\ell$-th entry of $\ba^j$ is
 $\ba^j_\ell= h_{\ell j}$ for $\ell\not\in \T$.
 For any ${\T}\subset \{ 1, 2, \ldots , N\}$ we introduce the following notations:
 \begin{align}
 G^{({\T})}_{ij}:=&(H^{({\T})}-z)^{-1}(i,j) ,\qquad i,j\not\in\T\non\\
 Z^{({\T})}_{ij}:=&\ba^{i}\cdot(H^{({\T})}-z)^{-1}\ba^{j}=\sum_{k,\ell\notin {\T}}
\overline{\ba^{\,i}_k} G^{({\T})}_{k \ell}\ba^{j}_{\ell\,} \non \\
\wH^{({\T})}_{ij}:= & h_{ij}-z\delta_{ij}-Z^{({\T})}_{ij}. 
 \end{align}
These quantities depend on $z$, but we mostly neglect this dependence in the notation. 
\end{definition}
\bigskip

The following  formulas  were proved in
  Lemma 4.2  of \cite{EYY}. 

\begin{lemma}[Self-consistent perturbation formulas] \label{basicIG} Let 
$\bT\subset \{ 1, 2, \ldots, N\}$. For  simplicity, we use the
 notation $(i \,{\T})$ for $(\{i\}\cup {\T})$ and $(i j \,{\T})$
 for $(\{i,j\}\cup {\T})$. 
 Then we have the following identities:
\par\begin{enumerate}    
\item For any  $i\notin {\T}$ 
\be\label{GiiHii} 
 G^{({\T})}_{ii}=(\wH^{(i\,{\T})}_{ii})^{-1}.
 \ee
 \item For $i\neq j$ and $i,j\notin {\T}$
\be\label{GijHij} 
 G^{({\T})}_{ij}=-G^{({\T})}_{jj}G_{ii}^{(j\,{\T})}\wH^{(ij\,\,{\T})}_{ij}=
-G^{({\T})}_{ii}G_{jj}^{(i\,{\T})}\wH^{(ij\,\,{\T})}_{ij}.
 \ee
 	\item  For $i\neq j$ and $i,j\notin {\T}$
  \be\label{GiiGjii}
  G^{({\T})}_{ii}-G^{(j\,\,{\T})}_{ii}=
G^{({\T})}_{ij}G^{({\T})}_{ji}(G^{({\T})}_{jj})^{-1}.
  \ee
  \item  For any indices  $i$, $j$ and $k$ that are different  and 
  $i,j,k \notin {\T}$
 \be\label{GijGkij}
G^{({\T})}_{ij}-G^{(k\,\,{\T})}_{ij}=G^{({\T})}_{ik}G^{({\T})}_{kj}
(G^{({\T})}_{kk})^{-1} . \ee 
 \end{enumerate}
 
\end{lemma}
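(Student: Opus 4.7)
The plan is to derive all four identities as direct consequences of the Schur complement (block matrix inversion) formula. Since none of the statements involve the probability distribution of the matrix elements — they are purely linear-algebraic — and since the index set $\bT$ plays only a spectator role (each identity concerns $H^{(\bT)}$ and its sub-minors), it suffices to prove each identity in the case $\bT = \emptyset$ and then apply the result with $H^{(\bT)}$ in place of $H$.

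For (i), I would partition $H - z$ isolating row and column $i$ as a $1 + (N-1)$ block. The standard Schur complement formula for the $(i,i)$ entry of the inverse gives
\[
G_{ii} = \bigl(h_{ii} - z - \ba^i \cdot (H^{(i)} - z)^{-1} \ba^i\bigr)^{-1} = (\wH^{(i)}_{ii})^{-1},
\]
which is precisely (i). For (ii), I would isolate rows and columns $\{i, j\}$ as a $2 + (N-2)$ block and write the resulting $2 \times 2$ Schur complement as
\[
M = \begin{pmatrix} h_{ii} - z - Z^{(ij)}_{ii} & h_{ij} - Z^{(ij)}_{ij} \\ h_{ji} - Z^{(ij)}_{ji} & h_{jj} - z - Z^{(ij)}_{jj} \end{pmatrix}.
\]
Then $G_{ij} = (M^{-1})_{ij} = -M_{ij}/\det M$, and combining with $G_{jj} = M_{ii}/\det M$ yields $G_{ij} = -G_{jj}\, M_{ij}/M_{ii}$. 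Applying (i) to the minor $H^{(j)}$ identifies $1/M_{ii} = G^{(j)}_{ii}$, and by definition $-M_{ij} = \wH^{(ij)}_{ij}$; the stated identity (ii) follows (and the symmetric form with $i$ and $j$ swapped is obtained identically).

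Identities (iii) and (iv) are standard resolvent identities comparing Green functions of two minors that differ by one removed row/column. A uniform derivation is to expand the $3 \times 3$ Schur complement isolating rows/columns $\{i, j, k\}$ (the case $j = i$ recovers (iii), the generic case gives (iv)) and compare with the $2 \times 2$ Schur complement that isolates only $\{i, j\}$; the difference rearranges into $G^{(\bT)}_{ik} G^{(\bT)}_{kj}/G^{(\bT)}_{kk}$. An equivalent viewpoint is to express $(H^{(\bT)} - z)^{-1}$ as a Sherman--Morrison-type rank-one update of the block-extended inverse of $H^{(k\bT)} - z$. The only real obstacle throughout is the bookkeeping of indices across the various minors $\bT$, $i\bT$, $j\bT$, $ij\bT$, and keeping straight which $Z^{(\cdot)}_{\cdot\cdot}$ term arises at each Schur step; there is no analytic content. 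As noted in the excerpt, all four identities are already established as Lemma 4.2 of \cite{EYY}, so in the present paper one can simply cite that reference.
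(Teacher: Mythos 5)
Your proposal is correct and coincides with the standard Schur-complement/resolvent-identity proof; the paper itself does not reprove the lemma but simply cites Lemma 4.2 of \cite{EYY}, exactly as you note at the end. For (iii) and (iv) you can streamline the argument by using a single block decomposition that isolates row and column $k$ alone, which yields $G^{(\T)}_{ab} - G^{(k\,\T)}_{ab} = G^{(\T)}_{ak} G^{(\T)}_{kb} \big(G^{(\T)}_{kk}\big)^{-1}$ for all $a,b \notin \T\cup\{k\}$, and (iii) and (iv) follow by setting $(a,b,k)=(i,i,j)$ and $(a,b,k)=(i,j,k)$ respectively, avoiding any three-index Schur complement.
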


The following  large deviation
estimates concerning independent random variables were proved in 
Appendix B of \cite{EYY}. 

\begin{lemma}\label{generalHWT}
Let $a_i$ ($1\leq i\leq N$) be independent complex random  variables with mean zero, 
variance $\sigma^2$  and having a uniform  subexponential decay 
\be\label{subexp1}
\P(|a_{i}|\geq x \sigma)\leq \ttau^{-1} \exp{\big( - x^\ttau\big)}, \qquad \forall \; x\ge 1,
\ee
with some $\ttau>0$.
Let $A_i$, $B_{ij}\in \C$ ($1\leq i,j\leq N$). 
Then   there exists a constant $0< \phi<1$, depending on $\ttau $,
 such that for any $\zeta > 1$ 
we have
 \begin{align}
\P\left\{\left|\sum_{i=1}^N a_iA_i\right|\geq  (\log N)^{\zeta}
 \sigma \,\Big(\sum_{i}|A_i|^2\Big)^{1/2}\right\}\leq &\; \exp{\big[-(\log N)^{\phi \zeta  } \big]},
\label{resgenHWTD} \\
\P\left\{\left|\sum_{i=1}^N\overline a_iB_{ii}a_i-\sum_{i=1}^N\sigma^2 B_{ii}\right|\geq 
(\log N)^{\zeta} \sigma^2 \Big( \sum_{i=1}^N|B_{ii}|^2\Big)^{1/2}\right\}\leq &\;
  \exp{\big[-(\log N)^{\phi \zeta } \big]},\label{diaglde}\\
\P\left\{\left|\sum_{i\neq j}\overline a_iB_{ij}a_j\right|\geq (\log N)^{\zeta} \sigma^2 
\Big(\sum_{i\ne j} |B_{ij}|^2 \Big)^{1/2}\right\}\leq & \; \exp{\big[-(\log N)^{\phi \zeta } \big]}
 \label{resgenHWTO}
\end{align}  
for any sufficiently large $N\ge N_0$, where $N_0=N_0(\ttau)$ depends on  $\ttau$.
\end{lemma}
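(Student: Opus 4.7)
The plan is to prove all three estimates by the moment method: bound a high even moment of the relevant quantity and then apply Markov's inequality with an optimally chosen moment order $p$. The sub-exponential tail \eqref{subexp1} enters the argument at a single point, namely the individual moment growth estimate
\[ \E|a_i|^{2p} \le (Cp)^{2p/\vartheta}\sigma^{2p}, \qquad p\in\N, \]
which follows from $\E|a_i|^{2p}=2p\int_0^\infty x^{2p-1}\P(|a_i|\ge x)\,\rd x$ and the substitution $u=(x/\sigma)^\vartheta$, with a constant $C=C(\vartheta)$. Everything downstream is purely combinatorial.

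For the linear sum $S=\sum_i a_iA_i$, I would expand
\[ \E|S|^{2p}=\sum_{i_1,\ldots,i_p,j_1,\ldots,j_p}A_{i_1}\cdots A_{i_p}\overline{A_{j_1}\cdots A_{j_p}}\,\E\bigl[a_{i_1}\cdots a_{i_p}\bar a_{j_1}\cdots\bar a_{j_p}\bigr]. \]
Independence together with $\E a_i=0$ forces every index value to appear at least twice among the $2p$ slots; organizing the surviving terms by the set partition of $\{1,\ldots,2p\}$ they induce (all blocks of size $\ge 2$) and applying the moment growth bound on each block yields
\[ \E|S|^{2p}\le (Cp)^{2p/\vartheta}\sigma^{2p}\Bigl(\sum_i|A_i|^2\Bigr)^p. \]
Markov then gives $\P(|S|\ge (\log N)^\zeta\sigma(\sum_i|A_i|^2)^{1/2})\le \bigl((Cp)^{1/\vartheta}/(\log N)^\zeta\bigr)^{2p}$, and choosing $p=\lfloor c(\log N)^{\phi\zeta}\rfloor$ for any fixed $\phi<\min(\vartheta,1)$ makes the bracket smaller than $e^{-1}$, proving \eqref{resgenHWTD}.

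For the diagonal form \eqref{diaglde}, the centered variables $Y_i:=|a_i|^2-\sigma^2$ are independent, mean zero, and themselves sub-exponential with exponent $\vartheta/2$ (since $|a_i|^2$ inherits the tail $\lesssim\exp(-x^{\vartheta/2})$), so the identical combinatorial argument applied to $\sum_iB_{ii}Y_i$ yields \eqref{diaglde}. The off-diagonal form \eqref{resgenHWTO} is analogous but more delicate: expanding $\E|Q|^{2p}$ with $Q=\sum_{i\ne j}\bar a_iB_{ij}a_j$ over $2p$-tuples of ordered pairs, one organizes the non-vanishing contributions according to the multigraph on $\{1,\ldots,N\}$ whose edges are those pairs, uses $\E a_i=0$ to see that every vertex must have even degree, and applies the individual moment bound at each vertex together with repeated Cauchy--Schwarz on the edge labels $B_{ij}$. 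The outcome is
\[ \E|Q|^{2p}\le (Cp)^{Kp}\sigma^{4p}\Bigl(\sum_{i\ne j}|B_{ij}|^2\Bigr)^p \]
for some $K=K(\vartheta)$, and one more Markov step yields \eqref{resgenHWTO}.

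The main obstacle will be the graph-counting in the off-diagonal case: the bilinear structure of $Q$ (within each factor $\bar a_iB_{ij}a_j$ the two indices correspond to distinct random variables) must be exploited to ensure that the Hilbert--Schmidt norm $\bigl(\sum_{i\ne j}|B_{ij}|^2\bigr)^{1/2}$ appears rather than an operator-norm–type quantity, and that no spurious factor of $N$ is introduced from over-counting admissible index assignments on the underlying multigraph. The remaining ingredients---tail-to-moment conversion, set-partition bookkeeping for the linear sum, and the optimization of $p$ in Markov---are routine.
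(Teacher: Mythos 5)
Your proposal follows the standard high–moment plus Markov route, which is also the approach in Appendix~B of \cite{EYY} (the paper itself does not reprove the lemma, it cites that appendix). The overall plan is therefore sound, but there are a few slips worth fixing.

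First, in the off-diagonal case the statement that ``$\E a_i=0$ forces every vertex to have even degree'' is not correct: for a complex variable with mean zero and no further assumption, quantities such as $\E\,\bar a_i\,|a_i|^2$ need not vanish, so a vertex of odd degree $\ge 3$ can still contribute. What the mean-zero condition actually forces is degree $\ge 2$ at every vertex, and the power counting should be run with that weaker constraint. The final bound $\E|Q|^{2p}\lesssim (Cp)^{Kp}\sigma^{4p}\bigl(\sum_{i\ne j}|B_{ij}|^2\bigr)^p$ is still achievable this way, but the Cauchy--Schwarz bookkeeping has to accommodate odd-degree vertices; you cannot reduce purely to perfect matchings or Eulerian multigraphs.

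Second, for \eqref{diaglde} you cannot literally invoke the \emph{statement} of \eqref{resgenHWTD} with $a_i$ replaced by $Y_i=|a_i|^2-\sigma^2$: the $Y_i$ need not have a common variance (their variances involve $\E|a_i|^4$, which may differ across $i$ even though all $a_i$ share the variance $\sigma^2$). What you should do, and what your argument implicitly does, is re-run the \emph{proof} of \eqref{resgenHWTD} using only the uniform moment bound $\E|Y_i|^{2p}\le (Cp)^{4p/\ttau}\sigma^{4p}$; the common-variance hypothesis in the statement is not actually needed for the estimate.

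Finally, the claim that one may take any $\phi<\min(\ttau,1)$ is too optimistic: the set-partition combinatorics contribute an extra factor of order $p^{Cp}$ to $\E|S|^{2p}$ on top of the $(Cp)^{2p/\ttau}$ from the moment growth, which shifts the optimal exponent downward. The lemma only asks for \emph{some} $\phi=\phi(\ttau)\in(0,1)$, so this does not threaten the conclusion, but the precise range you claim should be re-derived after accounting for the partition count.
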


The following lemma (Lemma 4.2 from \cite{EYY2}) collects elementary properties of the 
Stieljes transform  of the semicircle law.  As a technical note, we use the notation $f\sim g$ 
 for two positive functions
in some domain $D$  if
there is a positive universal constant $C$ 
such that $C^{-1}\le f(z)/g(z) \le C$ holds for all $z\in D$.

\begin{lemma}\label{lm:msc}  We have for all $z$ with $\im z>0$ that
\be
   |m_{sc}(z)| = |m_{sc}(z)+z|^{-1}\le 1.
\label{zmsc2}
\ee
{F}rom  now on, let $z=E+i\eta$ with $|E|\le 5$ and $0<\eta \le 10$
and we set $\kappa= \big| \, |E|-2\big|$.
Then we  have
\be
   |m_{sc}(z)|\sim 1, \qquad   |1-m_{sc}^2(z)|\sim \sqrt{\kappa+\eta}
\label{smallz}
\ee
and the following two bounds: 
\begin{align}\label{esmallfake}
\im m_{sc} (z)\sim & \left\{\begin{array}{cc} 
 \frac{\eta}{\sqrt{\kappa+\eta}} & \mbox{if  $\kappa\ge\eta$ and $|E|\ge 2$} \\  & \\
\sqrt{\kappa+\eta} & \mbox{if $\kappa\le \eta$ or $|E|\le 2$.}
\end{array}
\right. \\ \non
\end{align}
\qed

\end{lemma}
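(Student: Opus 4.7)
This lemma collects elementary properties of $m_{sc}$; my plan is to prove it by combining the defining equation \eqref{defmsc} with the explicit formula \eqref{temp2.8}, with essentially no probabilistic content. The argument splits into three blocks corresponding to the three displayed claims.

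For \eqref{zmsc2} I rewrite \eqref{defmsc} as $m_{sc}(z+m_{sc})=-1$, which gives the identity $|m_{sc}|=|z+m_{sc}|^{-1}$ at once. Taking imaginary parts of $m_{sc}=-1/(z+m_{sc})$ yields $\im m_{sc}\,(|z+m_{sc}|^2-1)=\eta$; since $\im m_{sc}>0$ by our branch choice and $\eta>0$, this forces $|z+m_{sc}|\ge 1$, equivalently $|m_{sc}|\le 1$. For the first half of \eqref{smallz}, the matching lower bound $|m_{sc}|\ge c_1>0$ follows from $|m_{sc}|(|z|+|m_{sc}|)\ge |m_{sc}(z+m_{sc})|=1$ together with the uniform bound $|z|\le \sqrt{125}$ in $\{|E|\le 5,\eta\le 10\}$. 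For the second half, I use the algebraic identity
\be\label{proposalid}
1-m_{sc}^2 = -m_{sc}\sqrt{z^2-4},
\ee
obtained by substituting $\sqrt{z^2-4}=2m_{sc}+z$ into $-m_{sc}(2m_{sc}+z)$ and simplifying via the quadratic $m_{sc}^2+zm_{sc}+1=0$. Since $|m_{sc}|\sim 1$, it suffices to show $|\sqrt{z^2-4}|\sim\sqrt{\kappa+\eta}$. Factoring $z^2-4=(z-2)(z+2)$: one of $|z\pm 2|$ equals $\sqrt{\kappa^2+\eta^2}\sim\kappa+\eta$ (the one matching the sign of $E$), while the other lies between two positive constants in the allowed range; this gives $|\sqrt{z^2-4}|\sim\sqrt{\kappa+\eta}$.

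For \eqref{esmallfake} I write $\sqrt{z^2-4}=u+iv$ with $v>0$ (consistent with our branch), so that $\im m_{sc}=(v-\eta)/2$. Squaring produces the three relations
\be\label{proposalrel}
u^2-v^2=E^2-\eta^2-4,\qquad uv=E\eta,\qquad u^2+v^2=|z^2-4|\sim\kappa+\eta.
\ee
A direct expansion yields the useful auxiliary identity $(E^2+\eta^2+4)^2-|z^2-4|^2=16E^2$, so $E^2-u^2=[(E^2+\eta^2+4)-|z^2-4|]/2$ is bounded between two positive constants whenever $|E|\ge 2$. In Case 1 ($\kappa\ge\eta$, $|E|\ge 2$), we have $E^2-\eta^2-4=\kappa(|E|+2)-\eta^2\sim\kappa+\eta$, hence $u^2\sim\kappa+\eta$; writing $v-\eta=\eta(|E|-|u|)/|u|$ (from $uv=E\eta$) and using $|E|-|u|=(E^2-u^2)/(|E|+|u|)\sim 1$ gives $v-\eta\sim\eta/\sqrt{\kappa+\eta}$. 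In Case 2 ($\kappa\le\eta$ or $|E|\le 2$), the quantity $E^2-\eta^2-4$ is either nonpositive or of smaller order than $|z^2-4|$, hence $2v^2=|z^2-4|-(E^2-\eta^2-4)\sim\kappa+\eta$, yielding $v\sim\sqrt{\kappa+\eta}$; a short direct check then gives $v-\eta\sim\sqrt{\kappa+\eta}$ (for small $\eta$ because $v\gg\eta$, and the remaining bounded subregion is handled by continuity), completing the proof.

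The only mildly delicate point is tracking $v-\eta$ rather than $v$ alone in Case 1: separate asymptotics for $v$ and $\eta$ would suffer cancellation, and the identity $E^2-u^2\sim 1$ is exactly what ensures the correct asymptotic order.
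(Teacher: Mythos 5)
The paper does not prove Lemma~\ref{lm:msc}; it quotes it verbatim from Lemma~4.2 of \cite{EYY2} and puts a \qed immediately after the statement. Your self-contained derivation from the defining equation \eqref{defmsc} and the explicit formula \eqref{temp2.8} is therefore a genuinely different route. Most of it is correct: the argument for \eqref{zmsc2} via $\im m_{sc}\,(|z+m_{sc}|^2-1)=\eta>0$, the two-sided bound on $|m_{sc}|$, the identity $1-m_{sc}^2=-m_{sc}\sqrt{z^2-4}$ combined with $|z^2-4|=|z-2|\,|z+2|\sim(\kappa+\eta)\cdot 1$, and the Case~1 treatment of \eqref{esmallfake} via $v-\eta=\eta(|E|-|u|)/|u|$ together with $E^2-u^2 = 16E^2/\bigl(2(E^2+\eta^2+4+|z^2-4|)\bigr)\sim 1$ all hold up.

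There is one genuine gap, in Case~2. You assert that for $\kappa\le\eta$ and $|E|>2$ the quantity $E^2-\eta^2-4$ is ``nonpositive or of smaller order than $|z^2-4|$.'' That is false when $\kappa\sim\eta$: then $E^2-\eta^2-4=\kappa(|E|+2)-\eta^2$ is positive and of the \emph{same} order as $|z^2-4|\sim\kappa+\eta$, so $2v^2=|z^2-4|-(E^2-\eta^2-4)$ is a same-order difference and the conclusion $2v^2\sim\kappa+\eta$ does not follow from what you wrote. The fix is immediate from the relations you already recorded: since $uv=E\eta$ and $u^2\le u^2+v^2=|z^2-4|\le C(\kappa+\eta)\le 2C\eta$ in this regime, one has for $|E|\ge 2$
\[
v^2=\frac{E^2\eta^2}{u^2}\ge\frac{4\eta^2}{2C\eta}=\frac{2\eta}{C}\ge\frac{1}{C}(\kappa+\eta),
\]
and the matching upper bound $v^2\le|z^2-4|\lesssim\kappa+\eta$ is trivial. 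With this substitution, and with your small-$\eta$ versus compact-remainder dichotomy to pass from $v\sim\sqrt{\kappa+\eta}$ to $v-\eta\sim\sqrt{\kappa+\eta}$, the proof closes.
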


\subsection{Self-consistent perturbation equations}

Following \cite{EYY, EYY2}, we define the following quantities: 
\begin{align}
   A_i: & =\sigma^2_{ii}G_{ii}+\sum_{j\neq i}\sigma^2_{ij}\frac{G_{ij}G_{ji}}{G_{ii}}
\label{defA} \\
  Z_i:& =  \sum_{k,l\ne i} 
\Big[ \ov{\ba^i_k} G^{(i)}_{k\,l}\ba^i_l-\E_{\ba^i} \ov{\ba^i_k} 
G^{(i)}_{k\,l}\ba^i_l\Big]=Z_{ii}^{(i)}-\E_{\ba^i} Z_{ii}^{(i)}, \label{defZi} \\
\Upsilon_i:& =A_i +
\left(\wH^{(i)}_{ii}-\E_{\ba^i}\wH^{(i)}_{ii}\right) = A_i +h_{ii}-Z_i ,
\label{seeqerror}
\end{align}
where $\E_{\ba^i}$ indicates the expectation with respect to the
matrix elements in the $i$-th column.
Using \eqref{GiiHii} from 
Lemma \ref{basicIG}, we obtain  the following  system of self-consistent  equations for the
deviation from $m_{sc}$ of the diagonal matrix elements of the resolvent;
\be\label{1}
  v_i = G_{ii} - m_{sc}
= \frac{1}{-z- m_{sc}- \Big(\sum_{j}\sigma^2_{ij}v_j-\Upsilon_i\Big)}
  - m_{sc} .
\ee
For the off-diagonal terms, we will use the equation \eqref{GijHij}. 
All the quantities defined so far depend on the spectral parameter $z=E+i\eta$, but
we will mostly omit this fact from the notation.

The key  quantities $\Lambda$, $\Lambda_d$ and $\Lambda_o$ \eqref{defLambda}
 appearing in Theorem \ref{thm:detailed} 
will be typically small and  we will prove in this section that
 their size is less than $( N \eta)^{-1/3}$,
modulo logarithmic  corrections. We thus define the exceptional (bad) event
\be\label{B}
 {\bf B} ={\bf B}(z):= \Big\{ \Lambda_d(z) + \Lambda_o(z)
\ge   (\log N)^{-2} \Big\}.
\ee
We will always work in the complement set
${\bf B}^c$, i.e.,  we will have
\be\label{15}
   \Lambda_d(z) + \Lambda_o(z)
\le (\log N)^{-2}.
\ee
We collect some basic properties of the Green function in the  following elementary lemma.

\begin{lemma}\label{62}
Let $\T$ be a subset of  $\{1, \ldots, N \}$.
 Then there exists a
constant $C=C_\T$  depending 
on $C_0$ from \eqref{1.3} and on $|\T|$,  the cardinality of $\T$,
 such that the following hold in ${\bf B}^c$ 
\begin{align}
 |G_{kk}^{(\bT)}-m_{sc}|\le & \;  \Lambda_d +  C\Lambda_o^2
 \qquad \mbox{for all $k\not \in \bT $,}
\label{Gkkm} \\
  \frac{1}{ C }\le & \; |G_{kk}^{(\bT)}|\le C  \qquad \mbox{for all 
$k\not\in \bT$,}
\label{Gkk} \\
 \max_{k\ne l} |G_{k\,l}^{(\bT)}|\le &\; C\Lambda_o,  \qquad
\label{41} \\
  \max_i |A_i|\le &\; \frac{C}{N} + C\Lambda^2_o  
\label{Aest}
\end{align}
for any fixed $|\T|$ and
for any sufficiently large $N$. We recall that all quantities
depend on the spectral parameter $z$ and
the estimates are uniform in  $z=E+i\eta$
as long as $|E|\le 5$, $0<\eta\le 10$.
\end{lemma}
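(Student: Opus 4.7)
The plan is to prove all four estimates by induction on the cardinality $|\bT|$, relying only on the elementary identities \eqref{GiiGjii} and \eqref{GijGkij} from Lemma \ref{basicIG}, together with the smallness bound \eqref{15} which is in force on the good event ${\bf B}^c$ and the fact $|m_{sc}|\sim 1$ from Lemma \ref{lm:msc}. Note that \eqref{Aest} does not involve any minor and will follow as a direct consequence of the $\bT=\emptyset$ case of the first three estimates.

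For the base case $\bT=\emptyset$, the bound \eqref{Gkkm} is just the definition of $\Lambda_d$, the bound \eqref{41} is the definition of $\Lambda_o$, and \eqref{Gkk} follows from the triangle inequality $\big| |G_{kk}|-|m_{sc}|\big|\le \Lambda_d\le(\log N)^{-2}$ combined with $|m_{sc}|\sim 1$. For the inductive step, fix $j\notin\bT$; identity \eqref{GiiGjii} gives
\[
G_{kk}^{(j\bT)}=G_{kk}^{(\bT)}-\frac{G_{kj}^{(\bT)}G_{jk}^{(\bT)}}{G_{jj}^{(\bT)}},
\]
so by the inductive hypothesis the correction is bounded by $C_{|\bT|}^3\Lambda_o^2$, which upgrades \eqref{Gkkm} from $\bT$ to $j\bT$ after absorbing into a larger constant $C_{|\bT|+1}$. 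The bound \eqref{Gkk} for $j\bT$ then follows, using that the total perturbation $\Lambda_d+C\Lambda_o^2\le 2(\log N)^{-2}$ stays well below $|m_{sc}|/2$. For the off-diagonal estimate \eqref{41} I use \eqref{GijGkij},
\[
G_{k\ell}^{(j\bT)}=G_{k\ell}^{(\bT)}-\frac{G_{kj}^{(\bT)}G_{j\ell}^{(\bT)}}{G_{jj}^{(\bT)}},
\]
whose right side is bounded by $C_{|\bT|}\Lambda_o+C_{|\bT|}^3\Lambda_o^2\le C_{|\bT|+1}\Lambda_o$, again because $\Lambda_o\le(\log N)^{-2}$.

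Finally, \eqref{Aest} comes directly from the definition \eqref{defA}. Writing
\[
|A_i|\le \sigma_{ii}^2|G_{ii}|+\frac{1}{|G_{ii}|}\sum_{j\neq i}\sigma_{ij}^2|G_{ij}|^2,
\]
the first term is bounded by $CC_0/N$ using \eqref{1.3} and \eqref{Gkk} at $\bT=\emptyset$; the second term is bounded by $C\Lambda_o^2\sum_j\sigma_{ij}^2=C\Lambda_o^2$ thanks to \eqref{Gkk}, \eqref{41}, and the row normalization \eqref{sum}.

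No substantial obstacle is expected. The only point requiring care is that the induction should be carried out for $|\bT|$ bounded by a fixed integer so that the constants $C_{|\bT|}$ remain under control; for each of the four quantities only finitely many such applications are ever needed in the sequel. The smallness assumption \eqref{15} is what guarantees the $\Lambda_o^2$ corrections remain genuine perturbations and allows the lower bound in \eqref{Gkk} to survive through the induction.
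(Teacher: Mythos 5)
Your proof is correct and follows essentially the same strategy as the paper: base case from the definitions of $\Lambda_d,\Lambda_o$ and $|m_{sc}|\sim1$, then induction on $|\bT|$ via \eqref{GiiGjii} and \eqref{GijGkij} to show the passage $\T\to\T\cup\{j\}$ only adds an $O(\Lambda_o^2)$ correction, and \eqref{Aest} from the $\bT=\emptyset$ bounds plugged into \eqref{defA}. No substantive difference.
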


{\it Proof.} For $\T=\emptyset$, the estimates \eqref{Gkkm} and \eqref{41}
follow directly from the definitions \eqref{defLambda}. The bound \eqref{Gkk}
follows from \eqref{15} and that $|m_{sc}(z)|\sim 1$, see \eqref{zmsc2}.
Finally, \eqref{Aest} follows from inserting \eqref{Gkk}, \eqref{41}, \eqref{sum}
and \eqref{1.3} into \eqref{defA}.
The general case can be proved by induction on $|\bT|$ and using the 
formulas \eqref{GiiGjii} and \eqref{GijGkij} that guarantee that
\be
    |G^{(\T)}_{k\ell} - G^{(\T')}_{k\ell}|\le C^* \Lambda_o^2
\label{lo}
\ee
holds for any $\T' = \T \cup \{m\}$, where $C^*$ depends on the
constant $C_\T$ for the induction hypothesis. In the set ${\bf B}^c$ and
for sufficiently large $N$,
depending on $|\T|$, the estimate \eqref{lo} together with 
$|m_{sc}(z)|\sim 1$ guarantees that the lower bound in \eqref{Gkk}
continues to hold for $\T'$. The other estimates for $\T'$ follow
from \eqref{lo} directly. \qed

\subsection{Estimate of the exceptional events}\label{sec:excep}

The following lemma is a modification of Lemma 4.5 in \cite{EYY2}. It improves the 
estimate in the sense that 
 the control parameter  depends only on $\Lambda$ but not on
$\Lambda_d$ and $\Lambda_o$ (see \eqref{defLambda} for definitions). Since $\Lambda$, being an
average quantity, behaves better, this  yields
a stronger estimate.  

For any $\ell>0$ we
define the  key control parameter $\Psi$,  which is  random variable, by
\be\label{1.9}
\,\,\, \Psi(z):= (\log N)^\ell  \sqrt{\frac{\Lambda(z)+\im m_{sc}(z) }{N\eta}}.
\ee
We also define the  events 
\begin{align} \label{eq:excep}
\Omega_h &:= \left \{  \max_{1\leq i,j\leq N}|h_{ij}|\geq
(\log N)^{\ell/10} |\sigma_{ij}| \right \} \cup \left\{ \Big| \sum_{i=1}^N h_{ii}\Big|\ge (\log N)^{\ell/10}\right\}
     \\
\Omega_d(z) &:= \left \{   \max_{i}|Z_i(z)|\ge  \frac{1}{2}\Psi (z)\right \} 
 \non \\
\Omega_o(z) &:= \left \{    \max_{i\ne j}|Z_{ij}^{(ij)}(z)| \ge 
\frac{1}{2} \Psi (z)\right \}, \non  
\end{align}
and we let 
\be
  \Omega (z):= 
\Omega_h \cup \Big[\big( \Omega_d(z) \cup \Omega_o(z) \big)\cap {\bf B}(z)^c \Big]
\label{defOmega}
\ee 
 be the set of  exceptional events. 
These definitions depend on the parameter $\ell$ that we omit from the notation.

 The main reason that $\Psi$ emerges as the key controlling parameter can be seen from the following 
consideration.  In order to estimate  the off-diagonal term $G_{ij}$,   
we need to bound \eqref{GijHij} $K_{ij}$ and thus $Z_{ij}$. 
By the large deviation estimate,  \eqref{resgenHWTO}, we have 
\be\label{z1}
|Z_{ij}^{(ij)}|
 \leq   C(\log N)^{\ell/3}  \sqrt{\sum_{k,l\not= i, j }\left|\sigma_{ik}
G^{(ij)}_{k\,l}\sigma_{lj}\right|^2} \le C(\log N)^{\ell/3}  \sqrt{   \frac  1 { N^2} \sum_{k,l\not= i, j }  \left| 
G^{(ij)}_{k\,l}   \right|^2}
\ee
holds with  high probability.  Here we have used that 
$\sigma_{il}^2 \le C_0/ N$ from \eqref{1.3}.

For any   normal matrix $A$, we have 
\be\label{wid1}
\sum_{j} |A_{ij}|^2 =  (A A^*)_{ii} =  (|A|^2)_{ii} 
\ee
where $|A|^2:= A A^*$. 

Applying this identity   to the Green function $G = [H-z]^{-1}$, we obtain 
the following ``Ward identity": 
\begin{align} \label{wid}
\sum_{l }\left| 
G_{k\,l} \right|^2 = \sum_{\alpha}
\frac{|u_\alpha(k)|^2}{|\lambda_\alpha-z|^2} =   \frac{\im G_{kk}}{\eta},
\end{align}
where $u_\alpha$ and $\lambda_\al$ are 
 the eigenvectors and eigenvalues of $H$.
The term "Ward identity" comes from quantum field theory and it represents an identity 
derived from a conservation law or symmetry of a system. In our case, the symmetry is generated by 
the global phase multiplication  $e^{i \theta}$, but this connection is not important for our purpose.

Applying \eqref{wid} to estimate the last term in \eqref{z1} and {\it neglecting the superscript $(ij)$}, we can 
bound $ Z_{ij}^{(ij)}$ by
$$
|Z_{ij}^{(ij)}(z)|
 \leq  C(\log N)^{\ell/3}    \sqrt{\frac{  N^{-1} \sum_k  \im G_{kk} }{N \eta}  }
 \leq  C(\log N)^{\ell/3}    \sqrt{\frac{ \Lambda(z) +  {\im m_{sc}(z) }}{N \eta}  }
$$
where we have used the definition of $\Lambda$ in the last inequality. Notice that the control parameter 
$\Psi$ appears naturally in this estimate. Furthermore, it is $\im m_{sc}(z)$ which appears in the numerator, not 
$m_{sc}(z)$. This is the fundamental reason that we are able to obtain optimal estimate up to the edges 
of the spectrum. Near the edges, $\im m_{sc}(z)$  is small while $|m_{sc}(z)|$  stays near $1$.

\begin{lemma}\label{selfeq1} 
 There exist a constant $0<\phi < 1$, depending on 
$\ttau$ \eqref{subexp}, and universal  constants  $C>1$, $c>0$,
such that 
 for any $\ell$ with 
$4/\phi \le \ell \le  C \log N/\log \log N $ and for any $z\in {\bf S}_\ell $ 
 we have 
\be\label{B12}
\P (\Omega (z))    \le  C\exp{\big[-c(\log N)^{\phi \ell } \big]},
\ee
and we also have the pointwise statement 
 \be\label{21}
(\log N)^{\ell/2}\Lambda_o(z) + \max_i  |\Upsilon_i(z)|
 \le   \Psi(z)     \quad \text {in}  \;\;\;  \Omega (z)^c \cap  {\bf B}(z)^c
\ee
for any sufficiently large $N\ge N_0(\ttau)$. 
\end{lemma}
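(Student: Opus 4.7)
The plan is to handle $\Omega_h$ first as a direct input, then reduce the bounds on $Z_i$ and $Z_{ij}^{(ij)}$ inside $\mathbf{B}^c$ to Hanson-Wright type estimates via the Ward identity, and finally extract the pointwise bound from the resolvent identities of Lemma \ref{basicIG}.

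First I would bound $\P(\Omega_h)$. The subexponential decay \eqref{subexp} together with a union bound over the $O(N^2)$ entries yields
\[
  \P\Big( \max_{i,j} |h_{ij}| > (\log N)^{\ell/10}|\sigma_{ij}|\Big) \le N^2 \ttau^{-1}\exp\bigl[-(\log N)^{\ttau \ell/10}\bigr].
\]
For the diagonal trace $\sum_i h_{ii}$, which is a sum of independent bounded-variance random variables, one applies \eqref{resgenHWTD} (with $A_i=1$ and $\sum A_i^2 = N$); combined with $\sigma_{ii}^2 \le C_0/N$ this gives the same type of tail. Both bounds are comfortably dominated by $\exp[-c(\log N)^{\phi \ell}]$ for $\ell\ge 4/\phi$.

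Next I would estimate $Z_i$ and $Z_{ij}^{(ij)}$ inside $\mathbf{B}^c$, conditionally on $H^{(i)}$ (resp.\ $H^{(ij)}$), since $\ba^i$ is independent of $H^{(i)}$. Write $Z_i$ as a sum of a diagonal part $\sum_k (|h_{ki}|^2-\sigma_{ki}^2)G_{kk}^{(i)}$ and an off-diagonal quadratic form $\sum_{k\ne l}\bar h_{ki} G_{kl}^{(i)} h_{li}$, and apply the large deviation bounds \eqref{diaglde}, \eqref{resgenHWTO} with exponent $\zeta$ slightly below $\ell$. Because $\sigma_{ki}^2\le C_0/N$, the crucial quantity is
\[
  \sum_{k,l\ne i} |G_{kl}^{(i)}|^2,
\]
which by the Ward identity \eqref{wid} equals $\eta^{-1}\sum_{k\ne i}\im G_{kk}^{(i)}$. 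The resolvent identity \eqref{GiiGjii} implies $|G_{kk}^{(i)}-G_{kk}|\le C\Lambda_o^2$ in $\mathbf{B}^c$, hence
\[
  \tfrac{1}{N}\sum_{k}\im G_{kk}^{(i)} \le \im m + C\Lambda_o^2 \le \im m_{sc}+\Lambda+C(\log N)^{-4}.
\]
Plugging this and $\max_{k,l}\sigma_{ki}^2\sigma_{li}^2\le (C_0/N)^2$ into Lemma \ref{generalHWT} yields, conditionally on $H^{(i)}$,
\[
  \P\Bigl(|Z_i| \ge C(\log N)^{\zeta}\sqrt{\tfrac{\Lambda+\im m_{sc}}{N\eta}}\;\Big|\;H^{(i)}\Bigr) \le \exp\bigl[-(\log N)^{\phi\zeta}\bigr];
\]
the analogous argument on $H^{(ij)}$ handles $Z_{ij}^{(ij)}$. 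Choosing $\zeta$ within a constant factor of $\ell$ makes $C(\log N)^{\zeta}\sqrt{\cdots}\le \Psi/2$, which (after unconditioning) proves $\P(\Omega_d\cup\Omega_o)\cap\mathbf{B}^c)\le C\exp[-c(\log N)^{\phi\ell}]$, completing \eqref{B12}.

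Finally, for the pointwise bound \eqref{21} on $\Omega^c\cap\mathbf{B}^c$: the identity \eqref{GijHij} gives $|G_{ij}|\le C|G_{jj}||G_{ii}^{(j)}|\,|h_{ij}-Z_{ij}^{(ij)}|$, and in $\Omega^c$ each factor on the right is small — with in fact the sharper bound $|Z_{ij}^{(ij)}|\le (\log N)^{\zeta}\sqrt{(\Lambda+\im m_{sc})/(N\eta)}= \Psi/(\log N)^{\ell-\zeta}$ coming from the Hanson-Wright step above (rather than the weaker $\Psi/2$ used to define $\Omega_o$). Taking $\zeta$ so that $\ell-\zeta\ge \ell/2+O(1)$ produces the factor $(\log N)^{-\ell/2}$, yielding $(\log N)^{\ell/2}\Lambda_o \le \tfrac{1}{2}\Psi$. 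For $\Upsilon_i = A_i+h_{ii}-Z_i$, estimate \eqref{Aest} bounds $|A_i|$ by $C/N+C\Lambda_o^2$, while $|h_{ii}|\le C(\log N)^{\ell/10}/\sqrt N$ on $\Omega_h^c$ and $|Z_i|\le \Psi/2$ on $\Omega_d^c$; the lower bound $\Psi\ge (\log N)^\ell \sqrt{\im m_{sc}/(N\eta)}$ and the condition $\eta\ge N^{-1}(\log N)^{10\ell}$ defining $\mathbf{S}_\ell$ are precisely what makes the $A_i$ and $h_{ii}$ contributions negligible compared to $\Psi$.

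The main obstacle is the self-referential structure of the bound: $\Psi$ itself depends on the random quantity $\Lambda$. Working inside $\mathbf{B}^c$ is what makes this manageable, since there $\Lambda\le \Lambda_d\le (\log N)^{-2}$ gives a uniform upper envelope, and conditioning on $H^{(i)}$ or $H^{(ij)}$ lets us treat $\Psi$ as deterministic when applying Lemma \ref{generalHWT}. The subtle point throughout is verifying that the natural logarithmic factor coming from the Hanson-Wright exponent $\zeta$ can be tuned to simultaneously (i) dominate the resolvent matrix-element bound and (ii) leave a spare factor $(\log N)^{\ell/2}$ for the improved off-diagonal estimate in \eqref{21}, while still producing the exponential tail $\exp[-c(\log N)^{\phi\ell}]$.
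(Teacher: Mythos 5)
Your overall strategy (Ward identity to reduce $\sum_{k,l}|G^{(i)}_{kl}|^2$ to $\eta^{-1}\sum_k \im G^{(i)}_{kk}$, Hanson--Wright via Lemma~\ref{generalHWT} conditionally on $H^{(i)}$, union bound for $\Omega_h$, and \eqref{GijHij} for the off-diagonal step) is the same as the paper's. But there is a genuine gap in the absorption of the $\Lambda_o^2$ term. You replace $\Lambda_o^2$ by the envelope $C(\log N)^{-4}$ coming from $\mathbf{B}^c$ and then drop it from the square root, claiming $|Z_i|\le C(\log N)^\zeta\sqrt{(\Lambda+\im m_{sc})/(N\eta)}$. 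That step is not justified: near the bottom of $\bS_\ell$ one can have $\eta\sim N^{-1}(\log N)^{10\ell}$, so $\im m_{sc}\sim\eta$ is astronomically smaller than $(\log N)^{-4}$, and $\Lambda$ (the very quantity we are proving small) offers no help. So $\Lambda+\im m_{sc}+\Lambda_o^2$ can be dominated by the $\Lambda_o^2$ term, and the Hanson--Wright conclusion keeps it. The paper keeps $\Lambda_o^2$ inside the square root (see \eqref{BI24y}, \eqref{BI244y}), plugs this into \eqref{GijHij} to obtain a \emph{self-consistent} inequality \eqref{gij1} for $\Lambda_o$, then uses that the coefficient of $\Lambda_o$ on the right, $C(\log N)^{\ell/3}/\sqrt{N\eta}$, is $\ll 1$ on $\bS_\ell$ to absorb $\Lambda_o$ into the left and obtain the improved bound $\Lambda_o\le C(\log N)^{-2\ell/3}\Psi$ with high probability (\eqref{conl}); only \emph{then} does $\Lambda_o^2\ll \Lambda+\im m_{sc}$ become available, allowing you to discard $\Lambda_o^2$ and bound $\P(\Omega_d),\P(\Omega_o)$. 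Without this bootstrap your claimed Hanson--Wright conclusion is false.

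There is a second, related gap in your proof of \eqref{21}. On $\Omega^c\cap\mathbf{B}^c$ the definition of $\Omega_o$ \emph{only} gives the deterministic bound $|Z_{ij}^{(ij)}|<\Psi/2$, not the ``sharper bound $(\log N)^\zeta\sqrt{(\Lambda+\im m_{sc})/(N\eta)}$'' you invoke. That sharper estimate is a high-probability event coming from Hanson--Wright, and it is not a pointwise consequence of being in $\Omega^c$. Your argument conflates the deterministic set $\Omega^c$ with the complement of the bad Hanson--Wright event; they are distinct sets, and the inclusion you need would require exactly the absorption of $\Lambda_o^2$ that your first step skipped. The correct route is to establish \eqref{conl} as a high-probability event as in the paper, and then either augment the definition of $\Omega$ to include the complement of that event (making \eqref{21} genuinely pointwise) or carry the probabilistic version of \eqref{21} through to the later lemmas. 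As written, your proposal does not close this loop.
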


\bigskip

{\it Proof.} There exists $0 <\phi< 1$,  depending on $\ttau$,
such that the following two estimates hold
for any $\ell \ge 4/\phi$: 
$$
   \P \left \{   |h_{ij}|\geq
(\log N)^{\ell/10} |\sigma_{ij}|  \right \}  \le  C\exp{\big[-(\log N)^{\phi\ell} \big]},
 \qquad\forall i, j
$$
 by \eqref{subexp}, and
$$
  \P \left\{ \Big| \sum_{i=1}^N h_{ii}\Big|\ge (\log N)^{\ell/10}  \right\} 
 \le    C\exp{\big[-(\log N)^{\phi \ell } \big]} 
$$
by \eqref{1.3} and  the large deviation principle for the sum of independent random
variables  (e.g., \eqref{resgenHWTD}).  Thus
\be\label{resboundhij}
\P\left(\Omega_h \right)\leq   C\exp{\big[-c(\log N)^{\phi \ell } \big]},
\ee
so we can work on the complement set $\Omega_h^c$.  Note that  
\be
    \Omega^c \cap  {\bf B}^c = \Omega_h^c \cap \Omega_d^c\cap \Omega_o^c\cap {\bf B}^c.
\label{capp}
\ee

Fix $z\in \bS_\ell$ and we
will prove, possibly with a smaller $\phi$, that for $\ell \ge 4/\phi$ we have
\be\label{POdz}
\P\Big(\Omega_h^c\cap  \Omega_d(z) \cap {\bf B}^c(z) \Big)\leq
 C\exp{\big[-c(\log N)^{\phi \ell } \big]}
\ee
and
\be\label{Poz}
\P\Big(\Omega_h^c\cap\Omega_o(z)\cap {\bf B}^c(z) \Big)\leq 
C\exp{\big[-c(\log N)^{\phi \ell }\big]},
\ee
and this will prove \eqref{B12}.

To prove the diagonal estimate \eqref{POdz}, we can choose 
a sufficiently small $\phi>0$
(depending on $\ttau$) and 
apply the large deviation bound \eqref{diaglde} from 
Lemma \ref{generalHWT} to obtain that 
for any fixed $i$
 \be\label{BI23}
|Z_{i}|
 \leq  (\log N)^{\ell/3} \sqrt{\sum_{k,l\not= i }\left|\sigma_{ik}
G^{(i)}_{k\,l}\sigma_{li}\right|^2}
\ee
holds with a probability larger than $1- C\exp{\big[-c(\log N)^{\phi \ell } \big]}$ 
for sufficiently large $N$.
{F}rom the Ward identity \eqref{wid} and $\sigma_{il}^2 \le C_0/ N$  (by \eqref{1.3} and \eqref{Gkkm}), 
we have 
\begin{align} \label{B14}
\sum_{k, l\not= i }\left|\sigma_{ik}
G^{(i)}_{k\,l}\sigma_{li}\right|^2 
\le 
\frac{C_0}{ N} \sum_{k\not= i } \frac{\im G^{(i\,)}_{kk}}{N \eta}.
\end{align}
Since we are in the set ${\bf B}^c$, we have $ \Lambda_d + \Lambda_o
\le (\log N)^{-2}$. Thus from \eqref{GiiGjii} and \eqref{Gkk} we have  that
\be\label{5}
 0 <   \im G_{kk}^{(i)} \le  \im G_{kk}+ |G_{kk}^{(i)} - G_{kk}|\le 
\im G_{kk} +C   |G_{ik}|^2 \le  \im G_{kk} + C \Lambda_o^2.
\ee
The last term of \eqref{B14} is bounded by 
\begin{align} \label{B14y}
\frac {C_0^2} { N^2 }\sum_{k\not= i } \frac{\im G^{(i\,)}_{kk}}{\eta}
& \le C 
\frac{ \Lambda+ \Lambda_o^2  +  {\im m_{sc} }}{N \eta} 
\qquad \mbox{in ${\bf B}^c$}.
\end{align}

We have thus proved that for any $z\in {\bf S}_\ell$
 \be\label{BI24y}
|Z_{i}(z)|
 \leq C(\log N)^{\ell/3}  \sqrt{\frac{ \Lambda (z)+ \Lambda_o^2(z)  +  {\im m_{sc}(z) }}{N \eta}  }
 \qquad \mbox{in ${\bf B}^c(z)$}.
\ee
holds with a probability larger than $1- C\exp{\big[-c(\log N)^{\phi \ell } \big]}$ for sufficiently large $N$.

Similarly, for the off-diagonal estimate \eqref{Poz}, for any fixed   $i\ne j$,
 we have from \eqref{resgenHWTO} that 
 \be\label{BI244}
|Z_{ij}^{(ij)}|
 \leq   C(\log N)^{\ell/3}  \sqrt{\sum_{k,l\not= i, j }\left|\sigma_{ik}
G^{(ij)}_{k\,l}\sigma_{lj}\right|^2}
\ee
holds with a probability larger than  $1- C\exp{\big[-c(\log N)^{\phi \ell } \big]}$
 for sufficiently large $N$.
Similarly to the proof of \eqref{BI24y} for $Z_i$, we have 
 \be\label{BI244y}
|Z_{ij}^{(ij)}(z)|
 \leq  C(\log N)^{\ell/3}    \sqrt{\frac{ \Lambda(z)+ \Lambda_o^2(z)  +  {\im m_{sc}(z) }}{N \eta}  }
 \qquad \mbox{in ${\bf B}^c(z)$}
\ee
holds for any $z\in {\bf S}_\ell$
with a probability larger than $1- C\exp{\big[-c(\log N)^{\phi \ell } \big]}$ for sufficiently large $N$.

\medskip

Using Lemma \ref{62}, we   have $|G_{ii} |\le C$ and $|G_{jj}^{(i)}| \le C$
in the set ${\bf B}^c$. {F}rom \eqref{GijHij},
we can thus estimate  the off-diagonal term $G_{ij}$ by 
\be\label{6.4}
  |G_{ij}|  = |G_{ii}| |G_{jj}^{(i)}| |K_{ij}^{(ij)}|
  \le C\left(|h_{ij}| + |Z_{ij}^{(ij)}|\right), \qquad i\ne j, \qquad  \mbox{in ${\bf B}^c$}.
\ee
Hence we have that in the event ${\bf B}^c\cap\Omega_h^c$
\be
\Lambda_o = \max_{i \not = j} |G_{ij}|\le \frac{C(\log N)^{\ell/10} }{\sqrt{N}}+C(\log N)^{\ell/3}  \sqrt { 
\frac{ \Lambda+ \Lambda_o^2 +  {\im m_{sc}}}{N \eta}} 
\label{gij1}
\ee
holds with a probability larger than  $1- C\exp{\big[-c(\log N)^{\phi \ell } \big]}$ for sufficiently large $N$.

Recall that $N\eta\ge (\log N)^{10 \ell}$ on the set $\bS_\ell $ and
since  $\ell \ge 4/\phi\ge 4$, we have $(\log N)^{\ell/3}\ll \sqrt{N\eta}$, thus
the $\Lambda_o$ term on the right hand side of \eqref{gij1} can be absorbed into
 the left side for sufficiently large $N$.
Furthermore, 
by \eqref{esmallfake}, we have  $\im m_{sc}(z) \ge c\eta$ with
a universal positive constant $c$ for any $z\in \bS_\ell$.
 Thus   the first term on the right hand side of \eqref{gij1}  can be bounded by 
\[
\frac{C(\log N)^{\ell/10}  }{\sqrt{N}} \le    (\log N)^{\ell/3}  \sqrt { 
\frac{   {\im m_{sc}(z) }}{N \eta}}
\]
for large enough $N$, and thus it can be absorbed into  the second  term. We conclude that
\be
   \P \Big\{ \Lambda_o\le C(\log N)^{-2\ell/3}  \Psi, \quad {\bf B}^c\cap\Omega_h^c \Big\} 
  \ge  1- C\exp{\big[-c(\log N)^{\phi \ell } \big]}.
\label{conl}
\ee
  Inserting this bound into \eqref{BI24y} and
\eqref{BI244y}, 
 we have proved  \eqref{POdz} and \eqref{Poz}.
Finally, the estimate \eqref{21} 
for $\Upsilon$ and $\Lambda_o$ is a simple consequence of 
\eqref{conl}, the definition \eqref{seeqerror}, the
bound \eqref{Aest}, the
definition of $\Omega_d$  and that $\Omega^c\cap {\bf B}^c\subset \Omega_h^c$.
This completes the proof of Lemma \ref{selfeq1}.
\qed 

\subsection{Analysis of the self-consistent equation}\label{sec:selfcons}

Now we start using the self-consistent equation \eqref{1}. 
Since 
$$
    \Big| \sum_j \sigma_{ij}^2 v_j - \Upsilon_i\Big| \le \Lambda_d + |\Upsilon_i|,
$$
the bound \eqref{zmsc2} allows us to expand the denominator in 
\eqref{1} as long as $ \Lambda_d + |\Upsilon_i| \le \frac{1}{2}$. 
In this case,
using \eqref{defmsc},  we obtain the following equation for $v_i$
\be\label{1.18}
 v_i =  m^2_{sc} \Big(\sum_{j}\sigma^2_{ij}v_j-\Upsilon_i\Big)+
m^3_{sc}\Big(\sum_{j}\sigma^2_{ij}v_j-\Upsilon_i\Big)^2+
  O \Big(\sum_{j}\sigma^2_{ij}v_j-\Upsilon_i\Big)^3 .
\ee
Recall that $B$ denotes the $N\times N$ matrix of covariances, $B= (\sigma_{ij}^2)$.
Thus we can rewrite the last equation as 
$$
 [(1 -  m^2_{sc} B) \bv ]_i =   - m^2_{sc} \Upsilon_i +
m^3_{sc}\Big( (B  \bv)_i-\Upsilon_i\Big)^2+
  O \Big( ( B \bv)_i -\Upsilon_i\Big)^3 .
$$ 
We will first use this equation  to estimate $v_i -\barv$, i.e. the deviation
of $v_i$ from its average (Lemma \ref{59}). In the second step,  we will 
add up \eqref{1.18} for all $i$ and obtain an equation
for $\barv$ (Lemma  \ref{591}). Finally, we use a dichotomy
argument to estimate $\Lambda= |\barv|$ in Lemma \ref{592}.

\medskip

{By normalization assumption $\sum_j \sigma^2_{ij}= 1$,    the vector ${\bf e}=(1,1,\ldots , 1)$
is the (unique) eigenvector of $B$ with  eigenvalue 1. }
We  introduce the notation
\be\label{defq}
 q=q(z):=\max\{\delta_+, |1-\re m^2_{sc}(z)|\},
\ee
and we recall  the following elementary lemma that was proven in \cite[Lemma 4.8]{EYY2}.

\begin{lemma}\label{infinf} The matrix $\mathbb I-m_{sc}^2(z)B$ is
invertible on the subspace orthogonal to ${\bf e}$. 
 Let  $\bf u$ be a vector which is  orthogonal to ${\bf e}$ and let
$$
   \bw = (\mathbb I-m_{sc}^2(z)B)\bu,
$$
then 
$$
\|{\bf u}\|_\infty\leq \frac{C\log N}{q(z)}\|{\bf w}\|_\infty
$$
for some constant $C$ that only depends on $\delta_-$ in \eqref{speccond}. \qed
\end{lemma}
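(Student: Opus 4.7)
The strategy is to first establish an $\ell^2$-operator norm bound of the form $C(\delta_-)/q(z)$ for $(\mathbb{I}-m_{sc}^2(z)B)^{-1}$ on ${\bf e}^\perp$, and then to upgrade it to $\ell^\infty$ at the cost of only a $\log N$ factor, via a truncated Neumann expansion that exploits the doubly stochastic structure of $B$.

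For the $\ell^2$ step, Assumption {\bf (B)} and the symmetry of $B$ imply that the spectrum of $B$ restricted to ${\bf e}^\perp$ lies in $[-1+\delta_-,1-\delta_+]$. Using $|m_{sc}(z)|\le 1$ from \eqref{zmsc2} together with the elementary inequality $|1-m_{sc}^2\lambda|\ge|1-\lambda\,\mathrm{Re}\,m_{sc}^2|$, I would split into the cases $\lambda\ge 0$ and $\lambda<0$ and verify
\[
|1-m_{sc}^2(z)\,\lambda|\ \ge\ c(\delta_-)\,q(z),\qquad \lambda\in[-1+\delta_-,1-\delta_+].
\]
The case $\lambda\ge 0$ directly recovers the $\max(\delta_+,|1-\mathrm{Re}\,m_{sc}^2|)$ structure of $q$, while the case $\lambda<0$ introduces the $\delta_-$-dependence in the constant, since $\lambda$ is protected from $-1$ only by $\delta_-$. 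By the spectral theorem this gives invertibility and the $\ell^2$-norm bound $\|(\mathbb{I}-m_{sc}^2B)^{-1}|_{{\bf e}^\perp}\|_{2\to 2}\le C(\delta_-)/q(z)$.

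For the $\ell^\infty$ upgrade, set $\bu=(\mathbb{I}-m_{sc}^2B)^{-1}\bw$ and use the Neumann identity
\[
\bu\ =\ \sum_{k=0}^{K-1}(m_{sc}^2B)^k\bw + (m_{sc}^2B)^K\bu .
\]
The partial sum is bounded in $\ell^\infty$ by $K\|\bw\|_\infty$, using that $B$ has nonnegative entries with unit row sums (hence $\|B^k\|_{\infty\to\infty}\le 1$) and $|m_{sc}^2|\le 1$. The tail is estimated by passing to $\ell^2$ and invoking Step~1 together with the spectral contraction $\|B\|_{2\to 2}|_{{\bf e}^\perp}\le 1-\delta$ where $\delta=\min(\delta_-,\delta_+)$: choosing $K$ of order $\log N$ makes $(1-\delta)^K\sqrt{N}$ small, so $\|(m_{sc}^2B)^K\bu\|_\infty$ is negligible relative to $\|\bw\|_\infty/q(z)$.

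The main obstacle is preserving the $1/q(z)$ denominator during the $\ell^2\to\ell^\infty$ transfer while paying only a $\log N$ factor, as opposed to the trivial $\sqrt{N}$ loss from $\|\cdot\|_2\le\sqrt N\|\cdot\|_\infty$. The partial-sum contribution $K\|\bw\|_\infty$ must be absorbed into $C\log N/q(z)\,\|\bw\|_\infty$, which requires either recasting the truncated sum in closed form as $(\mathbb{I}-(m_{sc}^2B)^K)(\mathbb{I}-m_{sc}^2B)^{-1}$ and reapplying the Step~1 bound, or a careful case analysis distinguishing the spectral-edge regime $q(z)\sim\delta_+$ from the bulk regime where $q(z)$ is of order one; getting the constant to depend only on $\delta_-$ (and not on $\delta_+$, which is already captured by $q$) is the key bookkeeping point.
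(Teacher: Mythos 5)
This paper does not actually prove the lemma: it is quoted as Lemma~4.8 of \cite{EYY2}. Your $\ell^2$ step is nevertheless sound. Writing $a=\mathrm{Re}\,m_{sc}^2$, the chain $|1-m_{sc}^2\lambda|\ge|1-a\lambda|\ge c(\delta_-)\,q(z)$ for all $\lambda\in[-1+\delta_-,1-\delta_+]$ does follow from the case split you sketch, using $|a|\le|m_{sc}|^2\le 1$ and $q\le 2$; as you anticipate, only the quadrant $\lambda<0$, $a<0$ introduces $\delta_-$, and then $1-a\lambda\ge 1-(1-\delta_-)=\delta_-\ge(\delta_-/2)\,q$. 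That gives invertibility and $\|(\mathbb I-m_{sc}^2B)^{-1}|_{{\bf e}^\perp}\|_{2\to2}\le C(\delta_-)/q$.

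The $\ell^\infty$ upgrade, however, has a genuine gap that you flag but do not close. In the truncation $\bu=\sum_{k<K}(m_{sc}^2B)^k\bw+(m_{sc}^2B)^K\bu$, your tail estimate passes to $\ell^2$ and needs $(1-\delta)^K\sqrt N\lesssim1$ with $\delta=\min(\delta_-,\delta_+)$, which forces $K\gtrsim\delta^{-1}\log N$. Your partial-sum bound is $K\|\bw\|_\infty$, and absorbing $K\lesssim C(\delta_-)q^{-1}\log N$ requires $q\lesssim C(\delta_-)\min(\delta_-,\delta_+)$. This is false in the regime $\delta_+<\delta_-$ with $q$ of order one (e.g.\ $\mathrm{Re}\,m_{sc}^2$ bounded away from $1$, so $q\approx|1-\mathrm{Re}\,m_{sc}^2|\gg\delta_+$): there $K\gtrsim\delta_+^{-1}\log N\gg q^{-1}\log N$ with no $\delta_-$-only constant to rescue it. Neither suggested repair works: the closed form $\bigl(\mathbb I-(m_{sc}^2B)^K\bigr)(\mathbb I-m_{sc}^2B)^{-1}\bw=\bu-(m_{sc}^2B)^K\bu$ is exactly what you must bound, hence circular, and an edge--bulk case split does not change the fact that both the $\ell^2$ and $\ell^\infty$ contraction rates of $m_{sc}^2B$ on ${\bf e}^\perp$ are governed by $\min(\delta_-,\delta_+)$, not by $\delta_-$. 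What is missing is a $z$-dependent reparametrization \emph{before} the Neumann expansion: one should write $\mathbb I-m_{sc}^2B=(1-m_{sc}^2\alpha)\bigl(\mathbb I-\tfrac{m_{sc}^2}{1-m_{sc}^2\alpha}(B-\alpha\mathbb I)\bigr)$ and choose the shift $\alpha$ so that the conjugated operator has $\ell^2$ norm on ${\bf e}^\perp$ at most $1-c\,\delta_-$, $\ell^\infty$ norm at most $1$, and $|1-m_{sc}^2\alpha|\gtrsim q$; only then does $K\sim\delta_-^{-1}\log N$ suffice while the scalar prefactor delivers the $q^{-1}$. Without exhibiting such an $\alpha$ (which must depend on the sign of $\mathrm{Re}\,m_{sc}^2$), the argument does not close.
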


The following lemma estimates the deviation of $v_i$ from its average $\barv$:

\begin{lemma}\label{59}  Suppose that $4\le \ell\le C\log N/\log\log N$.
 Fix  the spectral parameter $z\in \bS_\ell$ and
 we will omit it from the notations.
  Suppose that in some set $\Xi$ it holds that
 \be\label{342}
 \Lambda_d\le \frac{q}{(\log N)^{3/2}},
 \ee
 then  in the set  $\Xi\cap\Omega^c \cap {\bf B}^c$  we have  
\be\label{llld2}
\max_i\left|v_i-\barv\right|\leq 
 \frac{C \log N}{q}
\left(\Lambda^2 +  \Psi+ \frac{(\log N)^2}{q^2}\Psi^2\right)
\le  \frac{C \log N}{q^3}
\left(\Lambda^2 +  \Psi\right)
\ee
for some constant $C$  depending only on $\delta_-$
and for sufficiently large $N$.
\end{lemma}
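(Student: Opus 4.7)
The plan is to isolate an equation for the fluctuation vector $\widetilde\bv := \bv - \barv\,{\bf e}$ by projecting the vector form of \eqref{1.18} onto ${\bf e}^\perp$, to invert $\mathbb{I} - m_{sc}^2 B$ on that subspace via Lemma \ref{infinf}, and to close the resulting inequality by absorbing the nonlinearities using the a priori smallness from \eqref{342}.

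First I would rewrite \eqref{1.18} in vector form as
\[
   (\mathbb{I} - m_{sc}^2 B)\bv \;=\; -m_{sc}^2\,\boldsymbol\Upsilon \;+\; m_{sc}^3\,\mathbf{s}^{\odot 2} \;+\; O\bigl(|\mathbf{s}|^3\bigr), \qquad s_i := (B\bv)_i - \Upsilon_i,
\]
which is legitimate because \eqref{342} combined with \eqref{21} forces $|s_i|\le \Lambda_d + \Psi \le \tfrac12$ for sufficiently large $N$. Applying the orthogonal projection $P$ onto ${\bf e}^\perp$ and using that $B{\bf e}={\bf e}$ (so that $B$, and hence $\mathbb{I} - m_{sc}^2 B$, commutes with $P$), the component of $\bv$ parallel to ${\bf e}$ is annihilated on the left-hand side, giving
\[
   (\mathbb{I} - m_{sc}^2 B)\widetilde\bv \;=\; -m_{sc}^2\, P\boldsymbol\Upsilon \;+\; m_{sc}^3\, P(\mathbf{s}^{\odot 2}) \;+\; P\cdot O(|\mathbf{s}|^3).
\]
Since $\widetilde\bv\in{\bf e}^\perp$, Lemma \ref{infinf} then yields $\|\widetilde\bv\|_\infty \le \tfrac{C\log N}{q}$ times the sup-norm of the right-hand side.

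To bound the RHS I would write $s_i = \barv + \tau_i$ with $\tau_i := (B\widetilde\bv)_i - \Upsilon_i$, so that $s_i^2 = \barv^{\,2} + 2\barv\,\tau_i + \tau_i^2$; crucially the constant $\barv^{\,2}$ is annihilated by $P$. Setting $\Delta := \|\widetilde\bv\|_\infty$, the bound \eqref{21} gives $\|\boldsymbol\Upsilon\|_\infty\le \Psi$ and hence $\|\boldsymbol\tau\|_\infty \le \Delta + \Psi$. Collecting contributions term by term (and using AM--GM on the cross term $2\barv\Psi \le \Lambda^2+\Psi^2$ to produce the $\Lambda^2$ on the right) yields the self-improving inequality
\[
   \Delta \;\le\; \frac{C\log N}{q}\Bigl[\Psi + \Lambda^2 + \Lambda\Delta + (\Delta+\Psi)^2 + (\Lambda+\Delta+\Psi)^3\Bigr].
\]

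The main obstacle is the bootstrap of this inequality. Assumption \eqref{342} gives $\Lambda\le \Lambda_d\le q(\log N)^{-3/2}$ and $\Delta\le 2\Lambda_d$, so both $\tfrac{C\log N}{q}\Lambda$ and $\tfrac{C\log N}{q}\Delta$ are $o(1)$; the mixed terms $\Lambda\Delta$, $\Delta^2$, $\Delta\Psi$ and the cubic remainder can therefore all be absorbed into $\tfrac12\Delta$ on the left. The resulting inequality produces the first claimed bound, the extra factor $\tfrac{(\log N)^2}{q^2}$ in front of $\Psi^2$ being a convenient upper bound for the surviving $\Psi^2$ contribution (since $\tfrac{(\log N)^2}{q^2}\ge 1$). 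The second, weaker bound follows from the first by using that $\Psi\ll 1$ on $\bS_\ell$ to dominate $\Psi^2$ by $\Psi$. Keeping track of constants through the three nonlinear terms is tedious but routine; no further idea beyond the projection/inversion/absorption scheme is required.
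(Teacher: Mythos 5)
Your proof is correct and follows essentially the same route as the paper: expand \eqref{1.18}, split off the mean $\barv$ (i.e.\ project onto ${\bf e}^\perp$), invert $\mathbb I - m_{sc}^2 B$ via Lemma~\ref{infinf}, and close the bootstrap using the smallness hypothesis~\eqref{342}. The only difference is bookkeeping: you track the exact cancellation of $\barv^2$ under the projection $P$, while the paper bounds the whole quadratic term crudely by $C(\Psi+\Lambda_d)^2$, first derives the preliminary estimate $\Lambda_d \le \Lambda + \frac{C\log N}{q}(\Lambda^2+\Psi)$, and then substitutes it back into the bound for $\max_i|v_i-\barv|$ --- both paths lead to the same inequality~\eqref{llld2}.
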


\medskip

{\it Proof.}  For $z\in \bS_\ell$, $q(z)$ and $\im m_{sc}(z)$ are bounded.
Combining \eqref{342} with the definitions of $\Psi(z)$, $\bS_\ell$ and with $\ell \geq 4$, we obtain  
that $\Lambda_d(z)$, $\Lambda(z)$ are bounded by $C(\log N)^{-3/2}$
and $\Psi(z)$  is bounded by $C(\log N)^{-2}$.
Thus the expansion \eqref{1.18} holds true in the set $\Xi\cap\Omega^c\cap {\bf B}^c$,
by using \eqref{21}. We can estimate
the second and third order terms in \eqref{1.18} by $C(\Psi+\Lambda_d)^2$
and we obtain
\begin{align}\label{344}
v_i & =  m_{sc}^2 \sum_{j}\sigma^2_{ij}v_j+ \e_i, \qquad
\mbox{with} \quad     
\e_i= O(\Psi) + O(\Lambda_d^2) \quad \text { in  } \; \Xi\cap\Omega^c \cap {\bf B}^c .
\end{align}
Taking the average over $i$, we have
$$
 (1-m_{sc})\barv = \frac{1}{N}\sum_i \e_i =  O(\Psi) + O(\Lambda_d^2),
$$
and thus it follows from \eqref{344} that
$$
    v_i - \barv = m_{sc}^2\sum_j \sigma^2_{ij} (v_j -\barv ) + O(\Psi) + O(\Lambda_d^2).
$$
 Applying Lemma \ref{infinf} for $u_i=v_i-\barv$, we obtain 
\be\label{347}
\max_i\left|v_i-\barv\right|\leq \frac{C \log N}{q} \left(\Lambda_d^2 +       \Psi\right) ,
\ee
hence
$$
   \Lambda_d \le \Lambda + \frac{C \log N}{q} \left(\Lambda_d^2 +       \Psi\right).
$$
With \eqref{342}, this inequality implies 
\be\label{71}
\Lambda_d\leq \Lambda + \frac{C \log N}{q} (\Lambda^2+\Psi).
\ee
Using \eqref{71} to bound $\Lambda_d^2$ in \eqref{347},  we have  proved  the
first inequality of \eqref {llld2}, the second one follows from $\Psi\le C(\log N)^{-2}$. 
This completes the proof of Lemma \ref{59}. \qed

\medskip

In this paper we assumed that the positive constants $\delta_\pm$ are independent
of $N$ (see \eqref{speccond}), thus $q$ is bounded and the  condition
\eqref{342} is automatically satisfied in the set ${\bf B}^c$, see \eqref{15}, 
and  therefore \eqref{llld2} can be written as
\be\label{llld}
\max_i\left|v_i-\barv\right|\leq C(\log N) \left(\Lambda^2 +       \Psi\right) 
 \qquad \mbox{in $\Omega^c \cap {\bf B}^c$},
\ee
in particular,
\be\label{71d}
\Lambda_d\leq \Lambda + C (\log N) (\Lambda^2+\Psi) \qquad \mbox{in $\Omega^c \cap {\bf B}^c$},
\ee
with some constant $C$ depending only on $\delta_\pm$.

\medskip

\begin{lemma}\label{591}     Suppose that $4\le \ell\le C\log N/\log\log N$.
  Fix  the spectral parameter $z\in \bS_\ell$ and
 we will omit it from the notations.
Then in the set $\Omega^c \cap {\bf B}^c$  we have 
\be\label{661}
(1- m^2_{sc}) \barv =  m_{sc}^3 {\barv}^2+  m^2_{sc}\barZ   
 +  O\Big( \frac{\Lambda^2}{\log N}\Big)+   O\Big( (\log N)\Psi^2 \Big),
\ee
where $\barZ: = N^{-1} \sum_{i=1}^N Z_i$. The implicit constants in 
the error terms depend only on $\delta_\pm$ and $C_0$.
\end{lemma}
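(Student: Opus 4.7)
My approach is to average the self-consistent equation \eqref{1.18} over $i\in\{1,\dots,N\}$ and track how each piece decomposes into the three terms on the right-hand side of \eqref{661}. Setting $w_i:=(B\bv)_i-\Upsilon_i$ and using that $B{\bf e}={\bf e}$ (so $N^{-1}\sum_i(B\bv)_i=\barv$), averaging \eqref{1.18} over $i$ yields
\[
(1-m_{sc}^2)\,\barv \;=\; -\,m_{sc}^2\,\bar\Upsilon \;+\; m_{sc}^3\,\frac{1}{N}\sum_i w_i^2 \;+\; O\!\Big(\frac{1}{N}\sum_i|w_i|^3\Big).
\]
Writing $\Upsilon_i=A_i+h_{ii}-Z_i$ produces the desired term $m_{sc}^2\,\barZ$ along with correction terms $m_{sc}^2(\bar A+\bar h)$. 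On $\Omega_h^c$ we have $|\bar h|\le(\log N)^{\ell/10}/N$, while the Ward identity \eqref{wid} combined with $\sigma_{ij}^2\le C_0/N$, Lemma \ref{62}, and \eqref{defA} gives
\[
|\bar A|\;\le\;\frac{C}{N}+\frac{C(\Lambda+\im m_{sc})}{N\eta}.
\]
Since $\im m_{sc}/\eta\ge c>0$ and $N\eta\ge(\log N)^{10\ell}$ on $\bS_\ell$, the definition \eqref{1.9} of $\Psi$ forces both $|\bar A|$ and $|\bar h|$ to be of order $(\log N)\Psi^2$ or smaller.

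Next I would expand the quadratic term. With ${\bf u}:=\bv-\barv\,{\bf e}$ and $B{\bf e}={\bf e}$, the identities $(B\bv)_i=\barv+(B{\bf u})_i$ and $N^{-1}\sum_i(B{\bf u})_i=0$ give
\[
\frac{1}{N}\sum_i w_i^2\;=\;\barv^2 \;-\; 2\barv\,\bar\Upsilon \;+\; \frac{1}{N}\sum_i\big((B{\bf u})_i-\Upsilon_i\big)^2,
\]
which already produces the leading $m_{sc}^3\barv^2$ exactly. The cross term $|\barv\,\bar\Upsilon|$ is dominated by $\Lambda|\barZ|\le\Lambda\Psi/2$, which Young's inequality controls by $\Lambda^2/(2\log N)+(\log N)\Psi^2/2$. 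For the remaining sum I would perform an $\ell^2$-inversion of $I-m_{sc}^2B$: from \eqref{1.18} the vector ${\bf u}$ solves $(I-m_{sc}^2B){\bf u}=r$ on ${\bf e}^\perp$ with $r_i=-m_{sc}^2\Upsilon_i+O(w_i^2)$; by \eqref{speccond} and $|m_{sc}|\le 1$ the operator $(I-m_{sc}^2B)^{-1}$ has $\ell^2$-norm on ${\bf e}^\perp$ bounded by a constant depending only on $\delta_\pm$, so $N^{-1}\|{\bf u}\|_2^2\le CN^{-1}\sum_i(|\Upsilon_i|^2+|w_i|^4)$. Using $N^{-1}\sum_i|Z_i|^2\le\Psi^2/4$ on $\Omega_d^c$, the analogous bounds on $A_i$ and $h_{ii}$, and $|w_i|\le C(\Lambda+\Psi)$, one then obtains $N^{-1}\sum_i((B{\bf u})_i-\Upsilon_i)^2\le C\Psi^2$. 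The cubic remainder $N^{-1}\sum_i|w_i|^3\le C(\Lambda^3+\Psi^3)$ is absorbed using $\Lambda\le(\log N)^{-2}$ in ${\bf B}^c$ and $\Psi\le(\log N)^{-2}$ on $\bS_\ell$, which give $\Lambda^3\le\Lambda^2/\log N$ and $\Psi^3\le(\log N)\Psi^2$.

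The main obstacle is the quadratic error $N^{-1}\sum_i((B{\bf u})_i-\Upsilon_i)^2$. A naive application of Lemma \ref{59} together with $\|B\|_{\mathrm{op}}\le 1$ only yields $\max_j|v_j-\barv|^2\le C(\log N)^2(\Lambda^2+\Psi)^2$, which carries the extra $\log N$ coming from the $\ell^\infty$-inversion in Lemma \ref{infinf} and hence produces $(\log N)^2\Psi^2$, one logarithmic factor too many. The $\ell^2$-inversion sketched above avoids this loss because the operator norm of $(I-m_{sc}^2B)^{-1}$ on ${\bf e}^\perp$ is controlled purely by the spectral gap $\delta_\pm$, which is precisely what preserves the single $\log N$ factor on the right-hand side of \eqref{661}.
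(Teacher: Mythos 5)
Your proof follows the paper's skeleton (average \eqref{1.18} over $i$, split $\Upsilon_i=A_i+h_{ii}-Z_i$ to surface $m_{sc}^2\barZ$, and isolate $m_{sc}^3\barv^2$ from the quadratic), but handles the key term $N^{-1}\sum_i\big((B\bu)_i-\Upsilon_i\big)^2$ by a genuinely different device. The paper squares the $\ell^\infty$ bound $\max_i|v_i-\barv|\le C(\log N)(\Lambda^2+\Psi)$ from Lemma~\ref{59}, which nominally yields $(\log N)^2(\Lambda^2+\Psi)^2$ --- one logarithmic factor above the stated $(\log N)\Psi^2$ (harmless downstream, where single powers of $\log N$ are freely absorbed, but not literally the displayed remainder). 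You instead invert $\mathbb I-m_{sc}^2B$ in $\ell^2$ on ${\bf e}^\perp$: by Assumption {\bf(B)} its spectrum there lies in $[-1+\delta_-,1-\delta_+]$, and since $|m_{sc}|\le1$ the resolvent has $\ell^2$-operator norm at most $1/\min(\delta_-,\delta_+)$ with no $\log N$. This correctly identifies the extra logarithm in Lemma~\ref{infinf} as a pure $\ell^\infty$ artifact and is what buys you the literal bound in the lemma; your averaged Ward-identity estimate for $\bar A$ is likewise slightly sharper than the paper's pointwise $\max_i|A_i|\le C/N+C\Lambda_o^2$ from \eqref{Aest}, though both suffice. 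One quantitative slip: \eqref{71d} gives only $|w_i|\le\Lambda_d+|\Upsilon_i|\le C\big(\Lambda+(\log N)\Psi\big)$, not $C(\Lambda+\Psi)$, so $N^{-1}\sum_i|w_i|^4\le C\big(\Lambda^4+(\log N)^4\Psi^4\big)$ and $N^{-1}\sum_i|w_i|^3\le C\big(\Lambda^3+(\log N)^3\Psi^3\big)$. This is innocuous: on $\bS_\ell\cap{\bf B}^c$ one has $\Lambda\le(\log N)^{-2}$ and $\Psi\le(\log N)^{-8}$, hence $\Lambda^4\le\Lambda^2/\log N$, $(\log N)^4\Psi^4\le\Psi^2$, and the analogous cubic bounds hold, so the estimate closes exactly as you claim.
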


\medskip 

{\it Proof.} {F}rom the choice   $\ell \ge 4$, and
from $\Lambda  \le (\log N)^{-2}$ in the set ${\bf B}^c$, we have
\be\label{95}
 \Psi  \le (\log N)^{-8} .
\ee
Moreover, for $z\in \bS_\ell$, we have $\im m_{sc}(z)\ge c\eta$ with
some universal positive constant $c$ (see Lemma \ref{lm:msc}), we also have
\be
  \Psi\ge \frac{(\log N)^\ell}{\sqrt{N}}.
\label{Psibelow}
\ee
By the definition of $\Upsilon_i$ \eqref{seeqerror}, by the estimates \eqref{Aest} and 
 \eqref{21},  we have
\be\label{511}
\Upsilon_i= A_i +h_{ii}-Z_i = h_{ii} - Z_i + O(\Lambda_o^2 + N^{-1} ) = h_{ii}- Z_i + O(\Psi^2 )
\qquad \mbox{in $\Omega^c\cap {\bf B}^c$}.
\ee
The size of the last term of \eqref{1.18} is less than
 $O(    \Psi^3+\Lambda_d^3)$ which is bounded by $O(    \Psi^2+\Lambda^3)$
using \eqref{71d} and \eqref{95}.
Thus we have,  from \eqref{21} and \eqref{1.18},
\be
v_i = m^2_{sc} \Big(\sum_{j}\sigma^2_{ij}v_j + Z_i -h_{ii}+ 
O(\Psi^2 ) \Big)+m^3_{sc} \Big(\sum_{j}\sigma^2_{ij}v_j+  O(\Psi )  \Big)^2+
   O \Big(     \Psi^2+\Lambda^3\Big)\qquad \mbox{in $\Omega^c\cap {\bf B}^c$}.
\ee
Summing  up $i$ and dividing by $N$, we obtain
\be\label{66}
 \barv =m^2_{sc}\barv + m^2_{sc}\barZ   + O(      \Psi^2+  \Lambda^3)
+\frac{m_{sc}^3}N\sum_i\bigg(\sum_{j}\sigma^2_{ij}v_j+  O(\Psi )  \bigg )^2  
\qquad \mbox{in $\Omega^c\cap {\bf B}^c$}.
\ee
Here we used that in  the set $\Omega^c\cap {\bf B}^c\subset\Omega_h^c$,
we have $N^{-1}|\sum_i h_{ii}|\le (\log N)^{\ell/10}N^{-1} \le \Psi^2$ by \eqref{Psibelow}.
Writing $v_j = (v_j -\barv)+\barv$, the last term in \eqref{66} can be estimated
using \eqref{llld}
$$ 
\frac{m_{sc}^3}N\sum_i\bigg(\sum_{j}\sigma^2_{ij}v_j+  O(\Psi )  \bigg )^2 
  = m_{sc}^3 {\barv}^2  + O\big( (\log N)\Psi(\Lambda^2 +\Psi)\big)+ O(\Lambda\Psi)+
 O(\Psi^2). 
$$
Collecting the various error terms and using \eqref{95} and
that  $\Lambda\le (\log N)^{-2}$ in
${\bf B}^c$,  we obtain \eqref{661} from \eqref{66}.
This completes the proof of  Lemma \ref{591}.  \qed

\subsection{Dichotomy estimate for $\Lambda$}

 Throughout this section we fix the parameter $\ell$ with 
$4\le \ell \le C\log N/\log\log N$.
By Lemma \ref{591} we have that in  $\Omega^c\cap {\bf B}^c$
\begin{align}\label{m-m2}
(1-m^2_{sc}) \barv - m_{sc}^3 \barv^2  =  O(     \Psi )+  O(\Lambda^2)/\log N,
\end{align}
where we have used the simple bound $\Psi \le 1/\log N$ and
that in the set $\Omega(z)^c \cap {\bf B}(z)^c$ all $Z_i$, hence $\barZ$ can be bounded
by $\Psi$ (see \eqref{capp} and the definition of $\Omega_d$).

We introduce the following notations:
\be\label{58}
 \alpha  := \Big|\frac { 1-m^2_{sc}}{m_{sc}^{3}}\Big|
\qquad  \beta : =  \frac{ (\log N)^{2 \ell} }{ (N \eta)^{1/3}}, \quad
\mbox{with}\quad \eta =\im z, 
\ee
where $\al=\al(z)$ and $ \beta= \beta(z)$ depend on the spectral parameter $z$.
For
 any $z\in \bS_\ell$ we have the bound 
$\beta(z) \le (\log N)^{-4}$,
by $\ell \ge 4$.
{F}rom Lemma \ref{lm:msc} it also follows that there is a universal
constant $K\ge1$ such that
\be
   \frac{1}{K}\sqrt{\kappa+\eta}\le  \alpha(z)\le K\sqrt{\kappa+\eta}
\label{Kdef}
\ee
for any $z\in \bS_\ell$.

By definition of  $\Psi=\Psi(z)$ \eqref{1.9}, we have 
\begin{align}
    \Psi = &    (\log N)^\ell   \sqrt{\frac{\Lambda +\im m_{sc} }{N\eta}}    \non \\
\le &    (\log N)^\ell     {\frac{\Lambda +\im m_{sc} }{(N\eta)^{1/3}}} +    
 (\log N)^\ell    (N \eta)^{-2/3} \le \beta \Lambda + \alpha \beta + \beta^2, 
\end{align}
where, in the last step, we have
 used that  $\alpha(z) \sim \sqrt{\kappa+\eta}$, see \eqref{Kdef}, and thus
$ \im m_{sc}(z)   \le C \alpha(z)$ (see Lemma \ref{lm:msc}).
We conclude from \eqref{m-m2} and $|m_{sc}|\sim 1$ that
\be
  \Big| \frac{1-m^2_{sc}}{m_{sc}^3} \barv - \barv^2\Big|  \le
 C^*\big(\beta \Lambda + \alpha \beta + \beta^2\big) +
   O(\Lambda^2)/\log N \qquad \mbox{in $\Omega^c\cap {\bf B}^c$}
\label{dich}
\ee
with some  constant $C^*$.   

Neglecting the error term and replacing $\barv$ by $\Lambda$,  we roughly have the equation 
\be
 \Big| \alpha \Lambda  -\Lambda^2\Big|  \le
 C^*\big(\beta \Lambda + \alpha \beta + \beta^2\big).
\label{dichint}
\ee
This inequality provides certain estimates on $\Lambda$ depending on whether $\al \lesssim \beta$ or not.

Since $\alpha$ and $\beta$ are functions of $z$ ($\beta(z)$
depends only on $\eta=\im z$), 
we will fix  $E=\re z$ and vary $\eta=\im z$ from $\eta=10$ down
to $\eta = (\log N)^{10\ell}/N$. 
{ Thanks to \eqref{Kdef}, $\al(z)$ is essentially monotone increasing
in $\eta$, up to universal constants. The function $\beta(z)$ is monotonically
decreasing. Therefore there exists a threshold $\wt\eta$ such that
for $\eta\le\wt\eta$ we have $\alpha \lesssim \beta$
and for  $\eta\ge\wt\eta$ we have $\alpha \gtrsim \beta$.
To implement precisely the idea of dividing the estimate 
according to the relative size of $\alpha$ and  $\beta$, 
we will need to choose a large but fixed constant 
 $U>1$ depending only on $C^*$. Let $\wt\eta=\wt\eta(U, E)$ be the solution to 
$\sqrt{\kappa+\eta}= 2 U^2K \beta(z) $ where  $\kappa = \big| |E|-2\big|$.
Note that up to a constant factor, this equation is the
same as $\alpha(z)=\beta(z)$. } Since  $\sqrt{\kappa+\eta}$ is increasing
 while $\beta(z) $ is decreasing in $\eta$, 
the solution is unique and  one can easily prove that 
\be\label{teta}
\wt\eta \le N^{-1/3}
\ee
 for sufficiently large $N$,
depending on $U$. 
The   implementation of this idea and  precise estimates on $\Lambda$ is given by the following Lemma:

\medskip 
\begin{lemma}\label{592}{\bf [Dichotomy Lemma]}    Suppose that 
$4\le \ell\le C\log N/\log\log N$. 
Then there is a  constant $U_0= U_0(\delta_\pm, C_0)\ge 1$
such that
for any $U\ge U_0$,  there exists a constant $C_1(U)$, depending only on $U$,
 such that  for any spectral parameter $z\in \bS_\ell$ the following 
estimates hold
\begin{align}\label{81}
\Lambda(z) & \le   U  \beta(z)   \quad \text{ or }  \quad  \Lambda(z)  \ge  
 \frac {\alpha(z)}{  U }     & \text{ if } \;\; \im z\ge \wt\eta(U,\re z)\\
\Lambda(z) & \le C_1(U) \beta(z)     & \text{ if } \;\;  \im z< \wt\eta(U,\re z)
 \label{82}
\end{align}
 in the set $\Omega(z)^c \cap {\bf B}(z)^c$ and for any sufficiently
large $N\ge N_0(\delta_\pm, C_0)$.

\end{lemma}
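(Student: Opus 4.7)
I would derive the lemma algebraically from the inequality \eqref{dich}, after rewriting its left-hand side as $\Lambda\,|A-\barv|$, where $A := (1-m_{sc}^2)/m_{sc}^3$ satisfies $|A|=\alpha$. In $\Omega(z)^c\cap\mathbf{B}(z)^c$ the estimate takes the form
\[
\Lambda\,|A-\barv| \;\le\; C^*\bigl(\beta\Lambda + \alpha\beta + \beta^2\bigr) + O(\Lambda^2/\log N).
\]
The crucial elementary fact is the reverse triangle inequality $|A-\barv|\ge |\alpha-\Lambda|$, which lets me separate the two ``roots'' $\barv=0$ and $|\barv|=\alpha$ of the quadratic on the left. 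No continuity in $z$ or bootstrap is required at this stage; the dichotomy is a purely pointwise consequence of \eqref{dich}.

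\textbf{Case $\im z\ge\wt\eta$ (equation \eqref{81}).} The definition of $\wt\eta$ together with \eqref{Kdef} yields $\alpha(z)\ge 2U^2\beta(z)$, so the putative forbidden gap $(U\beta,\alpha/U)$ is nonempty for $U\ge U_0$. Assume for contradiction that $U\beta<\Lambda<\alpha/U$. Then $\Lambda<\alpha/2$, so $|A-\barv|\ge(1-1/U)\alpha\ge\alpha/2$ and the left-hand side is at least $\Lambda\alpha/2$. Using $\beta<\Lambda/U$, $\beta<\alpha/U^2$ and $\Lambda<\alpha/U$, each of the four terms on the right-hand side is at most a small multiple of $\Lambda\alpha/U$ (the $\Lambda^2/\log N$ contribution is bounded by $\Lambda\alpha/(U\log N)$). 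Choosing $U_0$ to be a sufficiently large multiple of $C^*$ gives a contradiction, which forces $\Lambda\notin(U\beta,\alpha/U)$ and hence \eqref{81}.

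\textbf{Case $\im z<\wt\eta$ (equation \eqref{82}).} Here \eqref{Kdef} gives the reverse bound $\alpha\le 2U^2K^2\beta$, so $\alpha$ and $\beta$ are comparable up to a $U$-dependent constant. I would split into two subcases. If $\Lambda\le 4\alpha$ then already $\Lambda\le 8U^2K^2\beta$ and \eqref{82} holds. Otherwise $\Lambda>4\alpha$ and the reverse triangle inequality gives $|A-\barv|\ge 3\Lambda/4$, so the cubic inequality reduces to
\[
\tfrac{3}{4}\Lambda^2 \;\le\; C^*\bigl(\beta\Lambda + (\alpha+\beta)\beta\bigr) + O(\Lambda^2/\log N).
\]
For large $N$ the $\Lambda^2/\log N$ term is absorbed on the left, and an AM--GM split of $C^*\beta\Lambda$ together with $\alpha+\beta\le(2U^2K^2+1)\beta$ yields $\Lambda\le C_1(U)\beta$.

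\textbf{Main obstacle.} The substantive ingredient is the factorization $|A\barv-\barv^2|=\Lambda\,|A-\barv|$ combined with $|A-\barv|\ge|\alpha-\Lambda|$, which identifies the gap $\Lambda\sim\alpha$ as the resonant region and turns the cubic estimate into a genuine dichotomy rather than a single quadratic bound. The only delicate point is the constant-chasing in case 1: one must choose $U_0$ large enough, depending on $C^*$ (and hence on $\delta_\pm$ and $C_0$), so that all four error contributions in the contradiction argument are comfortably smaller than $\Lambda\alpha/2$; no new probabilistic input beyond \eqref{dich} is required.
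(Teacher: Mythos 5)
Your proof is correct and follows essentially the same approach as the paper: both derive the dichotomy pointwise from \eqref{dich} by exploiting the quadratic structure in $\barv$, separating the linear term $\alpha\Lambda$ from the quadratic $\Lambda^2$, and deciding which of the two roots $\barv\approx 0$ or $|\barv|\approx\alpha$ is relevant. Your factorization $\Lambda\,|A-\barv|$ with the reverse triangle inequality and the contradiction framing in Case 1, and the subcase split in Case 2, are cosmetic variants of the paper's direct move of writing $\alpha\Lambda-\Lambda^2\le \text{RHS}$ and reading off the dichotomy from whichever term on the right dominates.
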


{\it Proof.}  We will set $U_0= 9(C^*+1)$ and let $U\ge U_0$ where $C^*$ is the constant appearing in \eqref{dich}.
Depending on the relative size of $\beta$ and $\al$, 
which is determined by $z$, we will 
either express $\barv$ or $\barv^2$ from \eqref{dich}.
This will correspond to the two cases in Lemma \ref{592}.
Recalling that $|\barv|= \Lambda$, the last error term
in \eqref{dich} can be easily absorbed for sufficiently large $N$
 and we will
get a quadratic inequality for $\Lambda$.

\medskip

\noindent
{\it Case 1:  $\eta=\im z \ge \wt\eta(U,E)$.}
By the definition of $\wt\eta$, in this case $\sqrt{\kappa+\eta}\ge 2U^2K\beta(z)$, i.e.,
\be\label{newalz}
\al(z)\ge 2U^2\beta(z)
\ee by \eqref{Kdef}.
 {F}rom the choice of $U_0$ and $U\ge U_0$ we get
that $\alpha  \ge \beta$
and $\frac{1}{2}\al\ge C^*\beta$. Expressing
 $\barv$ from \eqref{dich} and absorbing 
the $C^*\beta\Lambda$ term into the left hand side,
 we obtain
\be\label{13}
\frac{1}{2} \alpha  \Lambda  
 \le   2  \Lambda^2     +  2C^*  \alpha\beta .
\ee
Thus either
$$
   \frac{1}{4} \alpha  \Lambda  
 \le   2  \Lambda^2, 
$$
i.e. $\Lambda\ge \al/8$ which is larger than $\al/U$, or 
$$
   \frac{1}{4} \alpha  \Lambda  
 \le    2C^*  \alpha\beta,
$$
i.e. $\Lambda \le 8C^*\beta \le U\beta$, which proves  \eqref{81}.

\medskip
\noindent
{\it Case 2:  $\eta=\im z < \wt\eta(U,E)$.}
In this case $\sqrt{\kappa+\eta}\le 2U^2K\beta(z)$, i.e.,
$\al(z)\le 2U^2K^2\beta(z)$.
We  express $\barv^2$ from
\eqref{dich} and we get
\be\label{13-2}
 \Lambda^2   \le   2 \alpha \Lambda      + 
 2 C^* \big [ \,  \beta \Lambda  + \beta \alpha + \beta^2  \, \big ]   
 \le  C'   \beta \Lambda  + C'  \beta^2   
\ee
with a constant $C'$ depending on $U$.
This quadratic inequality immediately implies that $\Lambda\le C_1(U)\beta$
with some $U$-dependent constant $C_1(U)$.
Hence we have proved   Lemma \ref{592}.
 
\qed

\subsection{Initial estimates for large $\eta$}

In this section we show that Theorem \ref{thm:detailed}
holds for $\eta=\im z = 10$, i.e. on the upper boundary of
$\bS_\ell$. This will serve as an initial step for the
continuity argument. The proof for $\eta=10$ is similar
to the arguments in Sections \ref{sec:excep} and \ref{sec:selfcons}
but much easier. In particular, no apriori
assumption similar to  \eqref{15} or no bad set ${\bf B}$
are  necessary.
We start with the analogue of Lemma \ref{selfeq1} which actually holds
uniformly for any $z$ with $0< \eta=\im z\le 10$
 and not only for $z\in \bS_\ell$.  Note that these estimates
are very weak for small $\eta$, but we will use 
them only for $\eta=10$.

\begin{lemma}\label{selfeq1-1} 
For any $z\in \bC$ with $0< \eta=\im z\le 10$, define  the exceptional events 
\begin{align} 
\Theta_d(z) &:= \left \{   \max_{i}|Z_i(z)|\ge  \frac {(\log N)^\ell  } {\sqrt{N} \eta} \right \} 
 \non \\
 \Theta_o(z) &:= \left \{    \max_{i\ne j}|Z_{ij}^{(ij)}(z)| 
\ge\frac {(\log N)^\ell }  {\sqrt{N} \eta} \right \} \non  \\
   \Theta (z) &:= 
 \Omega_h \cup \Theta_d(z) \cup \Theta_o(z), 
\label{defTheta}
\end{align}
where we recall the definition of $\Omega_h$ in \eqref{eq:excep}.
Then   there exists constants $0<\phi < 1$, $C>1$, $c>0$,  depending on 
$\ttau$ \eqref{subexp},
such that  for any $\ell$ with $4/\phi \le \ell \le  C \log N/\log \log N $
 and for any $z\in {\bf S}_\ell $  we have 
\be\label{B12-1}
\P (  \Theta (z)  )    \le C\exp{\big[-c(\log N)^{\phi \ell } \big]},
\ee
and the pointwise bound 
\be\label{21-1}
\max_i  |\Upsilon_i(z)|
 \le CN^{-1/3}  \eta^{-3}    \quad \text {in}  \;\;\;   \Theta(z) ^c
\ee
for sufficiently large $N\ge N_0(\ttau, C_0)$. 
Furthermore, for $\eta \ge 3$ we have the estimate 
\be\label{12-3}
\Lambda_d(z) +\Lambda_o(z)\le CN^{-1/3}  \quad \text {in}  \;\;\;   \Theta(z) ^c.
\ee
for sufficiently large $N\ge N_0(\ttau, C_0)$.
\end{lemma}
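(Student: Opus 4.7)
The strategy will closely follow that of Lemma~\ref{selfeq1}, but with the a priori hypothesis $\Lambda_d+\Lambda_o\le(\log N)^{-2}$ on ${\bf B}^c$ replaced by the deterministic resolvent bound $|G^{(\bT)}_{ij}(z)|\le \eta^{-1}$, which follows from the operator-norm bound $\|(H^{(\bT)}-z)^{-1}\|\le\eta^{-1}$.

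For \eqref{B12-1}, I would fix $z\in\bS_\ell$ and observe that the bound on $\Omega_h$ is unchanged from \eqref{resboundhij}. For each fixed $i$, conditioning on $H^{(i)}$ renders $Z_i$ a centered quadratic form in the independent column entries $\ba^i$; the large deviation bound \eqref{diaglde} (with $\zeta=\ell/3$), combined with the Ward identity $\sum_{l\ne i}|G^{(i)}_{kl}|^2=\im G^{(i)}_{kk}/\eta\le\eta^{-2}$ and $\sigma_{ij}^2\le C_0/N$, will give $|Z_i|\le C(\log N)^{\ell/3}/(\sqrt{N}\eta)\le (\log N)^\ell/(\sqrt{N}\eta)$ outside an event of probability $\le C\exp[-c(\log N)^{\phi\ell}]$ (after possibly shrinking $\phi$). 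The parallel estimate for $Z^{(ij)}_{ij}$ via \eqref{resgenHWTO} controls $\Theta_o(z)$, and a union bound over the $O(N^2)$ indices is absorbed since the failure probability beats any inverse polynomial in $N$ when $\phi\ell\ge 4$.

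For \eqref{21-1}, I would decompose $\Upsilon_i=A_i+h_{ii}-Z_i$ and bound each term in $\Theta(z)^c$. Writing $A_i=\sigma_{ii}^2 G_{ii}+(1/G_{ii})\sum_{j\ne i}\sigma_{ij}^2|G_{ij}|^2$, the Ward identity $\sum_j|G_{ij}|^2=\im G_{ii}/\eta$, the bound $\sigma_{ij}^2\le C_0/N$, and the elementary inequality $\im G_{ii}/|G_{ii}|\le 1$ will together yield $|A_i|\le C/(N\eta)$---critically, no lower bound on $|G_{ii}|$ is required. Combined with $|Z_i|\le(\log N)^\ell/(\sqrt{N}\eta)$ and $|h_{ii}|\le C(\log N)^{\ell/10}/\sqrt{N}$, the dominant contribution to $|\Upsilon_i|$ is $(\log N)^\ell/(\sqrt{N}\eta)$, and the constraint $\ell\le C\log N/\log\log N$ (with $C$ chosen so that $(\log N)^\ell\le N^{1/6}$) together with $\eta\le 10$ yields $|\Upsilon_i|\le CN^{-1/3}\eta^{-3}$.

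For \eqref{12-3} at $\eta\ge 3$, the off-diagonal bound is immediate from \eqref{GijHij}: $|G_{ij}|\le|G_{ii}||G^{(i)}_{jj}|(|h_{ij}|+|Z^{(ij)}_{ij}|)\le C\eta^{-2}\cdot(\log N)^\ell/(\sqrt{N}\eta)\le CN^{-1/3}$. For $\Lambda_d$, the plan is to invert the self-consistent equation \eqref{1}. At $\eta\ge 3$, $|m_{sc}(z)|\le c_0<1$ uniformly for $|E|\le 5$, and the deterministic estimate $|v_j|\le\eta^{-1}+|m_{sc}|$ keeps the denominator in \eqref{1} safely away from zero, so a Neumann expansion will yield $(1-m_{sc}^2B)\bv=-m_{sc}^2\Upsilon+\cE$ with $\|\cE\|_\infty=O(\|\Upsilon\|_\infty^2+\|\bv\|_\infty\|\Upsilon\|_\infty)$. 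Since the double stochasticity of $B$ gives $\|m_{sc}^2 B\|_{\ell^\infty\to\ell^\infty}=|m_{sc}|^2<1$, the operator $1-m_{sc}^2 B$ is boundedly invertible on $\ell^\infty$, and a short bootstrap yields $\Lambda_d\le C\|\Upsilon\|_\infty\le CN^{-1/3}$. The main subtlety throughout is bounding $|A_i|$ without access to the set ${\bf B}^c$; this is resolved by the identity $\im G_{ii}/|G_{ii}|\le 1$, after which the rest is routine.
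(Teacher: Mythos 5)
Your proposal is correct and, in parts, takes a genuinely different route from the paper. Part \eqref{B12-1} follows the paper's argument essentially verbatim: replace the bounds available on ${\bf B}^c$ with the deterministic estimate $|G^{(\bT)}_{ij}|\le \eta^{-1}$, combine the Ward identity with the large deviation estimates of Lemma~\ref{generalHWT}, and union bound over indices.

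For \eqref{21-1}, the paper first establishes the ratio bound $|G_{ij}/G_{ii}|\le N^{-1/3}\eta^{-2}$ in $\Theta^c$ and inserts it into the definition of $A_i$, getting $|A_i|\le 2N^{-1/3}\eta^{-3}$. You instead use the Ward identity together with the elementary inequality $\im G_{ii}/|G_{ii}|\le 1$, which yields the sharper and cleaner $|A_i|\le C/(N\eta)$ without needing a lower bound on $|G_{ii}|$. One small imprecision: the formula $A_i=\sigma_{ii}^2G_{ii}+(1/G_{ii})\sum_{j\ne i}\sigma_{ij}^2|G_{ij}|^2$ requires $G_{ij}G_{ji}=|G_{ij}|^2$, which holds for symmetric but not Hermitian $H$. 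This is harmless: Cauchy--Schwarz and the two Ward identities give $\sum_{j}\sigma_{ij}^2|G_{ij}G_{ji}|\le (C_0/N)(\sum_j|G_{ij}|^2)^{1/2}(\sum_j|G_{ji}|^2)^{1/2}=(C_0/N)\,\im G_{ii}/\eta$, after which your estimate proceeds unchanged.

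For \eqref{12-3}, the paper never forms the linearized system; it plugs the crude a priori bound $\Lambda_d\le 2/\eta\le 2/3$ directly into \eqref{temp1.47} and solves the scalar fixed-point inequality $\Lambda_d\le \Lambda_d/(|z+m_{sc}|-\Lambda_d)+O(\max_i|\Upsilon_i|)$ using $|z+m_{sc}|=|m_{sc}|^{-1}\ge 3$. You instead extract the linear piece $(1-m_{sc}^2 B)\bv$ and invert it on $\ell^\infty$, noting that double stochasticity of $B$ gives $\|B\|_{\ell^\infty\to\ell^\infty}=1$ and that for $\eta\ge 3$ one has $|m_{sc}|=|z+m_{sc}|^{-1}\le (\im z)^{-1}\le 1/3$, so the Neumann series converges with a margin. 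This is genuinely different and somewhat more robust than the scalar argument; it also makes transparent that neither approach uses Assumption {\bf (B)} or Lemma~\ref{infinf} at large $\eta$, since the inversion works on all of $\ell^\infty$, not merely the orthogonal complement of ${\bf e}$. One slip needs correcting: the remainder in the expansion of the self-consistent equation \eqref{1} is $\|\cE\|_\infty=O(\Lambda_d^2+\max_i|\Upsilon_i|^2)$, not $O(\max_i|\Upsilon_i|^2+\Lambda_d\max_i|\Upsilon_i|)$, because the square of the linear term $(B\bv)_i$ contributes $\Lambda_d^2$. Your bootstrap still closes, since the a priori bound $\Lambda_d\le 2/\eta\le 2/3$ together with $|m_{sc}|\le 1/3$ lets the quadratic term be absorbed into the left side, but the displayed form of the remainder should be fixed.
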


{\it Proof.}  Given the estimate \eqref{resboundhij}, 
for the proof of \eqref{B12-1} 
 it is sufficient to estimate the probability of
$\Theta_d$ and $\Theta_o$. The estimate  \eqref{B14} still holds, but
  we can now  bound the last term in  \eqref{B14} simply  by 
\be\label{sim}
\sum_{k, l\not= i }\left|\sigma_{ik}
G^{(i)}_{k\,l}\sigma_{li}\right|^2\le
\frac1N \sum_{k\not= i } \sigma_{ik}^2 \frac{\im G^{(i\,)}_{kk}(z)}{\eta} 
\le \frac 1 { N \eta^2}, 
\ee
for  any  $z$,  using the trivial  deterministic estimate 
\be\label{trivial}
|G_{ij}^{(\T)}| \le  \eta^{-1} 
\ee
that holds for any $i,j$ and for any  $\T$.
Combining \eqref{sim} with
the large deviation bound \eqref{diaglde} from 
Lemma \ref{generalHWT} as in 
\eqref{BI23}, we obtain $\P (\Theta_d)\le C\exp{\big[-c(\log N)^{\phi \ell} \big]}.$
The same argument  holds for the exceptional set $\Theta_o$
involving the off-diagonal elements and this proves \eqref{B12-1}.

From \eqref{GijHij} and the trivial estimate \eqref{trivial},
we can estimate  the off-diagonal term $G_{ij}$ in 
 the set $  \Theta(z) ^c$ by 
\be\label{6.4-1}
  |G_{ij}|  = |G_{ii}| |G_{jj}^{(i)}| |K_{ij}^{(ij)}|
  \le \eta^{-2} \left(|h_{ij}| + |Z_{ij}^{(ij)}|\right)
 \le  (\log N)^\ell  \bigg [ \frac 1 { \sqrt N \eta^2}  +  
 \frac {1 } {\sqrt{N} \eta^3}  \,  \bigg ] 
\le   N^{-1/3}  \eta^{-3}   , \quad i\ne j,
\ee
for sufficiently large $N$.
Moreover, the same argument gives
$$
   \frac{|G_{ij}|}{|G_{ii}|} =  |G_{jj}^{(i)}| |K_{ij}^{(ij)}|\le
 N^{-1/3}\eta^{-2},  \quad i\ne j,
$$
which can be inserted in the definition of $A$, \eqref{defA}, and with $N\eta\gg 1$,
we get
$$
  |A_i| \le \frac{C_0}{N\eta}  + \frac{1}{ N^{1/3}\eta^3} \le\frac{2}{ N^{1/3}\eta^3}
$$
 for sufficiently large $N$.
In the set $\Theta^c$
a similar bound holds for $h_{ii}$ and $Z_i$ using $\eta\le 10$.
Recalling that $\Upsilon_i = A_i + h_{ii} - Z_i$, 
 and this proves  \eqref{21-1}.

For the proof of \eqref{12-3} it is sufficient to bound only 
$\Lambda_d$, the necessary estimate for $\Lambda_o$ is given in
\eqref{6.4-1}. 
We define  $\Upsilon=\max_i|\Upsilon_i|$
and  note that for $\eta\ge 3$ we have
$\Upsilon\le  CN^{-1/3}$ in the set $\Theta^c$ by  \eqref{21-1}.
{F}rom the self consistent equation \eqref{1} and the 
defining equation \eqref{defmsc} of $m_{sc}$, we have
\be\label{temp1.47}
v_n=\frac{\sum_{i}\sigma^2_{ni}v_i+O(\Upsilon)}{(z+m_{sc}+\sum_{i}\sigma^2_{ni}
v_i+O(\Upsilon))(z+m_{sc})}, \,\,\,1\leq n\leq N.
\ee
Using
 $|G_{ii}|\leq \eta^{-1}$ from \eqref{trivial}
 and $|m_{sc}(z) | =\big| \int \varrho_{sc}(x)/(x-z)\rd x|\leq \eta^{-1}$, we obtain for $\eta\ge 3$ that 
 \be\label{vileq2}
\Lambda_d=\max_i |v_i|\leq 2/\eta\leq 2/3.
 \ee 
By  \eqref{zmsc2}, we have  $|z+m_{sc}(z)| = |m_{sc}(z)|^{-1}\ge 3 $.
 Together with
  \eqref{vileq2}, we obtain from \eqref{temp1.47} that
\be
|v_n|\le\frac{\max_i|v_i|}{|z+m_{sc}(z)|-\max_i|v_i|}+O(\Upsilon).
\ee
Maximizing over $n$, we have  
\be\label{temp1.49}
\Lambda_d = \max_n |v_n|\leq \frac{\Lambda_d }{|z+m_{sc}|-\Lambda_d }+O(\Upsilon).
\ee
Since the  denominator satisfies $|z+m_{sc}(z)|-\Lambda_d\geq 3-2/3=7/3$
by $\Lambda_d\le 2/3$, the estimate \eqref{12-3} 
follows from \eqref{temp1.49}
and \eqref{21-1}. This completes the proof of Lemma \ref{selfeq1-1}.  
\qed

\subsection{Continuity  argument : conclusion of the
 proof of Theorem \ref{thm:detailed}} \label{continuity}

Fix an energy $E$ with $|E| \le 5$ and choose a decreasing
finite sequence $\eta_k\in \bS_\ell$,  $k=1,2,\ldots, k_0$, with $k_0\le CN^{8}$ such that  
$|\eta_k-\eta_{k+1} |\le N^{-8}$ and $\eta_1 = 10$, $\eta_{k_0}= N^{-1}(\log N)^{10 \ell }$.
Denote by $z_k= E + i \eta_k$. We will first show that  Theorem \ref{thm:detailed}
holds for any $z=z_k$.

Throughout this section fix any $U\ge U_0$ from Lemma \ref{592}
and recall the definition of $\wt\eta(U,E)$ from before this lemma.
Consider first the case of $z_1$. Since $\eta_1\ge \wt\eta(U,E)$, see \eqref{teta}, we are in the
first case \eqref{81} in  Lemma \ref{592}.
By Lemma \ref{selfeq1-1}, we have $\Lambda_d(z_1)+\Lambda_o(z_1)\le CN^{-1/3}$
in the set $\Theta(z_1)^c$, in particular, $\Theta(z_1)^c \subset{\bf B}(z_1)^c $.
Moreover, by $\Lambda(z_1)\le CN^{-1/3}$ in the set $\Theta(z_1)^c$, and \eqref{Kdef}, 
  the second alternative of \eqref{81} cannot hold
and therefore $\Lambda(z_1)\le U\beta(z_1)$
in the set $\Theta(z_1)^c\cap \Omega(z_1)^c\cap {\bf B}(z_1)^c =\Theta(z_1)^c\cap \Omega(z_1)^c $.
Using the probability estimates \eqref{B12} and \eqref{B12-1},
we have proved that 
\be
\P \bigg [  \Lambda (z_1)  \ge U \beta(z_1)
  \bigg ] + \P\big( {\bf B}(z_1)\big) \le C\exp{\big[-c(\log N)^{\phi \ell } \big]}.
\label{eta1}
\ee

For a general $k$ we have the following:

\begin{lemma}\label{lm:ind} 
There exist constants $0< \phi < 1$, $C'>1$, $c>0$, depending on $\ttau$,
such that if $\ell$ satisfies $4/\phi\le \ell \le C'\log N/\log\log N$
and $U$ is chosen $U\ge  U_0(\delta_\pm, C_0)$ (see  Lemma \ref{592})
 then  the following hold for any $k\le k_0$
and for any sufficiently large  $N\ge N_0(\ttau,\delta_\pm, C_0, U)$: 
\medskip
\noindent
{\it Case 1.} 
If $\eta_k \ge \wt\eta(U,E)$, then
\be\label{61}
\P \bigg [ 
  \Lambda (z_k)  \ge U\beta(z_k)    \bigg ] \le  C'k\exp{\big[-c(\log N)^{\phi \ell} \big]}
\quad \mbox{and}\quad
 \P\big( {\bf B}(z_k)\big) \le  C'k\exp{\big[-c(\log N)^{\phi \ell} \big]}.
\ee  
{\it Case 2.} If  $\eta_k < \wt\eta(U,E)$, then
\be\label{61uj}
\P \bigg [ 
  \Lambda (z_k)  \ge C_1(U)\beta(z_k)    
   \bigg ]\le C'k\exp{\big[-c(\log N)^{\phi \ell} \big]}
 \quad \mbox{and}\quad  \P\big( {\bf B}(z_k)\big) \le C'k\exp{\big[-c(\log N)^{\phi \ell} \big]},
\ee   
where $C_1(U)$ is given from Lemma \ref{592}.
\end{lemma}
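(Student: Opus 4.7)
The plan is induction on $k$, with base case $k=1$ already established by \eqref{eta1} (since $\eta_1=10\ge \wt\eta$ places us in Case 1). The inductive step combines three ingredients: the probability bound $\P(\Omega(z_{k+1}))\le C\exp[-c(\log N)^{\phi\ell}]$ from Lemma \ref{selfeq1}; the deterministic Lipschitz continuity $|\partial_z G_{ij}(z)|\le\eta^{-2}\le N^2$ on $\bS_\ell$, which together with the grid spacing $|\eta_k-\eta_{k+1}|\le N^{-8}$ yields
\be
|\Lambda(z_{k+1})-\Lambda(z_k)|+|\Lambda_d(z_{k+1})-\Lambda_d(z_k)|+|\Lambda_o(z_{k+1})-\Lambda_o(z_k)|\le N^{-6};
\ee
and the Dichotomy Lemma \ref{592} applied at $z_{k+1}$.

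I work on the event that the lemma's bounds hold at $z_k$ and simultaneously $\Omega(z_{k+1})^c$ holds; the extra probability cost $C\exp[-c(\log N)^{\phi\ell}]$ is exactly what produces the growing factor $k$ on the right-hand side of \eqref{61}--\eqref{61uj}. First I propagate ${\bf B}^c$ to $z_{k+1}$: the inductive bound $\Lambda(z_k)\le C_1(U)\beta(z_k)\le C(\log N)^{-4\ell/3}$, combined with \eqref{71d} of Lemma \ref{59} and the sharper off-diagonal bound $\Lambda_o\le C(\log N)^{-2\ell/3}\Psi$ from \eqref{conl}, yields $\Lambda_d(z_k)+\Lambda_o(z_k)\le C(\log N)^{-3}$, comfortably inside the $(\log N)^{-2}$ threshold defining ${\bf B}(z_k)^c$. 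Lipschitz continuity then forces $z_{k+1}\in{\bf B}(z_{k+1})^c$ deterministically.

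Now apply Lemma \ref{592} at $z_{k+1}$. In Case 2 ($\eta_{k+1}<\wt\eta(U,E)$), the estimate \eqref{82} directly gives $\Lambda(z_{k+1})\le C_1(U)\beta(z_{k+1})$. In Case 1 ($\eta_{k+1}\ge\wt\eta$), the fact that the grid is decreasing forces $\eta_k\ge\eta_{k+1}\ge\wt\eta$, so $z_k$ was also in Case 1 and the hypothesis reads $\Lambda(z_k)\le U\beta(z_k)$. Combining with Lipschitz continuity and the monotonicity of $\beta(z)$ in $\eta$,
\be
\Lambda(z_{k+1})\le U\beta(z_k)+N^{-6}\le U\beta(z_{k+1})+N^{-6}<2U\beta(z_{k+1})\le \alpha(z_{k+1})/U,
\ee
where the final inequality uses $\alpha(z_{k+1})\ge 2U^2\beta(z_{k+1})$ for $\eta_{k+1}\ge\wt\eta$ (from the definition of $\wt\eta$ and \eqref{Kdef}). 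This excludes the second alternative of \eqref{81} and forces $\Lambda(z_{k+1})\le U\beta(z_{k+1})$, completing the induction.

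The main obstacle is the near-threshold regime $\eta_{k+1}\gtrsim\wt\eta$, where the gap $[U\beta(z_{k+1}),\alpha(z_{k+1})/U]$ separating the two branches of the dichotomy narrows to width of order $U\beta$; one must verify that the Lipschitz perturbation $N^{-6}$ cannot bridge it. This is ensured by $\beta(z_{k+1})\ge (\log N)^{2\ell}N^{-1/3}\gg N^{-6}$ on $\bS_\ell$, and the grid spacing $N^{-8}$ was chosen precisely to leave this margin comfortable. The total accumulated probability over $k_0\le CN^{8}$ steps is controlled since $\phi\ell\ge 4$ makes $\exp[-c(\log N)^{\phi\ell}]$ decay faster than any polynomial in $N$.
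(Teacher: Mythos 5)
Your proof follows the paper's own strategy — induction on the grid index $k$, Lipschitz propagation of the resolvent bounds from $z_k$ to $z_{k+1}$, and an appeal to the Dichotomy Lemma \ref{592} at $z_{k+1}$ to close the loop — and the overall structure is sound. Your extra observation that the gap $[U\beta(z_{k+1}),\alpha(z_{k+1})/U]$ in the near-threshold regime has width at least of order $U\beta(z_{k+1})\gg N^{-6}$, so that the grid perturbation cannot bridge it, is a correct and useful sanity check.

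However there is one genuine, if minor, gap. In the step where you propagate $\mathbf{B}^c$ from $z_k$ to $z_{k+1}$ you invoke \eqref{71d} and \eqref{conl} (equivalently \eqref{21}) at $z_k$. Those are pointwise estimates valid only on $\Omega(z_k)^c\cap\mathbf{B}(z_k)^c$, whereas the event you declare yourself to be working on is ``the lemma's bounds hold at $z_k$ and $\Omega(z_{k+1})^c$'' — this covers $\mathbf{B}(z_k)^c$ and $\Omega(z_{k+1})^c$ but not $\Omega(z_k)^c$. You need to intersect with $\Omega(z_k)^c$ as well (and then also with $\Omega(z_{k+1})^c$ for the dichotomy step at $z_{k+1}$). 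The extra probability cost is another $C\exp[-c(\log N)^{\phi\ell}]$ per step, which is absorbed into the $C'k$ prefactor provided $C'\ge 2C$ (as the paper notes), so the fix is purely bookkeeping — but as written the event is insufficient. A second, harmless imprecision: you write the inductive hypothesis as $\Lambda(z_k)\le C_1(U)\beta(z_k)$, but in Case 1 it reads $\Lambda(z_k)\le U\beta(z_k)$, and the paper does not assert $C_1(U)\ge U$; saying ``$\Lambda(z_k)\le\max(U,C_1(U))\,\beta(z_k)$, which is $\lesssim(\log N)^{-4\ell/3}$'' would be cleaner.
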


{\it Proof.} We proceed by induction on $k$, the case $k=1$ has been checked in \eqref{eta1}.
First consider Case 1, when $k<k_0$ is such that $\eta_k \ge \wt\eta(U,E)$, i.e.
\eqref{61} holds by the induction hypothesis.
By the definition of the sequence $z_k$, we  have 
\be\label{con}
\Big |  G_{ij}(z_k)  -   G_{ij} (z_{k+1}) \Big |
\le |z_k-z_{k+1}|  \sup_{z\in \bS_\ell}
\Big | \frac{\partial G_{ij} (z) }{\partial z} \Big | \le N^{-8} \sup_{z\in \bS_\ell}
\frac{1}{|\im z|^2} \le  N^{-6}
\ee
for any $i,j$.  Hence $|\Lambda(z_k)-\Lambda(z_{k+1})|\le N^{-6} \le \frac{1}{2}U\beta(z_{k+1}) $
 and thus
\be\label{611}
\P \bigg [     \Lambda (z_{k+1}) \ge \frac{3}{ 2} U \beta(z_{k+1})  \bigg ] 
\le C'k\exp{\big[-c(\log N)^{\phi \ell} \big]}.
\ee  
In other words, the estimate on $ \Lambda (z_{k+1}) $ is deteriorated by a factor $3/2$,
but it will be gained back by the dichotomy estimate in Lemma \ref{592}.

Using \eqref{con} we also have, in $\Omega(z_k)^c\cap {\bf B}(z_k)^c$,
\begin{align}\label{68}
 \Lambda_d (z_{k+1}) + \Lambda_o (z_{k+1})  & \le \Lambda_d (z_{k}) + 
\Lambda_o (z_{k})  + 2N^{-6}  \non \\
 &  \le 
(\log N)^\ell   (\Lambda(z_k)^2+\Psi (z_k)) + \Lambda(z_k) + 2N^{-6}   \non \\
& \le (\log N)^{2 \ell} \sqrt{\frac{U\beta(z_k)
 +\im m_{sc}(z_k) }{N\eta_k}} + 2U\beta(z_k) + 2N^{-6} . 
\end{align}
Here in the second line
we used the bounds \eqref{21} and \eqref{71d} that hold on
the set $\Omega(z_k)^c\cap {\bf B}(z_k)^c$, in the last line
we used $\Lambda(z_k)\le U\beta(z_k)\le (\log N)^{-\ell}$. All these
estimates hold on an event with probability at least 
$1-C'(k+\frac{1}{2})\exp{\big[-c(\log N)^{\phi \ell} \big]}$ using \eqref{B12} and 
the estimate on $\P({\bf B}(z_k))$ from \eqref{61}.
Here we assumed that the constant $C'$ is larger than twice
the constant $C$ in \eqref{B12}.

By the choice of $\ell\ge 4$ and the definition of $\beta$ from 
\eqref{58},  the last line of \eqref{68}
 is bounded by $   (\log N)^{-2}$ and thus we have 
\be\label{Bk+1}
\P ( {\bf B}(z_{k+1})) \le C'\Big(k+\frac{1}{2}\Big)\exp{\big[-c(\log N)^{\phi \ell} \big]}.
\ee

Suppose now that $k+1$ falls into the first case,
$\eta_{k+1}\ge \wt\eta(U,E)$, then, from \eqref{newalz},
$$
   \frac{3}{2} U\beta(z_{k+1})<\frac{\al(z_{k+1})}{U},
$$
so by the dichotomy estimate \eqref{81}, $\Lambda(z_{k+1})\le
\frac{3}{2}U\beta(z_{k+1})$  from \eqref{611} implies  $\Lambda(z_{k+1})\le
U\beta(z_{k+1})$ on the set $\Omega(z_{k+1})^c\cap {\bf B}(z_{k+1})^c$.
Thus \eqref{B12}, \eqref{611} and \eqref{Bk+1} imply that
\be\label{612}
\P \bigg [     \Lambda (z_{k+1})  \ge  U \beta(z_{k+1})
\bigg ] \le C'(k+1)\exp{\big[-c(\log N)^{\phi \ell} \big]}
\ee  
by using $C'\ge 2C$ where $C$ is the constant from \eqref{B12}.
This proves  \eqref{61}, i.e.  the induction step
if $\eta_{k+1}$ is in the first case.
If $\eta_{k+1}$ falls into  the second case, i.e.,  $\eta_{k+1}\le \wt\eta(U,E)$, 
then \eqref{611} gives directly the
induction step, i.e. \eqref{61uj} for $k+1$.

So far we considered Case 1, i.e., we assumed that $\eta_k\ge \wt\eta(U,E)$. Now consider
Case 2, when  $\eta_k< \wt\eta(U,E)$ and therefore the induction hypothesis is
\eqref{61uj}. The argument is very similar to the previous case
but $U\beta(z_k)$ is replaced with $C_1(U)\beta(z_k)$ everywhere in  
 \eqref{611}, \eqref{68} and we   still obtain \eqref{Bk+1}.
Since $\eta_{k+1} <\eta_k\le \wt\eta(U,E)$, we can
directly refer  to \eqref{82} to obtain the
induction step, i.e. \eqref{61uj} for $k+1$. This completes the proof
of Lemma \ref{lm:ind}. \qed

\medskip

Choosing a sufficiently  large but fixed $U$, e.g. $U=U_0(\delta_\pm, C_0)$, 
we have thus proved that $\Lambda(z_k) \le C\beta(z_k)$  for all  $k\le k_0$
with a constant depending on $\delta_\pm$ and $C_0$, 
in particular $\Psi(z_k)\le C\beta(z_k)$ by the definition of $\Psi$ \eqref{1.9}.
Using \eqref{21} and \eqref{71d} 
 we have proved  Theorem \ref{thm:detailed} for all $z_k$, $k\le k_0$ and 
any fixed energy $E$  with $|E| \le 5$. 
 For any $z=E+i\eta\in \bS_\ell$
there is a $z_k =E+i\eta_k$ with $|z-z_k|\le N^{-8}$. Using the
Lipschitz continuity of $G_{ij}(z)$ and $m_{sc}(z)$ with Lipschitz
constant at most $N^2$, we easily conclude the proof of Theorem \ref{thm:detailed}
for  any $z\in \bS_\ell$.  Note that in order to accommodate the higher 
$(\log N)$-power in $\beta$ and the additional logarithmic factors
in  \eqref{21} and \eqref{71d}
 with the final formulation of the result in Theorem \ref{thm:detailed},
 we needed to redefine $\ell \to \ell/3$ which results
in a decreased $\phi$ in the final statement.
\qed

\section{Optimal error bound in the strong local semicircle law}\label{sec:optimal}

We have proved Theorem \ref{thm:detailed} which is weaker than the main result Theorem \ref{45-1}
but it will be used as an apriori bound for the improvement.  
The key ingredient for the stronger result is the following lemma which shows that 
$\barZ$, the average of $Z_i$'s, is much smaller than 
the size of typical $Z_i$.
(Notice that in the proof of  Theorem \ref{thm:detailed}, 
$\barZ$ was estimated in \eqref{m-m2} by the same quantity,
 $\Psi$, as each individual $Z_i$.)

For $z\in \bS_\ell$ define  
\be
   \Gamma =\Gamma(z):=\Omega_h\cup {\bf B}(z),
\qquad \Delta=\Delta(z):=\Omega(z)\cup {\bf B}(z),
\label{defGamma}
\ee
where $\Omega_h, \Omega$ were defined in  \eqref{eq:excep}--\eqref{defOmega}
and ${\bf B}$ was given in  \eqref{B}.  Recall that $\Omega_h$ and $\Omega$ depend
on $\ell$ and thus $\Gamma$ and $\Delta$ also depend on $\ell$ but we omit this fact
from the notation.
We remark that 
Theorem \ref{thm:detailed}  shows that there exists a positive constant $\phi>0$ such that
for any $4/\phi\le\ell\le \log N/\log\log N$ we have 
\be
   \P \big( {\bf B}(z)\big) = \P \Big( \Lambda_d(z)+ \Lambda_o(z) \ge (\log N)^{-2}\Big)
  \le   C\exp{\big[-c(\log N)^{\phi \ell} \big]}, \qquad z\in \bS_\ell,
\label{Best}
\ee
since 
the error bar $(\log N)^\ell/(N\eta)^{1/3}$ in Theorem \ref{thm:detailed}
is much smaller than $(\log N)^{-2}$. 
Combining \eqref{Best} with \eqref{B12} and $\Gamma\subset\Delta$, we get
that 
\be
   \P \big( \Gamma(z) \big) \le \P \big( \Delta(z)\big)
\le  C\exp{\big[-c(\log N)^{\phi \ell} \big]}, \qquad z\in \bS_\ell,
\label{Gammaest}
\ee
with positive constants $C,c$ depending only on  $\ttau$ in \eqref{subexp},
$\delta_\pm$ from Assumption {\bf (B)} and  $C_0$ from
Assumption {\bf (C)}.

With this notation, and recalling 
that $\Lambda_o(z)=\max_{i\ne j} |G_{ij}(z)|$, we then have the following
lemma whose proof will be given separately in Section  \ref{sec:Z}.

\begin{lemma}\label{motN} There exist positive constants $D\ge 1$, $A_0\ge 1$, 
 and $\psi \le \min\{ 1/10, \phi\}$,
depending on $\ttau$,  
 such that for any
$\ell$ with 
\be
    A_0\log\log N \le \ell \le \frac{\log N}{\log\log N},
\label{Lboud}
\ee
 for any  $p\le (\log N)^{\psi \ell-2}$ positive even number
 and
for any fixed $z\in\bS_\ell$   we have
\be\label{52}
\E \Bigg[ {\bf 1}\big(\Gamma^c(z)\big) \Big|\frac1N\sum_{i=1}^N Z_i(z)\Big|^{p} \Bigg]\leq
  (Dp)^{Dp} 
  \E \Big[{\bf 1}\big(\Gamma^c(z)\big)  \big[ \Lambda_o(z)^{2} + N^{-1}\big]^{p}\Big]
\ee
for any sufficiently large $N\ge N_0(A_0, \psi)$.
\end{lemma}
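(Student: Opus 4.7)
\emph{Overall strategy.} Each individual $Z_i$ is of typical size $\Psi \sim \sqrt{\im m_{sc}/(N\eta)}$, but the average $\frac{1}{N}\sum_i Z_i$ should be of order only $\Lambda_o^2 + N^{-1} \sim \Psi^2 + N^{-1}$ due to cancellation from the mean zero property $\E_{\ba^i} Z_i = 0$. The plan is to estimate the $p$-th moment directly by expanding, decoupling the different $Z_i$'s through iterated applications of \eqref{GiiGjii}--\eqref{GijGkij}, and then exploiting conditional independence of different columns given the reduced resolvent $G^{(S)}$.

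\emph{Expansion and decoupling.} Write
$$\E \Big[ \mathbf{1}(\Gamma^c) \Big| \tfrac{1}{N}\textstyle\sum_i Z_i\Big|^p \Big] = \frac{1}{N^p}\sum_{\mathbf{i}\in [N]^p} \E\Big[\mathbf{1}(\Gamma^c) Z_{i_1}^{\#_1}\cdots Z_{i_p}^{\#_p}\Big]$$
where $\#_k$ alternates between identity and conjugation ($p$ is even). For each tuple $\mathbf{i}$, set $S=S(\mathbf{i})=\{i_1,\ldots,i_p\}$. The $Z_{i_k}$'s are coupled only because each $G^{(i_k)}_{ml}$ still depends on the columns $\{\ba^j : j\in S\setminus\{i_k\}\}$. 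Iterating \eqref{GijGkij} to remove these extra superscripts one by one, we obtain
$$G^{(i_k)}_{ml} = G^{(S)}_{ml} + \sum (\text{chains of off-diagonal $G^{(T)}$-entries with $T\subset S$}),$$
where the omitted chains carry at least one extra off-diagonal Green function factor. Substituting into $Z_{i_k}$ produces a decomposition $Z_{i_k} = Z^\star_{i_k} + R_{i_k}$, with $Z^\star_{i_k}$ a centered quadratic form in $\ba^{i_k}$ whose coefficients depend only on $H^{(S)}$, and $R_{i_k}$ containing the correction chains, each endowed with at least one extra factor of $\Lambda_o$ (and hence bounded by $(\log N)^{-2}$ on $\Gamma^c$).

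\emph{Cancellation and per-cluster bounds.} Conditionally on $H^{(S)}$, the columns $\{\ba^j:j\in S\}$ are mutually independent and each $Z^\star_{i_k}$ is a mean zero functional of $\ba^{i_k}$ alone. Therefore $\E\bigl[\prod_k Z^\star_{i_k}\,\big|\,H^{(S)}\bigr]=0$ whenever some index $j\in S$ appears exactly once in $\mathbf{i}$, and only tuples in which every distinct value is repeated at least twice survive; the number of such tuples is at most $p^p N^{p/2}$, producing an $N^{-p/2}$ saving against the $N^{-p}$ prefactor. For each cluster of size $2k$ in such a surviving tuple, the large deviation bound \eqref{diaglde} together with the Ward identity \eqref{wid} yields, on $\Gamma^c$,
$$\E_{\ba^j}|Z^\star_j|^{2k} \le (Ck)^{Ck}\bigl(\Lambda_o^2 + N^{-1}\bigr)^{k},$$
where the $\Lambda_o^2$ factor comes from the off-diagonal part of $Z_j$ (using $\sigma_{jm}^2\le C_0/N$ and Ward to bound $\sum_{m,l}|G^{(S)}_{ml}|^2/N^2 \lesssim (\im m_{sc})/(N\eta) \lesssim \Lambda_o^2$ on $\Gamma^c$), while the $N^{-1}$ floor arises from the diagonal piece $\sum_m(|\ba^j_m|^2-\sigma_{jm}^2)G^{(S)}_{mm}$ whose fourth moment is already $O(N^{-1})$ per power. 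Iterating these conditional expectations in the order $\ba^{i_p},\ba^{i_{p-1}},\ldots$ and combining the per-cluster bounds with the tuple count yields the desired main-term contribution. The $R_{i_k}$-terms are treated through Cauchy--Schwarz, pairing them against additional $Z^\star$-factors so that their extra $\Lambda_o$-factors are absorbed into the same right-hand side $(Dp)^{Dp}(\Lambda_o^2+N^{-1})^p$.

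\emph{Main obstacle.} The delicate point is keeping the combinatorial prefactor at $(Dp)^{Dp}$ rather than $p^{O(p^2)}$: the iterated resolvent expansion in Step 2 naively generates $O(p)^p$ correction chains, and for each one must verify that either (a) it enters a surviving Wick-type pairing where the cancellation of Step 3 applies, or (b) it carries enough accumulated factors of $\Lambda_o$ (and thus contributes to the $\Lambda_o^{2p}$ target). Controlling this without redundantly counting the same Green function factor, and matching the off-diagonal $\Lambda_o^2$ scale against the diagonal $N^{-1}$ scale in Step 4, is the heart of the argument and requires choosing the truncation level of the resolvent expansion carefully as a function of $p$ and the apriori bound from Theorem \ref{thm:detailed}.
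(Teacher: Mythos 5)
Your general strategy---writing the $p$-th moment as a sum over index tuples, achieving approximate conditional independence by stripping superscripts, exploiting the mean-zero property $\E_{\ba^j}Z_j=0$, and then doing a Wick-type pairing argument---is indeed the natural skeleton for this kind of estimate, and your power counting ($\barZ$ should be $\sim\Psi^2$ while individual $Z_i\sim\Psi$, with $\Psi\gtrsim N^{-1/2}$) is sound. However, you have chosen the decoupling route via $G^{(i_k)}_{ml}\to G^{(S)}_{ml}$, removing entire rows and columns for all $j\in S$. This is precisely the organization used in the older version of the lemma (Lemma~5.2 of \cite{EYY2}), and the paper explicitly states that it replaced this with a different scheme exactly because the row/column removal approach did not track the $p$-dependence well enough to allow $p$ to grow with $N$. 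In the current proof, $G^{(q^1_\alpha)}$ is instead split as $G^{(q^1_\alpha)}=(H^{[\alpha]}+U^{\langle\alpha\rangle}-z)^{-1}$, where $U^{\langle\alpha\rangle}$ extracts only the $O(p)$ individual matrix elements $h_\nu$, $\nu\in Q^{(\al)}$, appearing in the other $\xi(\fq_\beta)$ factors; the resulting Neumann series in $U^{\langle\alpha\rangle}$ truncates after $n_\alpha\le 6p$ terms with controlled combinatorics, and the bookkeeping is encoded in a graph on (a subset of) the $3p$ slots of $\bq$, with the crucial inequalities $2p+n+O\ge 2N_{ind}+2\ell$ and $4p+n\ge 2N_{ind}+2\ell$ established componentwise. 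The off-diagonal factors are counted per non-location index via a weight system, not by tracking ``chains'' individually.

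The gap in your proposal is precisely the point you flag as the ``main obstacle'': you have not supplied the combinatorial control that keeps the constant at $(Dp)^{Dp}$. Your decomposition $Z_{i_k}=Z^\star_{i_k}+R_{i_k}$ requires iterating \eqref{GijGkij} up to $|S|-1\le p-1$ times, and each iteration spawns a correction term whose factors $G^{(\T)}_{ik},G^{(\T)}_{kj}$ still carry a strict subset of the full superscript $S$ and must themselves be re-expanded. Merely asserting that these chains carry ``at least one extra $\Lambda_o$'' and can be absorbed by Cauchy--Schwarz does not show that the number of surviving chains is $p^{O(p)}$ rather than $p^{O(p^2)}$; indeed this is exactly the bottleneck that forced the authors to change the method. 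A second, subtler issue is that $R_{i_k}$ still depends on $\ba^{i_k}$ \emph{and} on other columns in $S$ simultaneously, so the clean conditional independence you use to kill tuples with a singly-occurring index breaks for the $R$-pieces; the paper sidesteps this by making a single $G^{[\alpha]}$ independent of \emph{all} the relevant $h_\nu$ at once (not just of $\ba^{i_k}$) and by reading the ``which $h_\nu$ must pair with which'' information directly off the graph $\cG(\bq)$ together with the key observation $\E_{h_\nu}{\bf 1}((\Gamma^c)_1)\xi(\fq_\al)=0$. Without a concrete mechanism replacing Proposition~\ref{prop:powercount} and Lemma~\ref{lemma:Nleq1}, the proposal does not yet prove \eqref{52} in the stated generality $p\le(\log N)^{\psi\ell-2}$.
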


 The first version of this lemma was presented in Lemma 5.2 of \cite{EYY2}
where the $p$-dependence of the constant in \eqref{52} was not carefully tracked
and the effect of the exceptional event $\Gamma$ was estimated less
precisely. This was sufficient since in \cite{EYY2} we applied the result for 
an exponent $p$ independent of $N$; as a consequence, in particular, the
probability estimates for the local semicircle law
were only power law and not subexponential in $N$ as here.
In the current paper we allow $p$ to depend
on $N$ which requires the more precise form as stated in Lemma~\ref{motN}.
Furthermore, here we give a new proof that relies on a
different organization of partially independent terms.
 The main difference
is that here we separate dependences on individual
matrix elements, while in \cite{EYY2} we separated
entire rows and columns. The new method is therefore
more robust, but combinatorially more demanding.

\medskip
Recalling the notation 
$$
   \barZ = \barZ(z)= \frac{1}{N}\sum_{i=1}^N Z_i(z),
$$
we will apply Lemma~\ref{motN} in the following form:

\begin{corollary}\label{corZ}  There exist positive constants $D\ge 1$, $A_0\ge 1$, 
 and $\psi \le \min\{ 1/10, \phi, 1/D\}$,
depending on $\ttau$,  
 such that for any
 $\ell$ satisfying \eqref{Lboud},
for any $p\le (\log N)^{\psi \ell-2}$  positive even  
number and
for any fixed $z\in\bS_\ell$ \eqref{defS} 
we have  for any set $\Xi$ in the probability space 
\be\label{eq:corZ}
\E \Big[ {\bf 1}\big(\Gamma^c\big)|\barZ(z)|^p\Big] \le  
  \E \Big[ {\bf 1}\big(\Gamma^c\cap \Xi^c\big) \Psi(z)^{2p}\Big]+  (Dp)^{D p} 
 \big [\P(\Omega(z)) + \P (\Xi)  \big ]
\ee
where $\Omega(z)$ is defined in Lemma \ref{selfeq1}.
\end{corollary}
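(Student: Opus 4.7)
The plan is to derive the corollary as a straightforward consequence of Lemma~\ref{motN} by decomposing the expectation on the right-hand side according to whether the exceptional events $\Omega$ and $\Xi$ occur, and then exploiting the pointwise estimate \eqref{21} on the good set $\Omega^c\cap\mathbf{B}^c$.

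First I would apply Lemma~\ref{motN} to pull out the combinatorial factor: $\E[\mathbf{1}(\Gamma^c)|\barZ(z)|^p] \le (Dp)^{Dp}\,\E[\mathbf{1}(\Gamma^c)(\Lambda_o^2+N^{-1})^p]$. Then I would split $\Gamma^c = (\Gamma^c\cap\Omega^c\cap\Xi^c)\cup(\Gamma^c\cap\Omega^c\cap\Xi)\cup(\Gamma^c\cap\Omega)$ and bound each piece separately. On the main good piece $\Gamma^c\cap\Omega^c\subseteq\Omega_h^c\cap\Omega_d^c\cap\Omega_o^c\cap\mathbf{B}^c$, the pointwise bound \eqref{21} gives $\Lambda_o\le (\log N)^{-\ell/2}\Psi$, so $\Lambda_o^2\le (\log N)^{-\ell}\Psi^2$; combining this with the trivial lower bound $\Psi\ge(\log N)^{\ell}/\sqrt{N}$ from \eqref{Psibelow} yields $N^{-1}\le(\log N)^{-2\ell}\Psi^2$. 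Hence $(\Lambda_o^2+N^{-1})^p\le (2(\log N)^{-\ell})^p\Psi^{2p}$ on $\Gamma^c\cap\Omega^c$. On the two remainder pieces, I would use that $\Gamma^c\subseteq\mathbf{B}^c$ and hence $\Lambda_o^2+N^{-1}\le 1$ for $N$ large, so $(\Lambda_o^2+N^{-1})^p\le 1$ there.

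Assembling the three contributions, the right-hand side is bounded by
\begin{equation*}
(Dp)^{Dp}\Big[(2(\log N)^{-\ell})^p\,\E[\mathbf{1}(\Gamma^c\cap\Xi^c)\Psi^{2p}] + \P(\Xi) + \P(\Gamma^c\cap\Omega)\Big],
\end{equation*}
and clearly $\P(\Gamma^c\cap\Omega)\le\P(\Omega)$. The heart of the argument is showing that the prefactor $(Dp)^{Dp}(2(\log N)^{-\ell})^p$ is at most $1$ for all allowed $p$ and sufficiently large $N$. Taking logarithms, this reduces to $D\log(Dp)\lesssim \ell\log\log N$, which under the hypothesis $p\le(\log N)^{\psi\ell-2}$ is implied by $D\psi\le 1$ (with room to spare so that the constant $D\log D$ can be absorbed once $\ell\ge A_0\log\log N$ with $A_0$ chosen large). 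This is precisely why the assumption $\psi\le 1/D$ is imposed in the statement.

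The only real obstacle is this last balancing act: Lemma~\ref{motN} introduces a factorial-type loss $(Dp)^{Dp}$, and one must pay for it using the polylogarithmic gain $(\log N)^{-\ell}$ coming from \eqref{21}. The restriction $p\le(\log N)^{\psi\ell-2}$ with $\psi\le 1/D$ is exactly tight for this trade. Once that step goes through, redefining the constant $D$ to absorb the factor on the $\P(\Omega)+\P(\Xi)$ terms is a cosmetic adjustment, and the desired inequality \eqref{eq:corZ} follows.
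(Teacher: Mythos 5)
Your proposal is correct and follows essentially the same route as the paper's own proof: the decomposition of $\Gamma^c$ into $(\Gamma^c\cap\Omega^c\cap\Xi^c)\cup(\Gamma^c\cap\Omega^c\cap\Xi)\cup(\Gamma^c\cap\Omega)$ coincides with the paper's three-way split, the trivial bound $\Lambda_o^2+N^{-1}\le 1$ on $\mathbf{B}^c$ handles the two bad pieces in the same way, and the absorption of $(Dp)^{Dp}$ via \eqref{21} and $\Psi\gtrsim(\log N)^\ell/\sqrt{N}$ under the constraint $\psi\le 1/D$ is exactly the paper's balancing argument. The only cosmetic difference is that you factor out $(2(\log N)^{-\ell})^p$ explicitly and then argue it kills the $(Dp)^{Dp}$ prefactor, whereas the paper absorbs $(Dp)^D\le(\log N)^\ell$ directly into each $(\Lambda_o^2+N^{-1})$ factor.
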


{\it Proof.}  On the right hand side of \eqref{52} we can split the set $\Gamma^c$
as
$$
  \Gamma^c = \Omega_h^c \cap {\bf B}^c =
\big [ \Omega^c \cap {\bf B}^c \cap \Xi^c \big ] \cup \big [ \Omega^c \cap {\bf B}^c \cap \Xi \big ]
 \cup \big[ (\Omega_h^c\setminus\Omega^c) \cap {\bf B}^c \big]
$$
On the set $\big [ \Omega^c \cap {\bf B}^c \cap \Xi \big ]
 \cup \big[ (\Omega_h^c\setminus\Omega^c) \cap {\bf B}^c \big]
 \subset {\bf B}^c$, we  estimate $\Lambda_o$ trivially by
\be\label{x1}
\Lambda_o\le (\log N)^{-2}\le 1. 
\ee
Since $\Omega_h^c\setminus\Omega^c \subset \Omega$,  we have 
\be\label{eq:corZnew}
\E \Big[ {\bf 1}\big(\Gamma^c\big)|\barZ(z)|^p\Big] \le  
 (Dp)^{Dp}
  \E \Big[ {\bf 1}\big(\Omega^c \cap {\bf B}^c   \big) 
 \big[ \Lambda_o(z)^{2} + N^{-1}\big]^{p}\Big] + (Dp)^{Dp}  
   \big [ \P (\Xi)  +\P(\Omega)  \big ].
\ee 
Choosing $\psi\le 1/D$ we see that $(Dp)^D\le (\log N)^\ell$. Thus
we can use
 $(\log N)^\ell N^{-1} \le C\Psi^2(z)$ 
for $z\in \bS_\ell$ (by $\im m_{sc}(z)\ge c\eta$) and that  $(\log N)^\ell\Lambda_o^2\le \Psi^2$ on 
$\Omega^c\cap {\bf B}^c$, see \eqref{21}, to absorb the $(Dp)^{Dp}$ prefactor 
in the first term in \eqref{eq:corZnew}.
This concludes the proof of Corollary \ref{corZ}. \qed

\bigskip

\begin{lemma}\label{selfKK}  Fix two numbers $\ell$ and $L$
 that satisfy $4\le \ell \le L\le\frac{\log (10 N)}{10\log\log N}$, in
particular $\bS_L\subset\bS_\ell$,
and let $0< \tau\le 1$ be an arbitrary constant.
For any $z=E+i\eta$ define
\be
  \gamma = \gamma(z): =  \frac{(\log N)^{3\ell+2}}{(N\eta)^\tau}.
\label{gammadef}
\ee
Suppose that  for all $z\in {\bf S}_L$ we have 
\be\label{spo1}
 \Lambda (z) \leq  \gamma(z) 
\ee
and 
\be\label{Zbund}
|[Z](z)|\leq (\log N)^{3 \ell}\left(\frac{\gamma(z) +\im m_{sc}(z) }{N\eta}\right).
\ee
Suppose that  $\Lambda(z)= o(1)$ for $\eta=10$, $|E|\le 5$.
 Then  in the set $\Omega^c \cap {\bf B}^c$ we have 
\be\label{spo2}
\Lambda(z)\leq (\log N)^{3 \ell  +2}(N\eta)^{-(\tau+ 1)/2}
\ee
for any $z\in {\bf S}_L$.
Furthermore, if $\Lambda(z)\leq \al(z)/2$ and \eqref{Zbund} hold
for some $z\in{\bf S}_L$, then 
\be\label{spo3}
\Lambda(z) \leq   C(\log N)^{3 \ell+1}\left(\frac{\gamma(z)+\im m_{sc}(z) }{\al(z) N\eta}\right),
\ee
in the set $\Omega^c \cap {\bf B}^c$,
where $\alpha$ was defined in \eqref{58}. 
\end{lemma}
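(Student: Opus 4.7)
The plan is to combine the self-consistent equation \eqref{661} of Lemma~\ref{591} with the improved control on $\barZ$ from \eqref{Zbund} to derive a refined quadratic inequality for $\barv = m - m_{sc}$. Under the a priori bound \eqref{spo1} we have $\Psi^2 \le (\log N)^{2\ell}(\gamma + \im m_{sc})/(N\eta)$, so the error term $O((\log N)\Psi^2)$ in \eqref{661}, combined with the hypothesized $|\barZ| \le (\log N)^{3\ell}(\gamma + \im m_{sc})/(N\eta)$, yields on $\Omega^c \cap {\bf B}^c$ the bound
\[
|\barv| \cdot \bigl| (1 - m_{sc}^2) - m_{sc}^3 \barv \bigr| \le C (\log N)^{3\ell} \frac{\gamma + \im m_{sc}}{N\eta} + O\bigl( \Lambda^2/\log N \bigr).
\]

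For \eqref{spo3}, the assumption $\Lambda \le \alpha/2$ means $|\barv| \le |1 - m_{sc}^2|/(2|m_{sc}|^3)$, so by the reverse triangle inequality the second factor on the left is at least $|1 - m_{sc}^2|/2 \sim \alpha$. Dividing through by $\alpha$ and absorbing the $O(\Lambda^2/\log N)$ term, which is dominated by $\alpha \Lambda$ up to a logarithmic loss since $\Lambda \le C(\log N)^{-2}$ in ${\bf B}^c$, delivers \eqref{spo3}. The extra $(\log N)^{1}$ factor relative to the $(\log N)^{3\ell}$ appearing in the inequality above is what produces the $(\log N)^{3\ell+1}$ in the final bound.

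To obtain \eqref{spo2} without the a priori assumption $\Lambda \le \alpha/2$, I would run a continuity argument in $\eta$ paralleling Section~\ref{continuity}. Fix $E$ with $|E| \le 5$ and discretize a sequence $\eta_1 = 10 > \eta_2 > \cdots > \eta_{k_0} = N^{-1}(\log N)^{10L}$ with spacing at most $N^{-8}$, so that Lipschitz continuity of $G_{ij}$ in $z$ transfers estimates between consecutive grid points with error at most $N^{-6}$. At $\eta_1$ one has $\Lambda = o(1)$ and $\alpha \sim 1$, so \eqref{spo3} applies and yields a bound far below $\alpha/2$. Inductively on $k$, assuming $\Lambda(z_k) \le \alpha(z_k)/2$, one applies \eqref{spo3} at $z_k$ and checks that the resulting estimate is strictly less than $\alpha(z_{k+1})/2$; Lipschitz continuity then propagates the inequality $\Lambda(z_{k+1}) \le \alpha(z_{k+1})/2$. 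Once \eqref{spo3} is valid throughout $\bS_L$, a direct case analysis yields \eqref{spo2}: in the bulk $\kappa \gtrsim 1$ gives $\alpha \sim \im m_{sc} \sim 1$, hence $\Lambda \le C (\log N)^{3\ell+1}/(N\eta)$, which is dominated by $(\log N)^{3\ell+2}(N\eta)^{-(\tau+1)/2}$ using $\tau \le 1$; near the edge one uses $\alpha \sim \im m_{sc} \sim \sqrt{\kappa + \eta}$ and an analogous comparison.

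The main obstacle is the bootstrap step near the spectral edge, where $\alpha(z)$ is small and one must verify that \eqref{spo3} strictly beats $\alpha/2$ uniformly across $\bS_L$, including in the crossover region near $\wt\eta$ from before Lemma~\ref{592}. The needed slack is supplied by the lower bound $N\eta \ge (\log N)^{10L}$ on $\bS_L$, which dominates the $(\log N)$-factors produced by \eqref{spo3} provided $L$ is comparable to $\ell$. A minor additional care is required to replace the pointwise exceptional sets $\Omega(z_k)$ and ${\bf B}(z_k)$ by a single event whose complement supports all inductive estimates, as was done in Section~\ref{continuity}.
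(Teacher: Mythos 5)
The first part of your plan—deriving \eqref{spo3} from Lemma~\ref{591} together with the hypothesized bound on $\barZ$, and then using $\Lambda\le\alpha/2$ and a reverse-triangle argument to isolate $|\barv|$—is essentially the paper's own argument for that conclusion.

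The continuity scheme you propose to establish $\Lambda\le\alpha/2$ (and hence \eqref{spo3}, and hence \eqref{spo2}) throughout $\bS_L$ has a genuine gap near the spectral edge. The inductive step requires the right side of \eqref{spo3} to be strictly less than $\alpha/4$, which, since $\im m_{sc}\le C\alpha$, reduces to $\alpha^2 \gtrsim (\log N)^{3\ell+1}\gamma/(N\eta)$, i.e.\ $\alpha \gtrsim (\log N)^{3\ell+3/2}(N\eta)^{-(1+\tau)/2}$. This threshold is exactly the quantity $\al_0$ the paper introduces in \eqref{algeClo}, and it is \emph{not} automatically dominated by $\alpha(z)$ on all of $\bS_L$: for $\kappa$ small and $\eta$ near $N^{-1}(\log N)^{10L}$ one has $\alpha\sim\sqrt{\eta}$, and $\sqrt{\eta}<\al_0$ occurs for such $z$ regardless of how $L$ is tuned relative to $\ell$ (the inequality reduces to $\alpha\ge O(1)\al_0$, and the lower bound $N\eta\ge(\log N)^{10L}$ does not enter). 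At the grid point where $\alpha(z_k)$ crosses below $\al_0$, \eqref{spo3} no longer returns a bound below $\alpha(z_k)/4$, the Lipschitz step loses its slack, and the induction stalls. So your argument does not prove $\Lambda\le\alpha/2$ in the edge regime, and \eqref{spo2} is left unproved there.

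The paper's fix is to run the continuity/dichotomy argument only while $\alpha\ge\al_0$, and to handle $\alpha<\al_0$ by a separate estimate that does not assume $\Lambda\le\alpha/2$: completing the square in the self-consistent quadratic inequality gives the pointwise bound $\Lambda\le 2\alpha + C(\log N)^{(3\ell+1)/2}(\gamma/(N\eta))^{1/2}$ (after absorbing the $\alpha$-contribution into $\gamma$ when $\Lambda\ge 2\alpha$, using the hypothesis $\Lambda\le\gamma$), and substituting $\alpha\le\al_0$ into this immediately yields \eqref{spo2} in that regime. You need to add this second branch; without it the proof of \eqref{spo2} does not close.
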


{\it Proof:} 
In the first part of the proof $z\in \bS_L$ is fixed
so we drop the $z$-dependence of various quantities. 
Recall  \eqref{58}, \eqref{Kdef} and Lemma \ref{lm:msc} for $m_{sc}$ 
and $\alpha  \sim \sqrt {\kappa + \eta}$.  
From  Lemma \ref{591} and using \eqref{Zbund},    in the set $\Omega^c \cap {\bf B}^c$ we have, 
  with $w:= [v]$, the estimate 
\be\label{115}
\frac{(1-m_{sc}^2)}{m_{sc}^3}w -w^2=O\left(\frac{|w|^2}{\log N}\right)+O\Big [( \log N)^{3 \ell+1}
\left(\frac{\gamma+\im m_{sc} }{N\eta}\right)\Big ] 
\ee 
where we have used \eqref{spo1}, the definition
of $\Psi$ \eqref{1.9} and that $|w| = \Lambda$.  We can complete the square of the left side 
and obtain the inequality 
\be
\Lambda\leq 2\al +C(\log N)^{ \frac{3\ell+1}{2}}\left(\frac{\gamma+ \alpha }{N\eta}\right)^{1/2},
\label{sqcom}
\ee
where we have used that $\im m_{sc} \le C  \alpha$.  
We claim that in fact
\be\label{120}
\Lambda\leq 2\al +C(\log N)^{ \frac{3 \ell+1}{2}}\left(\frac{\gamma}{N\eta}\right)^{1/2}
\ee 
also holds; indeed this is trivial if $\Lambda\le 2\al$, and 
if $\Lambda\geq 2\al$ then by assumption 
\eqref{spo1}
$\gamma\ge \Lambda \ge 2 \alpha$, so $\al$ can be absorbed into $\gamma$
in \eqref{sqcom}.

Define
\be\label{algeClo}
\al_0=\al_0(z): =   T (\log N)^{ \frac{3 \ell+1}{2}}\left(\frac{\gamma}{N\eta}\right)^{1/2}
 =  T (\log N)^{ 3 \ell+ 3/2} (N\eta)^{-\frac { 1 + \tau } 2}
\ee
with  a large  parameter $T$ (independent of $N$) to be specified later,
and note that  $\al_0\le\gamma$ for sufficiently large $N$.

Suppose that  $\Lambda\leq \al/2$. In this case the $w^2$ terms are smaller than the leading term $\al w$
in the left hand side of \eqref{115}, therefore we can express $|w|=\Lambda$ and estimate
it by 
\be\label{122-1}
\Lambda \leq C(\log N)^{3 \ell+1}\left(\frac{\gamma+\im m_{sc} }{\al N\eta}\right)
\leq C(\log N)^{3  \ell+1}\left(\frac{\gamma}{\al N\eta}  + \frac{1  }{ N\eta}\right).
\ee
In the second step also used $\im m_{sc} \le C  \alpha  $.
In particular, the first inequality proves \eqref{spo3}.

Assume now that $\Lambda \le \alpha/2$ and  $\alpha \ge \alpha_0$. 
Plugging the lower bound 
 \eqref{algeClo} on $\alpha$ into  \eqref{122-1}
and using the definition of $\gamma$ we obtain
\be
\Lambda 
\leq  C T ^{-1} (\log N)^{ \frac{3 \ell+1}{2}} 
\left(\frac{\gamma}{N\eta}\right)^{1/2} = CT^{-2}\alpha_0 . 
\label{dic}
\ee
Choosing $T$ as a sufficiently large constant we obtain that
\be\label{122}
\Lambda\le \frac{\al}{4}
\ee
under the condition that $\Lambda \le \alpha/2$ and  $\alpha \ge \alpha_0$.
 Therefore, as long as $\al\ge \al_0$, we
have a dichotomy: either $\Lambda \ge \al/2$ or $\Lambda\le \al/4$. 

We now fix $E$ and we continuously decrease $\eta$ from $\eta=10$ to
 $\eta = N^{-1}(\log N)^L$, the
lower point in $\bS_L$.  Since  $\Lambda(z)\ll 1$ and $\alpha(z)$ is bounded away from zero
for $\eta=10$, $|E|\le 5$,  we know that $\Lambda\le \al/2$ holds for $\eta=10$.
Since $\Lambda(z)$ is continuous function, by the dichotomy 
we have that $\Lambda\le \al/4$  for all $\eta$ as long as
$\alpha \ge \alpha_0$. In particular, $\Lambda\le  CT^{-2}\alpha_0$ from \eqref{dic}
which proves  \eqref{spo2} in the case $\al\ge \al_0$.
 
Finally, for $\alpha \le \alpha_0$, we can estimate $\Lambda$ directly via \eqref{120} 
and this proves that 
\be\label{122-2}
\Lambda 
\leq  C  (\log N)^{ \frac{3 \ell+1}{2}} \left(\frac{\gamma}{N\eta}\right)^{1/2}
\ee
from which  \eqref{spo2} follows
and we have thus completed the proof. 
\qed

\medskip

{\it Proof of Theorem \ref{45-1}.}  First we explain the idea.
We will prove, by  an induction on the exponent $\tau$,  
that $\Lambda\le (N\eta)^{-\tau}$ holds modulo logarithmic factors
with a high probability. Notice that we proved this statement
for $\tau =1/3$ in Theorem \ref{thm:detailed}.
Lemma \ref{selfKK} asserts that if this statement
is true for some $\tau$, then it also holds for $\frac{1+\tau}{2}$ assuming
a bound on $[Z]$. This bound can be obtained from Corollary \ref{corZ}
with a high probability.
Repeating the induction step for $O(\log \log N)$ times, we will obtain 
that $\tau$ is essentially one, i.e. we get Theorem \ref{45-1}.
However, we have to keep track of the increasing logarithmic factors
and the deteriorating probability estimates of the exceptional sets.

Throughout the proof we fix $L$ satisfying \eqref{Lbound}
with the constant  $A_0$ obtained from Corollary \ref{corZ} and
we also fix $\psi$ from the same Corollary.
We will also use a moving exponent $\ell$ 
whose value will always
satisfy $L/2\le \ell \le  L$, in particular $\bS_L\subset \bS_\ell$.

We recall the definition
\be
   \gamma=\gamma(z, \tau, \ell)=\frac{(\log N)^{3 \ell +2}  }{(N\eta)^{\tau}},
\label{gamdef}
\ee
where we now emphasize the dependence on $\tau$ and $\ell$. 
Define the events 
\be\label{y1}
  R_{\tau, \ell}  := \bigcup_{z\in \bS_L}  R_{\tau,\ell}(z), \;\;\; 
R_{\tau,\ell} (z) :=  \Big\{ \Lambda (z) \geq \gamma(z,\tau,\ell) \, \Big\}.
\ee
Then \eqref{mainlsresult} in
 Theorem \ref{thm:detailed} states  that there is a $\psi$ with $0<\psi<1/10$
such that for any $\ell_0:=L$ we have
\be\label{y1-0}
   \P\left( R_{\tau, \ell_0} \right)\leq \exp{\big[-(\log N)^{\psi \ell_0 } \big]},
 \ee
with   $ \tau = 1/3$ and for any $N\ge N_0(\ttau, \delta_\pm, C_0)$.
 Notice that we have used  a weaker form of  Theorem \ref{thm:detailed}
 by making the threshold $\gamma$  larger, the 
restrictions for $\ell_0$ stronger 
 and  reducing the exponent $\phi$ to $\psi$
since  this weaker form will be preserved in  the iterative procedure.   
By setting a sufficiently large lower threshold on $N$, we could remove the
constants $C, c$ from \eqref{mainlsresult}.
The general iteration step is included in the following lemma.

\begin{lemma} \label{y4} There exists a sufficiently large $N_0=N_0(\ttau,\delta_\pm, C_0)$
such that for any $N\ge N_0$ the following implication holds.
If for some $0< \tau < 1$ and for 
some  $\ell$ with $L/2\le \ell\le L$
\be\label{y1-1}
   \P\left( R_{\tau, \ell} \right)\leq \exp \big[-(\log N)^{\psi  \ell } \big] ,
\ee
 then 
\be\label{y2}
   \P\left( R_{\tau', \ell'} \right)\leq \exp{\big[-(\log N)^{  \psi  \ell'} \big]},
\ee
where 
\be\label{y3}
\tau' = \frac {\tau + 1} 2 , \quad \ell' = \ell - \frac 3 { \psi}.
\ee
\end{lemma}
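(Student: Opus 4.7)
The plan is to verify both hypotheses of Lemma~\ref{selfKK} at level $\ell'$ and then apply that lemma to extract the improved bound $\Lambda \le \gamma(\cdot,\tau',\ell')$. The $[Z]$-hypothesis at level $\ell'$ will be supplied by Corollary~\ref{corZ} combined with Markov's inequality; the $\Lambda$-hypothesis at level $\ell'$ will be obtained by first applying Lemma~\ref{selfKK} at level $\ell$ (where the $\Lambda$-hypothesis follows directly from the induction) and then trading the resulting $(N\eta)^{(1-\tau)/2}$ gain in the $\tau$-exponent against the $(\log N)^{9/\psi}$ reduction in the logarithmic prefactor.

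First I fix $z \in \bS_L$ and take $p$ to be the largest even integer at most $(\log N)^{\psi\ell-2}$. By the induction hypothesis \eqref{y1-1}, the event $R_{\tau,\ell}^c$ of probability at least $1-\exp[-(\log N)^{\psi\ell}]$ satisfies $\Lambda(z') \le \gamma(z',\tau,\ell)$ for every $z' \in \bS_L$, so on this event $\Psi(z)^2 \le (\log N)^{2\ell}(\gamma(z,\tau,\ell)+\im m_{sc}(z))/(N\eta)$. Setting $\Xi = R_{\tau,\ell}$ in Corollary~\ref{corZ} and applying Markov's inequality to $|\barZ(z)|^p$ at the threshold
\[
\zeta(z) := (\log N)^{3\ell'}\,\frac{\gamma(z,\tau,\ell')+\im m_{sc}(z)}{N\eta},
\]
the first error term in that corollary contributes $(\log N)^{-cp}$, while the second is bounded by $(Dp)^{Dp}\zeta(z)^{-p}\exp[-c(\log N)^{\phi\ell}]$, which remains stretched-exponentially small because $(Dp)^{Dp}\zeta(z)^{-p}$ grows only like $\exp[O((\log N)^{\psi\ell-1})]$ and $\phi \ge \psi$. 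Both contributions are much smaller than $\exp[-(\log N)^{\psi\ell'}]$ for the chosen $p$. Extending this pointwise estimate uniformly to $\bS_L$ by an $N^{-10}$-net of cardinality at most $CN^{20}$, combined with the Lipschitz continuity of $\barZ(z)$ and $\zeta(z)$ on $\bS_L$, yields $|\barZ(z)| \le \zeta(z)$ simultaneously for all $z \in \bS_L$ off an event of the same order, thereby verifying the $[Z]$-hypothesis of Lemma~\ref{selfKK} at level $\ell'$.

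For the $\Lambda$-hypothesis at level $\ell'$, I first apply Lemma~\ref{selfKK} at level $\ell$, where the $\Lambda$-hypothesis is the induction bound and the $[Z]$-hypothesis follows from $\zeta(z) \le (\log N)^{3\ell}(\gamma(z,\tau,\ell)+\im m_{sc}(z))/(N\eta)$. The first conclusion \eqref{spo2} of that lemma gives $\Lambda(z) \le (\log N)^{3\ell+2}/(N\eta)^{\tau'}$. Since $(N\eta)^{(1-\tau)/2} \ge (\log N)^{5L(1-\tau)}$ on $\bS_L$ and this comfortably dominates $(\log N)^{9/\psi} = (\log N)^{3(\ell-\ell')}$ throughout the $O(\log\log N)$ iterations of Lemma~\ref{y4} required in the proof of Theorem~\ref{45-1}, rearranging the inequality $(\log N)^{3\ell+2}/(N\eta)^{\tau'} \le (\log N)^{3\ell'+2}/(N\eta)^{\tau}$ gives $\Lambda(z) \le \gamma(z,\tau,\ell')$. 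A second application of Lemma~\ref{selfKK}, now at level $\ell'$, then delivers $\Lambda(z) \le \gamma(z,\tau',\ell')$ uniformly on $\bS_L$; collecting the exceptional events and controlling $\P(\Omega \cup \mathbf{B})$ via \eqref{Gammaest} completes the proof of \eqref{y2}.

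The main obstacle is the balanced choice of $p$: it must be large enough for Markov to convert a moment bound into a stretched-exponential tail, yet small enough that the combinatorial prefactor $(Dp)^{Dp}$ in Corollary~\ref{corZ} does not overwhelm the gain from $\zeta^{-p}$. The choice $p \sim (\log N)^{\psi\ell-2}$ is dictated by this trade-off, and the three units of logarithmic loss it produces---one from the moment exponent itself, one from the prefactor $(Dp)^{Dp}$, and one from the Lipschitz coarse-graining union bound over the $N^{-10}$-net---translate precisely into the prescribed $3/\psi$-reduction $\ell \to \ell' = \ell - 3/\psi$.
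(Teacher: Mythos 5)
Your proposal departs from the paper's proof in a significant way: the paper applies Lemma~\ref{selfKK} exactly once with the parameter pair $(\tau,\ell)$, reads off \eqref{spo21}, and declares \eqref{y2} proved; you apply Lemma~\ref{selfKK} twice and interpose a ``trade-off'' step in between. Your motivation is sound --- you correctly notice that \eqref{spo21} produces the bound $\Lambda\le (\log N)^{3\ell+2}(N\eta)^{-\tau'}$, which is \emph{larger} than the desired threshold $\gamma(z,\tau',\ell')=(\log N)^{3\ell'+2}(N\eta)^{-\tau'}$ since $\ell'<\ell$, and that this discrepancy must somehow be absorbed. This is a real bookkeeping issue that the paper's own proof glosses over.

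However, the mechanism you propose to fix it does not work for the full range of $\tau$ stated in the lemma. Your trade-off needs
\[
(N\eta)^{(\tau'-\tau)}=(N\eta)^{(1-\tau)/2}\ge (\log N)^{3(\ell-\ell')}=(\log N)^{9/\psi}
\]
\emph{uniformly for $z\in\bS_L$}. At the bottom of $\bS_L$ one has $N\eta\sim(\log N)^{10L}$, so the left side is only $(\log N)^{5L(1-\tau)}$ and the requirement becomes $1-\tau\ge 9/(5L\psi)$. The lemma is asserted for \emph{all} $0<\tau<1$; it fails as soon as $1-\tau<9/(5L\psi)$. Even restricting to the values $\tau_n=1-\tfrac23 2^{-n}$ that actually occur in the proof of Theorem~\ref{45-1}, the condition becomes $2^n\lesssim L\psi\sim\log\log N$, whereas the iteration is run up to $n=2\log\log N$; for all $n$ beyond roughly $\log_2\log\log N$ your inequality reverses. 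So your extra step, while better motivated than the paper's, does not close the gap. The correct repair is simpler and does not require a second application of Lemma~\ref{selfKK}: the $(\log N)$-power in the threshold $\gamma$ should not be allowed to shrink with the iteration --- in effect one should interpret $R_{\tau,\ell}$ as tying the probability decay to $\ell$ but the threshold $\gamma$ to the fixed value $L$ (equivalently, state the conclusion as a bound on $\P(\Lambda\ge(\log N)^{3\ell+2}(N\eta)^{-\tau'})$ rather than on $\P(R_{\tau',\ell'})$). With that reading, the paper's single application of Lemma~\ref{selfKK} at $(\tau,\ell)$ delivers exactly the required inequality and no trade-off is needed.

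One further remark on the $[Z]$-estimate: you choose the Markov threshold $\zeta(z)$ directly at level $\ell'$ rather than the paper's threshold $\tfrac12(\log N)\Phi^2$ at level $\ell$; this is a legitimate variant, and your verification that the two error terms in Corollary~\ref{corZ} remain stretched-exponentially small at $p\sim(\log N)^{\psi\ell-2}$ is essentially the same as the paper's quantitative bookkeeping and is correct.
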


\begin{proof} 
 Define 
\be \label{1.9x}
\Phi = \Phi(z,\tau, \ell) : = (\log N)^\ell  \sqrt{\frac{\gamma(z,\tau,\ell)
+\im m_{sc}(z) }{N\eta}}.
\ee
Fix $z\in \bS_L$, then 
from  Corollary \ref{corZ} with the choice of $\Xi = R_{\tau,\ell}$ we have 
\begin{align}
\E \Big[ {\bf 1}\big( \Gamma^c\big)|\barZ|^p\Big]  & \le
 \E \Big[ {\bf 1}\big( R_{\tau,\ell}^c \big) \Psi^{2p}\Big] + 
  (Dp)^{Dp}    \exp \Big [  - c(\log N) ^{ \psi \ell } \Big ] \\ 
 &\le\Phi^{2p} +   (Dp)^{Dp}  \exp \Big [  - c(\log N) ^{ \psi \ell }   \Big ], 
\end{align}
where we have used \eqref{y1-1} and \eqref{B12} to bound the probability of $\Xi$ and $\Omega$
and we used that $\Lambda\le \gamma$ on $R_{\tau,\ell}^c$ to estimate $\Psi\le \Phi$.
 We will choose $p = (\log N)^a$ with 
\be
a = \psi \ell  - 3.
\ee
{F}rom Markov's inequality and \eqref{Gammaest} we obtain that
\begin{align}
\P  \Big (  |[Z]| \ge  \frac{1}{2}(\log N )  \Phi^2 \Big ) 
& \le 2^p( \log N)^{-  p }  \Phi^{-2p}  \Big [  \Phi^{2p} +  
  (Dp)^{Dp}    \exp \big [  - (\log N) ^{ \psi \ell } \big]\Big ]    
 + C\exp [- c(\log N)^{\phi\ell}]
\non\\
\label{2.12}
& \le  2^p( \log N)^{-   p }  +       \exp 
\big[ Dp\log (2Dp)+ p(\log N) -  (\log N)^{a+3}  \big ]  + C\exp [- c(\log N)^{\phi\ell}]    
\non\\
& \le
\exp \Big [  -   3(\log N)^{a}  \Big ].
\end{align}
Here in the second line we used $\Phi\ge N^{-1/2}$ from $\Im\, m_{sc}(z)\ge c\eta$
to estimate $\Phi^{-2p}$. In the final estimate we used that
$\log p = a\log \log N\le \psi\ell \log\log N\le \psi\log N$
and that $\psi\le \phi$.
This estimate was for any fixed $z\in\bS_L$.
By choosing a grid of $z$-values in
 $\bS_L$ with spacing of order $N^{-c}$, 
with some large $c$, we can use
the Lipschitz  continuity of $[Z](z)$ and $\Phi(z)$
to conclude that essentially the same estimate holds
simultaneously for all $z\in \bS_L$.

Combining this with \eqref{y1-1}, we have
\be\label{x2} 
 |[Z]|\le  (\log N )   \Phi^2  \le  (\log N )^{ 3 \ell } 
\Big (  \frac{\gamma +\im m_{sc} }{N\eta}\Big ) \qquad \mbox{and} \qquad \Lambda\le \gamma,
\ee
for all $z\in \bS_L$ with a probability at least $1- \exp \big [  -   2(\log N)^{a}  \big ]$.
We can now apply Lemma \ref{selfKK} so that 
\be\label{spo21}
\Lambda(z)\leq (\log N)^{3 \ell + 2}(N\eta)^{-(\tau+ 1)/2}
\ee
hold  for any $z\in {\bf S}_L $
with  a probability 
bigger than  $1- \exp \big [  -   (\log N)^{a}  \big ]$.  
Here we have used that  
 $\P (\Omega \cup {\bf B}) \le \exp \big [  -   2(\log N)^{a}  \big ]$
 from  \eqref{Gammaest}. 
  We have thus proved \eqref{y2} and Lemma~\ref{y4}.
\end{proof} 

\bigskip

Returning to the proof of Theorem  \ref{45-1},  we choose  
 $\tau_0= 1/3$ and $\ell_0=L$ as the initial values of the iteration.
The input condition \eqref{y1-1} in Lemma \ref{y4}
for the initial step has been checked in \eqref{y1-0}.
Iterating  Lemma \ref{y4} yields a sequence of $(\tau_n, \ell_n)$
so that $\tau_{n+1}=\tau_n'$ and $\ell_{n+1}=\ell_n'$ via \eqref{y3}, more precisely
$$
   \tau_n = 1- 2^{-n}\cdot\frac{2}{3} \ge 1-2^{-n}, \qquad \ell_n = L - 3n/\psi,
$$
such that
\be\label{Lambdafinal-3} 
\P \Big ( \bigcup_{z\in \bS_L} \Big\{ \Lambda(z) 
 \ge \frac{(\log N)^{ 3\ell_n+2 } }{ (N\eta)^{1 -2^{-n} } } \Big\}   \Big )
\le  \exp{\big[-(\log N)^{ \psi\ell_n } \big]}.  \quad 
\ee
We run the iteration until $n =2\log\log N$ so  that
$$
   (N\eta)^{2^{-n}} \le N^{2^{-n}} \le e.
$$
If $A_0=20/\psi$, i.e. $L\ge (20/\psi)\log\log N$,  then 
$\ell_n \ge 2L/3$ and thus
\be\label{Lambdafinal-1} 
\P \Big ( \bigcup_{z\in \bS_L} \Big\{ \Lambda(z) 
 \ge \frac{e(\log N)^{ 3L+2 } }{ N\eta } \Big\}   \Big )
\le  \exp{\big[-(\log N)^{ 2\psi L/3 } \big]}.  \quad 
\ee
This proves
\eqref{Lambdafinal} after renaming  $2 \psi/3$ to a new $\phi$. 
The proof of \eqref{Lambdaodfinal} follows from the estimate
on $\Lambda$,  from \eqref{21}, \eqref{71d} and 
\eqref{Gammaest}.

\medskip

Finally, to prove \eqref{443}, we need the following Lemma. 

\begin{lemma}\label{etaf} Let  $L\ge4$ satisfy \eqref{Lbound} 
and define the set
\be\label{etafix}
\bU_L:= \Big \{z=E+i\eta \; : \;
 5\geq |E|\geq 2+N^{-2/3}(\log N)^{8L+8},\,\,\,\,\,\eta=N^{-2/3}(\log N)^{2L+1}  \Big \}.
\ee
Then for $A_0$ large enough in \eqref{Lbound}, we have
\be\label{434}
\P \Big (\bigcup_{z\in \bU_L}
 \Big\{ 
  \Lambda(z)\leq (\log N)^{-1}(N\eta)^{-1}   
  \Big\}\Big)
  \ge 1-   C\exp{\big[-c(\log N)^{\psi L/2  } \big]}.  \quad 
\ee
\end{lemma}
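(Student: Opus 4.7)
The strategy is to exploit that on $\bU_L$ the energy $E$ lies well outside the spectrum, so $\kappa:=||E|-2|\ge N^{-2/3}(\log N)^{8L+8}$ and hence $\al(z):=|(1-m_{sc}^2)/m_{sc}^3|\sim\sqrt{\kappa+\eta}$ is comparatively large. This forces us into the regime $\Lambda\le\al/2$ of Lemma \ref{selfKK}, where the sharper bound \eqref{spo3} (carrying a $1/\al$ prefactor) is available in place of the square-root improvement \eqref{spo2} used in the bulk. A single application of \eqref{spo3}, starting from the apriori control produced by Theorem \ref{45-1}, should then beat the target $(\log N)^{-1}(N\eta)^{-1}$.

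First I would set $L'=L/2$ (enlarging $A_0$ in \eqref{Lbound} if needed so that $L/2\ge A_0\log\log N$) and check $\bU_L\subset\bS_{L'}$. This reduces to verifying $N^{-2/3}(\log N)^{2L+1}>N^{-1}(\log N)^{10L'}$, i.e.\ $(\log N)^{3L-1}<N^{1/3}$, which is immediate from \eqref{Lbound}. Theorem \ref{45-1} applied on $\bS_{L'}$ then delivers
\[
  \Lambda(z)\le \gamma_0(z):=(\log N)^{4L'}/(N\eta)=(N^{1/3}\log N)^{-1},\qquad z\in\bU_L,
\]
outside an exceptional event of probability at most $C\exp[-c(\log N)^{\phi L/2}]$. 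From Lemma \ref{lm:msc} one obtains $\al(z)\ge cN^{-1/3}(\log N)^{4L+4}$ and $\im m_{sc}(z)\le CN^{-1/3}(\log N)^{-2L-3}$ on $\bU_L$, whence $\al\,N\eta\ge(\log N)^{6L+5}$ and $\gamma_0\ll\al/2$.

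Next, I would apply Corollary \ref{corZ} with $\ell=L$, $p=(\log N)^{\psi L-3}$, and $\Xi$ the failure event of the apriori step, followed by a Markov inequality exactly as in \eqref{2.12}, to obtain
\[
  |\barZ(z)|\le (\log N)^{3L}\,(\gamma_0(z)+\im m_{sc}(z))/(N\eta)
\]
uniformly on $\bU_L$ modulo an additional event of probability at most $\exp[-c(\log N)^{\psi L}]$. Feeding this and $\Lambda\le\al/2$ into \eqref{spo3} of Lemma \ref{selfKK}, and using $\gamma_0\ge\im m_{sc}$ on $\bU_L$, yields
\[
  \Lambda(z)\le C(\log N)^{3L+1}\gamma_0(z)/(\al(z)\,N\eta)\le C(\log N)^{-3L-5}/N^{1/3}\le (\log N)^{-1}/(N\eta),
\]
which is the desired bound. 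Uniformity in $z\in\bU_L$ then follows from an $N^{-10}$-net combined with the Lipschitz bound $|\pd_z G_{ij}|\le N^2$, exactly as in Section \ref{continuity}.

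The main obstacle is the logarithmic-factor arithmetic: one must simultaneously ensure that $\bU_L\subset\bS_{L'}$ for an $L'$ giving a useful probability loss (here $L'=L/2$), and that the $(\log N)^{3L}$ factor in the $\barZ$-bound is overwhelmed by the $1/\al\sim N^{1/3}(\log N)^{-4L-4}$ gain from \eqref{spo3}. All probability losses then combine to give the claimed $C\exp[-c(\log N)^{\psi L/2}]$; the core mechanism is a one-shot version of the dichotomy iteration, made possible by the large value of $\al(z)$ at the edge.
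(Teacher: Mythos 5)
Your proposal is correct and follows essentially the same route as the paper's proof: use Theorem \ref{45-1} as the apriori input to place $\Lambda$ below $\al(z)/2$ on $\bU_L$, control $|[Z]|$ via the Corollary \ref{corZ}-plus-Markov mechanism, and conclude with the off-spectrum estimate \eqref{spo3} of Lemma \ref{selfKK}. The inessential differences (working at level $L'=L/2$ rather than $L$, and a small bookkeeping mismatch where you pair $\gamma_0=(\log N)^{2L}/(N\eta)$ with the prefactor $(\log N)^{3L+1}$, which belongs to the $\ell=L$ choice $\gamma=(\log N)^{3L+2}/(N\eta)$) do not affect the conclusion, since either consistent pairing still yields a bound well below $(\log N)^{-1}(N\eta)^{-1}$.
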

\bigskip

{\it Proof of Lemma \ref{etaf}: } 
For $z\in \bU_L$ we have $\kappa = N^{-2/3}(\log N)^{8L+8} \ge \eta$ and thus 
 we have  (see \eqref{Kdef})
$$
  \al(z)\ge  c\sqrt{\kappa+\eta} \ge c N^{-1/3}(\log N)^{4L+4}.
$$
Therefore
$\Lambda \le \alpha /2 $ holds on the event $
\Lambda (z) \le \frac {e(\log N)^{3 L+2}} { N \eta} $ for any $z\in \bU_L$.
 Since $\bU_L\subset\bS_L$,
the  probability of this event is
 bigger than $1- \exp{\big[-(\log N)^{2\psi L/3} \big]}$
by \eqref{Lambdafinal-1}. Combining this bound on $\Lambda$
with the estimate \eqref{B12} for $\ell=L$, we know that
$$
   |[Z](z)|\le  (\log N)^{2L} \frac{ \gamma(z) + \im m_{sc}(z)}{N\eta}
$$
holds with a probability bigger than $1- 2\exp{\big[-(\log N)^{2\psi L/3} \big]}$.
Here we used $\gamma(z)= (\log N)^{3L+2}(N\eta)^{-1}$
with the choice of $\tau =1$ and $\ell =L$, see \eqref{gamdef}.

 We can now  use \eqref{spo3} from Lemma \ref{selfKK} with $\ell =L$ and $\tau=1$ to have 
\be\label{l1}
\Lambda 
\leq C(\log N)^{3 L+1}\left(\frac{(\log N)^{3L+2}(N\eta)^{-1}+\frac{\eta}{\sqrt{\kappa}} }
{\sqrt{\kappa} N\eta}\right)
\ee
with probability larger than $1- 3\exp{\big[-(\log N)^{2\psi L/3} \big]}$.
Here we used the probability estimate \eqref{Gammaest} on $P(\Omega\cup{\bf B})$ and
the first bound in \eqref{esmallfake}.
Then using the values of $\kappa$ and $\eta$ in the set   \eqref{etafix}, we obtain 
$$
\Lambda\leq (\log N)^{-1}(N\eta)^{-1} 
$$
from \eqref{l1} and this proves Lemma \ref{etaf}.  \qed

\medskip 

We now prove  \eqref{443}. On the set $\bU_L$ we have 
\be
\im m_{sc}=O(\frac{\eta}{\sqrt\kappa})\leq  (\log N)^{-1}(N\eta)^{-1}.
\ee
Combining it with \eqref{434}, we obtain that 
\be\label{440}
\P \Big (\bigcup_{z\in \bU_L}
 \Big\{ 
\im m(z)\leq 2(\log N)^{-1}(N\eta)^{-1}   \Big\}  \Big )
  \ge 1-   C\exp{\big[-c(\log N)^{ \psi L/2  } \big]}.  \quad 
\ee

Fix $z=E+i\eta\in \bU_L$ and define the event
$$
   W(z): =\{ \exists j \; : \; |\lambda_j-E|\le\eta\}.
$$
Recalling the definition of $m$,
\be
\im m(z)=\frac1N\sum_{j=1}^N\frac{\eta}{(E-\lambda_j)^2 +\eta^2},
\ee  
it is clear that $\im m(z)\ge \frac{1}{4}(N\eta)^{-1}$ on the set $W(z)$.
Using \eqref{440} we obtain that 
\be\label{ext}
   P\Big( \exists j\; : \; 2+N^{-2/3}(\log N)^{8L+8}\le |\lambda_j|\le 5\Big)
  \le C\exp{\big[-c(\log N)^{ \psi L/2  } \big]}.
\ee
Finally, we need to control the probability of a very large eigenvalue. For example, the following 
(not optimal) estimate was  proved in, e..g,  
 Lemma 7.2 of \cite{EYY}. We formulate the results for the largest eigenvalue $\lambda_N$,
but analogous results hold for the smallest eigenvalue $\lambda_1$ as well.

\begin{lemma}\label{N-1/6}
 Let $H$ satisfy Assumptions {\bf (A), (B), (C)} and the subexponential decay condition
\eqref{subexp}.
Then for some $\e>0$, depending on $\ttau$, we have 
\be\label{5.2}
\P ( \lambda_N \ge K )\leq e^{-N^\e\log K}
\ee
for any $K\ge 3$.
\end{lemma}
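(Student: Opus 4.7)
The plan is the classical high-moment (trace) method, tuned so that the moment exponent is allowed to grow as a small power of $N$. Starting from the deterministic bound $\lambda_N^{2p}\le \operatorname{Tr} H^{2p}$ and Markov's inequality, one has
\be
\P(\lambda_N \ge K)\le K^{-2p}\,\E\operatorname{Tr} H^{2p}
\label{eq:markov-lN}
\ee
for every positive integer $p$. The task is therefore to produce an upper bound on $\E\operatorname{Tr} H^{2p}$ of the form $e^{C p\log p /\ttau}$ valid up to $p=N^{\e}$ for some small $\e>0$, and then optimize in $p$.

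First I would expand
\be
\E\operatorname{Tr} H^{2p}=\sum_{i_0,i_1,\ldots,i_{2p-1}}\E\bigl(h_{i_0 i_1}h_{i_1 i_2}\cdots h_{i_{2p-1} i_0}\bigr),
\ee
and use $\E h_{ij}=0$ and independence to restrict to closed walks on $\{1,\ldots,N\}$ in which every edge of the underlying multigraph is traversed an even number of times (at least twice). For such a walk with $r$ distinct edges, used $2k_1,\ldots,2k_r$ times with $\sum k_s=p$, the contribution is bounded, via the subexponential decay \eqref{subexp}, by
\be
\prod_{s=1}^r \E |h_{e_s}|^{2k_s}\;\le\;\prod_{s=1}^r (C k_s)^{2k_s/\ttau}\sigma_{e_s}^{2k_s}\;\le\;\Bigl(\tfrac{C_0}{N}\Bigr)^{\!p}\,(C p)^{2p/\ttau}.
\ee
Next I would run the standard Wigner graphical counting (as, e.g., in Lemma 7.2 of \cite{EYY}) to enumerate equivalence classes of such walks: the number of closed walks realizing a given graph shape with $r$ edges and $v$ vertices is at most $N^{v}$ times a combinatorial factor, and for an even walk one has $v\le r+1\le p+1$, while the number of shapes is bounded by $(Cp)^{C p}$. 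Collecting these ingredients yields
\be
\E\operatorname{Tr} H^{2p}\le N\cdot C^{p}\cdot p^{C p/\ttau}
\ee
uniformly for $1\le p\le N$ (with an implicit constant depending on $\ttau,C_0$).

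Substituting into \eqref{eq:markov-lN} and choosing $p=\lfloor N^{\e}\rfloor$ for a sufficiently small $\e=\e(\ttau)>0$, the factor $p^{Cp/\ttau}$ is of size $\exp(C N^{\e}\log N/\ttau)$, which is negligible compared to $K^{2p}=\exp(2N^{\e}\log K)$ as soon as $K\ge 3$ and $N$ is large (after possibly shrinking $\e$). This gives
\be
\P(\lambda_N\ge K)\le N\exp\!\bigl[-N^{\e}\log K\bigr],
\ee
and the harmless prefactor $N$ is absorbed by a further decrease of $\e$.

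The only real obstacle is the combinatorial/moment accounting in the third paragraph: one needs the moment growth $p^{Cp/\ttau}$ coming from the subexponential tail to beat the growth of the number of walks, which forces $p$ to be at most a small power of $N$. Since this is a well-understood trace-method estimate (and is precisely what is recorded in Lemma 7.2 of \cite{EYY}), I would simply invoke that reference rather than redo the graph enumeration here; the analogous bound for $\lambda_1$ follows by applying the same argument to $-H$.
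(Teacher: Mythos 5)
Your overall strategy --- Markov on $\E\,\tr H^{2p}$ with $p$ growing like a small power of $N$, combined with the graphical walk counting --- is exactly the right one, and it is in fact what the paper does: the paper simply cites Lemma 7.2 of \cite{EYY} and gives no proof. So the \emph{route} matches. However, the crude moment bound you record, $\E\,\tr H^{2p}\le N\,C^{p}\,p^{Cp/\ttau}$, is not strong enough to close the argument, and the step where you claim the factor $p^{Cp/\ttau}=\exp\!\bigl(CN^{\e}\log N/\ttau\bigr)$ is ``negligible compared to $K^{2p}=\exp(2N^\e\log K)$ as soon as $K\ge 3$'' is false for bounded $K$: for $K=3$ the exponent $\log K$ is a constant while $\log p=\e\log N\to\infty$, so $p^{Cp/\ttau}$ dominates $K^{2p}$ and the resulting bound diverges. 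Your choice $p=\lfloor N^\e\rfloor$ only wins against the moment factor when $K$ itself is a positive power of $N$.

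The missing idea is that the $p^{Cp/\ttau}$ amplification from the subexponential moments can only occur on walk shapes that have highly multiple edges, and those shapes are correspondingly \emph{suppressed} by powers of $N$, so the two effects must be paired before optimizing. Concretely, for a closed even walk with $r=p-m$ distinct edges and $v\le r+1$ vertices, the vertex summation contributes $N^{v}\le N^{\,1-m}\,N^{p}$, the variance normalization $\sigma_{ij}^2\le C_0/N$ contributes $(C_0/N)^p$, and the higher moments contribute at most $C^{\,2p/\ttau}(m+1)^{2(m+1)/\ttau}$ (one edge of multiplicity $m+1$, all others doubletons); the net is
\be
N^{1-m}\,C_0^{\,p}\,C^{\,2p/\ttau}\,(m+1)^{2(m+1)/\ttau}
\ee
up to the shape combinatorics. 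For $m\ge 1$ the factor $N^{-m}(m+1)^{2(m+1)/\ttau}$ is summable and $o(1)$ in $N$ precisely when $p\lesssim N^{\ttau/2}$, so that after summing over $m$ one obtains $\E\,\tr H^{2p}\le N\,C_2^{\,2p}$ with a $p$-independent constant $C_2=C_2(\ttau,C_0)$ in that range of $p$. Only with this sharper bound does the Markov step give $\P(\lambda_N\ge K)\le N(C_2/K)^{2p}$, which for $K\ge 3$ and $p=CN^{\e}$ (with $\e<\ttau/2$) yields $e^{-cN^{\e}\log K}$ uniformly in $K$ after adjusting $\e$. In short: either actually cite \cite{EYY} Lemma 7.2 (as the paper does) rather than re-derive it, or, if you re-derive, take care to tie the moment amplification to the $N$-power suppression before choosing $p$; the bound you wrote, taken at face value, does not imply the conclusion for $K$ of order one.
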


Combining this lemma with  \eqref{ext} we 
 completed the proof of \eqref{443}.  \qed

\bigskip

\section{Estimates on the location of eigenvalues}

{\it Proof of Theorem \ref{7.1}.}
We now translate the information on the Stieltjes transform obtained
in Theorem~\ref{45-1} to prove Theorem \ref{7.1} on the location of the eigenvalues. 
We will need the following Lemma~\ref{lm:HS1}
which is a special case  of Lemma 6.1 proved in \cite{EYY2}
with the choice  $A=0$. The conditions (6.1) and (6.2) stated in 
Lemma 6.1  of  \cite{EYY2} are not sufficient.  Instead,  the following slightly stronger assumption is necessary:
\be\label{strongerbern}
    |m^\Delta (x+iy)|\le \frac{CU}{y(\kappa_x+y)^A}, \qquad
\mbox{for} \qquad 1\ge y >0,\quad |x|\le K+1,
\ee
i.e., it is not sufficient to control only the imaginary part of $m^\Delta$.
This stronger condition is needed in (6.7) of \cite{EYY2}, where
the imaginary part of $m$ is changed to its real part after an integration by parts. 
With  the condition \eqref{strongerbern},   the proof of  Lemma 6.1  in  \cite{EYY2}  remains otherwise unchanged.
This immediately proves the following lemma as a special case:

\begin{lemma}\label{lm:HS1}
Let $\varrho^\Delta$ be a signed measure on the real line with $\mbox{supp} \;\varrho^\Delta \subset
[-K,K]$ for some fixed constant $K$. 
For any $E_1, E_2 \in [-3,  3]$ and $\eta>0$ we define $f(\lambda)=f_{E_1,E_2,\eta}(\lambda)$
to be a characteristic function of $[E_1, E_2]$ smoothed on scale $\eta$, i.e.,
$f\equiv 1$ on $[E_1, E_2]$, $f\equiv 0$ on $\R\setminus [E_1-\eta, E_2+\eta]$
and $|f'|\le C\eta^{-1}$, $|f''|\le C\eta^{-2}$. 
Let $m^\Delta$ be the Stieltjes transform of $\varrho^\Delta$.
Suppose for some positive number $U$ (may depend on $N$)
we have
\be
   | m^\Delta (x+iy)|\le \frac{CU}{N y } \qquad \mbox{for} \qquad
1\ge y>0, \quad |x|+y\le K.
\label{trivv1}
\ee
Then
\be
   \left|\int_\R f_{E_1,E_2,\eta}(\lambda)\varrho^\Delta(\lambda)\rd\lambda \right|  \le
  \frac{CU|\log \eta|}{ N }
\label{genHS1}
\ee
with some constant $C$ depending on $K$. \qed
\end{lemma}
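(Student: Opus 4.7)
My plan is the standard Helffer--Sj\"ostrand functional calculus combined with a splitting of the complex-plane integration at the scale $y=\eta$ and an integration by parts that exploits the holomorphy of $m^\Delta$ in $\im z\neq 0$. Fix a smooth even cutoff $\chi:\R\to[0,1]$ with $\chi\equiv 1$ on $[-1/2,1/2]$ and $\chi\equiv 0$ outside $[-1,1]$, and introduce the almost analytic extension
$$\tilde f(x+iy):=\bigl[f(x)+iy f'(x)\bigr]\chi(y).$$
A direct computation yields
$$\partial_{\bar z}\tilde f(x+iy)=\tfrac i2\,y f''(x)\chi(y)+\tfrac i2\bigl(f(x)+iy f'(x)\bigr)\chi'(y),$$
and the Helffer--Sj\"ostrand identity then gives
$$\int f\,d\varrho^\Delta=-\frac1\pi\iint\partial_{\bar z}\tilde f(x+iy)\,m^\Delta(x+iy)\,dx\,dy.$$
The $\chi'$ contribution is supported on $1/2\le|y|\le 1$, where \eqref{trivv1} forces $|m^\Delta|\le 2CU/N$, and since $\tilde f$ is uniformly bounded with compact $x$-support, this piece contributes at most $CU/N$.

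For the principal integral $J:=\iint y f''(x)\chi(y) m^\Delta(x+iy)\,dx\,dy$, I use $m^\Delta(\bar z)=\overline{m^\Delta(z)}$ to reduce to the half $y>0$, and split according to $0<y<\eta$ versus $y\ge\eta$. On $0<y<\eta$ the hypothesis \eqref{trivv1} together with $\|f''\|_{L^1}\le C/\eta$ gives a direct bound of $CU/N$. On $y\ge\eta$ I perform the crucial integration by parts: integrating in $x$ once exchanges $f''$ for $-f'\partial_x m^\Delta$, and the Cauchy--Riemann equation for $m^\Delta$ on $\{\im z>0\}$ gives $\partial_x m^\Delta=-i\partial_y m^\Delta$; integrating by parts once more in $y$ produces
$$\int_\eta^\infty y\chi(y)\partial_y m^\Delta(x+iy)\,dy=-\eta\chi(\eta)m^\Delta(x+i\eta)-\int_\eta^\infty\bigl(\chi(y)+y\chi'(y)\bigr)m^\Delta(x+iy)\,dy,$$
with no upper boundary term since $\chi$ is compactly supported.

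Applying \eqref{trivv1} to each piece on the right: the boundary term is $O(U/N)$, the $y\chi'$ integral is $O(U/N)$ (since $|\chi'|$ is supported away from zero), and the main $\chi$ integral is controlled by $\int_\eta^1 CU/(Ny)\,dy\le CU|\log\eta|/N$. Multiplying by $\int|f'(x)|\,dx\le C$ (from $|f'|\le C/\eta$ on a set of measure $O(\eta)$) and combining with the easy pieces yields \eqref{genHS1}. The only genuinely delicate point, and the reason we get $|\log\eta|/N$ instead of the naive $1/(N\eta)$, is the order of the integrations by parts: both $x$-derivatives of $f$ must be shifted onto $m^\Delta$ via Cauchy--Riemann \emph{before} one integrates in $y$, so that the singular factor $1/y$ from the a priori bound appears inside a $dy$ integral over $(\eta,1)$ (producing a logarithm) rather than alongside $\|f''\|_{L^1}\sim 1/\eta$. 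This is also the reason the hypothesis must control the full modulus of $m^\Delta$ rather than just $\im m^\Delta$.
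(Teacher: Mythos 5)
Your proof is correct and uses the Helffer--Sj\"ostrand functional calculus, which is the same mechanism underlying Lemma 6.1 of \cite{EYY2}, the reference the paper relies on for this result. Your observation that after the integration by parts one must control the full modulus of $m^\Delta$ rather than only $\im m^\Delta$ is precisely the corrective remark the paper itself makes about the hypotheses of \cite{EYY2} (the only minor imprecision is in your closing sentence: only one $x$-derivative of $f$ is moved onto $m^\Delta$ and converted to a $y$-derivative by Cauchy--Riemann; the second integration by parts is in $y$, and it is this step that removes the derivative from $m^\Delta$ entirely).
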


We will apply this lemma with the choice that the  signed measure is
the difference of the empirical density and the semicircle law,
$$
  \varrho^\Delta(\rd \la)=\varrho(\rd\la) - \varrho_{sc}(\la)\rd \la,\qquad
\varrho(\rd\la):=\frac{1}{N}\sum_i \delta(\lambda_i-\la).
$$

First we prove \eqref{nn}. Choose $L:=A_0 \log\log N$, where $A_0$ is given
in  Theorem \ref{45-1}, and we define 
$$
    T_N:= (\log N)^L = (\log N)^{A_0\log\log N}
$$
for simplicity.
 By  Theorem \ref{45-1}, the assumptions of Lemma \ref{lm:HS1} hold for 
the difference  $m^\Delta =m-m_{sc}$ with $K=10$ and $U=T_N^4$
 if $y\ge y_0 := T_N^{10}/N $.
For $y\le y_0  $,  set $z=x+iy$, $z_0=x+iy_0$ and
estimate
\be\label{mmmsc}
   |  m(z)-m_{sc}(z)|\le   |  m(z_0)-m_{sc}(z_0)|
  + \int_y^{y_0} \big| \partial_\eta 
\big(m(x+i\eta)-m_{sc}(x+i\eta)\big)\big|\rd \eta.
\ee
Note that
\begin{align}
   |\partial_\eta m(x +i\eta)| 
   = & \Big|\frac{1}{N}\sum_j \partial_\eta G_{jj}(x+i\eta)\Big|
 \\ \le & \frac{1}{N}\sum_{jk} |G_{jk}(x+i\eta)|^2 =
 \frac{1}{N\eta }\sum_j \im G_{jj}(x+i\eta) = \frac{1}{\eta}\im m(x+i\eta),
\end{align}
and similarly
$$
  |\partial_\eta m_{sc}(x+i\eta)| = 
\Big|\int \frac{\varrho_{sc}(s)}{(s-x-i\eta)^2}\rd s\Big|
\le \int \frac{\varrho_{sc}(s)}{|s-x-i\eta|^2}\rd s
 = \frac{1}{\eta} \im m_{sc}(x+i\eta).
$$
Now we use the fact that the functions $y\to y\im m(x+iy)$ and
$y\to y\im m_{sc}(x+iy)$ are monotone increasing for any $y>0$
since both are Stieltjes transforms of a positive measure.
Therefore the integral in \eqref{mmmsc} can be bounded by
\be\label{intbb}
   \int_y^{y_0} \frac{\rd \eta}{\eta} \big[ \im  m(x+i\eta) +  
\im m_{sc}(x+i\eta)\big] 
\le  y_0\big[ \im m(x+iy_0) +  
\im m_{sc}(x+iy_0)\big]  \int_y^{y_0} \frac{\rd \eta}{\eta^2}
\ee

By definition, $\im m_{sc}(x+iy_0)  \leq |m_{sc}(x+iy_0)| \le C$. By the choice of $y_0$ and  Theorem \ref{45-1},  we have 
\be
     \im \, m(x+iy_0) \le
   \im m_{sc}(x+ i y_0) + \frac {T_N^4} {N y_0} 
\le C 
\label{imest}
\ee
with very high probability.
 Together with \eqref{intbb} and \eqref{mmmsc}, 
this proves that \eqref{trivv1} holds for $y\le y_0  $
as well if $U$ is increased to $U=T_N^{10}$.

The application of Lemma \ref{lm:HS1} shows that for any $\eta\ge 1/N$
\be
   \left|\int_\R f_{E_1,E_2,\eta}(\lambda)\varrho(\la) \rd\lambda
 -\int_\R f_{E_1,E_2,\eta}(\lambda) \varrho_{sc}(\lambda)\rd\lambda \right|  \le
  \frac{C (\log N)T_N^{10}}{N}.
\label{genHS2}
\ee
With the fact: $y\to y\im m(x+iy)$ is  monotone increasing for any $y>0$,  \eqref{imest} implies a crude upper bound on the empirical density. Indeed,  for any interval $I:=[x-\eta, x+\eta]$, with $\eta=1/N$, we have 
\be
   {\mathfrak n}(x+\eta)- {\mathfrak n}(x-\eta ) \le C\eta\,\im \, m\big( x+ i\eta\big)
 \le Cy_0\,\im \, m\big( x+ iy_0\big)\leq \frac{CT_N^{10}}{N}.
\label{density}
\ee
This bound can be used to estimate the difference between
the characteristic function of the interval $[E_1,E_2]$ and
the smoothed function $ f_{E_1,E_2,\eta}$.

Since the probability to have eigenvalues outside the interval $[-3, 3]$ are
 extremely small, we consider only the case 
that all eigenvalues are inside $[-3, 3]$. Let $E_1= -4$ and $E_2:=E \in [-3,3]$. Then 
from \eqref{genHS2}  and \eqref{density} we have that
\be
   \Big| {\mathfrak n}(E) - n_{sc}(E)\Big|
  \le \frac{C (\log N)T_N^{10} }{N}
\label{fixE}
\ee
holds for any fixed $E\in [-3,3]$ with an overwhelming probability. 
The supremium over $E $ is a standard argument for 
extremely small events and we omit the details.  
This completes the proof of \eqref{nn} after possibly increasing $L$ (hence $A_0$)
 and decreasing $\phi$ in order to replace the $(\log N)T_N^{10}$ with $(\log N)^L$.

\medskip

Now we turn to the proof of \eqref{rigidity}. Let $L$ as before.
Fix any $ 1\le j \le N/2 $ and let $E = \gamma_j$, $E'= \lambda_j$.
Setting $t_N=(\log N)T_N^{10}=(\log N)^{10L+1}$ for simplicity, from \eqref{fixE}  we have 
\be
n_{sc}(E) = \fn(E') = n_{sc}(E') + O(t_N/N). 
\label{nnn}
\ee
Clearly $E\le 1$, and using \eqref{fixE} $E'\le 1$ also holds with an overwhelming
probability.
First, using \eqref{443} and 
\be\label{nscxs}
n_{sc}(x) \sim (x+2)^{3/2},
\,\,\,\mbox{for}\,\,\,-2\leq x\leq1, 
\ee 
i.e. 
$$
  n_{sc}(E)=n_{sc}(\gamma_j) = \frac{j}{N} \sim (E +2)^{3/2},
$$
we know that \eqref{rigidity} holds  (with a possibly increased power of $\log N$
in the left hand side)
 if 
\be
E,E'\leq -2+t_N N^{-2/3}.
\ee
The correct power $(\log N)^L$ can be restored by increasing $L$ (hence $A_0$)
and decreasing $\phi$, as before.

Hence, we can assume that one of $E$ and $E'$ is 
in the interval $[-2+t_N N^{-2/3},1]$. With  \eqref{nscxs}, this assumption implies that
at least one of
 $n_{sc}(E)$ and $n_{sc}(E')$ is larger than $t_N^{3/2}/ N$. 
 Inserting this information into \eqref{nnn}, we obtain that both  $n_{sc}(E)$ and $n_{sc}(E')$
are positive and 
$$ 
   n_{sc}(E) = n_{sc}(E') \big[ 1 + O(t_N^{-1/2})\big],
$$
in particular, $E+2\sim E'+2$. Using that $n_{sc}'(x)\sim (x+2)^{1/2}$ for $-2\le x\le 1$,
we obtain that $n_{sc}'(E) \sim n_{sc}'(E')$, and in fact $n_{sc}'(E)$ is
comparable with $n_{sc}'(E'')$ for any $E''$ between $E$ and $E'$.
Then with Taylor's expansion, we have 
\be\label{88}
|n_{sc}(E')-n_{sc}(E)|\le C |n_{sc}'(E)| |E'-E| .
\ee
Since $n_{sc}'(E) = \rho_{sc}(E) \sim \sqrt \kappa$ and $n_{sc}(E) \sim \kappa^{3/2}$,
moreover, by $E=\gamma_j$ we also have $n_{sc}(E) =j/N$,
 we obtain
from \eqref{nnn} and \eqref{88} that
\[
|E'-E| \le \frac{C|  n_{sc}(E')- n_{sc}(E)|}{n_{sc}'(E)}\le
 \frac {Ct_N} { N n_{sc}'(E) }   \le  \frac {C t_N} { N (n_{sc}(E))^{1/3} }
  \le \frac {C t_N} { N^{2/3}  j^{1/3} },
\]
which proves \eqref{rigidity}, 
again, after increasing $L$ and decreasing $\phi$ to achieve the claimed
 $(\log N)^L$ prefactor.
 This concludes the proof of Theorem \ref{7.1}.

\qed

\section{Edge Universality  }

In this section, we prove the edge universality, i.e.,  Theorem \ref{twthm}. 
At the end of Section \ref{sec:green} we will give a
 heuristic explanation  why matching the second moments is sufficient but
we first  need  some preparation and to introduce various notations.
We will consider the largest eigenvalue $\lambda_N$, but
the same argument applies to the lowest eigenvalue $\lambda_1$ as well.

For any $E_1\le E_2$ let
$$
   \cN(E_1, E_2): = \#\{ E_1\le \la_j\le E_2\}
$$
denote the number of eigenvalues in $[E_1, E_2]$.
By  Theorem \ref{7.1} (rigidity of eigenvalues), there exist positive constants $A_0$, $\phi$,
  $C$ and $c>0$,  
depending only on  $\ttau$, $\delta_\pm$  and $C_0$ such that with setting 
\be
L:= A_0\log\log N
\label{defL}
\ee
we have 
 \be\label{6-1}
 \P \left\{ \abs{N^{2/3} ( \la_N -2) }\geq (\log N)^{L} \right\} 
\leq C\exp{\big[-c(\log N)^{\phi L} \big]}
\ee 
and 
\be\label{6-2}
 \P\left\{ \cN\left(2-\frac{2(\log N)^L}{ N^{2/3}},\;2+\frac{2(\log N)^L}{ N^{2/3}}\right)\geq
  (\log N)^{L} \right\} \leq C\exp{\big[-c(\log N)^{\phi L} \big]}
\ee 
for sufficiently large $N\ge N_0(\ttau, \delta_\pm, C_0)$.
 These estimates hold for both the $\bv$ and $\bw$ ensembles. 
Using these estimates,  we can assume    that $s$ in \eqref{tw} satisfies  
\be\label{25}
-(\log N)^{L} \le s \le (\log N)^{L}.
\ee
With  $L$ from \eqref{defL}, we set 
\be\label{defEL}
E_L:= 2+2(\log N)^{L}N^{-2/3}.
\ee
For any $E\le E_L$ let 
\[
\chi_E := {\bf 1}_{[E, E_L]}
\]
be the characteristic function of the interval $[E, E_L]$. For
any $\eta>0$ we 
define
\be\label{thetam}
\theta_\eta(x):=\frac{\eta }{\pi(x^2+\eta^2)} = \frac{1}{\pi} \im \frac{1}{x-i\eta}
\ee
to be an approximate delta function on scale $\eta$.
In the following elementary lemma we compare the sharp counting function
$\cN(E, E_L)= \tr \chi_E(H)$ by its approximation smoothed  on scale $\eta$.

\begin{lemma}\label{lem:21} 
Suppose that the assumptions of Theorem \ref{twthm} hold
and $L$, $\phi$ satisfy \eqref{6-1} and \eqref{6-2}. 
For any  $\e>0$, set $\ell_1:=N^{-2/3-3\e}$ and $\eta:=N^{-2/3-9\e}$. 
Then there exist constants $C, c$  such that
for any $E$ satisfying
\be\label{E-2N}
|E-2|N^{2/3}\leq \frac 3 2(\log N)^L
\ee
we have
\be\label{6.10}
\P \Big\{  |\tr \chi_E(H) - \tr  \chi_E  \ast \theta_\eta (H)|  \le C\left( N^{-2\e}  
 +   \cN (E-\ell_1, E+\ell_1)  \right) \Big\} \ge 1- C \exp[-c(\log N)^{\phi L}]
\ee
for  sufficiently large $N$. This estimate holds
for both  the $\bv$ and $\bw$ ensembles.
\end{lemma}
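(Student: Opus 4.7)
The plan is to split the sum $\sum_j [\chi_E(\lambda_j) - (\chi_E \ast \theta_\eta)(\lambda_j)]$ into contributions from eigenvalues near the left endpoint $E$, near the right endpoint $E_L$, and in a ``far'' region, exploiting a cancellation between the two endpoints in the far region. Writing $\chi_E = {\bf 1}_{[E,\infty)} - {\bf 1}_{[E_L,\infty)}$, I decompose $\chi_E(\lambda) - (\chi_E \ast \theta_\eta)(\lambda) = h_E(\lambda) - h_{E_L}(\lambda)$, where $h_a(\lambda) := {\bf 1}_{[a,\infty)}(\lambda) - \pi^{-1}\arctan((\lambda - a)/\eta) - \tfrac12$. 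From the asymptotic $\arctan(y) = \operatorname{sgn}(y)\pi/2 - 1/y + O(|y|^{-3})$ at infinity I obtain the pointwise bounds $|h_a(\lambda)| \le C\min\{1,\eta/|\lambda-a|\}$, together with the sharper cancellation
$$|h_E(\lambda)-h_{E_L}(\lambda)|\le \frac{C\eta(E_L-E)}{|\lambda-E|\,|\lambda-E_L|}$$
valid when $\lambda$ lies on the same side of both endpoints and $|\lambda-E|,|\lambda-E_L| \ge \eta$. The hypothesis on $E$ combined with the definition of $E_L$ gives $E_L - E \le C(\log N)^L N^{-2/3}$, which is the central smallness that will be exploited.

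For eigenvalues $\lambda_j \in [E-\ell_1, E+\ell_1]$ I bound the integrand by the trivial estimate $|h_E - h_{E_L}|\le 2$, absorbing the contribution into $\cN(E-\ell_1, E+\ell_1)$. For eigenvalues in $[E_L - \ell_1, E_L + \ell_1]$, the rigidity bound \eqref{6-1} gives $\lambda_N \le 2 + (\log N)^L N^{-2/3} \ll E_L - \ell_1$ with probability at least $1 - C\exp[-c(\log N)^{\phi L}]$, so this region is empty. For the eigenvalues in the narrow window $(E+\ell_1, E_L - \ell_1)$, I apply the crude bound $|h_E - h_{E_L}| \le 2\eta/(\pi\ell_1) = (2/\pi)N^{-6\e}$; by \eqref{6-2} there are at most $C(\log N)^L$ such eigenvalues, contributing $O((\log N)^L N^{-6\e}) \ll N^{-2\e}$.

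The main estimate is for the remaining eigenvalues, predominantly $\lambda_j < E - \ell_1$. I apply the refined cancellation bound together with a dyadic decomposition: set $s_k = 2^{k-1}\ell_1$ and $S_k = \{j : s_k \le |\lambda_j - E| < 2s_k\}$. Using Theorem \ref{7.1} together with $|2-E| \lesssim (\log N)^L N^{-2/3}$, the count $|S_k|$ is bounded near the edge by $C((\log N)^L + N s_k\sqrt{|2-E|})$ when $s_k \ll |2-E|$, by $C((\log N)^L + N s_k^{3/2})$ when $|2-E| \lesssim s_k \lesssim 1$, and by $CN$ for the $O(1)$ bulk shells at $s_k \sim 1$. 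The per-shell contribution $|S_k|\cdot C\eta(E_L - E)/s_k^2$ forms a geometric series dominated by the smallest shell $s_k \sim \ell_1$. A short computation with $\eta = N^{-2/3-9\e}$, $\ell_1 = N^{-2/3-3\e}$, and $E_L - E \lesssim (\log N)^L N^{-2/3}$ shows each of the resulting terms is of order $(\log N)^{O(L)} N^{-\delta}$ with $\delta > 2\e$; since $(\log N)^L = \exp(A_0(\log\log N)^2)$ is sub-polynomial in $N$, each such term is $\ll N^{-2\e}$ for sufficiently large $N$.

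The principal obstacle is detecting the cancellation between $h_E$ and $h_{E_L}$: without it, the naive estimate $|h_a|\le \eta/|\lambda-a|$ summed over bulk eigenvalues would produce an error of order $N\eta \sim N^{1/3}$, which is far too large. Exploiting the small window width $E_L - E$ effectively replaces each factor $\eta/|\lambda - a|$ by $\eta(E_L - E)/(\lambda - E)^2$, and this refined decay makes the dyadic sum convergent with the desired smallness. All bounds hold uniformly for both the $\bv$ and $\bw$ ensembles since Theorem \ref{7.1} applies to both under the hypotheses of Theorem \ref{twthm}.
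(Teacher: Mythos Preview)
Your proof is correct and shares the paper's overall architecture: the same pointwise cancellation bound $|h_E(\lambda)-h_{E_L}(\lambda)|\le C\eta(E_L-E)/(|\lambda-E|\,|\lambda-E_L|)$ (the paper writes this as $C\eta(E_L-E)/(d(x)d_L(x))$), the same treatment of the near-$E$ window by $\cN(E-\ell_1,E+\ell_1)$, the same use of \eqref{6-1}--\eqref{6-2} for the window $(E,E_L)$ and the region near $E_L$. The one genuine difference is how the far-region sum $\sum_{\lambda_j<E-\ell_1}$ is handled. You do a dyadic decomposition and bound the eigenvalue count in each shell via the rigidity/counting estimate \eqref{nn} from Theorem~\ref{7.1}. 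The paper instead observes that $|\lambda-E|^{-2}\le C(g\ast\theta_{\ell_1})(E-\lambda)$ with $g(y)=(y^2+\ell_1^2)^{-1}$, which converts the sum directly into $CN\eta(E_L-E)\int(y^2+\ell_1^2)^{-1}\im m(E-y+i\ell_1)\,\rd y$, and then invokes the strong local semicircle law \eqref{Lambdafinal} to replace $\im m$ by $\im m_{sc}+(\log N)^{CL}/(N\ell_1)$. The two routes are equivalent in strength: your approach is more elementary (no Stieltjes-transform trick), while the paper's avoids the dyadic bookkeeping and reads the answer off from a single explicit integral. Your numerical checks are correct; each contribution is $(\log N)^{O(L)}N^{-\delta}$ with $\delta\ge 3\e$, which is indeed $o(N^{-2\e})$ since $(\log N)^L$ is subpolynomial.
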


\noindent 
{\it Proof of Lemma \ref{lem:21}.}  By \eqref{defEL} and \eqref{E-2N} we have 
\be
   \eta\ll \ell_1\ll E_L-E\le CN^{-2/3}(\log N)^L.
\label{EEL}
\ee
Since $\chi_E$ is the characteristic function of  $ [E, E_L]$, 
for any $x\in \R$, we have  
\[
|\chi_E(x) - \chi_E  \ast \theta_\eta (x)| 
=  \Big |  \Big (\int_\R  \chi_E(x) - \int_{E-x}^{E_L-x} \Big )   
\theta_\eta(y)\rd y  \Big | .
\]
Let  $d =d(x):= |x-E| +\eta$ and $d_L=d_L(x) := |x-E_L| +\eta$. Using that $\int \theta_\eta =1$ and
the estimate
$$
   \int_\al^\infty \theta_\eta(y)\rd y = \frac{1}{\pi}\int_\al^\infty 
\frac { \eta} {  y^2 + \eta^2} \rd y 
  \le \frac{C\eta}{\al + \eta}, \qquad \al>0,
$$
an elementary calculation shows that
\be\label{casess}
|\chi_E(x) - \chi_E  \ast \theta_\eta (x)|  \le  C\eta\Big[
\frac {E_L-E} {d_L(x) d(x)} + \frac {\chi_E(x)}{d_L(x)+ d(x)}\Big] 
\ee
for some constant $C>0$. It is easy to check that if $\min\{d, d_L\}\leq \ell_1$, then 
the right side of \eqref{casess} is bounded by a constant and  if $\min\{d, d_L\}\geq \ell_1$,
 then it is less than $O(\eta/\ell_1)=O(N^{-6\e})$. 
Hence  we have 
\be\label{6-20} |\tr \chi_E(H) - \tr \chi_E  \ast \theta_\eta (H)| 
 \le C\left( \tr f(H)  + \frac {\eta}{\ell_1}\,\,\cN (E, E_L)
 +   \cN (E-\ell_1, E+\ell_1) +  \cN (E_L-\ell_1, \infty) \right),
\ee
where
\be
f(x) := \frac {\eta (E_L-E)} {d_L(x) d(x)} {\bf 1}\left(x\leq E-\ell_1\right).
\ee
With the assumption \eqref{E-2N}, $\cN (E, E_L)$ and $\cN (E_L-\ell_1, \infty)$ 
can be bounded by using \eqref{6-2} and \eqref{6-1}. Hence it follows from \eqref{6-20} that 
\be\label{6-22} |\tr \chi_E(H) - \tr  \chi_E  \ast \theta_\eta (H)|  \le C\left( \tr f(H)   
 +   \cN (E-\ell_1, E+\ell_1) +N^{- 5\e}  \right)
\ee
holds with a probability larger than   $1-C \exp[-c(\log N)^{\phi L}]$, 
for some constants $C$ and $c$ and for sufficiently large $N$, uniformly in $E$ with \eqref{E-2N}.
Set
\be
g(y): = \frac 1 { y^2 + \ell_1^2},
\ee
and notice that
\be
  \frac{1}{a^2} \le C (g  \ast \theta_{\ell_1}) (a)     \qquad \mbox{ if }    |a| \ge \ell_1,
\ee 
which implies 
\be
\frac{f(x)}{\eta(E_L-E) } =\frac{  {\bf 1}\left(x\leq E-\ell_1\right)}{d_L(x)d(x)}
  \le \frac{C\cdot {\bf 1}\left(x\leq E-\ell_1\right)}{|E-x|^2} \le
C  (g  \ast \theta_{\ell_1}) (E - x).  
 \ee
Recalling from \eqref{mNdef} and \eqref{thetam}
 that
$$  
\frac{1}{N}\tr \theta_{\ell_1}(H-E) = \frac{1}{\pi N} \im \tr \frac{1}{H-E-i\ell_1}=
\frac{1}{\pi}\im m(E+i\ell_1),
$$
we obtain
\begin{align}\nonumber
\tr f(H)  & \le CN  \eta(E_L-E)   \int_\R  \frac 1 {y^2 + \ell_1^2 }    \im  m( E-y+i\ell_1)  \rd y  
   \\\label{6-27}
&  \le CN^{1/3}  \eta (\log N)^L \int_\R  \frac 1 {y^2 + \ell_1^2 }  \Big[
 \im m_{sc}( E-y+i\ell_1) +  \frac {(\log N)^{CL}} { N \ell_1} \Big] \rd y,    
\end{align}
where, by \eqref{Lambdafinal},  the second inequality holds with a probability larger
 than  $1-C \exp[-c(\log N)^{\phi L}]$ and we also used \eqref{EEL}.
The integral of the second term in the r.h.s is bounded by 
\be\label{6-28}
CN^{1/3}\eta(\log N)^L\int_\R  \frac 1 {y^2 + \ell_1^2 }  \frac {(\log N)^{CL}} { N \ell_1}  \rd y 
\le N^{-2/3}\eta (\log N)^{CL}\ell_1^{-2}\leq N^{-2\e},
\ee
by using  the definitions of $\ell_1$ and $\eta$.

For the first term in the r.h.s of \eqref{6-27} we use the elementary estimate
$$
   \im m_{sc}( E-y+i\ell_1)\le C\sqrt{\ell_1+\big||E-y|-2\big|}.
$$
The integral in the region
$$
A:=\left \{ \big| |E-y|-2\big|\ge\ell_1\right \} 
$$
can be bounded by
$$
   \int_A  \frac {\im m_{sc}( E-y+i\ell_1)}{y^2 + \ell_1^2 }
  \rd y \le C \int_A  \frac{\big||E-y|-2\big|^{1/2}}
 {y^2 + \ell_1^2 }\rd y
  \le C \int_\R \frac {|y|^{1/2} + |E-2|^{1/2}} {y^2 + \ell_1^2 }\rd y
  \le C\Big( \frac{1}{\sqrt{\ell_1}} + \frac{|E-2|^{1/2}}{\ell_1}\Big) .
$$
On the complementary region we have
$$
   \int_{A^c}  \frac 1 {y^2 + \ell_1^2 }
 \im m_{sc}( E-y+i\ell_1) \rd y \le C \sqrt{\ell_1}
 \int_{A^c}  \frac 1 {y^2 + \ell_1^2 }\rd y \le C\ell_1^{-1/2}.
$$
Combining these estimates and using \eqref{E-2N} together with
the definitions of $\ell_1$ and $\eta$ we get
$$
CN^{1/3}  \eta (\log N)^L \int_\R  \frac 1 {y^2 + \ell_1^2 }  
 \im m_{sc}( E-y+i\ell_1)\rd y \le N^{-2\e},
$$
and therefore, together with  \eqref{6-28}, we have
 $\tr f(H)\le 2N^{-2\e}$. Considering \eqref{6-22}, we have thus
 proved Lemma
\ref{lem:21}.  
\qed

\medskip 

Let $q:\R \to\R_+$ be a smooth cutoff  function  such that
\[
q(x) = 1      \quad  \text{if} \quad |x| \le 1/9,   \qquad q(x) = 0   
   \quad  \text{if} \quad |x| \ge 2/9,
\]
and we assume that $q(x)$ is decreasing for $x\ge 0$.

\begin{corollary} \label{23} Suppose the assumptions of Lemma \ref{lem:21} hold and 
$E$ satisfies 
\be\label{condE-}
|E-2|N^{2/3}\leq (\log N)^L.
\ee
Let $\ell: = \frac{1}{2}\ell_1 N^{ 2\e} = \frac{1}{2}N^{-2/3 - \e}$.  
Then  the  inequality 
\be\label{41new}
\tr \chi_{E+ \ell  }  \ast \theta_\eta (H)  -  N^{-\e} \le  \cN (E, \infty)  \le  
\tr  \chi_{E- \ell  }  \ast \theta_\eta (H)  +  N^{-\e}
\ee
holds with a probability bigger than  $1 - C \exp [ - c(\log N)^{\phi L} ]$. Furthermore, 
we have
\be\label{67E}
\E\;  q \left(\tr \chi_{E-\ell}  \ast \theta_\eta  (H) \right)
\le \P (\cN(E,  \infty)  = 0 )  \le \E \; q \left(\tr \chi_{E+\ell}  \ast \theta_\eta  (H) \right) 
+ C\exp{\big[-c(\log N)^{\phi L} \big]}
\ee 
for sufficiently large $N$ independent of $E$ as long as \eqref{condE-} holds.  
Notice that the directions
in the inequalities \eqref{41new} and \eqref{67E} 
are opposite since $q$ is decreasing for positive arguments.

\end{corollary}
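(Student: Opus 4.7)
The plan is to establish the sandwich inequality \eqref{41new} by combining the smoothing estimate of Lemma~\ref{lem:21} applied at the shifted endpoints $E\pm\ell$ with the pointwise monotonicity of $\chi_E$ in $E$, and then to deduce the expectation bound \eqref{67E} by exploiting that $q$ is a nonnegative decreasing function with $q\equiv 1$ on $[0,1/9]$ and $q\equiv 0$ on $[2/9,\infty)$.

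For the sandwich \eqref{41new}, the starting observation is that $E\pm\ell$ still satisfies \eqref{E-2N} because $\ell=\tfrac{1}{2}N^{-2/3-\e}$ is much smaller than $(\log N)^L N^{-2/3}$. By \eqref{6-1}, with probability at least $1-C\exp[-c(\log N)^{\phi L}]$ the largest eigenvalue lies strictly below $E_L$, so $\cN(E_L,\infty)=0$ and hence $\cN(E,\infty)=\cN(E,E_L)=\tr\chi_E(H)$. Combining with the deterministic pointwise inequality $\chi_{E+\ell}\le\chi_E\le\chi_{E-\ell}$ yields
\[
   \tr\chi_{E+\ell}(H)\;\le\;\cN(E,\infty)\;\le\;\tr\chi_{E-\ell}(H).
\]
Applying Lemma~\ref{lem:21} at $E\pm\ell$ then replaces these two traces by their smoothed counterparts at cost $C\bigl(N^{-2\e}+\cN(E\pm\ell-\ell_1,E\pm\ell+\ell_1)\bigr)$ on an event of probability at least $1-C\exp[-c(\log N)^{\phi L}]$. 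The key technical point is that, because $\ell\gg\ell_1$, the window $[E\pm\ell-\ell_1,E\pm\ell+\ell_1]$ lies strictly on one side of $E$, so the eigenvalues it might contain do not affect $\cN(E,\infty)$; moreover, using Theorem~\ref{7.1} together with \eqref{6-2} one shows that only a polylogarithmic number of classical locations $\gamma_j$ lie within range of this window, and that with the required subexponentially good probability the resulting smoothing error in fact collapses to at most $N^{-\e}$, giving \eqref{41new}.

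For \eqref{67E}, I work on the event $\cG$ of probability at least $1-C\exp[-c(\log N)^{\phi L}]$ on which \eqref{41new} holds. For $N$ large enough that $N^{-\e}<1/9$, on $\cG$ one has the implications
\[
   \cN(E,\infty)=0\;\Longrightarrow\;\tr\chi_{E+\ell}\ast\theta_\eta(H)\le N^{-\e}<1/9\;\Longrightarrow\;q\bigl(\tr\chi_{E+\ell}\ast\theta_\eta(H)\bigr)=1,
\]
\[
   \cN(E,\infty)\ge 1\;\Longrightarrow\;\tr\chi_{E-\ell}\ast\theta_\eta(H)\ge 1-N^{-\e}>2/9\;\Longrightarrow\;q\bigl(\tr\chi_{E-\ell}\ast\theta_\eta(H)\bigr)=0.
\]
Thus on $\cG$ the indicator $\mathbf 1(\cN(E,\infty)=0)$ is sandwiched between $q\bigl(\tr\chi_{E-\ell}\ast\theta_\eta(H)\bigr)$ and $q\bigl(\tr\chi_{E+\ell}\ast\theta_\eta(H)\bigr)$; taking expectations and absorbing the contribution of $\cG^c$, on which $|q|\le 1$, into the stated exponentially small error yields \eqref{67E}. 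The main obstacle in the whole argument is the refined probabilistic control of the counting term $\cN(E\pm\ell-\ell_1,E\pm\ell+\ell_1)$ at the subexponential probability scale $\exp[-c(\log N)^{\phi L}]$, where the separation of scales $\ell\gg\ell_1\gg\eta$ built into the statement together with the full strength of Theorems~\ref{45-1} and~\ref{7.1} are essential, beyond the surface application of Lemma~\ref{lem:21}.
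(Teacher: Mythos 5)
Your derivation of \eqref{67E} from \eqref{41new} is essentially the paper's argument, and the setup for \eqref{41new} via the pointwise monotonicity $\chi_{E+\ell}\le\chi_E\le\chi_{E-\ell}$ together with \eqref{6-1} is a reasonable start. However, there is a genuine gap in the step that controls the smoothing error from Lemma~\ref{lem:21}.

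You apply \eqref{6.10} at the two fixed energies $E\pm\ell$ and then assert that, ``with the required subexponentially good probability, the resulting smoothing error collapses to at most $N^{-\e}$''. This requires the pointwise bound $\cN(E\pm\ell-\ell_1,E\pm\ell+\ell_1)\le N^{-\e}$ with probability $\ge 1-C\exp[-c(\log N)^{\phi L}]$, and that bound fails. Since $\cN$ is integer-valued, $\cN\le N^{-\e}$ forces $\cN=0$; but near the spectral edge the intervals $[E\pm\ell-\ell_1,E\pm\ell+\ell_1]$ have width $2\ell_1=2N^{-2/3-3\e}$, and the edge eigenvalues fluctuate on the much larger scale $N^{-2/3}$, so the probability of the window being hit by some eigenvalue is of order $N^{-C\e}$ — polynomially, not subexponentially, small. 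The best control available at the subexponential probability scale, coming from \eqref{nn} or \eqref{6-2}, is of order $(\log N)^L$, not $N^{-\e}$; invoking Theorem~\ref{7.1} does not improve this. Counting how many $\gamma_j$ land in the window also tells you only that the count is $O((\log N)^L)$, which is too big.

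The paper escapes this by averaging in energy rather than applying \eqref{6.10} pointwise: it writes $\tr\chi_E(H)\le\ell^{-1}\int_{E-\ell}^E\tr\chi_y(H)\,\rd y$, uses \eqref{6.10} under the integral, and then by Fubini the problematic term $\ell^{-1}\int_{E-\ell}^E\cN(y-\ell_1,y+\ell_1)\,\rd y$ is reorganized as $C\,(\ell_1/\ell)\,\cN(E-2\ell,E+\ell)$. With the choice $\ell=\frac12\ell_1N^{2\e}$ one gains the factor $\ell_1/\ell=2N^{-2\e}$, which multiplied by the $(\log N)^L$-type bound from \eqref{nn} indeed gives $\le\frac12N^{-\e}$. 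This averaging step, not the pointwise application of Lemma~\ref{lem:21}, is what makes the $N^{-\e}$ error in \eqref{41new} attainable; your proposal as written cannot produce it.
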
 

{\it Proof.} For any $E$ satisfying  \eqref{condE-} we have
$E_L-E \gg \ell$ thus $|E-2-\ell|N^{2/3}\le\frac{3}{2}(\log N)^L$
(see \eqref{E-2N}), therefore \eqref{6.10} holds for $E$ 
replaced with $y\in [E-\ell, E]$ as well. We thus obtain
\begin{align*}
 \tr \chi_{E }  (H) & \le \ell^{-1}  \int_{E- \ell }^E  \rd y \;\tr \chi_y (H) \\
& \le  \ell^{-1}  \int_{E- \ell}^E  \rd y \;  
 \tr \chi_y  \ast \theta_\eta (H)  
  +      C\ell^{-1}   \int_{E- \ell }^E  \rd y \left [ N^{-2\e}  
 +   \cN (y-\ell_1, y+\ell_1)  \right] \\
 & \le     \tr \chi_{E- \ell  }  \ast \theta_\eta (H)  +  CN^{- 2 \e}+ 
   C  \frac{\ell_1}{\ell}  \cN (E- 2\ell , E + \ell ) 
\end{align*} 
with a probability larger than $1 - C \exp [ - c(\log N)^{\phi L} ]$. 
From \eqref{nn}, \eqref{condE-}, $\ell_1/\ell=2N^{-2\e}$ and $ \ell \le N^{-2/3}$, we can bound 
$$
  \frac{\ell_1}{\ell}  \cN (E- 2\ell , E+ \ell ) 
 \le N^{1-2\e} \int_{E-2\ell}^{E+\ell}\varrho_{sc}(x)\rd x + N^{-2\e}(\log N)^{L_1}
 \le \frac{1}{2}N^{-\e}
$$
with a very high probability, where we estimated the explicit integral
using that the integration domain is in a $CN^{-2/3}(\log N)^L$-vicinity of
the edge at 2.
We have thus proved 
\[
\cN(E, E_L) =   \tr \chi_{E }  (H) \le  \tr  \chi_{E- \ell  }  \ast \theta_\eta (H)  +  N^{-\e}.
\]
By \eqref{6-1}, we can replace $\cN(E, E_L)$ by $\cN(E,  \infty)$
 with  a change of probability of at most 
$C\exp [ -c (\log N)^{\phi L}]$. This proves the upper bound of \eqref{41new}
 and the lower bound can be proved similarly.

On the event that \eqref{41new} holds, the condition  
$\cN(E, \infty) = 0$ implies that $\tr \chi_{E+\ell}  \ast \theta_\eta (H) \leq 1/9$. Thus we have
\be\label{Pleft}
\P\left(\cN(E, \infty)=0\right)\leq \P\left(\tr  \chi_{E+\ell} 
 \ast \theta_\eta (H) \leq 1/9\right)+C\exp[-c(\log N)^{\phi L}].
\ee
Together with the Markov inequality, 
this proves the upper bound in \eqref{67E}. For the lower bound, we use
$$
\E\;  q \big(\tr \chi_{E-\ell}  \ast \theta_\eta  (H) \big)
\le \P \big( \tr \chi_{E-\ell}  \ast \theta_\eta  (H)\le 2/9\big)
\le \P \big( \cN(E,\infty)\le 2/9+N^{-\e}\big)
 =\P \big( \cN(E,\infty)=0\big),
$$
where we used the upper bound from \eqref{41new} and that $\cN$ is an integer.
This  completes the proof of the Corollary. 
\qed

\bigskip

\subsection{Green Function Comparison Theorem}\label{sec:green}

Recalling that $\theta_\eta(H)= \frac{1}{\pi}\im G(i\eta)$,
Corollary  \ref{23}  bounds the probability of $\cN(E,\infty)=0$ in terms of  the expectations
 of two functionals of Green functions. In this subsection, we show that the difference 
between the expectations of these  functionals
w.r.t. two probability distributions 
$\bv$ and $\bw$ is  negligible assuming their second moments
match. The precise statement is the following Green function
 comparison theorem on the edges. All statements are formulated
for the upper spectral edge 2, but with the same proof they hold for the lower
spectral edge $-2$ as well.

\begin{theorem} [Green function comparison theorem on the edge] \label{GFCT}
Suppose that  the assumptions of Theorem  \ref{twthm}, including \eqref{2m},  hold.
Let    $F:\R\to \R$ be a function whose derivatives satisfy  
\be\label{gflowder}
\max_{x}|F^{(\al)}(x)|\left(|x|+1\right)^{-C_1} \leq C_1,\qquad \al=1,\;2,\;3,\;4
\ee
with some constant $C_1>0$.
Then there exists $\e_0>0$ depending only on $C_1$ such that for any $\e<\e_0$
and for any real numbers $E$, $E_1$ and $E_2$ satisfying
$$
   |E-2|\leq N^{-2/3+\eps}, \qquad |E_1-2|\leq N^{-2/3+\eps}, \qquad 
 |E_2-2|\leq N^{-2/3+\eps}, \qquad
$$
and setting $\eta = N^{-2/3-\e}$,
we have
\begin{align}\label{maincomp}
\Bigg|\E^\bv  F  
\left (   
N \eta \im m (z)    \right )  &  -
\E^\bw  F  
\left (  N \eta \im m (z)      \right )  \Bigg|
\le  C N^{-1/6+C \e},\qquad z=E+i\eta,
\end{align}
and 
\be\label{c1}
\left|\E^\bv  F\left(N \int_{E_1}^{E_2} \rd y \;  \im m (y +i\eta)\right)-\E^\bw 
 F \left(N \int_{E_1}^{E_2}  \rd y \; \im m (y+i\eta ) \right)\right|\leq C N^{-1/6+C \e}
\ee 
for some constant $C$ and large enough $N$ depending only on  $C_1$,
  $\ttau$, $\delta_\pm$ and $C_0$ (in \eqref{1.3}).

\end{theorem}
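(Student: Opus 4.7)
\textbf{Proof proposal for Theorem \ref{GFCT}.} The plan is to use a Lindeberg-type swapping strategy, replacing the independent matrix entries one at a time while tracking how the expectation of $F(N\eta\im m(z))$ changes. Enumerate the $N(N+1)/2$ independent entries $\{h_{ij}:i\le j\}$ in some order and define a telescoping chain of matrices $H^{(0)},H^{(1)},\dots,H^{(\gamma_{\max})}$, where $H^{(0)}$ has all entries from the $\bv$-ensemble and $H^{(\gamma)}$ has the first $\gamma$ entries replaced by independent $\bw$-samples. The total difference $\E^\bv F - \E^\bw F$ becomes a sum of $O(N^2)$ single-entry contributions; for each of these I would condition on all the other entries and Taylor-expand $h_{ij}\mapsto F(N\eta\im m(z))$ to fourth order around $h_{ij}=0$. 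By the matching moment hypothesis \eqref{2m}, the $\bv$- and $\bw$-expectations of the order $0,1,2$ Taylor coefficients agree exactly, so they cancel termwise; only the order-$3$ and order-$4$ remainders survive, carrying prefactors $\E|h_{ij}|^3 = O(N^{-3/2})$ and $\E|h_{ij}|^4 = O(N^{-2})$ respectively (using the subexponential decay \eqref{subexp}).

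The key analytic input is the control of the derivatives $\partial^k_{h_{ij}} F(N\eta\im m(z))$ for $k=3,4$, where the matrix $H$ is evaluated along the straight interpolation between $0$ and the swapped entry value. Using the resolvent identity $\partial_{h_{ij}}G = -G\Delta^{(ij)}G$ (with $\Delta^{(ij)}$ the elementary matrix for position $(i,j)$ and its symmetric partner), the $k$-th derivative of $G$ is a combinatorial sum of products of $k+1$ Green function entries whose indices are linked through $i$ and $j$. The chain rule for $F\circ(N\eta\im m)$ then produces polynomial expressions in $F^{(\alpha)}$ (which grow at most like $|x|^{C_1}$ by \eqref{gflowder}), in $\eta\,\im[\text{tr}\,G\cdot\text{products of }G]$, and in single off-diagonal entries $G_{ki},G_{jl}$. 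On the high-probability event provided by Theorem~\ref{45-1} with $L=A_0\log\log N$, we have $|G_{kk}|\le C$, $|G_{kl}|\le(\log N)^{4L}N^{-1/3+\e/2}$ for $k\ne l$, and the Ward identity $\sum_\ell|G_{k\ell}|^2 = \im G_{kk}/\eta$; moreover $N\eta\im m(z)$ itself is bounded by $(\log N)^{C}$ at the chosen scale, so the prefactors coming from $F^{(\alpha)}$ are harmless. The interpolation values of $h_{ij}$ between $0$ and the true entry can be handled by the standard trick of comparing with the resolvent of $H$ minus a rank-one perturbation and using identities \eqref{GiiHii}--\eqref{GijGkij}, incurring only polynomial factors.

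Putting the book-keeping together, each Taylor remainder for a fixed $(i,j)$ is bounded, with overwhelming probability, by $N^{-3/2}(\log N)^{C}(\Lambda_o^3 + \Lambda_o^2\im m/\eta + \dots)$ and an analogous order-$4$ expression; after summing over the $\sim N^2$ swaps and plugging in $\Lambda_o\le(\log N)^{4L}N^{-1/3+\e/2}$, $\eta=N^{-2/3-\e}$, and $\im m\le C N^{-1/3+\e}(\log N)^{4L}$, the cumulative error comes out at $C N^{-1/6+C\e}$, which is precisely the claimed bound. The exceptional complementary event has probability at most $\exp[-c(\log N)^{\phi L}]$ by Theorem \ref{45-1}, and its contribution to the expectation is controlled via the deterministic estimate $|N\eta\im m(z)|\le N$ together with the polynomial growth of $F$, hence is subpolynomial in $N$.

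For the integrated version \eqref{c1}, I would write
\[
N\int_{E_1}^{E_2}\im m(y+i\eta)\,\rd y = \pi\,\tr\bigl[\chi_{[E_1,E_2]}\ast\theta_\eta\bigr](H),
\]
which is again a smooth functional of the Green function $G(y+i\eta)$ integrated over a range of spectral parameters $y$ with $|y-2|\le N^{-2/3+\e}$. Since the strong local semicircle bounds of Theorem \ref{45-1} are uniform in $z\in\bS_L$, the same Taylor expansion and derivative bounds apply pointwise in $y$, and the estimate \eqref{c1} follows from \eqref{maincomp} by an application of Fubini and the triangle inequality. The principal obstacle is the combinatorial explosion in bounding the order-$3$ and order-$4$ derivatives of $F\circ(N\eta\im m)$: many cross-terms appear (products of $F^{(\alpha)}$, traces involving $G^k$, and off-diagonal $G_{ij}$), and one must verify that every single one of them picks up enough factors of the small quantities $\Lambda_o$ or $\eta\im m$ to sum to $N^{-1/6+C\e}$; this is where the Ward identity and a careful bookkeeping of index contractions become indispensable.
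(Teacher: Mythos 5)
Your overall strategy -- a Lindeberg swap of one independent entry at a time followed by a Taylor (equivalently, resolvent) expansion, using the two-moment matching \eqref{2m} to cancel the low-order terms and the strong local semicircle bounds from Theorem~\ref{45-1} to control the derivatives -- is essentially the same framework the paper uses. However, your power counting for the surviving third-order term has a genuine gap, and it is precisely here that the proof lives or dies.

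With $\eta=N^{-2/3-\e}$ at the edge, the off-diagonal Green function entries are $\Lambda_o = O(N^{-1/3+2\e})$ and $\im m_{sc} = O(N^{-1/3+\e/2})$. After the order-$0,1,2$ Taylor coefficients cancel, the remaining explicit third-order term carries a prefactor $\E|h_{ab}|^3 = O(N^{-3/2})$. The naive estimate of $\partial_{h_{ab}}^3\bigl(N\eta\,\im m\bigr)$, done exactly as you describe (resolvent identity plus Ward identity), yields a quantity of order $N^{-1/3+C\e}$: it contains at least three off-diagonal resolvent factors offset by the Ward-identity partial sums. Thus the third-order Taylor remainder for a single pair $(a,b)$ is $O(N^{-3/2}\cdot N^{-1/3+C\e})= O(N^{-11/6+C\e})$, and summing over the $\sim N^2$ off-diagonal pairs gives $O(N^{+1/6+C\e})$, which \emph{diverges}. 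Your claimed chain "$N^{-3/2}(\log N)^C(\Lambda_o^3 + \Lambda_o^2\im m/\eta + \dots)$ summed over $N^2$ swaps $\implies N^{-1/6+C\e}$" does not close; already the $\Lambda_o^2\,\im m/\eta$ branch of your bound contributes $N^{+1/6}$. The "careful bookkeeping" you flag at the end of your proposal cannot by itself repair this, because no rearrangement of the naive contraction count gives more than three off-diagonal factors in the leading third-order term.

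What the paper does to salvage the $N^{-11/6}$ gap is a genuine additional idea, not a bookkeeping refinement. The leading third-order term, after replacing $G^{(aa)}$, $G^{(bb)}$ by $m_{sc}$ on the good event, is (schematically) $\E\bigl[F'(x^R)\,R_{ij}\overline{R_{ja}R_{bi}}\bigr]$, a product of three off-diagonal entries whose trivial size is $O(\Lambda_o^3)=O(N^{-1+C\e})$. Lemma~\ref{lem: 52} shows that this expectation is in fact $O(N^{-4/3+C\e})$, one full off-diagonal factor smaller: after pulling out the $a$-th row and column via \eqref{GijHij}--\eqref{GijGkij} and noting that $F'(x^R)S_{ij}^{(a)}S_{bi}^{(a)}$ and the inner factors of $Z_{ja}^{(S)}$ are all independent of the $a$-th row, the residual expectation over that row vanishes identically. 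This improvement turns each pair's contribution from $N^{-11/6}$ into $N^{-13/6}$, which \emph{does} sum to $N^{-1/6+C\e}$. Your sketch has no analogue of this step, and without it the argument cannot produce the stated bound. You would need to identify the pair-specific leading term explicitly, factor out the row/column independence, and exploit the zero mean of the fresh entries to gain the missing $N^{-1/3}$; only then does the sum over pairs converge.
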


\bigskip 

Theorem \ref{GFCT} holds in a much greater generality. We state the following extension 
which can be used to prove \eqref{twa}, the generalization  of Theorem \ref{twthm}. 
The class of functions $F$ in the following 
theorem can be enlarged  to allow some polynomially increasing functions
 similar to \eqref{gflowder}. 
But for the application to prove  \eqref{twa}, the following form is sufficient.
The proof of Theorem \ref{GFCT2} is similar to that of Theorem \ref{GFCT} and will be omitted.

\begin{theorem}  \label{GFCT2}
Suppose that  the assumptions of Theorem  \ref{twthm}, including \eqref{2m},  hold.
Fix any $k\in \N_+$ and
let  $F: \R^k \to \R $ be a bounded smooth function with bounded derivatives.
Then for any sufficiently small $\e$ there exists a $\delta>0$ such that
for any sequence of real numbers $E_k < \ldots < E_{1}< E_0 $
with $|E_j-2|\le N^{-2/3+\e}$, $j=0,1,\ldots, k$, we have
\be\label{c11}
\left| \Big ( \E^\bv- \E^\bw \Big )   F\left(N \int_{E_1}^{E_0} \rd y  \im m (y +i\eta),  
 \ldots, N \int_{E_k}^{E_0} \rd y  \im m (y+i\eta ) 
\right) \right|\leq N^{-\delta}.
\ee

\end{theorem}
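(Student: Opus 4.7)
The plan is to extend the Lindeberg-type Green function comparison scheme underlying Theorem \ref{GFCT} to the multivariable test function $F:\R^k\to\R$. Order the independent matrix entries $\{h_{ab}\}_{a\le b}$ and let $H^{(\gamma)}$, $\gamma=0,1,\ldots,\gamma_{\max}=N(N+1)/2$, be an interpolating sequence in which the first $\gamma$ entries are drawn from the $\bw$ ensemble and the remaining ones from the $\bv$ ensemble, so that $H^{(0)}=H^{(\bv)}$ and $H^{(\gamma_{\max})}=H^{(\bw)}$. With $m^{(\gamma)}(z):=N^{-1}\tr(H^{(\gamma)}-z)^{-1}$, $X_j^{(\gamma)}:=N\int_{E_j}^{E_0}\im m^{(\gamma)}(y+i\eta)\rd y$, and $\eta:=N^{-2/3-\eps}$, the difference $\E^\bv F-\E^\bw F$ telescopes over the $\gamma$'s, and it suffices to show that each single swap changes $\E F(X_1,\ldots,X_k)$ by at most $N^{-5/2+C\eps}$; summing over the $\gamma_{\max}=O(N^2)$ swaps yields $N^{-1/2+C\eps}$, hence \eqref{c11} with $\delta<1/2$ once $\eps$ is sufficiently small.

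For the $\gamma$-th swap, only the single independent entry $h_{ab}$ is replaced, so I would Taylor expand $F(X_1^{(\gamma)},\ldots,X_k^{(\gamma)})$ in this entry up to fourth order, using the multivariate chain rule together with the resolvent identity $\partial_{h_{ab}}G=-GE_{ab}G$ to express each derivative $\partial^{r}_{h_{ab}}X_j^{(\gamma)}$ as an integral over $y\in[E_j,E_0]$ of a product of Green function entries evaluated at $y+i\eta$. Because the first two moments of $h_{ab}$ agree between the two ensembles by \eqref{2m}, the Taylor contributions of orders $0,1,2$ cancel exactly after taking the expectation over $h_{ab}$, and only the third and fourth order remainders survive. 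To bound them I invoke the strong local semicircle law, Theorem \ref{45-1}, with $L:=A_0\log\log N$ at the spectral parameter $y+i\eta$: since $y\in[\,2-N^{-2/3+\eps},\,2+N^{-2/3+\eps}\,]$ and $N\eta=N^{1/3-\eps}\gg(\log N)^{10L}$, the point $y+i\eta$ lies in $\bS_L$, so $\Lambda_d(y+i\eta)+\Lambda_o(y+i\eta)\le N^{-1/3+C\eps}$ on an event of probability at least $1-\exp[-(\log N)^{c\log\log N}]$. Combining this with the Ward identity $\sum_k|G_{ak}|^2=\eta^{-1}\im G_{aa}=O(N^{1/3+C\eps})$, the first derivative is bounded by $|\partial_{h_{ab}}X_j|\le|E_0-E_j|\cdot N^{1/3+C\eps}\le N^{-1/3+C\eps}$; the higher derivatives in $h_{ab}$ carry at least as good bounds, each additional $\partial_{h_{ab}}$ producing a factor of $|G_{ab}|\le N^{-1/3+C\eps}$. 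Since $F$ and all its partial derivatives of order $\le 4$ are uniformly bounded, and $\E|h_{ab}|^3\lec\sigma_{ab}^3\lec N^{-3/2}$ by the subexponential decay and Assumption \textbf{(C)}, each third order Taylor contribution per swap is at most $N^{-3/2}\cdot N^{-1+C\eps}=N^{-5/2+C\eps}$, and the fourth order terms are even smaller.

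The hard part will be the combinatorial bookkeeping of the mixed derivatives $\partial^\alpha F(X_1,\ldots,X_k)\cdot\prod_r\partial^{s_r}_{h_{ab}}X_{j_r}$ produced by the multivariate chain rule: each such term must be individually controlled via the local law and the Ward identity, and the integration over $y$ must be performed uniformly, which requires invoking Theorem \ref{45-1} on a sufficiently fine $N^{-C}$-lattice of spectral parameters together with the trivial Lipschitz estimate $|\partial_z G_{ij}|\le\eta^{-2}\le N^2$ to interpolate to all $y$. The boundedness of $F$ is used at the end to absorb the contributions of the exceptional sets of very small probability. Once this routine but notation-heavy step is performed, the proof structure is essentially identical to that of Theorem \ref{GFCT}, as the authors already hint.
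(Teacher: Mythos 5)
The telescoping swap setup in your proposal is the right framework, but the power counting at the core of your estimate is off at precisely the point the authors themselves flag as the crux of the edge theorem (see the discussion following \eqref{Eeta}, where they note the third order error is $N^{-2+1/6}$ per swap and that this would fail to sum without an extra cancellation).

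The specific error is in the claim that ``each additional $\partial_{h_{ab}}$ produc[es] a factor of $|G_{ab}|\le N^{-1/3+C\e}$.'' Each application of $\partial_{h_{ab}}$ inserts $GE_{ab}G$ or $GE_{ba}G$, and when two adjacent insertions form a transpose pair (one $E_{ab}$ next to one $E_{ba}$), the resolvent sandwiched between them is a \emph{diagonal} element $G_{aa}$ or $G_{bb}$, which is $O(1)$ --- not an off-diagonal $|G_{ab}|\lesssim N^{-1/3}$. Concretely, in the real-symmetric model,
\[
\partial^3_{h_{ab}}\tr G \supset -6\,\tr\big(GE_{ab}GE_{ba}GE_{ab}G\big)
= -6\, G_{aa}G_{bb}\,(G^2)_{ba},
\]
and with the Ward identity $(G^2)_{ba}\lesssim \eta^{-1}\sqrt{\im G_{aa}\,\im G_{bb}}\lesssim N^{1/3+C\e}$ at $\eta=N^{-2/3-\e}$, $|E-2|\le N^{-2/3+\e}$, this term is $\lesssim N^{1/3+C\e}$ --- the \emph{same size} as $\partial_{h_{ab}}\tr G = -(G^2)_{ab}-(G^2)_{ba}$, not $N^{-1/3}$ times it. Hence
\[
\partial^3_{h_{ab}} X_j = \int_{E_j}^{E_0} \partial^3_{h_{ab}}\im \tr G\; \rd y
\lesssim N^{-2/3+\e}\cdot N^{1/3+C\e} = N^{-1/3+C\e},
\]
not $N^{-1+C\e}$ as you assert. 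The dominant part of $\partial^3_{h_{ab}}\big(F\circ X\big)$ is then $F'(X)\,\partial^3 X_j\lesssim N^{-1/3+C\e}$, and the third order Taylor remainder per swap is $\E|h_{ab}|^3\cdot N^{-1/3+C\e}\lesssim N^{-3/2}\cdot N^{-1/3+C\e}=N^{-11/6+C\e}$. Summing over the $O(N^2)$ swaps gives $N^{1/6+C\e}$, which diverges; your claimed bound $N^{-5/2+C\e}$ per swap, and hence $N^{-1/2+C\e}$ after summation, is therefore not attainable by direct power counting.

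The missing ingredient --- and the only reason Theorem \ref{twthm} holds with \emph{two} moments matched rather than three --- is that this leading third-order contribution, of the form $R_{ij}\overline{R_{ja}R_{bi}}$ times $F'$ after the resolvent expansion is organized as in \eqref{Q30}, has ``almost zero expectation'' for reasons that are structural, not moment-theoretic. This is exactly Lemma \ref{lem: 52}: writing $S_{ja}=S_{jj}S_{aa}^{(j)}Z^{(S)}_{ja}$ via \eqref{shosja} and noting that $Z^{(S)}_{ja}$ is built from the $a$-th column of $H_{\gamma-1}$ while everything else in the product is independent of that column, the leading part of the expectation vanishes identically, gaining an extra $N^{-1/3}$ over the naive $N^{-1+C\e}$ bound and pushing the per-swap error below $N^{-2}$. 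Your write-up does not invoke or replicate this cancellation anywhere; without it the sum over swaps does not close. A correct proof of Theorem \ref{GFCT2} must reproduce the argument of Lemma \ref{lemGamma} and Lemma \ref{lem: 52} for the vector argument $(x^R_1,\ldots,x^R_k)$ (the adaptation is straightforward notation-tracking through the multivariate chain rule, which is why the authors say ``similar and will be omitted''), but the cancellation itself is essential and cannot be bypassed by the Ward identity and moment matching alone.
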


\bigskip

Assuming that  Theorem \ref{GFCT} holds, we now prove Theorem \ref{twthm}. 

\medskip 

\noindent 
{\it Proof of Theorem \ref{twthm}.}
As we discussed in \eqref{6-1} and \eqref{6-2}, we can assume that \eqref{25} holds
for the parameter $s$.
We define $E:=2+sN^{-2/3}$ that satisfies \eqref{condE-}.
We define $E_L$ as in \eqref{defEL} with the  $L$ such that \eqref{6-1} 
and \eqref{6-2} hold.  For simplicity, we set $\xi = \phi L$ and note that $\xi\ge 2$
for sufficiently large $N$. 
 With the left side of \eqref{67E}, for any sufficiently small $\e>0$, we have 
\be
\E^\bw\;  q \left(\tr \chi_{E-\ell}  \ast \theta_\eta  (H) \right)
\le \P^\bw (\cN(E, \infty)  = 0 )
\ee
with the choice 
$$
   \ell: = \frac{1}{2}N^{-2/3-\e}, \qquad \eta:= N^{-2/3-9\e}.
$$
The bound \eqref{c1}
applying to the case $E_1=E-\ell$ and 
$E_2=E_L$
shows that  there exist $\delta>0$, for sufficiently small $\e>0$, such that
\be\label{645}
\E^\bv\;  q \left(\tr \chi_{E-\ell}  \ast \theta_\eta  (H) \right)\leq 
\E^\bw\;  q \left(\tr \chi_{E-\ell}  \ast \theta_\eta  (H) \right)+N^{-\delta}
\ee
(note that $9\e$  plays the role of the $\e$ in the Green function comparison theorem).
Then applying the right side of \eqref{67E} in Lemma \ref{23}, with $\xi=\phi L\ge 2$,
 to the l.h.s of \eqref{645}, we have
\be
\P^\bv (\cN(E-2\ell, \infty)  = 0 ) \leq 
\E^\bv\;  q \left(\tr \chi_{E-\ell}  \ast \theta_\eta  (H) \right)+C\exp{\big[-c(\log N)^2\big]}.
\ee
 Combining these inequalities, we have
\be\label{Pbv}
\P^\bv (\cN(E- 2\ell, \infty)  = 0 ) \leq\P^\bw (\cN(E, \infty)  = 0 )+2N^{-\delta}
\ee
for sufficiently small $\e>0$ and  sufficiently large $N$. Recalling that 
$E=2+sN^{-2/3}$, 
this proves the first 
inequality  of \eqref{tw} and, 
by switching the role of  $\bv, \bw$, the second inequality of \eqref{tw} as well. 
 This completes the proof of Theorem \ref{twthm}.
\qed

\bigskip 

\noindent 
{\it Proof of Theorem \ref{GFCT}.}
Notice that 
\be\label{c2} 
N \int_{E_1}^{E_2} \rd y  \, \im m (y+i\eta ) = \eta  \int_{E_1}^{E_2} \rd y   \, \tr G(z) \ov G(z),
 \quad z = y+ i \eta .
\ee
We now set up notations to  replace the matrix elements one by one. This step is identical 
for the proof of both  \eqref{maincomp} and \eqref{c1},
 and we will use the notations of the case  \eqref{maincomp}
which are less involved.

Fix a bijective ordering map on the index set of
the independent matrix elements,
\be\label{order}
\phi: \{(i, j): 1\le i\le  j \le N \} \to \Big\{1, \ldots, \gamma(N)\Big\} , 
\qquad \gamma(N): =\frac{N(N+1)}{2},
\ee
and denote by  $H_\gamma$  the generalized Wigner matrix whose matrix elements $h_{ij}$ follow
the $v$-distribution if $\phi(i,j)\le \gamma$ and they follow the $w$-distribution
otherwise; in particular $H_0 = H^{(v)}$ and $ H_{\gamma(N)} = H^{(w)}$.  
The specific choice of the ordering map \eqref{order} is irrelevant; in the following argument, 
 $\phi$ could be any bijective ordering map.
 With $\eta=N^{-2/3-\e}$, 
it was proved in \eqref{Lambdaodfinal} that for any constant $\xi>0$, 
\be\label{basic}
\P\left(\max_{0 \le \gamma \le \gamma(N)} \max_{1 \le k,l \le N}  
 \max_{E}\left |  \left (\frac 1 {  H_{\gamma}-E- i \eta} \right )_{k l }
 -\delta_{kl}m_{sc}(E+i \eta)
\right |\le 
N^{-1/3+2\e} \right)\geq 1-C\exp[{-c(\log N)^\xi}]
\ee
with some constants $C, c$ and large enough $N\ge N_0$ (may depend on $\xi$).
 The last maximum in the
formula \eqref{basic}
runs over all $E$ satisfying $|E-2|\le N^{-2/3+\e}$. When applying \eqref{Lambdaodfinal},
 we have used $(\log N)^{4L}(N\eta)^{-1}\leq N^{-1/3+2\e}$
and that 
\be\label{imscbound}
\im m_{sc}(E+i\eta) \le \sqrt{|E-2|+\eta}\le CN^{-1/3+\e/2}
\ee
for $|E-2|\le CN^{-2/3+\e}$.

\bigskip 
We set $z=E+i\eta$ where $|E-2|\le CN^{-2/3+\e}$ and $\eta=N^{-2/3-\e}$.
{F}rom \eqref{basic}, \eqref{imscbound} and the identity
$$
  \im m(z) = \frac{1}{N}\im \tr G=\frac{\eta}{N}\sum_{ij}G_{ij}\overline{G_{ij}},
$$
 we have that
 \be
\Big|\eta^2\sum_{ij}G_{ij}\overline{G_{ij}}\Big|=
 | N \eta \im m (z) | \leq CN^{2\e}
  \ee
  and
  \be
  \Big|\eta^2\sum_{i=j}G_{ij}\overline{G_{ij}}\Big|
  \le N\eta^2 \big( |m_{sc}| + CN^{-1/3+2\e}\big) \leq CN^{-1/3-2\e}
  \ee
hold with a probability larger than $1-C\exp[{-c(\log N)^\xi}]$. Since the derivative of $F$ 
is bounded as in \eqref{gflowder},  there exists $C$ depending on  $F$,  $\ttau$, $\delta_\pm$ 
and $C_0$ such that 
\be
\left|\E F\left(\eta^2\sum_{ij}G_{ij}\overline{G_{ij}}\right)-
\E F\left(\eta^2\sum_{i\neq j}G_{ij}\overline{G_{ij}}\right)\right|\leq CN^{-1/3+C\e}.
\ee
This holds for both the $\bv$ and the $\bw$ ensembles.

To show \eqref{maincomp}, we only need to  prove that for small enough $\e$, 
there exists $C$ depending on $F$,  $\ttau$, $\delta_\pm$ and $C_0$ such that
\begin{align}\label{maincompnew}
\Bigg|\E^\bv F\left (   
\eta^2 \sum_{i\neq j}G_{ij}^{(v)}\overline {G_{ji}^{(v)}}\right)   -
  \E^\bw F\left ( G^{(v)} \to  G^{(w)}\right )  \Bigg| \le CN^{-1/6+C\e},
\end{align}
where $G^{(v)}$ and $G^{(w)}$ denote the Green functions of the $H^{(v)}$ and $H^{(w)}$,
respectively.
Here the shorthand notation $F\left ( G^{(v)} \to  G^{(w)}\right ) $
means that we consider the same argument of $F$ as in 
the first term in \eqref{maincompnew}, but all $G^{(v)}$
terms are replaced with $G^{(w)}$.
In fact, the upper index notation is slightly superfluous since the Green function
is the same, only the underlying ensemble measure changes, but
we wish to emphasize the difference between the two ensembles in this way as well.

Similarly, for \eqref{c1}, we only need to  prove that for small enough $\e$, there 
exists $C$ depending on $F$,  $\ttau$, $\delta_\pm$ and $C_0$ such that
\be\label{c1new}
\left|
\E^\bv  F\left(N \int_{E_1}^{E_2} \rd y  \left (   
\eta \sum_{i\neq j}G_{ij}^{(v)}\overline {G_{ji}^{(v)}}(y+i\eta)\right)\right) -
\E^\bw  F\left ( G^{(v)} \to  G^{(w)}\right )
\right|\leq C N^{-1/6+C \e}.
\ee
Consider the telescopic sum of differences of expectations 
\begin{align}\label{tel}
\E \, F \left (\eta^2 \sum_{i\neq j}\left(\frac{1}{H^{(v)}-z}\right)_{ij}
\overline{\left(\frac{1}{H^{(v)}-z}\right)_{ji}} \right )   - 
 & \E \, F \left  ( H^{(v)}\to H^{(w) }\right )  \\
= & \sum_{\gamma=1}^{\gamma(N)}\left[  \E \, F \left ( H^{(v)}\to H_{\gamma }\right ) 
-  \E \, F \left ( H^{(v)}\to H_{\gamma-1 } \right ) \right] . \non
\end{align}
Let $E^{(ij)}$ denote the matrix whose matrix elements are zero everywhere except
at the $(i,j)$ position, where it is 1, i.e.,  $E^{(ij)}_{k\ell}=\delta_{ik}\delta_{j\ell}$.
Fix a $\gamma\ge 1$ and let $(a,b)$ be determined by  $\phi (a, b) = \gamma$. For simplicity
to introduce the notation, 
we assume that $a\neq b$. The  $a=b$ case
can be treated similarly. We note the total number 
of the diagonal terms is $N$ and the one of the off-diagonal terms is $O(N^2)$.  
We will compare $H_{\gamma-1}$ with $H_\gamma$ for each $\gamma$
and then sum up the differences according to \eqref{tel}.

Note that these two matrices differ only in the $(a,b)$ and $(b,a)$ matrix elements 
and they can be written as
\be\label{defHg1}
    H_{\gamma-1} = Q + \frac{1}{\sqrt{N}}V, \qquad V:= v_{ab}E^{(ab)}
+ v_{ba}  E^{(ba)}
\ee
$$
    H_\gamma = Q + \frac{1}{\sqrt{N}} W, \qquad W:= w_{ab}E^{(ab)} +
   w_{ba} E^{(ba)},
$$
with a matrix $Q$ that has zero matrix element at the $(a,b)$ and $(b,a)$ positions and
where we set $v_{ji}:= \ov v_{ij}$ for $i<j$ and similarly for $w$.
Define the  Green functions
\be\label{defG}
      R: = \frac{1}{Q-z}, \qquad S:= \frac{1}{H_{\gamma-1}-z}, \qquad T:= \frac{1}{H_{\gamma}-z}.
\ee

We first claim that the estimate \eqref{basic} holds for the Green function $R$ as well. 
More precisely, the probability of the event
\be
  \Omega_R:= \max_{1 \le k,l \le N}  
 \max_{E}\big| R_{k l}(E+i\eta) -\delta_{kl}m_{sc}(E+i \eta)
\big|\ge N^{-1/3+2\e}
\label{defOMR}
\ee
(where $\max_E$ is the maximum over all $E$ with $|E-2|\le N^{-2/3+\e}$)
satisfies
\be
  \P (\Omega_R)\le  C\exp{\big[-c(\log N)^\xi\big]}
\label{POMR}
\ee
for any fixed $\xi>0$.
To see this, we use the 
resolvent expansion 
\be\label{relRS}
 R = S  +   N^{-1/2}SV S +   N^{-1}(SV)^2 S+ \ldots + N^{-9/5} (SV)^9S+
N^{-5} (SV)^{10} R.
\ee
Since $V$ has only at most two nonzero elements, when
computing the $(k,\ell)$ matrix element of this matrix identity,
each term is a sum of finitely many terms (i.e. the number of summands
is $N$-independent) that  involve
matrix elements of $S$ or $R$ and $v_{ij}$, e.g.  $(SVS)_{k\ell} =S_{ki} v_{ij} S_{j\ell}
+ S_{kj} v_{ji} S_{i\ell}$.  Using the bound \eqref{basic} for the $S$ matrix elements,
the subexponential decay for $v_{ij}$ and 
 the trivial bound $|R_{ij}| \le  \eta^{-1}\leq N$, we obtain that 
the estimate \eqref{basic} holds for $R$ as well. 

\medskip

After having introduced these notations, we are in a position to give a
heuristic power counting argument that is the core of the proof.
In particular, we can explain the origin of the second moment matching condition. 
Take $F(x)=x$ for simplicity.
A resolvent expansion  analogous to \eqref{relRS}
 gives 
\be\label{Eeta}
\E \eta \sum_i \im  S_{ii} = \eta \, \E \, \im \sum_i \Big [ R_{ii}   -   N^{-1/2}( RV R)_{ii} +  
 N^{-1} ((RV)^2 R)_{ii}+ \ldots  \Big ]
\ee
which is an expansion in the order of $N^{-1/2}$ since the matrix $V$ 
contains only a few nonzero elements of size $N^{-1/2}$.
Notice that $\eta \sum_i \im  S_{ii}$ estimates the number of eigenvalues near $E$ in a window 
of size $\eta$.  For the two ensembles to have the 
same local eigenvalue distribution on scale $\eta$,
 we need the error term to be less than order one
even after performing the telescopic sum.
 In the bulk, $\eta$ has to be chosen as $\eta \sim N^{-1}$ 
and we can view $\eta \sum_i$ as order one
 in the power counting.  Since in the telescopic expansion
we will have $N^2$ terms  to sum up, 
we need that the error term of the expansion is $o(N^{-2})$
for each replacement step, i.e., for each fixed 
label $(a, b)$. This explains the usual condition of four moments to be identical
for the Green function comparison theorem in the bulk \cite{EYY}
since the first four terms in \eqref{Eeta} has to be equal.
Near the edges, i.e., at energies $E$ 
with  $|E-2|\lesssim N^{-2/3}$,
 the correct local scale is $\eta\sim N^{-2/3}$ and
 the strong local semicircle law \eqref{Lambdaodfinal}
implies that the off-diagonal Green functions are of order 
$N^{-1/3}$  and the diagonal Green functions are bounded. 
Hence the size of the third order term 
$ \eta \, \E  \sum_i N^{-3/2}((RV)^3 R)_{ii}$ is of order 
\[
\eta N N^{-3/2} N^{-2/3}  = N^{-2 + 1/6} 
\]
where we used that,  for a generic label $(a, b)$,  there are
at least two off-diagonal resolvent terms 
in $ ((RV)^3 R)_{ii}$. Notice that the error term is still larger than $N^{-2}$, required 
for summing over $a, b$ (this argument would be sufficient if we had a matching of
three moments and only the fourth order term in \eqref{Eeta} needed to be estimated).
The key observation is that  the leading term, which gives this order $N^{-2 + 1/6} $,
has actually almost zero expectation which
 improves the error to be less than $o(N^{-2})$.  This is due 
to the fact that with the help of \eqref{basic}
we are able to follow the main term in the diagonal elements of the Green functions
and thus compute the expectation fairly precisely. Notice that similar  reasons apply to the proof of 
Lemma \ref{motN} in Section \ref{sec:Z}.

\subsection{Main Lemma}\label{sec:mainlemma}

The key step to the proof of Theorem \ref{GFCT} is the following lemma:

\begin{lemma}\label{lemGamma} Fix an index $\gamma$,
 recall the definitions of $Q$, $R$ and $S$ 
from \eqref{defG} and suppose first that  $\gamma =\phi(a,b)$
with $a\ne b$. For any small $\e>0$ and
under the assumptions in Theorem \ref{GFCT} on $F$, $E$,
$E_{1}$ and $E_2$,   there exists $C$ depending 
on $F$,  $\ttau$, $\delta_\pm$ and $C_0$ (but independent of $\gamma$) and
there exist  constants $A_N$ and $B_N$, depending  on the distribution of the Green function $Q$,
denoted by $\mbox{dist}(Q)$, 
and on the second moments of
 $v_{ab}$, denoted by $m_{2}(v_{ab})$, such that
\be\label{new612}
\left|\E \, F \left (\eta^2 \sum_{i\neq j}S_{ij}\overline{S}_{ji}(z) \right ) -
\E \, F \left (\eta^2 \sum_{i\neq j}R_{ij}\overline{R}_{ji}(z) \right )-
 A_N\big(m_2(v_{ab}), \mbox{dist}(Q)\big)\right|\leq CN^{-13/6+C\e},\;
\ee
with $z=E+i\eta$, $\eta=N^{-2/3-\e}$, and
\begin{align}\label{c3} 
\Bigg|\E  F\left( \eta  \int_{E_1}^{E_2} \rd y  
\sum_{i\neq j}S_{ij}\overline{S}_{ji}(y+i\eta)\right)
-\E  F\Bigg( \eta  \int_{E_1}^{E_2} \rd y  & \sum_{i\neq j}R_{ij}\overline{R}_{ji}(y+i\eta)\Bigg)
 \\ & -B_N\big(m_2(v_{ab}), \mbox{dist}(Q) \big)  \Bigg|\leq CN^{-13/6+C\e}\non
\end{align}
for large enough $N$ (independent of $\gamma$). The constants $A_N$ and $B_N$ may also depend 
on $F$ and on the parameters  $\ttau$, $\delta_\pm$ and $C_0$, but they depend on 
the centered random variable $v_{ab}$ only through its second moments.

Finally, if $a=b$, i.e. $\gamma =\phi(a,a)$, then the bounds \eqref{new612} and \eqref{c3} hold
with $CN^{-11/6+C\e}$ standing on their right hand side.
\end{lemma}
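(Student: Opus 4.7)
My plan is to prove Lemma \ref{lemGamma} by a resolvent expansion in powers of $N^{-1/2}V$, followed by a Taylor expansion of $F$, a grouping of terms by the total degree in $v_{ab}, \bar v_{ab}$, and careful power counting of the degree-three and degree-four contributions. The strategy follows the four-moment theorem of \cite{EYY} adapted to the edge scaling $\eta = N^{-2/3-\e}$, using the sharp a priori bound \eqref{basic}. Start from the identity
$$S \;=\; \sum_{k=0}^{K-1}(-N^{-1/2})^k (RV)^k R \;+\; (-N^{-1/2})^K (RV)^K S$$
for a fixed large $K$ (say $K=10$), and similarly for $\bar S$. Substituting this into $\Omega := \eta^2\sum_{i\ne j}S_{ij}\bar S_{ji}$ gives
$$\Omega \;=\; \Omega_0 \;+\; \sum_{k=1}^{2K-2} N^{-k/2}\, P_k(v_{ab},\bar v_{ab}) \;+\; r,$$
where $\Omega_0$ is the analogous bilinear form in $R$, each $P_k$ is a homogeneous polynomial of degree $k$ in $v_{ab},\bar v_{ab}$ whose coefficients are explicit sums over $(i,j)$ with $i\ne j$ of products of matrix entries of $R$, and $r$ is a remainder of size $O(N^{-K/2+C\e})$ using $\|S\|\le\eta^{-1}$ together with the subexponential decay of $v_{ab}$. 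Because $V = v_{ab}E^{(ab)}+\bar v_{ab}E^{(ba)}$ has only two nonzero entries, all intermediate summation indices in $(RV)^k R$ collapse to $\{a,b\}$, so the coefficients of $P_k$ are products of $R_{i\cdot}$, $R_{\cdot j}$ and the four entries $R_{aa},R_{ab},R_{ba},R_{bb}$.

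Next, Taylor expand $F$ around $\Omega_0$ to fourth order and collect by total $v$-degree $d$. The degree-zero term cancels, the degree-one term $F'(\Omega_0)P_1$ vanishes in expectation since $\E v_{ab}=0$ by \eqref{aver}, and the degree-two contribution $F'(\Omega_0)P_2 + \tfrac12 F''(\Omega_0)P_1^2$ has expectation depending only on the second moments $\E v_{ab}^2,\,\E|v_{ab}|^2,\,\E\bar v_{ab}^2$ and on the distribution of $Q$; this produces the common constant $A_N(m_2(v_{ab}),\mathrm{dist}(Q))$ claimed in \eqref{new612} (and analogously $B_N$ for \eqref{c3}). The core of the proof is then to show that the total contribution of $v$-degree $d\ge 3$ is bounded by $CN^{-13/6+C\e}$. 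The available ingredients are: the a priori bound \eqref{basic}, giving $|R_{ii}|\le C$ and $|R_{ij}|\le N^{-1/3+2\e}$ for $i\ne j$; the Ward-type identity $\sum_j|R_{ij}|^2 = \eta^{-1}\im R_{ii} = O(N^{1/3+C\e})$; and the uniform subexponential decay of $v_{ab}$, which controls any fixed moment.

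The main obstacle is the power count of the degree-three contribution. A naive application of the a priori bounds to each explicit monomial of $P_3$, combined with direct counting in $(i,j)$ or with Cauchy--Schwarz plus Ward's identity, yields only $O(N^{-11/6+C\e})$ per index $\gamma$, which after summing over the $O(N^2)$ values of $\gamma$ would give $O(N^{1/6+C\e})$ rather than the desired $O(N^{-1/6+C\e})$ of Theorem \ref{GFCT}. The improvement to $O(N^{-13/6+C\e})$ per $\gamma$ comes from the near-reality of $m_{sc}(z)$ at the edge. Using the identity $\eta^2\sum_{i\ne j}S_{ij}\bar S_{ji}=N\eta\,\im m(z)-\eta^2\sum_i|S_{ii}|^2$ and replacing every diagonal factor $R_{ii}$ by $m_{sc}(z)$ via \eqref{basic}, the leading portion of the degree-three coefficient takes the form $\im$ of a polynomial in $m_{sc}(z)$, whose magnitude is suppressed by the extra factor $|\im m_{sc}(z)|\le CN^{-1/3+\e/2}$. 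Combined with the naive $O(N^{-11/6+C\e})$ bound this yields $O(N^{-13/6+C\e})$. The degree-four terms, the higher-order terms in Taylor's formula, and the remainder $r$ are bounded by the same ingredients and are strictly smaller.

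Finally, in the diagonal case $a=b$, the matrix $V=v_{aa}E^{(aa)}$ has only a single nonzero entry, so each insertion of $V$ produces one fewer off-diagonal $R$-factor than in the off-diagonal case; the analogous power count then produces the weaker bound $O(N^{-11/6+C\e})$ claimed at the end of the lemma. The integrated version \eqref{c3} is handled by the same argument applied pointwise in $y\in[E_1,E_2]$ and integrated; the integration range $|E_2-E_1|\le CN^{-2/3+\e}$ and the replacement of $\eta^2$ by $\eta$ preserve the scaling, so the same $CN^{-13/6+C\e}$ bound (respectively $CN^{-11/6+C\e}$ when $a=b$) holds for $B_N$. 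Apart from the degree-three power count and the identification of the extra $\im m_{sc}$ factor, all remaining steps are routine given \eqref{basic} and the Ward identity.
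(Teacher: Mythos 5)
The overall skeleton of your argument is correct and matches the paper: resolvent expansion in powers of $N^{-1/2}V$, Taylor expansion of $F$ around the $R$-quantity, collecting by total $v$-degree, noting that the degree-$0$ term cancels, the degree-$1$ term vanishes in expectation by $\E v_{ab}=0$, and the degree-$2$ contribution furnishes the constants $A_N$, $B_N$ that depend on $v_{ab}$ only through its second moments. You also correctly identify that the degree-$3$ term is the crux, that the naive a priori count gives only $N^{-11/6+C\e}$ per $\gamma$, and that an extra gain of $N^{-1/3}$ is needed on the off-diagonal summand. Your conclusion about the weaker $N^{-11/6+C\e}$ bound for the diagonal case $a=b$ is also consistent with the paper (the telescopic sum has only $O(N)$ such terms).

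However, the mechanism you propose for the $N^{-1/3}$ improvement is not substantiated and, I believe, does not work. You argue that after inserting the identity $\eta^2\sum_{i\ne j}S_{ij}\bar S_{ji}=N\eta\,\im m(z)-\eta^2\sum_i|S_{ii}|^2$ and replacing diagonal entries by $m_{sc}$, the leading degree-$3$ coefficient takes the form $\im$ of a polynomial in $m_{sc}$ and is suppressed by $|\im m_{sc}|\lesssim N^{-1/3}$. This cannot be right as stated: the degree-$3$ coefficient, after replacing $R_{aa}$, $R_{bb}$ by $m_{sc}$, still contains products of \emph{off-diagonal, random} factors of the form $R_{ij}\,\ov{R_{ja}R_{bi}}$ (cf.\ \eqref{Q30}), and these random factors are not the imaginary part of anything scalar; the explicit $m_{sc}$-prefactors are $\bar m_{sc}^2$ and $m_{sc}^2$, which are $O(1)$ near the edge since $m_{sc}(2)=-1$. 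No cancellation arises from near-reality of $m_{sc}$ alone.

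The actual source of the extra $N^{-1/3}$ is the independence/centering argument encapsulated in the paper's Lemma \ref{lem: 52}. One further expands one of the off-diagonal Green function factors, say $S_{ja}=S_{jj}S_{aa}^{(j)}Z^{(S)}_{ja}$ with $Z^{(S)}_{ja}=\sum_{s,t\notin\{a,j\}}h_{js}S^{(ja)}_{st}h_{ta}-h_{ja}$ (cf.\ \eqref{shosja}), replaces $S_{jj}S_{aa}^{(j)}$ by $m_{sc}^2$ at a controllable error $O(N^{-2/3+C\e})$, and notes that the remaining factors in $\E F'(\,\cdot\,)S_{ij}^{(a)}\ov{S_{ja}S_{bi}^{(a)}}$ can be made independent of the $a$-th row and column of $H_{\gamma-1}$ by passing to minors; then the leading term has zero expectation because $\E h_{ta}=\E h_{ja}=0$. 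This yields $|\E F'(x^R)R_{ij}\ov{R_{ja}R_{bi}}|\le N^{-4/3+C\e}$ instead of the crude $N^{-1+C\e}$, and when combined with the $N^{-5/6+C\e}$ prefactor gives $N^{-13/6+C\e}$. This independence/mean-zero step is the genuine content of the lemma; without replacing your $\im m_{sc}$ heuristic by such an argument, your proof has a gap precisely at the step you flag as ``the degree-three power count and the identification of the extra $\im m_{sc}$ factor.''
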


\bigskip

The same estimates hold if $S$ is replaced by $T$ everywhere
and note that $Q$ is independent of $v_{ab}$ and
$w_{ab}$. Since $m_2(v_{ab})=m_2(w_{ab})$, we obviously 
have that $A_N\big(m_2(v_{ab}), \mbox{dist}(Q)\big)
=A_N\big(m_2(w_{ab}), \mbox{dist}(Q)\big)$.
Thus we get from Lemma \ref{lemGamma} that in case of $a\ne b$
\be\label{RSest}
  \left|\E\, F \left (\eta^2 \sum_{i\neq j}S_{ij}\overline{S}_{ji}(z) \right ) -
  \E\, F \left (\eta^2 \sum_{i\neq j}T_{ij}\overline{T}_{ji}(z) \right ) \right|
  \le CN^{-13/6+C\e}
\ee
and a similar bound for the quantity \eqref{c3}. In case of $a=b$, the
estimate is only $CN^{-11/6+C\e}$.
Recalling the definitions of $S$ and $T$ from \eqref{defG},
the bound \eqref{RSest} compares
the expectation of a function of the resolvent of $H_\gamma$ and that of $H_{\gamma-1}$.
 The telescopic summation then implies
\eqref{maincompnew} and  \eqref{c1new}
since the number of summands with $a\ne b$ is of order $N^2$ but the number
of summands with $a=b$ is only $N$. This completes
 the proof of Theorem \ref{GFCT}.
\qed 

\bigskip

{\it Proof of Lemma \ref{lemGamma}. } We will only prove the more complicated case 
\eqref{c3}; the 
proof can be adapted easily  for  \eqref{new612} which will be omitted. 
Similarly to  $\Omega_R$ from \eqref{defOMR}, 
define
$$
   \Omega_S:= \max_{1 \le k,l \le N}  
 \max_{E}\big| S_{k l}(E+i\eta) -\delta_{kl}m_{sc}(E+i \eta)
\big|\ge N^{-1/3+2\e},
$$
where $\max_E$ is the maximum over all $E$ with $|E-2|\le N^{-2/3+\e}$.
Since $S$ is the Green function of $H_{\gamma-1}$, 
we obtain from \eqref{basic} directly that
\be
   \P (\Omega_S)\le  C\exp{\big[-c(\log N)^\xi\big]}
\label{POMS}
\ee
for any fixed $\xi>0$. Finally, set
\be\label{OMDEF}
\Omega_\bv:=\{ |v_{ab}|\ge N^{\e}\sigma_{ab}\}, \qquad \mbox{and}\qquad
\Omega:= \Omega_R \cup \Omega_S\cup \Omega_\bv.
\ee
Using \eqref{POMR}, \eqref{POMS} and the subexponential decay
of $v_{ab}$, we obtain 
\be\label{28}
\P\left(\Omega\right)\le C\exp{\big[-c(\log N)^\xi\big]}.
\ee
for any fixed $\xi>0$ and large enough $N$.
Since the arguments of $F$ in \eqref{c3}
are bounded by $CN^{2+2\e}$ and $F(x)$ increases at most
polynomially, it is easy to see that the contribution
of the set $\Omega$
to the expectations  in \eqref{c3} is negligible.
We can thus concentrate on the set $\Omega^c$.

Define $x^S$ and $x^R$ by 
\be\label{defxs}
x^{S} :=   \eta  \int_{E_1}^{E_2} \rd y \sum_{i\neq j}S_{ij}\overline{S}_{ji}(y+i\eta)
,\,\,\,\,x^{R}:= \eta  \int_{E_1}^{E_2} \rd y \sum_{i\neq j}R_{ij}\overline{R}_{ji}(y+i\eta),
\ee 
and
decompose  $x^S$ into three parts
\be
x^{S} =x^{S}_2+x^S_1+x^S_0, \quad 
x^S_k:=   \eta  \int_{E_1}^{E_2} \rd y  \sum_{i\neq j, \; |\{i,j\}\cap \{a,b\}|=k } 
 S_{ij}\overline{S}_{ji}(y+i\eta),
\ee
and $x^R_k$ are defined similarly. 
Here $k=|\{i,j\}\cap \{a,b\}|$ is the number of times  $a$ and $b$ appears among the
 summation indices  $i,j$ (if $a=b$ then we count it only once); clearly $k=0, 1$ or $2$. 
The number of the  terms in the summation  of  $x^S_k$ is $O(N^{2-k})$ 
since $a$ and $b$ are fixed.  From the resolvent expansion, we have 
\be\label{SR-N}
    S =  R -  N^{-1/2} RVR+  N^{-1} (RV)^2R -  N^{-3/2} (RV)^3R+  N^{-2} (RV)^4S.
\ee
In the following formulas we will omit the spectral parameter from the notation
of the resolvents. The spectral parameter is always $y+i\eta$
with $y\in [E_1,E_2]$, in particular $|y-2|\le N^{-2/3+\e}$.

If $|\{i,j\}\cap \{a,b\}|=k$,   using \eqref{SR-N} and \eqref{basic}, we have in $\Omega^c$
\be\label{51}
\big|N^{-m/2}\big[(RV)^m R\big]_{ij}\big|\leq C_m N^{-m/2+3m\e}N^{- (2-k)/3},\,\,\,  m \in \N_+, 
\; k = 0,1, 2
\ee
for some constants $C_m$. Furthermore, we can replace the last $R$ 
by $S$, i.e.,  we also have 
\be\label{666}
\left|  N^{-2} \big[(RV)^4S\big]_{ij}\right|\leq CN^{-2-(2-k)/3+C\e}.    
\ee
Therefore,   in $\Omega^c$ we have, 
\be
 |x^S_{k}-x^R_{k}|\leq C N^{ -5/6-2k/3+C\e},  \quad k = 0,1, 2.
 \ee
Inserting these bounds into  the Taylor expansion of $F$
and keeping only the terms larger than $o(N^{-2})$,
we obtain  
\be\label{FF123}
\left|\E  [ F(x^S)- F(x^R) ] 
-\E\left( F'(x^R) (x^S_0-x^R_0) 
+\frac12 F''(x^R)  (x^S_0-x^R_0 )^2+ F'(x^R)(x^S_1-x^R_1)\right) \right|\leq CN^{-13/6+C\e},
\ee
where we used the remark after \eqref{28} to treat the contribution on the
event $\Omega$.
Since there is no $x_2$ appearing in \eqref{FF123}, we can focus on the case $k=0$ or $1$. 

For $k=0$ or $1$, we define $Q_\ell^{(k)}$ for $\ell=1$, $2$ or $3$, as the sum of the terms
 in $x^S_k-x^R_k$ in which the total number of $v_{ab}$ or $v_{ba}$ is   $\ell$, i.e., 
\begin{align}
Q_1^{(k)} & :=-N^{-1/2}\eta  \int_{E_1}^{E_2} \rd y  \sum_{|\{i,j\}\cap \{a,b\}|=k }
\left(R_{ij}\overline{(RV R)_{ji}}+(RVR)_{ij}\overline{R_{ji}}\right) \\
Q_2^{(k)} & :=N^{-1}\eta  \int_{E_1}^{E_2} \rd y \sum_{|\{i,j\}\cap \{a,b\}|=k }
\Bigg(R_{ij}\overline{((RV)^2 R)_{ji}}+((RV)^2R)_{ij}\overline{R_{ji}}+
(RVR)_{ij}\overline{(RV R)_{ji}}\Bigg) \\ \label{Q3}
Q_3^{(k)} & :=-N^{-3/2}\eta  \int_{E_1}^{E_2} \rd y \sum_{|\{i,j\}\cap \{a,b\}|=k}
\Bigg(R_{ij}\overline{((RV)^3 R)_{ji}}+\overline{R_{ji}}((RV)^3R)_{ij}+
((RV)^2R)_{ij}\overline{(RV R)_{ji}}\\\nonumber
&\quad\quad\quad\quad\quad\quad\quad\quad\quad\quad\quad\quad\quad\quad\quad\quad
+(RV R)_{ij}\overline{((RV)^2R)_{ji}}\Bigg) .
\end{align}

By these definitions and \eqref{51},  we have
\be\label{byi1232}
Q_\ell^{(k)} \leq N^{-\ell/2-1/3 -2k/3+C\e} \,\,\, \,\,\,{\rm in } \,\,\,\Omega^c.  
\ee
Furthermore, with \eqref{51} and \eqref{666}, we decompose  $x^S_k-x^R_k$  as
\be\label{yi1232}
x^S_k-x^R_k =  Q_1^{(k)}+Q_2^{(k)}+ Q_3^{(k)}+O(N^{-7/3+C\e}).
\ee
The last two terms in \eqref{Q3}  can also be bounded by using \eqref{51}, i.e.,
\be\label{spQ3k}
Q_3^{(k)}  =O(N^{-13/6+C\e})-N^{-3/2}\eta  \int_{E_1}^{E_2} \rd y 
\sum_{|\{i,j\}\cap \{a,b\}|=k}\left(R_{ij}
\overline{((RV)^3 R)_{ji}}+\overline{R_{ji}}((RV)^3R)_{ij}\right)
\,\,\, \,\,\,{\rm in } \,\,\,\Omega^c.
\ee
 Inserting \eqref{byi1232} and \eqref{yi1232} into the second term of the 
l.h.s of \eqref{FF123}, with the bounds on the derivatives of $F$, we have 
\begin{align}\label{6755}
& \E\left( F'(x^R)(x^S_0-x^R_0)+ F'(x^R)(x^S_1-x^R_1) +
\frac12 F''(x^R)(x^S_0-x^R_0)^2\right)\\\nonumber
& = B +  \E F'(x^R)  Q_3^{(0)}   + O\left(N^{-13/6+C\e}\right),
\end{align}
where 
\begin{align}\label{Adef}
B  & := \E\left( \sum_{k=0, 1}  F'(x^R) [Q_1^{(k)}+Q_2^{(k)}] 
+\frac12 F''(x^R) [Q_1^{(0)}]^2  \right) \\
& =  \E\left( \sum_{k=0, 1}  F'(x^R) \E_{v_{ab}}  [Q_1^{(k)}+Q_2^{(k)}] 
+\frac12 F''(x^R) \E_{v_{ab}} [Q_1^{(0)}]^2  \right) \non
\end{align} 
depends on $v_{ab}$ only through its expectation (which is zero) and on its
second moments.

First we give a trivial estimate on $Q_3^{(0)}$. In case $i,j$ are distinct
from $a$ and $b$, it is easy to see by writing out  terms
 in \eqref{spQ3k} that they contain 
at least three offdiagonal elements of resolvent; for example in the term
$R_{ij}\ov{ R_{ja} v_{ab} R_{ba} v_{ab} R_{ba} v_{ab} R_{bi}}$,
appearing in $R_{ij} \overline{((RV)^3 R)_{ji}}$, the resolvent matrix elements
$R_{ij}\ov{ R_{ja} R_{bi}}$ are off-diagonal.
Each off-diagonal matrix element of $R$ is bounded by $N^{-1/3+2\e}$
in $\Omega_R^c$, while
the diagonal terms can be estimated by $|m_{sc}|$, hence by a constant, 
at a negligible error in the set $\Omega^c\subset\Omega_R^c$.
This shows that each term in the integrand in \eqref{spQ3k}
is bounded by $C\big[ N^{-1/3+2\e}\big]^3$.
Note that  every estimate is uniform in $y$, the real
part of the spectral parameter, as long as $|y-2|\le N^{-2/3+\e}$.
Estimating $F'$ trivially, 
we thus  obtain
$$
\left|\E  [ F(x^S)- F(x^R) ] -B \right| \le CN^{-11/6+C\e}.
$$
This bound proves Lemma \ref{lemGamma} for the case $a=b$.

For $a\ne b$ this estimate would not be sufficient since
the number of pairs $a\ne b$ to sum up
in the telescopic summation is of order $N^2$.
However, we will show that in this case the expectation of the $Q_3^{(0)}$ term 
is of smaller order  than the trivial estimate gives.

From now on we assume that $a\ne b$.
By \eqref{spQ3k} we have, in $\Omega^c$ that
\begin{align}\label{Q30}
Q_3^{(0)}  &=O(N^{-13/6+C\e})-N^{-3/2}\eta  \int_{E_1}^{E_2} \rd y \sum_{j\neq a,b}
\;\; \sum_{i\neq j,a,b}  \non \\  &
\qquad \qquad \qquad
\Bigg[\Big(R_{ij}\overline{R_{ja}v_{ab}R_{bb}v_{ba}R_{aa}v_{ab}R_{bi}}  
 +R_{ia}v_{ab}R_{bb}v_{ba}R_{aa}v_{ab}R_{bj}\overline{R_{ji}}\Big)
+(a\leftrightarrow b)\Bigg] \non
\\ \non
&=O(N^{-13/6+C\e})-  N^{-3/2} \eta  \int_{E_1}^{E_2} \rd y   \sum_{j\neq a,b} \;\; \sum_{i\neq j,a,b}
\\
& \qquad\qquad\qquad
\Bigg[\left( \overline{m^2_{sc}}R_{ij}\overline{R_{ja}R_{bi}}
+m^2_{sc}R_{ia}R_{bj}\overline{R_{ji}}\right)|v_{ab}|^2v_{ab}
+(a\leftrightarrow b)\Bigg].  
\end{align}
Note that we explicitly collected those terms that contain
the most diagonal elements of $R$; these
are the main terms of $Q_3^{(0)}$. There are several other terms,
for example $R_{ij}\ov{ R_{ja} v_{ab} R_{ba} v_{ab} R_{ba} v_{ab} R_{bi}}$,
that appear in the expansion of $R_{ij}\ov {[(RV)^3R]_{ji}}$, but these are lower
order terms and can be directly included in the error term.
In the second step in \eqref{Q30} we estimated the diagonal terms by $m_{sc}$
at a negligible error in the set $\Omega^c\subset\Omega_R^c$.

We note that  $v_{ab}$ is independent of $R$ and $\E_{v_{ab}}|v_{ab}|^2v_{ab}=O(1)$. 
Combining \eqref{Q30} with \eqref{6755} and \eqref{FF123}, we obtain 
\begin{align}\label{680}
&\left|\E  [ F(x^S)- F(x^R) ] 
-B \right|\\\nonumber
\leq& CN^{-13/6+C\e}+\big| \E F'(x^R)  Q_3^{(0)}\big| \\ \nonumber
\leq &CN^{-13/6+C\e}+CN^{-5/6+C\e}\max_y\max_{i\neq j:\{i,j\}\cap \{a,b\}=\emptyset}
\Big[
\big|\E F'(x^R)  R_{ij}\overline{R_{ja}R_{bi}}\big|
+\big| \E F'(x^R) R_{ia}R_{bj}\overline{R_{ji}}\big| +(a\leftrightarrow b)\Big],
\end{align}
where we used the trivial bounds on $F'$ and $m_{sc}$
and we agsin used that every estimate is uniform in $y$, the real
part of the spectral parameter, as long as $|y-2|\le N^{-2/3+\e}$.
As before, $\max_y$ in the last line of \eqref{680} indicates
maximum over all $y$ with $|y-2|\le N^{-2/3+\e}$ and the spectral
parameter of all resolvents is $y+i\eta$.

The following lemma shows that
 the expectation of the product of the 
off-diagonal terms in \eqref{680}
is of smaller order than the trivial estimate gives.

\begin{lemma} \label{lem: 52} Under the assumption of Lemma \ref{lemGamma}
and assuming that $a, b, i, j$ are all different,
we have
\be\label{new628}  
|\E F'(x^R)R_{ij}\overline{R_{ja}R_{bi}}(y+i\eta)|\leq N^{-4/3+C\e}
\ee
for any $y$ with $|y-2|\le N^{-2/3+\e}$,
and the same estimate holds for the other three terms in the r.h.s of  \eqref{680}.  
\end{lemma}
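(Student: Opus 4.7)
The naive pointwise estimate from \eqref{basic} gives $|F'(x^R)R_{ij}\ov{R_{ja}R_{bi}}|\le CN^{-1+C\e}$ on the good event $\Omega^c$, falling short of the target $N^{-4/3+C\e}$ by a factor $N^{-1/3}$. My plan is to gain this missing factor by integrating out the randomness in row $b$ of $Q$ through a Stein/cumulant expansion in the matrix elements $\{h_{bk}\}_{k\ne b}$, exploiting the independence of the minor resolvent $R^{(b)}$ from them. Using the Schur formula $R_{bi}=-R_{bb}\sum_{k\ne b}q_{bk}R^{(b)}_{ki}$ together with $q_{ba}=0$, the target quantity becomes
\[
\E F'(x^R)\,R_{ij}\,\ov{R_{ja}R_{bi}}=-\sum_{k\ne a,b}\E\bigl[\,\ov{h_{bk}}\,W_k\,\bigr],\qquad W_k:=F'(x^R)\,R_{ij}\,\ov{R_{ja}R_{bb}R^{(b)}_{ki}},
\]
and for each $k$ one applies the cumulant expansion
\[
\E[\,\ov{h_{bk}}\,W_k\,]=\sigma_{bk}^2\,\E[\partial_{h_{bk}}W_k]+\sum_{r\ge 2}\frac{\kappa_{r+1}(h_{bk})}{r!}\,\E[\partial_{h_{bk}}^{\,r}W_k],
\]
whose convergence and remainder bounds are justified by the subexponential decay \eqref{subexp}.

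For the leading term I would use the identities $\partial_{h_{bk}}R_{xy}=-R_{xb}R_{ky}-R_{xk}R_{by}$ and $\partial_{h_{bk}}R^{(b)}=0$ and compute $\partial_{h_{bk}}W_k$ by the product rule. For a generic index $k\notin\{i,j\}$, differentiating any of the three off-diagonal factors $R_{ij},R_{ja},R^{(b)}_{ki}$ or the diagonal $R_{bb}$ replaces the differentiated factor with a product of two off-diagonal entries of size $N^{-1/3+C\e}$ each, so the undifferentiated factors retain two off-diagonals and gain one extra, giving $|\partial_{h_{bk}}W_k|\le CN^{-4/3+C\e}$. The remaining product-rule term, $F''(x^R)\,\partial_{h_{bk}}x^R$ times the undifferentiated factors of $W_k$, is of the same order, since $|\partial_{h_{bk}}x^R|=O(N^{-1/3+C\e})$ follows from a double application of the Ward identity $\sum_q|R_{pq}|^2=\im R_{pp}/\eta$ to the two nested sums arising from $\partial R_{pq}$ inside the definition of $x^R$. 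Combining $\sigma_{bk}^2\le C/N$ with $\sum_k\sigma_{bk}^2=1$ from Assumption~(A) then yields the claimed bound $N^{-4/3+C\e}$ for the main contribution.

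The remaining issues would be handled as follows. Exceptional indices $k\in\{i,j\}$ produce only $O(1)$ terms in which a new $R$-factor is diagonal rather than off-diagonal; each such term has magnitude $O(N^{-2+C\e})$ and is negligible. Higher-cumulant terms carry $\kappa_{r+1}(h_{bk})=O(\sigma_{bk}^{\,r+1})=O(N^{-(r+1)/2})$, and each additional derivative $\partial_{h_{bk}}$ on $W_k$ adds at most two off-diagonal resolvent entries and hence a factor $N^{-1/3}$ of smallness, so they sum to $O(N^{-13/6+C\e})$ or smaller. The small-probability event $\Omega$ is controlled via \eqref{28} together with the polynomial growth \eqref{gflowder} of $F'$. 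The other three terms on the right-hand side of \eqref{680} follow by the identical argument with $a\leftrightarrow b$ and conjugates adjusted. The main technical obstacle will be the careful bookkeeping of the product-rule expansion and ensuring that the higher-cumulant remainders stay strictly below the $N^{-4/3}$ threshold uniformly in $k$, especially controlling the macroscopic sums appearing in $\partial_{h_{bk}}x^R$ via the Ward identity.
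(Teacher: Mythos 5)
Your proposal is correct in spirit and takes a genuinely different route from the paper. The paper first replaces $R$ by $S$ (the resolvent of $H_{\gamma-1}$, avoiding the zero entries of $Q$), then passes to $a$-minor quantities ($\tilde x^S$, $S^{(a)}_{ij}$, $S^{(a)}_{bi}$), and Schur-expands only $S_{ja}$ (via the $(j,a)$-minor) so that the remaining factor $Z_{ja}^{(S)}=\sum_{s,t}h_{js}S^{(ja)}_{st}h_{ta}-h_{ja}$ is \emph{linear} in the $a$-column entries while everything else is manifestly independent of column $a$; the leading term then vanishes \emph{exactly} by $\E h_{\cdot a}=0$, with no estimate needed. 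You instead open up $\overline{R_{bi}}$ via the first-order $b$-minor Schur formula and apply a cumulant (Stein) expansion in the $b$-row variables $h_{bk}$; the gain comes not from an exact cancellation but from the fact that $\partial_{h_{bk}}W_k$ contains one additional off-diagonal resolvent entry (or, for the $F''$ term, one factor $\partial_{h_{bk}}x^R\sim N^{-1/3}$, correctly bounded via two applications of the Ward identity to $(R\bar RR)_{kb}$), after which $\sum_k\sigma_{bk}^2\le1$ closes the argument. Both strategies exploit independence of the untouched indices (you use row $b$, the paper uses column $a$), but the paper arranges for the first moment to kill the leading term outright, whereas you pay for the more systematic setup with the extra bookkeeping required to estimate the leading cumulant and its remainders. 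Your route is workable, but in a careful write-up you would need to (i) use the correct complex Wirtinger convention $\partial_{h_{bk}}R_{xy}=-R_{xb}R_{ky}$ and $\partial_{h_{bk}}\overline{R_{xy}}=-\overline{R_{xk}R_{by}}$ rather than the real-symmetric formula you wrote, (ii) include the companion cumulant term $\E[h_{bk}^2]\,\E[\partial_{\overline{h_{bk}}}W_k]$ (the paper does not assume $\E h_{bk}^2=0$), which has the same size by symmetry, and (iii) verify that higher derivatives $\partial_{h_{bk}}^r x^R\lesssim N^{-r/3+C\e}$ — each such derivative factorizes through a resolvent entry like $R_{kb}$ or $(R\bar R)_{kk}$, so the gain does iterate — to ensure the $r\ge2$ remainders genuinely stay below $N^{-4/3}$.
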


If this lemma holds, then we have thus proved  in the case $a \not = b$ that
\be\label{FF1232}
\left|\E  [ F(x^S)- F(x^R) ] 
- B  \right|\leq N^{-13/6+C\e}
\ee
where $B$ is defined in \eqref{Adef}. With the definitions of $x$'s in \eqref{defxs}, 
this completes the proof of Lemma \ref{lemGamma} for the remaining $a\ne b$ case. \qed

\bigskip

{\it Proof of Lemma \ref{lem: 52}. } With the relation between $R$ and $S$ 
in \eqref{relRS} and \eqref{51},
one can see that \eqref{new628} is implied by 
\be\label{new629}
|\E F'(x^S)S_{ij}\overline{S_{ja}S_{bi}}|\leq N^{-4/3+C\e},\,\,\,\,\,\,\,\,\,\,\,\,\,
\ee 
under the assumption that $a,b,i,j$ are all different.
This replacement is only a technical convenience
when we apply the large deviation
estimate (Lemma \ref{generalHWT}) below. 
Lemma \ref{generalHWT} was formulated
with random variables of equal variance, while
the matrix elements of $Q$ cannot all be normalized to
have the same variance since two matrix elements are zero.
The contribution of these two elements is negligible anyway, 
but the presentation of the argument is simpler if we do not
have to carry them separately in the notation. 
Since $S$ is the Green function of a usual generalized Wigner matrix
with all variances being positive, it is easier to deal with
\eqref{new629} instead of \eqref{new628}.

From  the identity \eqref{GijGkij} applied to the Green function $S$,
 we have 
for any different $i$, $j$ and $a$ 
\be\label{GiiGjiif}
 |S_{ij}-S^{(a)}_{ij}|=\Big | 
S_{ia}S_{aj}(S_{aa})^{-1}\Big | \le C (N \eta)^{-2} \le CN^{-2/3+C\e}
 \qquad \mbox{in $\Omega^c$}.
\ee
{F}rom \eqref{basic} we have  
\be\label{sSijl}
|S_{ij}|\leq N^{-1/3+C\e}, \qquad i\ne j,\quad \mbox{in $\Omega^c$}.
\ee 
Combining \eqref{GiiGjiif} and \eqref{sSijl}, we have 
\be\label{c5}
|x^{S} - \wt x^S| \le N^{-1/3 + C \e}, \qquad 
\ee 
where $\wt x^S$ is defined using the resolvent of the matrix $H_{\gamma-1}^{(a)}$
exactly as $x^S$ was defined using the resolvent
$S$ of matrix $H_{\gamma - 1}$. As usual, $H_{\gamma-1}^{(a)}$
denotes the matrix $H_{\gamma-1}$ 
with $a$-th  row and column  removed.
Similarly, 
we have 
\be\label{Sbi}
\left|S_{ij}\overline{S_{ja}S_{bi}}-S_{ij}^{(a)}\overline{S_{ja}S_{bi}^{(a)}}\right|
\leq N^{-4/3+C\e}, \,\,\,\,\,{\rm in }\,\,\,\Omega^c.
\ee 
Hence by these inequalities  and the bounds on the derivatives of $F$, we have 
\be\label{632}
|\E F'(x^S)S_{ij}\overline{S_{ja}S_{bi}}|\leq \left|\E [F'(\wt x^S)] 
S_{ij}^{(a)}\overline{S_{ja}S_{bi}^{(a)}}\right|+O\left(N^{-4/3+C\e}\right).
\ee
Applying the identity \eqref{GijHij} to $S_{ja}^{}$, we have 
\be\label{shosja}
S_{ja}=S_{jj}S_{aa}^{(j)}Z_{ja}^{(S)}, \quad \mbox{with}\qquad
Z_{ja}^{(S)}:=\sum_{s\,t\notin\{a,j\}}h_{js}S^{(ja)}_{st}h_{ta}-h_{ja},
\ee
where $h_{\al\beta}=\left(H_{\gamma-1}\right)_{\al\beta}$. 
With the bound on
 the matrix elements of $S$ in \eqref{basic} and the identity \eqref{GijGkij}, 
in the set $\Omega^c$ we have
\be\label{SSS}
S_{jj}=m_{sc}+O(N^{-1/3+C\e}),
\qquad S_{aa}^{(j)}=m_{sc}+O(N^{-1/3+C\e}),\qquad  S_{ss}^{(ja)}=m_{sc}+O(N^{-1/3+C\e}).
\ee
Setting
$$
   \Omega_Z:= \Big\{ |Z_{ja}^{(S)}| \ge N^{-1/3+C\e}\Big\}
$$
with a sufficiently large constant $C$, 
Lemma \ref{generalHWT} implies that
$$
   \P (\Omega\cup\Omega_Z)\le  C\exp{\big[-c(\log N)^\xi\big]},
$$
for any fixed $\xi>0$
since on the set $\Omega^c$ we have
$$
   \sum_{s,t\not\in \{ a, j\} } \sigma_{js}^2\sigma^2_{ta}|S^{(ja)}_{st}|^2
   \le \frac{C_0^2}{N^2\eta}\sum_{s \neq a, j} \im S^{(ja)}_{ss} \le N^{-2/3+C\e} 
$$
using the last formula in \eqref{SSS}.
Therefore, with   \eqref{shosja}, in $\Omega^c\cap \Omega_Z^c$ we have 
 \be\label{SJA}
S_{ja}=m_{sc}^2 Z_{ja}^{(S)}
+O(N^{-2/3+C\e}).
\ee

Combining \eqref{SJA} with \eqref{632}, we see that 
\begin{align}\label{250}
|\E F'(x^S)S_{ij}\overline{S_{ja}S_{bi}}|
\leq &
|m_{sc}^2|\left|\E [F'( \wt x^S)] S_{ij}^{(a)}S_{bi}^{(a)}
\left(\sum_{s\,t\notin\{a,j\}}h_{js}S^{(ja)}_{st}h_{ta}-h_{ja}\right)\right|
+O\left(N^{-4/3+C\e}\right).
\end{align}
Since $\wt x^S$, $S_{ij}^{(a)}S_{bi}^{(a)}$, $h_{js}$ and $S^{(ja)}_{st}$
are all
independent of the $a-$th row and column of $H_{\gamma-1}$, 
 and the expectations
 of $h_{ta}$ and $h_{ja}$ are zero, the first term in r.h.s. of \eqref{250} equals 
to zero. This implies  \eqref{new629} and completes the proof of  \eqref{new628}. 
The other terms in \eqref{680} can be bounded similarly. 
This completes the proof of Lemma \ref{lem: 52}.
\qed

\section{Proof of Lemma \ref{motN}}\label{sec:Z}

\subsection{Setup and notations}

The $p$-th moment  of $\sum_{i=1}^NZ_i$ is given by  
 \be\label{21p}
 \frac{1}{N^p} \E  {\bf 1} (\Gamma^c) \left|\sum_{q=1}^N Z_q \right|^p
= \frac{1}{N^p} \E\sum_\# \sum_{q_1=1}^N \ldots\sum_{q_\al=1}^N \ldots\sum_{q_p = 1}^N
  {\bf 1} (\Gamma^c)
Z_{q_1}^\#  \ldots Z_{q_p}^\# ,
 \ee 
 where the various $\#$'s can be either $0$ or the complex conjugate.
The precise choice of $\#$ will be irrelevant for our argument
and the summation over them yields an irrelevant overall factor $2^p$.

We write up the definition of $Z_{q_\al}$  from \eqref{defZi}
as follows:
\be\label{34}
    Z_{q_\al} = \sum_{q_\al^2, q_\al^3=1}^N   G^{(q_\alpha)}_{q_\alpha^2 \,q_\alpha^3}
     \big[ h_{q_\al, q^2_\al}
  h_{q_\alpha^3, q_\alpha} -\delta_{q^3_\alpha,  q^2_\alpha}
  \sigma^2_{q^2_\alpha, q_\alpha}\big],
\ee
where the summation is over all $q_\al^2\ne q_\al$ and $q_\al^3\ne q_\al$.
To bookkeep the indices in a uniform way, we denote $q_\al$ by $q_\al^1$
and we organize the three indices $(q^1_\al, q^2_\al, q^3_\al)$
into a vector  $\fq_\alpha$ for each $\al =1,2,\ldots, p$.

 Furthermore, we organize these
$p$ vectors into a $3\times p$ matrix 
$\bq = (q_\alpha^j)$, for $\alpha =1, \ldots, p$ 
 and $j = 1, 2, 3$, with
entries taking values in $ \N_N:=\{ 1,2,\ldots, N\}$.
 The {\it slots}
 of the matrix $\bq$, parametrized by $(j,\al)$, $\al=1,2, \ldots ,p$,
 $j=1,2,3$,
are called {\it vertices}, since we will build a graph upon them. 
The element $q_\al^j$ will be called the {\it index} assigned to
the vertex $(j, \al)$. 
The first entry $q_\al^1$ in $\fq_\al$  will play a special role, it will
be called {\it location index}, the other two indices, $q_\al^2$, $q_\al^3$
will be called {\it nonlocation indices}. Similarly,
$(1, \al)$ will be called {\it location vertex} and 
$(2, \al)$, $(3,\al)$ will be called {\it nonlocation vertices}.
A pair of indices is called {\it label}.
We also define the set of labels in $\fq_\al$ that contain $q^1_\al$:
$$
Q_\al : = \{ (q^1_\al, q^2_\al), (q^1_\al, q^3_\al),
 (q^2_\al, q^1_\al) ,(q^3_\al, q^1_\al) \}, \qquad \al=1,2 \ldots p,
$$
and sometimes we will use a single letter $\nu$ or $\mu$ for labels, i.e.
for elements of $\bigcup_{\al=1}^p Q_\al$. Note that $ Q_\al$ contains
any label $\nu$ together with its {\it transpose} $\nu^t$,
where $\nu^t:= (p,q)$ if $\nu =(q,p)$. Carrying $\nu$ together with 
its transpose is necessary since $h_\nu = \bar h_{\nu^t}$, i.e.
matrix elements with labels $\nu$ and $\nu^t$ are not independent.

With these notations, we have 
 \be\label{212}
 \frac{1}{N^p} \E  {\bf 1} (\Gamma^c) \left|\sum_{i=1}^N Z_i \right|^p
= \frac{1}{N^p}  \sum_{\bq}   \Phi_{\bq},
 \ee 
where we defined  
\be\label{PHI}
 \Phi_{\bq} :=   \E   {\bf 1} (\Gamma^c)
      \prod_{\alpha=1}^{p} \Big[ G^{(q_\alpha^1)}_{q_\alpha^2 \,q_\alpha^3}
 \xi (\fq_\alpha)\Big]^\#,   
      \quad \xi (\fq_\al):= h_{q^1_\al, q^2_\al}
  h_{q_\alpha^3, q_\alpha^1} -\delta_{q^3_\alpha,  q^2_\alpha}
  \sigma^2_{q^2_\alpha, q^1_\alpha}.
\ee
The summation in \eqref{212} runs over all $3\times p$ matrices ${\bf q}$ 
with elements from $\N_N$ and with the restriction that 
\be
q_\al^2\ne q_\al^1, \quad \mbox{and} \quad
q_\al^3\ne q_\al^1.
\label{rest}
\ee
Let 
\be
Q_\bq=Q:= \bigcup_{\al=1}^p Q_\al 
\label{def:A}
\ee 
denote the set of all possible labels of $h$-variables appearing
in the $\xi(\fq_\al)$ factors and notice that its cardinality
is bounded by $|Q|\le 4p$. 

 We would like to compute the expectation in \eqref{PHI} by
first taking the expectation with respect to the $h_\nu$-variables explicitly
appearing in the $\xi$'s. 
Recall $G^{(q)}=(H^{(q)}-z)^{-1}$ is the Green function of $H^{(q)}$ 
which is an $(N-1) \times (N-1)$ matrix after removing the
 $q$-th row and column from $H$. Thus  $G^{(q_\alpha^1)}$
is independent of the random variables  $h_\nu$, $\nu\in Q_\al$,
i.e. those $h$-variables that  explicitly appear in $\xi(\fq_\al)$.
 There are, however, three complications. 
First, while each Green function $G^{(q_\al^1)}$, $\al =1,2,\ldots , p$,
is independent of  $h_\nu$, $\nu\in Q_\al$, 
by definition,  it still  depends on the
other $h_\mu$-variables, $\mu\in Q_\beta$, $\beta\ne \al$.
Second, we have
to deal with coincidences; the same $h$-variable may appear in
$\xi(\fq_\al)$ and $\xi(\fq_\beta)$ with $\al\ne \beta$; in fact
these terms give the non-zero contributions. We will develop a 
graphical scheme to bookkeep the structure of coincidences
and estimate the number of off-diagonal resolvent elements.
Finally, there is a small technical problem related to
the factor ${\bf 1}(\Gamma^c)$ that depends on all 
$h$-variables, but this factor equals one with a very high probability
so a fairly easy argument can remove it.

To resolve the first problem, we  use the resolvent expansion to express explicitly the
dependence of $G^{(q_\alpha^1)}$ on the random variables 
$h_\nu$ with label $\nu\in Q_\beta$, $\beta\ne\al$.
For $\bq $ fixed, let $U^{\langle \alpha \rangle }= 
U^{\langle \alpha \rangle }_{\bq}$ be the matrix
\be
(U^{\langle \alpha \rangle })_{ i,k}: = (H^{(q^1_\alpha)})_{ i, k }, 
\quad
\mbox{for}\;\;  (i, k) \in   Q^{(\alpha)}_\bq : = Q^{(\al)}=
 \bigcup_{\beta \in \{1, \ldots, p\}, \beta   \not = \alpha}  Q_\beta, 
\ee
and $(U^{\langle \alpha \rangle })_{ i,k}:=0$ otherwise. 
Note that the number of nonzero matrix elements of $U^{\langle \alpha \rangle }$ 
is bounded by $|Q|\le 4p$.
Define 
\[
H^\sa=H^\sa_\bq:= H^{(q_\alpha^1)} - U^{\langle \alpha \rangle }, \qquad
G^{\sa}_\bq=G^{\sa }:= (H^{(q_\alpha^1)} - U^{\langle \alpha \rangle } - z)^{-1} .
\]
Notice that $G^{\sa}_\bq$ is independent of all the $h$-factors that explicitly
appear in $\prod_\al \xi(\fq_\al)$.
{F}rom the resolvent expansion, we have 
\be
G^{(q_\alpha^1)} = \sum_{n_\alpha=0}^\infty
  (-G^{\sa } U^{\langle \alpha \rangle } )^{n_\alpha}G^{\sa }. 
\label{exp}
\ee

To estimate the size of these Green functions, we first note
that there is a positive universal constant $c$ such that
  on the set $\Gamma^c$ we have
\be
  \max_{i\ne j} | G_{ij}| = \Lambda_o\le \frac{1}{(\log N)^2}, \qquad 
  c\le  |G_{ii}| \le 1 + \frac{1}{(\log N)^2}.
\label{Gb}
\ee
This follows from the fact that  $c'\le |m_{sc}(z)|\le 1$ 
with some positive universal $c'>0$ and
 for any $z\in \bS_\ell$, see \eqref{smallz}.
By the perturbation formulas \eqref{GiiGjii} and \eqref{GijGkij} 
we have $G_{ij}^{(k)} = G_{ij}  - G_{ik}G_{kj}/ G_{kk}$ for $i,j \ne k$, thus we
also have 
\be
   \max_{i\ne j}| G_{ij}^{(k)}|\le 2\Lambda_o\le \frac{C}{(\log N)^2}, \qquad 
  c' \le  |G_{ii}^{(k)}| \le 1 + \frac{C}{(\log N)^2},
\label{Gkb}
\ee
where $i,j \ne k$. 
In the good set $\Gamma^c$,  the matrix 
elements of  $ U^{\langle \alpha \rangle }$  satisfy
\be
    | U^{\langle \alpha \rangle }_{ij}|\le \frac{(\log N)^{L/10}}{\sqrt{N}} \le N^{-1/4}
\label{Ub}
\ee
(here we used that $L\le  \log N/\log\log N$),
and $G^\sa$ is bounded as
\be
    \max_{i\ne j} |G^\sa_{ij}|\le 2\Lambda_o\le \frac{C}{(\log N)^2},
 \quad   |G_{ii}^\sa| \le 1 + \frac{C}{(\log N)^2} \qquad \mbox{in $\Gamma^c$}.
\label{gsa}
\ee
   To see \eqref{gsa}, we expand
$$
   G^\sa = \sum_{m=0}^\infty (G^{(q_\al^1)}  U^{\langle \alpha \rangle } )^{m}G^{(q_\al^1) }
  \qquad \mbox{in $\Gamma^c$},
$$
and use \eqref{Ub} and  the bounds \eqref{Gkb} on
 the matrix elements of $G^{(q)}$, $q\in \N_N$.

Using \eqref{Ub} and \eqref{gsa} and recalling that only 
finitely many matrix elements of $U$ are non-zero, we easily see that
the expansion \eqref{exp} is convergent and it can be truncated
 at finite $n_\alpha$ so
 that the error term can be estimated. Thus  there will be no convergence problem
and  we will focus on getting  estimates.

We set 
$$
  {\bf n} := (n_1, n_2, \ldots , n_p), \qquad |{\bf n} | = \sum_{\alpha=1}^p  n_\alpha.
$$
With this expansion, we can write \eqref{PHI} as
\begin{align}\label{32}
 \Phi_{\bq } & = \sum_{n=0}^\infty \sum_{ |{\bf n} | =n}  \Phi_{\bq}^{\bf n} \quad
\non\\
 \Phi_{\bq}^{\bf n} & : =  \E {\bf 1} (\Gamma^c)   \prod_{\alpha=1}^{p} 
 \Big[ \cM^{(n_\alpha)}  \xi (\fq_\al)\Big]^\#
 ,  \\
  \cM^{(n_\alpha)}= \cM_{\bq }^{(n_\alpha)} & :=   
 \big [  (-G^{\sa } U^{\langle \alpha \rangle } )^{n_\alpha}G^{\sa }  
  \big ]_{q^2_\alpha, q^3_\alpha}   \\
  & =  \sum_{ \nu^\alpha_1, \nu^\al_2, \ldots , \nu^\al_{n_\al} \in  Q^{(\alpha)}}
  V_{\bq}( \underline{\mu}^\alpha, \underline{\nu}^\alpha, n^\al )\label{summ}
\end{align}
with $\unu^\al: = ( \nu^\alpha_1, \nu^\al_2, \ldots , \nu^\al_{n_\al})$ and 
we have expanded  $U^{\langle \alpha \rangle }$ appearing in   
$\big [   (-G^{\sa } U^{\langle \alpha \rangle } )^{n_\alpha}G^{\sa } \big ]_{q^2_\alpha, q^3_\alpha}$
 and used the notation 
\be\label{def:V}
  V_{\bq}( \underline{\mu}^\alpha, \underline{\nu}^\alpha, n^\al): =  (-1)^{n_\al}
 G^{\sa }_{\mu^\alpha_1} h_{\nu^\alpha_1}
 G^{\sa }_{\mu^\alpha_2}h_{\nu^\alpha_2} \ldots 
 h_{\nu^\alpha_{n_\al}} G^{\sa }_{\mu^\alpha_{n_\alpha+1}}. 
\ee
The summation in \eqref{summ} is over all possible $\nu$-labels 
of the $h$ factors 
in \eqref{def:V}. The appearance of the $\mu^\al$-labels in
 \eqref{def:V} is just
 notational simplification,
they are explicit functions of $\underline{\nu}^\al$ and $\fq_\al$ as follows:
\be\label{121}
    \mu^\alpha_1 = (q^2_\alpha,  [\nu^\al_1]_1 ), \;\; 
  \mu^\al_2 = ( [\nu^\al_1]_2 , [\nu^\al_2]_1), \;\;
 \mu^\al_3 = ( [\nu^\al_2]_2 , [\nu^\al_3]_1), \;\; \ldots
  \mu^\al_{n_\al+1} = ( [\nu^\al_{n_\al}]_2 , q^3_\alpha ),
\ee
where $[\nu^\al_j]_1$ and  $[\nu^\al_j]_2$ denotes the first and second
element of the label $\nu^\al_j$.
Notice that  $G^{\sa } $ is independent of all matrix 
elements $h_{jk}$ explicitly appearing in the $\xi$-factors in \eqref{32}.

Hence
\begin{align}\label{213}
 \Phi_{\bq}^{\bf n}  :=    \E {\bf 1} (\Gamma^c)   \sum_{ \bnu}  
  \prod_{\alpha=1}^{p} \big[  V_\bq( \umu^\alpha, \unu^\alpha, n^\al)   \xi (\fq_\al)\big]^\#,
\end{align}
where the summation is over all $p$-tuple of label sequences 
 $\bnu= (\underline{\nu}^1, \underline{\nu}^2, \ldots , \underline{\nu}^p)
  \in A(\bq, \bn):=\prod_{\al=1}^p \big[Q^{(\al)}\big]^{n_\al}$.
The number of different $\bnu$'s is bounded by $|A(\bq,\bn)|\le(4p)^n$.

\subsection{Strategy of the proof presented in the simplest example}

In order to motivate the reader
before we start the detailed estimates, 
we show our strategy
via the simplest case $p=2$,
$$
    \frac{1}{N^2} \E {\bf 1}(\Gamma^c) \Big|\sum_{i=1}^N Z_i\Big|^2
   =   \frac{1}{N^2} \E {\bf 1}(\Gamma^c)\sum_{i, j=1}^N 
  Z_i \overline{Z_j}.
$$
We write out
$$
  Z_i = \sum_{k,l\ne i} G_{kl}^{(i)} \big[ h_{ik}h_{li} - \delta_{kl}\sigma_{ik}^2\big],
  \qquad 
 Z_j = \sum_{m,n\ne j} G_{mn}^{(j)} \big[ h_{jm}h_{nj} - \delta_{mn}\sigma_{jm}^2\big].
$$
thus we have
\be\label{summand}
 \frac{1}{N^2} \E {\bf 1}(\Gamma^c) \Big|\sum_{i=1}^N Z_i\Big|^2
 = \frac{1}{N^2} \E {\bf 1}(\Gamma^c)
\sum_{ijklmn}G_{kl}^{(i)} \big[ h_{ik}h_{li} - \delta_{kl}\sigma_{ik}^2\big]
  \ov{G_{mn}^{(j)} \big[ h_{jm}h_{nj} - \delta_{mn}\sigma_{jm}^2\big]}.
\ee
With the general notation $\alpha=1,2$ and
the six indices in the summation are organized into a
 3x2 matrix with columns $\fq_1 = (q_1^1, q_1^2, q_1^3)$
and $\fq_2= (q_2^1, q_2^2, q_2^3)$, i.e.
$$
   \bq =  \left(\begin{array}{cc} q_1^1 & q_2^1 \cr
q_1^2 & q_2^2 \cr
q_1^3 & q_2^3
\end{array}\right)=
\left(\begin{array}{cc} i & j \cr
k & m \cr
l & n
\end{array}\right).
$$
The only restriction for these indices is that the top
element of each column is distinct from the other two below.
The sets $Q_1 =\{ (i,k), (k,i), (i,l), (l,i)\} $ and $Q_2
= \{ (j,m), (m,j), (j,n), (n,j)\}$ contain the labels
of the $h$ factors that explicitly appear in $Z_i$ and $Z_j$, respectively.

Now we expand $G^{(i)}= G^{(q_1^1)}$ in the variables $h_\nu$ labelled by $\nu \in Q_2$.
We thus decompose the minor $H^{(q_1^1)} = H^{[1]}+ U^{\langle 1\rangle}$, where
 the  matrix $U^{\langle 1\rangle}$ contains
 only four non-zero entries  $h_{jm}$, $h_{mj}$, $h_{jn}$ and $h_{nj}$
with labels from $Q_2$,
and $H^{[1]}$ contains all other entries of $H^{(q_1^1)}$. 
The resolvent $G^{[1]} = (H^{[1]}-z)^{-1}$ is now independent of
all expansion variables $h_\nu$ with $\nu\in Q=Q_1\cup Q_2$.
Note, however, that this decomposition depends on $\bq$, i.e.
it will be different for each summand in \eqref{summand}.
Since $U^{\langle 1\rangle}$ is small, we can expand
$$
   G^{(i)}= G^{(q_1^1)} = G^{[1]}  - G^{[1]} U^{\langle 1\rangle} G^{[1]} + 
 G^{[1]} U^{\langle 1\rangle} G^{[1]} U^{\langle 1\rangle} G^{[1]}
\ldots,
$$
and a similar expansion holds for $G^{(j)}= G^{(q_2^1)}$.

We insert these expansions into \eqref{summand}
and organize the terms according to their number of the
explicit $h$ factors. Effectively, each $h$ factor has
a size $N^{-1/2}$ (neglecting logarithmic corrections). The 
centered random variable $\xi(\fq)= h_{q^1, q^2}h_{q^3, q^1}-\delta_{q^2,q^3}\sigma^2_{q^1,q^2}$ has size $N^{-1}$
and the subtracted expectation $\delta_{q^2,q^3}\sigma^2_{q^1,q^2}$ 
is treated on the same footing as $hh$ for the purpose of power counting.

Typically we need to show that terms with less than
 eight $h$ factors have zero expectation
to compensate for the sixfold summation of order $N^6$
with the prefactor $N^{-2}$  in \eqref{summand}.
Depending on certain coincidences among the 
summation indices, sometimes terms with less than eight $h$ factors
already give non-zero contribution, but then the combinatorial
factor from the summation is smaller.
Furthermore, we want to bookkeep the number of off-diagonal
matrix elements since the final estimate is in
terms of a power of $\Lambda_o$.

The leading term in \eqref{summand},
\be\label{leading}
   G^{[1]}_{kl} \big[ h_{ik}h_{li} - \delta_{kl}\sigma_{ik}^2\big]
  \ov{G_{mn}^{[2]} \big[ h_{jm}h_{nj} - \delta_{mn}\sigma_{jm}^2\big]},
\ee
has four $h$ factors  but its  expectation
vanishes unless at least two summation indices in \eqref{summand}
coincide, so the sixfold summation
is effectively only fourfold. Here the key observation is that if at least one
$h$ factor  appears linearly in the expansion, then 
the expectation is zero.
However, since 
the quadratic factor $\xi(\fq_1)=\big[ h_{ik}h_{li} - \delta_{kl}\sigma_{ik}^2\big]$
has zero expectation, it is not sufficient to
set $k=l$ and $m=n$ to get a non-zero contribution;
there must be coincidences between the $h$ factors
in $\big[ h_{ik}h_{li} - \delta_{kl}\sigma_{ik}^2\big]$
and in $\big[ h_{jm}h_{nj} - \delta_{mn}\sigma_{jm}^2\big]$.
For example the case $i=j$, $k=m$, $l=n$ yields a nonzero
contribution, i.e. the summation is only threefold.
Moreover, if both resolvent elements in \eqref{leading}
are off-diagonal, then
we get an estimate of order $N^{-2}N^3 (N^{-1/2})^4\Lambda_o^2
=\Lambda_o^2N^{-1} $.
If one of the resolvent elements is diagonal,
say $k=l$, then the other one has to be diagonal
as well, $m=n$, otherwise the expectation is zero. 
This forces one more coincidence, i.e.
either $i=j$ and $k=l=m=n$ or $i=m=n$, $j=k=l$.
In both cases the summation in \eqref{summand} gives only 
$N^2$ and the total estimate is of order $N^{-2}$.

The next order terms in the expansion are of the form
 $$
 \Big( G^{[1]} U^{\langle 1\rangle}G^{[1]}\Big)_{kl} \big[ h_{ik}h_{li} - \delta_{kl}\sigma_{ik}^2\big]
 \ov{ G_{mn}^{[2]} \big[ h_{jm}h_{nj} - \delta_{mn}\sigma_{jm}^2\big]}
$$
$$
\qquad\qquad = \sum_{a,b\in Q_2} G^{[1]}_{ka} U^{\langle 1\rangle}_{ab}G^{[1]}_{bl} 
\big[ h_{ik}h_{li} - \delta_{kl}\sigma_{ik}^2\big]
\ov{  G_{mn}^{[2]} \big[ h_{jm}h_{nj} - \delta_{mn}\sigma_{jm}^2\big]}
$$
with five $h$ factors. Notice that two new summation indices, $a, b$,
have appeared, but their combinatorics is of order one and not of order $N^2$.
In fact, $ U^{\langle 1\rangle}_{ab}$ is just one of $h_{jm}$, $h_{nj}$ 
or their transposes.
Again, there should be
at least three coincidences  among the indices $i,j,k,l,m,n$
 to avoid that at least one $h$ variable appears linearly
or that at least one of the quadratic factors $\xi(\fq_1)$,
$\xi(\fq_2)$ remains isolated
leading to zero expectation.
It is again easy to see that we collect at least $\Lambda_o^2$
(in fact, typically $\Lambda_o^3$) unless at least one additional index coincides.

The terms with six $h$ factors are either of the form
 $$
 \Big( G^{[1]} U^{\langle 1\rangle}G^{[1]}\Big)_{kl} \big[ h_{ik}h_{li} - \delta_{kl}\sigma_{ik}^2\big]
 \ov{\Big(
 G^{[2]}U^{\langle 2\rangle}G^{[2]}\Big)_{mn}  \big[ h_{jm}h_{nj} - \delta_{mn}\sigma_{jm}^2\big]}
$$
or of the form
$$
 \Big( G^{[1]} U^{\langle 1\rangle}G^{[1]} U^{\langle 1\rangle}G^{[1]} \Big)_{kl} 
\big[ h_{ik}h_{li} - \delta_{kl}\sigma_{ik}^2\big]
 \ov{ G_{mn}^{[2]} \big[ h_{jm}h_{nj} - \delta_{mn}\sigma_{jm}^2\big]}
$$
In both cases at least two $h$ factor appears linearly, yielding
zero expectation, unless there are two coincidences
 among $i,j,k,l,m,n$. Thus the summation in \eqref{summand}
is effectively reduced from  $N^6$ to $N^4$.
 Since $h^6\sim (N^{-1/2})^6=N^{-3}$, we obtain that \eqref{summand} is
of order $N^{-1}$.
Moreover, in all cases there are at least two offdiagonal resolvent elements,
unless an additional coincidence occurs. Thus the  estimate is $N^{-1}(\Lambda^2_o+N^{-1})$.
The seventh order terms can be dealt with similarly.

 The
lowest order non-zero terms with distinct $i,j,k,l,m,n$ 
indices have eight $h$ factors and they are of the form 
$$
 \Big( G^{[1]} U^{\langle 1\rangle}G^{[1]} U^{\langle 1\rangle}G^{[1]} \Big)_{kl} 
\big[ h_{ik}h_{li} - \delta_{kl}\sigma_{ik}^2\big]
 \ov{ \Big( G^{[2]} U^{\langle 2\rangle}G^{[2]}U^{\langle 2\rangle}G^{[2]}
\Big)_{mn} \big[ h_{jm}h_{nj} - \delta_{mn}\sigma_{jm}^2\big]}.
$$
We now have four $U$-factors, so they can 
ensure that all variables  $h_{ik}$, $h_{li}$, $h_{jm}$, $h_{nj}$
appear quadratically
to prevent zero expectation. For example, the term
$$
 G^{[1]}_{km}h_{mj}G^{[1]}_{jj} h_{jn}G^{[1]}_{nl}
\big[ h_{ik}h_{li} - \delta_{kl}\sigma_{ik}^2\big]
 \;\ov{  G^{[2]}_{mk}h_{ki} G^{[2]}_{ii} h_{il}G^{[2]}_{ln}
 \big[ h_{jm}h_{nj} - \delta_{mn}\sigma_{jm}^2\big]}.
$$
has non-zero expectation. Moreover, there are  four resolvents
in offdiagonal form, unless
there is an index coincidence, so the size of this term
is  $N^{-2} N^{6} (N^{-1/2})^8 \Lambda_o^4=\Lambda_o^4$.

\bigskip

The mechanism to estimate the term \eqref{213} for  general $p$
 is the same, but the
bookkeeping is more tedious. We will have to estimate
the size of each non-vanishing term  as powers of $N$ and $\Lambda_o^2$.

The power counting in $N$ is relatively straightforward. It is easy to see 
that if all indices in the matrix $\bq$ are distinct, then
at least $2p$ new $h$ factors must come from the $V_\bq$ factors
to ensure that none of the $h$ factors in $\prod_\al \xi(\fq_\al)$
appears linearly (otherwise the expectation would be zero).
 Thus the total number of $h$ factors is at least $4p$
and their size is estimated by $(N^{-1/2})^{4p} = N^{-2p}$. 
Together with the $N^{-p}$ prefactor in \eqref{212}, this will
compensate for the $N^{3p}$ combinatorial factor coming
from the summation over all $3\times q$ matrices. 
If  some indices in $\bq$ coincided, then the corresponding
$h$ factors could appear with a higher multiplicity 
in $\prod_\al \xi(\fq_\al)$, so their expectation would not necessarily
vanish even without an additional $h$ factor from $V_\bq$.
Each coincidence in $\bq$  reduces the number of necessary $h$ factors from $V_\bq$ 
at most by two, hence keeping the overall balance of $N$-powers.

The power counting in $\Lambda_o$ is more complicated
and it is related to the fact that the expectation
of each $\xi$ is zero. This means that an index coincidence of the form
 $q_\al^2=q_\al^3$ does not imply non-vanishing expectation yet.
The requirement of nonzero expection either forces coincidences
of indices among $h$ factors in {\it different}
$\xi$ terms, but then typically {\it two} indices have to
match, so we gain an additional $N^{-1}$;
 or it forces matching $h$ factors in the $\xi$-terms with
$U$-factors in the expansion \eqref{exp}. The latter implies, however,
that instead of a single resolvent $G^{[\al]}$ we consider
a longer expansion of the form $G^{[\al]} U^{\langle\al\rangle}
 G^{[\al]} \ldots $ which typically has at least {\it two}
off-diagonal resolvents instead of only one. These two scenarios yield an additional factor
$(\Lambda_o^2 +N^{-1})$ for each $\xi$-factor.
This gives $(\Lambda_o^2 +N^{-1})^p$ as a final estimate.

In the next section we give the precise details of this strategy.

\subsection{Detailed proof of Lemma~\ref{motN}.}

The proof will be divided into three parts.
The first part is  a technical preparation to deal with the 
very small probability event represented by the
set $\Gamma$, where either $h$ or a resolvent
is too large. It can be skipped at the first reading.
In the second part we organize the expansion
by encoding the coincidence structure of various
terms by a graph. Finally, in the third part
we estimate the size of each term
with the help of the graphical representation.

\subsubsection{Cutoff of small probability events}

 Since 
$|h_{ij}|\le (\log N)^{L/10} N^{-1/2}$ in the set $\Gamma^c$
and  \eqref{gsa} also holds in $\Gamma^c$,
we clearly have 
\be\label{omxi}
\big|{\bf 1} (\Gamma^c)    \xi (\fq_\al)  
 V_\bq( \umu^\alpha, \unu^\alpha, n^\al) \big|\le 
C \bigg [ \frac {C (\log N)^{L/10}} { \sqrt N } \bigg]^{n_\alpha} \frac{(\log N)^{L/10} }{ N}. 
\ee
Hence we have 
\begin{align}
 |\Phi_{\bq}^{\bf n}| \le    
 (Cp)^n  \bigg [ \frac {(\log N)^{L/10} } { \sqrt N } \bigg]^{n}
\Big( \frac{(\log N)^{L/10} }{ N}\Big)^p,
\end{align}
where $(Cp)^n$ is the combinatorics of the summation 
over $\bnu$ in \eqref{213}. Thus we have 
\be\label{omm}
\frac{1}{N^p}  \sum_{\bq}   \Phi_{\bq}
= \frac{1}{N^p}  \sum_{\bq}    \sum_{n=0}^\infty \sum_{ |{\bf n} | =n}  \Phi_{\bq}^{\bf n} 
\le (\log N)^{pL/10} N^{p} \sum_{n=0}^\infty (Cp)^n\sum_{ |{\bf n} | =n}  
 \bigg [ \frac { (\log N)^{L/10} } { \sqrt N } \bigg]^{n},
\ee
where we used that the summation over all $\bq$ yields a factor $N^{3p}$.
Since the number of $\bn=(n_1, n_2, \ldots, n_p)$
 with $|\bn| = n$ is bounded by $2^{n+p}$,
  the last term is bounded by 
\be
(C N (\log N)^{L/10})^{p} \sum_{n=0}^\infty   
 \bigg [ \frac { Cp  (\log N)^{L/10} } { \sqrt N } \bigg]^{n}.
\ee
Since $p\le (\log N)^{L/10}$ and $L\le \log N/\log \log N$, 
the sum of the tail terms with $n \ge  6 p$
 is bounded by $CN^{-5p/2}$, for sufficiently large $N$,
hence for
the bound \eqref{52}
 we only have to estimate terms with $n \le 6 p$.

We denote all independent  random variables by   $\bh = (h_\nu) $
and split them according to the set $Q$ (see \eqref{def:A}), i.e., 
we will write  $\bh= (\bh_1, \bh_2 )$
 with  $\bh_2 = (h_\nu:  \nu \in Q)$ and  $\bh_1 = (h_\nu:  \nu 
\not \in Q)$. Denote the corresponding projection by  $\pi_j, j=1, 2$, i.e.
$\pi_j \bh = \bh_j$.
Define
\be
(\Gamma^c)_1:= \pi_1(\Gamma^c), \qquad
 Y^c:=  \prod_{\nu \in Q} \{ h_\nu: |h_\nu| \le (\log N)^{L/10}
 |\sigma_{\nu}| \} \subset \C^Q, \qquad Y:=  \C^Q\setminus Y^c.
\ee
By definition, $G^{\sa}$ depends only on variables  $ \bh_1$.
 Furthermore, for any $\bh_1 \in (\Gamma^c)_1 $
there exists $\bh_2$ such that $\bh= (\bh_1, \bh_2 ) \in \Gamma^c$,
in particular, the estimates \eqref{gsa} hold for any $\bh_1 \in (\Gamma^c)_1 $.
By definition of $\Gamma^c$, we have 
\be\label{Ycont}
\Gamma^c \subset (\Gamma^c)_1  \times Y^c
\ee
and \eqref{omxi} holds in the set $(\Gamma^c)_1  \times Y^c$.
{F}rom the resolvent expansion, we have for $i \not = j$, 
and for $\bh_1\in (\Gamma^c)_1$,
\begin{align}\label{221}
G^{\sa }_{ij}(\bh_1) & =G^{\sa}_{ij} (\bh)=  
(H^{(\alpha)} - U^{\langle \alpha \rangle } - z)^{-1}_{ij} 
 = \sum_{n_\alpha=0}^\infty  \bigg [ (G^{(\al) }
 U^{\langle \alpha \rangle } )^{n_\alpha}G^{(\alpha) } \bigg ]_{ij}\\
 & =G^{(\al) }_{ij}+  \sum_{n_\alpha=1}^\infty  
\bigg [ (G^{(\al) } U^{\langle \alpha \rangle } )^{n_\alpha}G^{(\alpha) } \bigg ]_{ij}.
\end{align}
Using 
\be\label{y71}
 {\bf 1} (\Gamma^c) =  {\bf 1} ((\Gamma^c)_1) -   
  {\bf 1} \Big( (\Gamma^c)_1 \times Y^c  \setminus \Gamma^c \Big)
  -  {\bf 1} ((\Gamma^c)_1){\bf 1}( Y),
\ee
 we can rewrite $ \Phi_{\bq}^{\bf n}$ as
 \begin{align}\label{2131}
 \Phi_{\bq}^{\bf n}  &:=   \wt  \Phi_{\bq}^{\bf n} +X^\bn_{\bq,1}+X^\bn_{\bq,2}  \\
   \wt \Phi_{\bq}^{\bf n} & :=  \sum_{\bnu\in A(\bq,\bn)} 
 \wt \Phi_{\bq,\bnu}^{\bf n} \nonumber \\
 \wt \Phi_{\bq,\bnu}^{\bf n} & := \E {\bf 1} ((\Gamma^c)_1)    
  \prod_{\alpha=1}^{p}   V_\bq( \umu^\alpha, \unu^\alpha, n^\al)   \xi (\fq_\al)
\label{2111}  \\
X^\bn_{\bq,1}  & :=  - \E {\bf 1}((\Gamma^c)_1)  {\bf 1}(Y) \sum_{ \bnu}  
  \prod_{\alpha=1}^{p}   V_\bq( \umu^\alpha, \unu^\alpha, n^\al)   \xi (\fq_\al)  \\
X^\bn_{\bq,2}& : =  
 -  \E {\bf 1} \Big( (\Gamma^c)_1 \times Y^c  \setminus \Gamma^c \Big)   \sum_{ \bnu}  
  \prod_{\alpha=1}^{p}   V_\bq( \umu^\alpha, \unu^\alpha, n^\al)   \xi (\fq_\al).
\end{align}
 Analogously to \eqref{omxi}--\eqref{omm},
we can bound $X^\bn_{\bq,2}$ as follows 
\be\label{X1}
\frac{1}{N^p}\sum_\bq \sum_{n=0}^{6p}\sum_{|\bn|=n}
|X^\bn_{\bq,2}| \le (C p )^{6p}(N (\log N)^{L/10} )^{p} \P (\Gamma)
 \le  C\exp{\big[-c(\log N)^{\phi L} \big]},
\ee
using  the fact that the estimate
 \eqref{omxi} holds even on $ (\Gamma^c)_1 \times Y^c$
since all $G^{[\al]}$ appearing in $V_\bq$  depend
 only on $\{ h_\nu\; : \; \nu\not\in Q\}$.
 In the last step we used \eqref{Gammaest}, $n\le 6p \le 6(\log N)^{\phi L-2}\le 6 (\log N)^{L/10}$.
For the other error term we have
\be\label{X2}
\frac{1}{N^p}\sum_\bq \sum_{n=0}^{6p}\sum_{|\bn|=n}
|X^\bn_{\bq,1}| \le (C p )^{6p}(N (\log N)^{L/10} )^{p}  \, 
\exp{\big[-c(\log N)^{\psi L } \big]} \le  C\exp{\big[-c(\log N)^{\psi L } \big]}.
\ee
Here we have used that for a sufficiently large $L$, the integration of  $\bh_2$ over the set $Y$,
i.e. an $O(p)$-moment of the random variables $\bh_\nu$, $\nu\in Q$, in the
regime where $|h_\nu|\ge (\log N)^{L/10}\sigma_\nu$, 
 is bounded by  $ C\exp{\big[-c(\log N)^{\psi L} \big]}$ with some 
positive $\psi$, depending on $\ttau$ due to the subexponential decay 
\eqref{subexp}
 and due to the fact that
 $p\le (\log N)^{\psi L-2}$.
 In the estimate \eqref{X2} we also used that \eqref{gsa} holds on $(\Gamma^c)_1$ 
to estimate the $G^\sa$ factors remaining from the $V_\bq$ terms after
integrating out the random variables $h_\nu$, $\nu\in Q$.

Collecting the estimates from  \eqref{2131}, \eqref{X1} and \eqref{X2}, we 
have
\be
\frac{1}{N^p}  \sum_{\bq}   \Phi_{\bq}
\le  \frac{ 1 }{N^p}  \sum_{\bq}    \sum_{n=0}^{6p}  \sum_{ |{\bf n} | =n} 
\big| \wt \Phi_{\bq}^{\bf n}\big| + C\exp{\big[-c(\log N)^{\psi L } \big]}.
\label{222}
\ee
 The last error term can be absorbed into the $N^{-p}$ term in \eqref{52}
using that $p\le (\log N)^{\psi L -2}$.  
Hence we only have to estimate the contribution of $ \wt \Phi_{\bq}^{\bf n}$.
The key observation is that
\be
   \E_{h_\nu} {\bf 1}( (\Gamma^c)_1) \xi(\fq_\al) =0
\label{obs}
\ee
for any $\al=1,2,\ldots, p$ and for any $\nu\in Q $.
Furthermore, any resolvent $G^{[\al]}$
appearing explicitly in
\be\label{def:prodV}
\prod_{\alpha=1}^{p}   V_\bq( \umu^\alpha, \unu^\alpha, n^\al)
=  \prod_{\alpha=1}^{p} (-1)^{n_\al}G^{\sa }_{\mu^\alpha_1} h_{\nu^\alpha_1}
 G^{\sa }_{\mu^\alpha_2}h_{\nu^\alpha_2} \ldots 
 h_{\nu^\alpha_{n_\al}} G^{\sa }_{\mu^\alpha_{n_\alpha+1}} 
\ee
is independent of
any $h_\nu$, $\nu\in Q$. Therefore the expectation 
in \eqref{2111} is nonzero only if for each $\nu\in Q$, 
either $h_\nu$ (or its transpose $h_{\nu^t}$) 
appears explicitly in \eqref{def:prodV} or 
$h_\nu$ (or its transpose $h_{\nu^t}$)
 appears in two different $\xi(\fq_\al)$ factors
in \eqref{2111}. 
 The first scenario imposes restrictions on 
the indices of the two resolvents $ G^{\sa }$ neighboring $h_\nu$
in \eqref{def:prodV} and we will infer that some of these
resolvents must be off-diagonal that can be estimated by
 $\Lambda_o$. The second scenario restricts the total
combinatorics of the summation over the $\bq$ indices
in \eqref{222}, which gain can also be expressed 
as a power of $N^{-1/2}$. In the next step we
set up a graphical representation to effectively bookkeep
all possible situations.

\medskip

\subsubsection{Combinatorics}

Recall that  $\bq$ is a $3 \times p $ matrix
with $3p$ slots.
The  estimate of  $ \wt \Phi_{\bq}^{\bf n}$ defined
in the previous section depends on the structure 
of the indices $\bq = (q_\al^j)$, more precisely, it depends on which of
the indices $q_\al^j$ coincide. The relevant structure of
these coincidences will be encoded by a graph, $\cG(\bq)$, to be 
defined below. 
Roughly speaking (with some modifications specified below),
 the vertex set of  $\cG(\bq)$ 
will be the set of possible slots of the matrix $\bq$;
 two vertices $(j,\al)$ and $(i,\beta)$
are connected by an edge if the
corresponding indices coincide, $q_\al^j =q_\beta^i$.
Then the summation over $\bq$ in the right side of
 \eqref{222} will be performed 
in two steps: first we sum over all possible graphs, then 
we sum over all possible $\bq$'s compatible with this graph, i.e. we
write
\be\label{fubini} 
  \sum_\bq = \sum_{G} \sum_{\bq\, : \, \cG(\bq) = G},
\ee
where the first summation is over all graphs with at most $3p$ vertices.
In fact, only certain special graphs $G$ will be compatible
with a choice of indices $\bq$ that occur 
in our expansion and their number will be bounded by
$p^{Cp}$.

The reason for this resummation is that the
size of $\wt\Phi_\bq^\bn$ is essentially given 
by the number of off-diagonal resolvents  in the expansion \eqref{2111},
but considering only those terms which are not zero due to
the expectation (see \eqref{214} below). This number
can be estimated via the coincidence graph.

We now define the graph $\cG(\bq)$, describing the relevant coincidence
structure of $\bq$, by  performing the following four-step procedure.
Strictly speaking, the graph is defined on a subset of the $3p$
vertices (or slots in the matrix) labelled  by  
coordinates $(j, \alpha)$ with $1\le j \le 3$ and $1 \le \alpha \le p$.
We will say that a vertex $(j, \alpha)$  has the {\it value} 
$r$ if $q^j_\alpha = r$, in other words, the index $q^j_\al$ assigned
to the vertex  $(j, \alpha)$ will be sometimes also referred to
as the value of that vertex.  
 If it does not
lead to confusion,   we will often simply refer to $q^j_\alpha$ 
instead of the vertex $(j,\al)$, e.g. we will
 say that two indices,  $q^j_\alpha$
and $q_\beta^i$ are connected by an edge, meaning that the vertices $(j,\al)$
and $(i,\beta)$ are connected.

Let   $\ell (\bq)$ denote the number of different location indices, i.e., 
\be\label{elldef}
\ell=\ell(\bq) : = \big| \,   \{ q^1_\alpha:  1 \le \alpha \le p\} \, \big|,
\ee  
where $| \cdot|$ denotes the cardinality of the set, disregarding
multiplicity.
We group together all columns  with the same location indices;
the union of these columns will be called {\it group}.
Let $m_1, m_2, \ldots m_\ell$ denote the multiplicity of the groups, 
i.e., the number of columns with the same location indices. We clearly have
\be
   \sum_{s=1}^\ell m_s = p.
\label{msum}
\ee
We start with the matrix $\bq$ and perform the following operations
to obtain $\cG(\bq)$. In Step 1 and 2 we specify the vertex-set
of $\cG(\bq)$ by removing some of 
the original $3p$ vertices.
Step 3 and 4 specify the edges of $\cG(\bq)$.  After each step
we give an intutive explanation.

\begin{description}
\item [Step 1.] If $q^2_\al= q^3_\al$, we replace  $q^3_\al$ 
by $\ast$ and the vertex $(3,\al)$ will not be part of the graph $\cG(\bq)$.
In the matrix, we put a $\ast$ in its location.
We now call $q^2_\al$ a {\it duplex} and put a subscript $d$ to indicate it.

{\it Explanation:}  If $q^2_\al= q^3_\al$ then the
 two $h$ factors in $\xi(\fq_\al)$ are the same. This coincidence has to be treated
separately, since it does not automatically lead
to non-zero expectation due to $\E \xi(\fq_\al)=0$. It will thus
be easier
 to merge the vertices $(2, \alpha)$ and  $(3, \alpha)$
into one vertex.

\item [Step 2.] For $\al\ne \beta$ and any $i, j \in \{ 2, 3\}$
 we call the vertices
 $(j,\al)$ and $(i,\beta)$ (and the corresponding indices 
 $q^j_\al$ and  $q^i_\beta$)  {\it twin}
if  $ q^j_\al  =  q^1_\beta$ and $ q^i_\beta  =  q^1_\al$.
We now replace $q^j_\al $ and $q^i_\beta$ by $t$ to indicate 
 a twin but we do not make any change on  location index.
Vertices with $t$  will  not be part of the graph $\cG(\bq)$.
 Notice that by the restriction \eqref{rest},
 $q^j_\al \not = q^1_\al$ and thus 
$q^1_\al \not = q^1_\beta$, i.e., 
twins can only be formed in different groups, i.e.
in columns with different location indices.

{\it Explanation:} This is the situation where there is
a coincidence among the $h$ factors in two different 
$$
 \xi(\fq_\al) = h_{q^1_\al, q^2_\al}
  h_{q_\alpha^3, q_\alpha^1} -\delta_{q^3_\alpha,  q^2_\alpha}
  \sigma^2_{q^2_\alpha, q^1_\alpha}
\quad \mbox{and} \quad \xi(\fq_\beta) 
= h_{q^1_\beta, q^2_\beta}
  h_{q_\beta^3, q_\beta^1} -\delta_{q^3_\beta,  q^2_\beta}
  \sigma^2_{q^2_\beta, q^1_\beta},\qquad \al\ne \beta,
$$
 e.g. $ q^2_\al  =  q^1_\beta$ and $ q^2_\beta  =  q^1_\al$.
 Such coincidence results in
nonzero expectation with respect to $ h_{q^1_\al, q^2_\al}$
 {\it without forcing}  
 $ h_{q^1_\al, q^2_\al}$ to also appear somewhere in
the resolvent expansions, i.e. in  one of the $V_\bq$ factors
 in \eqref{def:prodV}. This means that $h_{q^1_\al, q^2_\al}$
may not generate an additional off-diagonal resolvent element.
We will remove such vertices from the graph 
to allow a more uniform treatment for the rest and
we will account for the twins  separately.

\item [Step 3.]  Two vertices are connected by an edge in
$\cG(\bq)$ if the indices assigned to them are the same,
{\it except} if both vertices are in the first row of the matrix.
I.e., edges connect vertices with identical indices, except that
 there is no edge between any two location indices.

{\it Explanation.} Since the location index plays a different
role than the two non-location indices, their
 possible coincidence 
 have separately been taken into account by the concept of groups.

 \item[Step 4.]   We add an edge between a duplex   $(q^2_{\alpha})_d$
 and its location index $q^1_{\alpha}$ if the multiplicity of 
 the group that the duplex belongs to  is one, i.e. if
the duplex is {\it isolated.}

{\it Explanation.} This is a purely technical convenience. Later
we will consider connected components of $\cG(\bq)$. Isolated duplex
will  be treated separately (see Case 1. below in the
proof of Proposition~\ref{prop:powercount}), but artificially
making the two vertices of a duplex into 
one connected component  will allow us to
simplify the argument of Lemma \ref{lemma:Nleq1}.

\end{description}

We remark that the number of different graphs arising in
via this procedure is bounded by $p^{Cp}$. This is because
$\cG(\bq)$ has the following special structure. Its vertices are
partitioned into equivalence classes
(according to the  common value of their indices)
and any  two vertices within an equivalence class
 are connected by an edge, unless
they are both location vertices. The  number of partitions
of the vertices is at most $p^{Cp}$.  Furthermore, there are
additional edges between duplexes and their location vertices
if the corresponding location index appears only once in $\bq$,
but the possible combinatorics of these additional edges
is at most a factor of $2^p$.

Having defined $\cG(\bq)$, 
 the next step is to   
 assign a {\it weight}  to all vertices as follows.

\begin{definition} [Weight of vertices and groups in $\cG(\bq)$]\label{def:weight}

\begin{itemize}
\item[(i)]  In a  group with multiplicity $m_s =1$
each vertex 
 has  weight  zero.

\item [(ii)] In a   group with multiplicity $m_s > 1$
we assign  a weight $1$ to each duplex in the group;
 all other non-location vertices in the group will have a  weight $1/2$.

\item [(iii)] The total weight of a group is the sum of weights
of its vertices.

\item[(iv)] The total weight  $W=W(\bq)$ of the graph is
the sum of the weights of all vertices.

\end{itemize}
\end{definition}

Clearly, 
the total weight of each group is at most $ m_s \le 2(m_s-1)$. Thus
the total weight of the graph satisfies, by \eqref{msum},
\be
 W \le  \sum_{s=1}^\ell
 2 (m_s-1) = 2(p -\ell). 
\label{totalweight}
\ee 

If all location indices are distinct, then all weights are zero.
In this case, each nonlocation index in $\cG(\bq)$ 
forces a new $h$ term in
$V_\bq$, see \eqref{def:prodV}; note that this statement  used
that twins are taken out of the graph.
 If some location indices coincide,
i.e. we have a group with multiplicity larger than one, then
the possible coincidences of non-location indices
 within the group may  yield non-zero expectation without
forcing a corresponding $h$ factor in $V_\bq$. This may
shorten the expansion  \eqref{def:prodV}, hence
reduce the total number of off-diagonal elements.
The weight measures the maximal reduction of off-diagonal
elements in \eqref{def:prodV} due to the larger multiplicity,
compared with the multiplicity one case.

\begin{definition}[Independent nonlocation indices] \label{131uu}
Denote by $N_{ind}$ the number of different nonlocation indices
that do not coincide with any location index i.e., 
\be\label{Nind}
N_{ind}=
N_{ind}(\bq) :=
 \bigg | \, \{ q^j_\alpha: 2 \le j \le 3, \;  1 \le \alpha \le p \}
 \setminus  \{ q^1_\alpha:  1 \le \alpha \le p \} \, \bigg |,
\ee 
where again $| \cdot|$ denotes the cardinality of the set, disregarding
multiplicity. The elements of this set will be called 
independent nonlocation indices.
\end{definition} 

 Note that $N_{ind}$ gives the actual number of different
$q^2_\al$ and $q^3_\al$ in the second sum in the right hand side of \eqref{fubini}.
Together with the number of groups $\ell$, i.e. the number of
different location indices, the number of terms in the $\sum_\bq$
summation will be bounded by $N^{N_{ind}+\ell}$.

\medskip

We show an example to illustrate this procedure and definitions. Let $p=13$ and 
\be
\bq=\left(\begin{array}{ccccccccccccc}1 & 2 & 3 & 3 & 4 & 4&  5 & 5&  5& 5 & 6 & 7&8
 \\10 & 1 & 2 & 9 & 7 &15 & 9& 9&   9 &9 & 2 & 4&14 
\\10 & 11 & 5 & 6& 7 &12 &9  & 9&13& 13  & 2& 12&14\end{array}\right)
\ee
Then after the first step, we get 
\be
\left(\begin{array}{ccccccccccccc}1 & 2 & 3 & 3 & 4 &4& 5& 5 &5&5 & 6 & 7&8 \\
 (10)_d & 1 & 2 & 9& 7_d&15& 9_d & 9_d&9&9 & 2_d & 4 &14_d\\ 
\ast & 11 & 5 & 6 &\ast &  12 & \ast 
& \ast&13&13 & \ast & 12& \ast \end{array}\right)
\ee
After the second step we have 
\be\label{ex3}
\left(\begin{array}{ccccccccccccc}1 & 2 & 3 & 3 & 4&4& 5 & 5&5&5 & 6 & 7&8 \\
 (10)_d & 1 & 2 & 9& t &15& 9_d  & 9_d&9&9  & 2_d & t &14_d\\ 
\ast & 11 & 5 & 6& \ast &12 & \ast & \ast &13&13& \ast & 12&\ast\end{array}\right)
\ee

In this example, the graph $\cG(\bq)$ will have 31 vertices,
identified with the slots of the matrix in \eqref{ex3} that contain numbers. 
The slots with stars and $t$'s do not count as vertex of $\cG(\bq)$.
The different  location indices are $1, 2, 3,4, 5,6,7,8$
and the different  non-location 
 indices are $9, 10, 11, 12, 13, 14, 15$, thus $\ell=8$, $N_{ind}=7$.
The multiplicity of the groups with different location
indices are $m_1=m_2=m_6=m_7=m_8=1$, $m_3= m_4=2$, $m_5=4$.
For simplicity, in this example, we chose $1, 2, 3,4, 5,6,7,8$
to be the eight different location indices and we used them
to label the groups as well. We also used the consecutive
seven numbers for non-location indices. In general, both the location 
and non-location indices can
be arbitrary numbers between 1 and $N$.

For brevity,
we will often use the index associated to  a vertex to refer to a vertex,
e.g., when we refer to the index $2_d$ in \eqref{ex3}, we really mean
the
vertex $(2, 11)$ since $q_{11}^2= 2_d$. This sometimes
 creates confusion (e.g., there are two vertices  $9$)
 and in that case, we will be specific. 

All vertices  with identical  indices are connected by an edge,
except that there is never an edge between
any two vertices in the first row.
Furthermore, there is an edge  between 
$2_d$ and $6$ (more precisely, between the vertices $(2, 11)$
and $(1,11)$); similarly for $14_d$ and $8$, but
there is no  edge between the  non-location indices $9_d$ 
and their location indices
 $5$ since they belong to a group with multiplicity bigger than one (four)
due to the four location indices 5.
 The vertices with $2,5, 9, 6$ (with common location index $3$)
the vertices with $12, 15$ (with location index $4$) 
and the two $9$'s and $13$'s 
(with common location index $5$) all receive a weight 
$1/2$. The weight of both $9_d$'s is $1$ and all other vertices
have weight zero.
 Notice that the index pair $(5, 9)$ appears twice but they are not twins
(there are no twins inside a group), similarly the two $(5, 9_d)$
are not twin indices.

We will consider connected components of this graph. Due to
the special rule involving duplexes, a connected component may contain
different indices, for example
\be
C= \{ (1,2), (2,3), (2,11),  (3,4), (1,11)  \}
\label{Cex}
\ee
is a connected component in \eqref{ex3},
since $q_2^1= q_3^2=q_{11}^2=2$, $q_4^3=q_{11}^1=6$ and
$q_{11}^1=6$ is connected
to $q_{11}^2=2_d$. With as slight abuse of notation, 
encoding the elements of $C$ only with the indices
$q_\al^i$ instead of the vertices $(i, \al)$ 
we can 
write $C =\{ 2\mbox{(loc.)}, 2, 2_d,  6, 6\mbox{(loc.)}  \} $,
where (loc.) refers to location index.
The list of all  connected components in \eqref{ex3} is
\begin{align}\nonumber
&\{1, 10_d, 1\}, \quad \{11\}, \quad  \{2 (\text{loc.}) ,
 2, 2_d, 6, 6 (\text{loc.})\}, \quad  \{3\}, \quad \{3\}, \quad \{4\},\quad \{4\},\quad \{7\}; \\  
 & \{5, 5 (\text{loc.}), 5 (\text{loc.}), 5 (\text{loc.}), 5 (\text{loc.}) \},\quad 
\{15\},\quad \{12, 12\},\quad  \{9_d, 9_d,9,9,9\},\quad \{13, 13\},\quad \{ 8, 14_d\},
\end{align}
using  the shorter and somewhat ambiguous index-notation.

\subsubsection{Estimates on the integrals}  

\medskip

We now estimate $  \wt \Phi_{\bq,\bnu}^{\bf n}  $ from \eqref{2111}. 
Let $  O=O(\bq, \bn,\bnu)$ be the number of the off-diagonal Green functions
 appearing in the  expansion of the right hand side of \eqref{2111}, i.e., in 
\be
\prod_{\alpha=1}^{p}   V_\bq( \umu^\alpha, \unu^\alpha, n^\al)\xi(\fq_\al)
=  \prod_{\alpha=1}^{p} (-1)^{n_\al}G^{\sa }_{\mu^\alpha_1} h_{\nu^\alpha_1}
 G^{\sa }_{\mu^\alpha_2}h_{\nu^\alpha_2} \ldots 
 h_{\nu^\alpha_{n_\al}} G^{\sa }_{\mu^\alpha_{n_\alpha+1}} \xi(\fq_\al)
\label{expans}
\ee
(see  \eqref{def:V} and \eqref{121}). 
 Define  
\be
 \wt\Lambda_o   = \max_{\alpha = 1, \ldots p; \, i \not = j }  \big | G^{\sa }_{ij} \big |  
\ee  
to be the maximum of the off-diagonal elements of the Green functions $G^{\sa }$. 
 Note that $\wt\Lambda_o$ is independent of the random variables $h_\nu$, $\nu\in Q$.
In particular, the bound $\wt\Lambda_o\le C/(\log N)^2\le 1 $
from \eqref{gsa} holds not only on $\Gamma^c$ but on $(\Gamma^c)_1$ as well. 
Then, with $O=O(\bq, \bn,\bnu)$, and using $n\le 6p$, we have 
\be\label{214}
\big|\wt  \Phi_{\bq}^{\bf n}\big|  = 
\Big| \E {\bf 1} ((\Gamma^c)_1)   \sum_{ \bnu}  
\prod_{\alpha=1}^{p}   V( \umu^\alpha, \unu^\alpha, n^\al)
   \xi (\fq_\al) \Big| 
\le  N^{-n/2-p}(Cp)^{Cp} \sum_{ \bnu}  \E \big[
{\bf 1}((\Gamma^c)_1)  \big ( \wt \Lambda_o \big )^{O}\big],
\ee 
where  for the expectation of the random variables $h_\nu$, $\nu\in Q$,
we have used estimate of the form 
\be\label{hhhhhh}
\E |h_1|^{a_1} \ldots |h_k|^{a_k} \le \big  ( Cm^CN^{-1/2})^m, \quad m:=\sum_{j} a_j 
\ee
for any $a_j$  nonnegative integers, where 
the constant $C$ depends only on $\ttau$.
The total number of $h$ factors appearing in \eqref{expans} is
$n_1+n_2+\ldots +n_p +2p = n+2p$, and \eqref{hhhhhh} shows that
their expectation can be bounded in terms of their total number 
$\sum_j a_j$ irrespective of the precise distribution of the individual
exponents $a_1, a_2, \ldots  a_k$. Thus $N^{-1/2}$ appears to the power
$n+2p$ in \eqref{214}.

 We also recall that the number 
of terms in the summation over $\bnu\in A(\bq,\bn)$
in \eqref{214}
 is bounded by $(4p)^n$, see 
remark below \eqref{213}.

Since we have $\tilde \Lambda_o\le 1$ on the set $(\Gamma^c)_1$, we also  
 have  the trivial estimate
$$
\wt  \Lambda_o^O \le N^{[p - O/2]_+}\big[\wt \Lambda_o^{2} + N^{-1}\big]^p
$$
where $[\;]_+$ denotes the positive part. 
Thus the main term in \eqref{222}
is estimated as  
\begin{multline}\label{sss}
 \frac{ 1 }{N^p}  \sum_{\bq}    \sum_{n=0}^{6p}  \sum_{ |{\bf n} | =n} 
\big| \wt \Phi_{\bq}^{\bf n} \big|  \\
\le  (Cp)^{Cp} \E\big[
{\bf 1}((\Gamma^c)_1)  \wt \Lambda_o^{2} + N^{-1}\big]^p 
\sum_G\sum_{\bq\, : \, \cG(\bq)=G}
 \sum_{n=0}^{6p}  \sum_{ |{\bf n} | =n}  \sum_{\bnu\in A(\bq,\bn)}
  N^{-2p-n/2+ [p-O/2]_+} 
{\bf 1}( \wt \Phi_{\bq,\bnu}^{\bf n}
\ne 0). 
\end{multline}
From \eqref{y71} we have the decomposition 
\be\label{y72}
 {\bf 1} ((\Gamma^c)_1) =  {\bf 1} (\Gamma^c) +
  {\bf 1} \Big( (\Gamma^c)_1 \times Y^c  \setminus \Gamma^c \Big)
  +  {\bf 1} ((\Gamma^c)_1){\bf 1}( Y).
\ee
Since $\wt\Lambda_o\le 1$ on the set $(\Gamma^c)_1$, the
 contributions from the sets $(\Gamma^c)_1 \times Y^c  \setminus \Gamma^c$ 
and  $(\Gamma^c)_1\times  Y$ can be estimated in the same way as in  \eqref{X1}, \eqref{X2}
 by  $C\exp{\big[-c(\log N)^{\psi L } \big]}$. Finally, we can use 
\[
\wt \Lambda_o^O \le 2  \Lambda_o^O
\]
on the set $\Gamma^c \subset (\Gamma^c)_1 \times Y^c$ (see \eqref{gsa})
and thus we can replace $ \E\big[
{\bf 1}((\Gamma^c)_1)  \wt \Lambda_o^{2} + N^{-1}\big]^p $ in \eqref{sss} by $ 2^p \E\big[
{\bf 1}( \Gamma^c)  \Lambda_o^{2} + N^{-1}\big]^p$ with a  negligible error
 $C\exp{\big[-c(\log N)^{\psi L } \big]}$.

By \eqref{elldef} and Definition \ref{131uu}, the total number of different 
summation indices $\bq$ in \eqref{sss} is 
$N_{ind} + \ell$. 
We will prove that
\be\label{236}
 2p+ n +  O \ge  2 N_{ind}+ 2\ell 
\ee
and
\be\label{2361}
  4p+ n \ge  2 N_{ind}+ 2\ell
\ee
hold for any $\bq$, $\bn$ and $\bnu$ for which 
$\wt \Phi_{\bq,\bnu}^{\bf n}\ne 0$.
Since the summations over $G$, $n$, $\bn$ and $\bnu$ give a factor
at most $p^{ C p}$,  
 these two inequalities imply that 
\eqref{sss} is bounded by the right hand side of \eqref{52}. This  proves Lemma \ref{motN}
assuming \eqref{236} and \eqref {2361}. 

\medskip

We  now we prove \eqref{236} and \eqref {2361}.  Recalling  the total weight
 of the graph $W$ satisfies
$W \le 2(p-\ell)$ by \eqref{totalweight},
 the inequality \eqref{236} is a consequence of the following
\begin{proposition}\label{prop:powercount}
 For any $\bq$, $\bn$ and $\bnu$ such that $\wt\Phi_{\bq,\bnu}^{\bf n}\ne 0$, 
we have
\be\label{237}
W (\bq) +  |\bn| +    O (\bq, \bn, \bnu) \ge  2   N_{ind}(\bq).
\ee
\end{proposition}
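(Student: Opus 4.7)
The plan is to establish the inequality $W(\bq) + |\bn| + O(\bq,\bn,\bnu) \ge 2 N_{ind}(\bq)$ by decomposing the right-hand side across the connected components of $\cG(\bq)$ and using the non-vanishing hypothesis $\wt\Phi^\bn_{\bq,\bnu}\ne 0$ to force a compensating gain on each component. The driving principle is the following: after integration against the $h$-variables $\bh_2=(h_\nu)_{\nu\in Q}$, which are independent of $G^{\sa}$ and pairwise independent up to the identification $h_\nu=\bar h_{\nu^t}$, non-vanishing of $\wt\Phi^\bn_{\bq,\bnu}$ forces the combined multiset of labels $\bigcup_\al Q_\al \cup \bigcup_\al \{\nu^\al_1,\ldots,\nu^\al_{n_\al}\}$ (each label identified with its transpose) to admit a pairing in which every equivalence class has multiplicity at least two. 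In particular, every label appearing in some $Q_\al$ that does not already pair with a label in another $Q_\beta$ via a twin pattern (already removed in Step 2 of the construction) must be matched by an entry of some $\unu^\beta$.

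I would next translate this matching requirement into a local statement on the vertices of $\cG(\bq)$. Fix an independent non-location index $r$; by Step 3 of the graph construction, all vertices with value $r$ form a clique with $k_r\ge 1$ elements. Consider three scenarios. First, if the vertices with value $r$ all lie in groups of multiplicity one and no duplex rule applies, then the $h$-factor in some $\xi(\fq_\al)$ whose endpoint is $r$ has no partner among the $\xi$-factors, so at least one entry of some $\unu^\beta$ must carry an endpoint equal to $r$; by the endpoint formula \eqref{121}, inserting such an $h$ between two $G^{\sa}$'s produces at least one additional off-diagonal resolvent (since $r$ differs from the surrounding indices $q^1_\beta, q^2_\beta, q^3_\beta$ in this scenario), contributing $n+O\ge 2$ units charged to $r$. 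Second, if $r$ occurs at a vertex belonging to a group of multiplicity $\ge 2$, Definition \ref{def:weight} assigns weight $\ge 1/2$ per such vertex, and within the clique at value $r$ a pair of matched $h$-endpoints at $r$ produces at least two vertices of weight $1/2$ (or a single duplex vertex of weight $1$ plus its partner), yielding $\ge 2$ units of weight charged to $r$. Third, the isolated-duplex case (handled by the special edge in Step 4) forces $n_\al\ge 2$ because the duplex $\xi$-factor $h_\nu^2-\sigma_\nu^2$ has mean zero with respect to $h_\nu$ alone and must be matched against $h_\nu^2$ or higher moments generated by the expansion.

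Summing these local contributions over all independent non-location indices yields $W+|\bn|+O\ge 2N_{ind}$, completing \eqref{237}. The companion bound \eqref{2361}, namely $4p+|\bn|\ge 2N_{ind}+2\ell$, I would deduce from the cruder count $2p+|\bn|\ge 2N_{ind}$ (the total number of $h$-endpoints is $2(2p+|\bn|)$, and each independent non-location index must be touched by at least two of them) combined with the trivial inequality $2p\ge 2\ell$ coming from $\ell\le p$.

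The main obstacle I anticipate is the careful bookkeeping for vertices whose value $r$ simultaneously occurs as a non-location index in one column and as a location index in another column, so that $r$ is excluded from $N_{ind}$ but still participates in the matching and may absorb some of the forced $h$-pairings. A related subtlety arises from the twin-removal of Step 2, which excises pairs of $h$-labels from the matching count but can alter the component structure; and the possibility that a single entry $\nu^\beta_j$ of $\unu^\beta$ is shared between two different $\xi$-endpoints (simultaneously matching two independent non-location indices), which could appear to double-count the gain. Both complications are best resolved by organizing the argument by group first and by equivalence class of index values second, with duplexes and twins appearing as special boundary subcases of the group-level count.
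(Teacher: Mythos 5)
Your overall strategy---decompose by connected components of $\cG(\bq)$, force pairings of $h$-factors from the non-vanishing of $\wt\Phi^\bn_{\bq,\bnu}$, and charge two units to each independent non-location index---is essentially the paper's plan, and your componentwise bound \eqref{2371} is what the paper proves. But the local bookkeeping in your three scenarios has concrete errors, and precisely the cases you call ``obstacles'' are where the real work happens.

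In your second scenario you assert that a group of multiplicity $\ge 2$ contributes ``$\ge 2$ units of weight charged to $r$''; but two vertices of weight $1/2$ give total weight $1$, not $2$, and a single duplex vertex of weight $1$ together with a location partner (weight $0$) again gives $1$. This understates the required combinatorics: when the two weight-$1/2$ vertices sharing value $r$ also share the same location index $q^1_\al = q^1_\beta$, the weight alone only contributes $1$, and one must additionally identify an off-diagonal Green function $G^\sa_{q^2_\al, f}$ (or $G^{[\beta]}_{q^2_\beta,f}$) at the extreme of the expansion to make up the deficit, exactly as in the paper's Case~4(1) when the location indices coincide. There is a parallel error in your third scenario: the isolated duplex does not always force $n_\al\ge 2$. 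Since the matrix entries are complex, $\E[(|h_\nu|^2-\sigma_\nu^2)\,h_\nu]$ need not vanish, so a single additional occurrence $n(C)=1$ of $h_\nu$ inside $V_\bq$ already gives a non-zero expectation; the deficit is then made up by a second off-diagonal resolvent (from the neighbouring matrix elements $G^\sbe_{f,q^1_\al}$ or $G^\sbe_{q^2_\al,f}$ that the inserted $h_\nu$ sits between). The paper's Case~1 tracks both subcases $(n\ge1,O\ge1)$ and $(n\ge 2)$ explicitly; your claim of $n\ge 2$ in all cases is false.

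A further missing ingredient is a justification that the local charges to different independent indices are disjoint, which you only describe as ``organizing the argument by group first and by equivalence class second.'' The paper resolves this by proving that each nontrivial connected component carries at most one independent non-location index (Lemma~\ref{lemma:Nleq1}); this is what makes the per-component inequality $W(C)+n(C)+O(C)\ge 2N_{ind}(C)\in\{0,2\}$ the right target and avoids any double-counting of edges, weights, or $h$-factors across indices in the same component. Without something of that nature, your charging scheme can overlap (e.g.\ a single $h_\nu$-insertion in $V_\bq$ whose two endpoints simultaneously touch two distinct values), and the final summation over $r$ does not go through. In short: your outline is the right skeleton, but the case distinctions that prevent the two-unit bound from failing (weight alone never suffices in the coinciding-location-index case, and $n\ge 1$ must be allowed for isolated duplexes) and the structural lemma that makes the local bookkeeping additive are both missing from your proposal.
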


{\it Proof.} We consider connected components $C$ of the graph $\cG(\bq)$.
If a connected component consists of
 only one location index, we call it {\it trivial}, and we
 will consider only non-trivial components. Nontrivial components
always contain at least one nonlocation vertex since location indices
are never connected directly by an edge. 
We will prove that \eqref{237} holds for each nontrivial connected components
and then we will sum these inequalities. 

To formulate the statement
precisely, we need a few notations. We will fix
 $\bq$, $\bn$ and $\bnu\in A(\bq,\bn)$;
all quantities in the following notations will depend on these parameters.

For each nontrivial connected component $C$ of $\cG(\bq)$, let $I_C$ denote 
the set of all nonlocation indices   appearing in $C$, i.e., 
\be
I_C := \{ q^i_\alpha:  (i, \al) \in C, \; i=2,3 \},
\ee
and for the purpose of $I_C$ we do not distinguish between
indices with or without 
a possible $d$ (duplex) subscript.
Let $L_C$ denote the set 
of all labels associated with  $C$ together 
with their transposes $\nu^t$, where  $ \nu^t = (q, p)$ if $\nu=(p, q)$, i.e., 
\be
L_C := \{ ( q^1_\alpha,  q^i_\alpha): (i, \alpha) \in C\} 
\cup \{ ( q^i_\alpha,  q^1_\alpha): (i, \alpha) \in C\}.
\ee
For example, $ L_C = \{
(2,3 ), (3,2), (6,2), (2,6), (3,6), (6,3) \} $
for  the connected component
$C$ from \eqref{Cex}.
Let 
$$
  n(C)= n(C; \bq, \bn, \bnu):
 = \sum_{\al=1}^p \sum_{m=1}^{n_\al} {\bf 1}( \nu_m^\al \in L_C)
$$
be the total number of $h_\nu$-factors 
with $\nu\in L_C$ appearing in the expansion \eqref{expans}
without the $h$ factors from $\prod_\al \xi(\fq_\al)$.
Finally,  we define $W(C)= W(C; \bq)$ as the total weight of the component $C$,
i.e. the sum of the weights of vertices in $C$.

The following key quantity will be used to count the number of offdiagonal
resolvent matrix elements appearing in the expansion.

\begin{definition}
For $\sigma\in I_C$, let  
$$
  2O(\sigma) := \sum_{\al=1}^p \sum_{m=1}^{n_\al+1}\Big[ {\bf 1}( [\mu_m^\al]_1 
=\sigma,  [\mu_m^\al]_2 \ne\sigma) + {\bf 1}( [\mu_m^\al]_2 
=\sigma,  [\mu_m^\al]_1 \ne\sigma) \Big],
$$
i.e., $2O(\sigma)$ is the number
 of times that $\sigma$   appears as one of the two indices of an
off-diagonal Green function
in the expansion \eqref{expans}.
 Let
\be
O(C) = O(C; \bq,\bn, \bnu): = \sum_{\sigma \in I_C} O(\sigma) 
\ee  
i.e., $2O(C)$ is the number of times that an index associated with $C$ appears
in an off-diagonal Green function in \eqref{expans}.
\end{definition} 

Note that we do not directly count the total 
number $O$ of off-diagonal resolvent matrix elements, 
we rather count how often a fixed non-location index contributes
to an off-diagonal Green function factor. 
In this way we can determine how much each non-location index
contributes to off-diagonal matrix elements and we
can perform our estimates for each component separately.

By definition of the edges in the graph,
 two different nontrivial components $C_1, C_2$ have disjoint 
sets of nonlocation indices; $I_{C_1}\cap I_{C_2}=\emptyset$. 
As a corollary,  the sets $L_C$ for different components are
 also disjoint since the twins are eliminated and for any fixed $\bq, \bn$ and $\bnu$ we have
\be
   \sum_C n(C; \bq,\bn,\bnu) \le |\bn| , \qquad \sum_C O(C; \bq,\bn,\bnu)\le
  O(\bq,\bn,\bnu),
\label{sumC}
\ee
where the summations are over all nontrivial connected components.
Strict inequality can happen as there are indices left out in twins.  
Moreover, we define 
$$
   N_{ind}(C) = N_{ind}(C;\bq):=
\bigg | \, \{ q^j_\alpha: 2 \le j \le 3, \;  1 \le \alpha \le p\; , \; 
(j,\al)\in C \}
 \setminus  \{ q^1_\alpha:  1 \le \alpha \le p \} \, \bigg | 
$$
to be the number of independent  nonlocation indices in the component $C$.
This is the same concept as $N_{ind}(\bq)$ defined in \eqref{Nind} but
restricted to a fixed component $C$.
We clearly have
\be
    \sum_C W(C) = W, \qquad \sum_C N_{ind} (C ; \bq) = N_{ind}(\bq).
\label{sumCC}
\ee
We will prove below that \eqref{237} holds in each nontrivial component $C$, i.e.
for $\wt\Phi_{\bq,\bnu}^{\bf n}\ne 0$, we have
\be\label{2371}
W (C;\bq) +  n(C; \bq,\bn,\bnu)  +   
 O (C; \bq, \bn, \bnu) \ge  2   N_{ind}(C;\bq).
\ee
then \eqref{237} will follow from  \eqref{sumC} and \eqref{sumCC}.

\bigskip

\begin{lemma} \label{lemma:Nleq1}
 Let $C$ be a nontrivial connected component
 of ${\cal G}(\bq)$. Then $N_{ind}(C) \le 1$.
\end{lemma}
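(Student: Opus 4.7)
My plan is to proceed by contradiction, assuming there are two distinct values $u_1 \ne u_2$, neither of which equals any top-row entry $q^1_\alpha$, both appearing on nonlocation vertices of $C$, and to derive a structural contradiction from the form of a path in $\cG(\bq)$ joining them.

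First I would isolate the rigidity of Step 4 edges. By Step 3, any edge of $\cG(\bq)$ joining vertices of different values must come from Step 4, and such an edge pairs a duplex vertex $(2,\alpha)$ of value $q^2_\alpha = q^3_\alpha$ with its location vertex $(1,\alpha)$ of value $q^1_\alpha$, subject to the essential requirement that the group of $q^1_\alpha$ is isolated (multiplicity one). Consequently every such edge has one endpoint which is a location value of multiplicity one, and that endpoint is supplied by a \emph{unique} column of $\bq$, in which the duplex partner is prescribed.

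With this in hand, I would contract each same-value cluster along a path in $C$ joining a vertex of value $u_1$ to one of value $u_2$, producing a sequence of distinct values $u_1 = w_0, w_1, \ldots, w_k = u_2$ with $k \ge 1$ in which every consecutive pair is joined by a Step 4 edge. The main step is a short induction showing that for each $i = 1, \ldots, k$ the value $w_i$ is a location value of multiplicity one, and that the unique column realizing $w_i$ as location has duplex equal to $w_{i-1}$. The base case $i=1$ uses that $u_1 = w_0$ is nonlocation, so it cannot play the location role on the edge $w_0 w_1$, forcing $w_1$ to be the isolated location side with duplex $u_1$. For the inductive step, the unique column realizing $w_i$ as location is already used with duplex $w_{i-1} \ne w_{i+1}$; since an isolated location value can generate only one Step 4 edge through its location side, $w_i$ must enter the edge $w_i w_{i+1}$ as a duplex, so $w_{i+1}$ must be the isolated location side, completing the induction.

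Applying the conclusion at $i=k$ shows that $u_2 = w_k$ is a location value, contradicting the choice of $u_2$ as a nonlocation value, and hence $N_{ind}(C) \le 1$. I expect the main delicacy in the write-up to be maintaining the distinction between the two senses of the word ``location'': a \emph{location value} (a value that appears somewhere in the top row of $\bq$) versus the \emph{location endpoint} of a Step 4 edge (the vertex $(1,\alpha)$); the rigidity of the alternating pattern is driven precisely by the fact that each isolated location value has only one column and can therefore occupy the location endpoint role at most once, whereas a duplex value is free to reappear across many columns.
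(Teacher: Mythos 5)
Your proof is correct and rests on the same two observations the paper uses: that the only edges of $\cG(\bq)$ joining vertices of different values are the Step~4 duplex--location edges, and that the multiplicity-one requirement on the location side makes each such value usable as an endpoint of at most one Step~4 edge from its location vertex. The paper packages the same argument by iteratively enlarging an alternating union $D \cup D_1 \cup D_2 \cup \cdots$ of same-value clusters and checking that each new cluster contributes no new independent nonlocation value, while you carry the equivalent information as an inductive invariant along a simple path in the value-quotient graph; the two are essentially the same proof.
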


\begin{proof}  Suppose that $C$ contains at least two
different independent nonlocation indices $q_\al^j\ne q_\beta^i$
and consider a path in $\cG(\bq)$
connecting their vertices $W_1=(j,\al)$ and $W_2=(i,\beta)$. 
Along this path there must be 
 two subsequent vertices whose indices are different. 
 Considering the
construction of $\cG(\bq)$, this can happen only
along an edge created by  the special rule in Step 4
in the definition of $\cG(\bq)$, i.e. there is a duplex
connected to its location vertex (any other edge connects
identical indices).  
For definiteness, we may choose the
notation $W_1$ and $W_2$ in such a way that along the
path from $W_1$ to $W_2$  the first special edge
created by Step 4 with different indices 
is reached at its non-location vertex
(duplex vertex), call it $U_1$. Clearly $U_1$ and $W_1$
have the same index.
Let now $E$ be the edge connecting    $U_1$ to its location vertex 
 $V_1 \in C$, then by the choice of $U_1$ the index of $V_1$ 
differs from that of $U_1$. 
Let $D$  be the set of all vertices with the same value as $U_1$
and let  $D_1$  be the set of all vertices with the same value as $V_1$,
then $D$ and $D_1$ are disjoint subsets of $C$.

We claim that apart from $V_1$, $D_1$ consists of  nonlocation vertices only.
 Suppose this is not the case. 
Then there is another location vertex $V_1'$ taking the same
 value as $V_1$.  But this implies that  $V_1$ and $V_1'$ 
belong to a group with multiplicity 
at least two. In this case, however, we did not connect the
 duplex $V$ to its location vertex and this leads to contradiction. 

The number of independent nonlocation indices in $D$ is exactly one,
namely the index of $W_1$.
The number of independent nonlocation indices in $D_1$ is zero
since they take  the same value as a location index.

Suppose that $D\cup D_1$ did not exhaust $C$.
In order that $D_1\cup D$ is connected to another vertex with 
a different value, once again, there must be an edge $E'$ connecting a duplex
 vertex to its location vertex; one of these two vertices
must be $D_1\cup D$, the other one must be in the complement.
We claim that the duplex is in $D_1\cup D$.
Indeed, the location vertex cannot be in $D_1\cup D$, 
since $D$ has no location vertex at all (otherwise the index of $U_1$ 
would not be independent) and $D_1$ has only  one location index, $V_1$,
that is already connected within $D\cup D_1$ to its duplex. 

Let  $U_2$ denote the duplex in  $D\cup D_1$ that is connected
to its location index $V_2\not\in D\cup D_1$ and let $D_2$ denote
the set of vertices  with the same value as $V_2$. As before, we
can establish that $D_2$ contains only non-location indices,
apart from $V_2$, and there is no independent nonlocation index in $D_2$.

If $D\cup D_1\cup D_2$ did not exhaust $C$, we continue the process
by defining new sets $D_3$, $D_4$, etc. until $C$ is
exhausted, but we never get a new independent nonlocation index.
This proves that $N_{ind}(C)\le1$.

\end{proof}

We  can start proving  \eqref{2371}. We fix the parameters
$\bq, \bn$ and $\bnu$ and omit them from the notation.
We  will distinguish the following
cases that clearly cover all possibilities.

\newcommand{\sbe}{{[\beta]}}

\begin{description}

\item[ Case 1.]  $C$ consists  of
 a duplex $(q^2_\alpha)_d$ and 
 its location index $q^1_\alpha$. 

Setting $\nu:=(q_\al^1, q_\al^2)$, we know,
in particular, that  $h_\nu$ or $h_{\nu^t}$
do not appear in any other $\xi(\fq_\beta)$, $\beta\ne \al$
since $C$ is an isolated component, not connected 
to any other vertices.
Then,  by the observation made in \eqref{obs},
  $h_\nu$ (or  $h_ {\nu^t}$)
must explicitly appear in \eqref{def:prodV} and it
clearly  must appear in one of the following ways, with some $\beta\ne \al$,
\begin{align} 
(1):&  \qquad   G_{f_1, q^1_\alpha}^\sbe   h_ {q^1_\alpha  q^2_\alpha}
 G_{q^2_\alpha, f_2}^\sbe,
 \quad \mbox{or} \quad  G_{f_2, q^2_\alpha}^\sbe   h_ {q^2_\alpha  q^1_\alpha}
 G_{q^2_\alpha, f_2}^\sbe, \quad f_i \not = q^i_\alpha \\
(2):&     \qquad  h_ {q^1_\alpha  q^2_\alpha} G_{q^2_\alpha, q^2_\alpha}^\sbe \
 h_ {q^2_\alpha  q^1_\alpha}, \quad \mbox{or} \quad
 h_ {q^2_\alpha  q^1_\alpha} G_{q^1_\alpha, q^1_\alpha}^\sbe  h_ {q^1_\alpha  q^2_\alpha}. 
\end{align}
The main reason why only one of these  possibilities
 occurs is because the indices $q^i_\al, i = 1, 2$, appear only in $C$. 
So  either (1) both Green functions neighboring $ h_\nu$ (or $h_{\nu^t}$)
are off-diagonal, or (2)  either of the neighboring Green function is diagonal.
In the latter case, however, the expansion must continue on the
other side of this diagonal Green function with another factor
 $ h_ {q^1_\alpha  q^2_\alpha}$ (or $h_ {q^2_\alpha  q^1_\alpha}$).
The reason for this last statement is
 that the expansion cannot start or terminate with 
a diagonal Green function of the form $ G_{q^1_\alpha, q^1_\alpha}^\sbe$
or $ G_{q^2_\alpha, q^2_\alpha}^\sbe$ since that would entail that $q_\al^1$ (or
$q_\al^2$) equals to $q^2_\beta$ or $q^3_\beta$, which would mean
that $C$ contained other elements as well.

In the first  case, $n(C) \ge 1$ and we have identified two indices of 
off-diagonal Green functions associated with $q^2_\alpha$, i.e. $O(C)\ge 1$.
In the second case, 
we find that $h_\nu$ or $h_{\nu^t}$ appear altogether
 twice and hence  $n(C) \ge 2$. Since $N_{ind}(C)=1$ in this case,
we have thus proved that in both cases 
\be\label{234}
W(C)+  n(C) +  O(C) \ge   2= 2  N_{ind}(C) .
\ee
Notice that we did not use weight $W(C)$ here.

\item[ Case 2.] $C$ is  an isolated non-duplex vertex.

Since $C$ is nontrivial, we can assume that
$C$ consists of a single vertex $(2, \al)$ 
(the case  of $(3, \al)$ is identical). Let $\nu:=(q_\al^1, q_\al^2)$.
Consider the expansion 
of $G^\sa$, see \eqref{def:V}. The first and the last
Green functions in this expansion will be called {\it extreme}
Green functions; if $n_\al=0$, then the single Green function
$G^{(q_\al^1)}_{q_\al^2 q_\al^3}$ will be called extreme. 
Since this expansion contains  $h_\mu$ factors only with
$\mu\in Q^{(\al)}$ and $q_\al^2\ne q_\beta^i$ for any $\beta\ne\al$
(since $C$ is an isolated vertex),
 thus $q_\al^2$
cannot appear as an index of any $h_\mu$. Then the first Green
function in  \eqref{def:V} must be of the form  $G_{q^2_\al, f}^\sa$ with some
 $f \not = q^2_\al$, i.e. it
must be off-diagonal, thus $O(C)\ge \frac{1}{2}$.
 Furthermore, $h_\nu$ or $h_{\nu^t}$ must appear as 
\begin{align} 
(1):&     \qquad h_ {q^1_\alpha  q^2_\alpha} G^\sbe_{q^2_\alpha, f} \quad \mbox{or}
\quad G^\sbe_{f,  q^2_\alpha}  h_ {q^2_\alpha  q^1_\alpha},
\quad  f \not = q^2_\alpha \\
(2):&     \qquad  h_ {q^1_\alpha  q^2_\alpha} G^\sbe_{q^2_\alpha, q^2_\alpha}
  h_ {q^2_\alpha  q^1_\alpha} \quad \mbox{or} \quad
   h_ {q^2_\alpha  q^1_\alpha} G^\sbe_{q^1_\alpha, q^1_\alpha}
  h_ {q^1_\alpha  q^2_\alpha}
\end{align}
In the first  case (1), we have identified another  index
  of off-diagonal Green function associated with $q^2_\alpha$,
so $O(C)\ge 1$ and $n(C)\ge 1$.
In the second case (2), 
we find that $h_\nu$ and $h_{\nu^t}$ appears altogether twice 
 and thus $n(C) \ge 2 $. In both cases
 we have proved \eqref{2371} since $N_{ind}(C) = 1$. Again, the weight 
$W(C)$ was not used.

\item[Case 3.] $C$ has only one non-location vertex, $(i, \al)$, $i=2,3$ 
and at least one location vertex $(1,\beta)$ with $\beta\ne\al$.  

In this case the non-location index $q_\al^i$ is equal to a location index,
hence $N_{ind}(C)=0$ and \eqref{2371} is obvious.

\item[ Case 4.] $C$  has more than one  non-location vertex.

Suppose the weight
 of a non-location vertex  $ (2,\alpha)$ in $C$ is zero. Then 
$h_ {q^1_\alpha  q^2_\alpha}$ (or $h_ {q^2_\alpha  q^1_\alpha}$)
  must appear in \eqref{expans} (apart from the $\xi$ factors)
  and thus it contributes to $n(C)$
 by one. Here we are using the following reason: 
\begin{description}
\item[$(\dagger)$]  If  $h_ {q^1_\alpha  q^2_\alpha}$   and $h_ {q^2_\alpha  q^1_\alpha}$
  appear in  $\prod_\beta \xi(\fq_\beta)$ at least twice, 
then either  $(2,\al)$ is a twin vertex or the multiplicity of the group
 containing $(2,\al)$  is more than one. 
\end{description}

Both cases contradict  our definitions; twins are
not part of $\cG(\bq)$, and non-location vertices in 
groups with higher multiplicity have nonzero weight.
 But if  $h_ {q^1_\alpha  q^2_\alpha}$ and $h_ {q^2_\alpha  q^1_\alpha}$
 appear only once in  $\prod_\beta \xi(\fq_\beta)$  (namely, only in
the factor $\xi(\fq_\al)$), then at least one of them
 need to appear at least one more times in \eqref{expans} 
 to make the expectation nonzero. 

Hence if we have at least 
two weight zero non-location vertices in $C$, then $n(C)\ge2$  and \eqref{2371} holds.
Note that each of these two vertices contribute to  $n(C)$ by one,
since together with their own location vertex they
must form two different labels,
otherwise they would be part of a twin or a group with multiplicity 
at least 1 and their weight  would not be zero. We can also
 assume that the total weight $W(C)$ is less than $2$ or, 
if there is a weight zero non-location vertex, hence $n(C)\ge 1$, then
 the total weight is at most $W(C)\le 1/2$. In all other cases
\eqref{2371} follows trivially from $N_{ind}(C)\le1$.

So we only have to consider the following remaining cases:

\begin{enumerate} 
\item  {\it  The non-location vertices of $C$  consist of exactly two weight $1/2$ vertices
$v_1, v_2$.}

First notice that these two vertices  must have the 
same index.
Otherwise they could be in the same connected component only
if  one of them, say $v_2$, would be
equal to a duplex $(q_\beta^2)_d$ with some $\beta\ne\al$
where $v_1=q_\al^j$ ($j\in \{2,3\}$),
and this duplex would belong to a group with
multiplicity one (a connecting edge between vertices
with different indices can be provided only via
a special edge from Step 4. between a duplex and its location vertex
and only if the corresponding group has multiplicity one).
But in this case the weight of the non-location vertex $(2,\beta)$ in $C$
would be zero by (i) of Definition~\ref{def:weight}.

Thus the two vertices $v_1, v_2$ cannot be in the same
column of the matrix (otherwise they formed a duplex),
so without loss of
generality we can assume that they are of the form
$(2,\al)$ and $(2,\beta)$ with $\al\ne \beta$
and  we know that $ q^2_\alpha =   q^2_\beta$.

Consider first the case  $ q^1_\alpha\ne  q^1_\beta$.
 By the fact that the common value  $ q^2_\alpha =   q^2_\beta$
appears only twice in $C$, both
factors
  $h_ {q^1_\alpha  q^2_\alpha}$ and $h_ {q^1_\beta  q^2_\beta}$ (or their transposes) 
have to appear in \eqref{expans}. 
 Thus  $n(C) \ge 2$ and  \eqref{2371} holds.

Finally, consider the case   $ q^1_\alpha=  q^1_\beta$.  Since  
$ q^2_\alpha$ and $   q^2_\beta$ have 
weight $1/2$, they are not duplex. 
By construction, we have to expand the Green function $G^{(q^1_\al)}_{q^2_\al, q^3_\al}$.
 Since  $q^2_\al \not = q^3_\al$, 
in the expansion \eqref{expans}, the first Green function  $G^{\sa }_{\mu^\alpha_1}$
 is off-diagonal (otherwise the beginning of the
 expansion were $G^\sa_{q^2_\al, q^2_\al} h_{q^2_\al q^1_\al}\ldots$, but $h_{q^2_\al q^1_\al}$
cannot appear in the expansion of $G^{(q^1_\al)}_{q^2_\al, q^3_\al}$).
Hence  $q^2_\al$  appears  as an index of an extreme  off-diagonal Green function.  
Similar statement holds for $q^2_\beta$. Hence we have identified two indices of 
 off-diagonal Green functions  
associated with $C$  so that  $O(C) \ge 1 $  and  together with $W(C)\ge 1$
we obtain that \eqref{2371} holds.

\item {\it The non-location vertices of $C$ consist  of exactly one weight 1/2  vertex,
and one weight $1$  vertex.}

Since the weight $1$ vertex is a duplex, these two vertices
cannot be in the same column of $\bq$. Without loss of generality, let $(2,\al)$ be the
weight 1/2 vertex and let $(2,\beta)_d$ be the
 weight $1$ vertex, $\al\ne\beta$.
We can consider two cases:  $ q^1_\alpha \not =   q^1_\beta$ and
 $ q^1_\alpha  =   q^1_\beta$. As before, for the first case, $n(C) \ge 1$. For the 
second case,  $q^2_\al$ cannot appear as an index of any  $h_\nu$
in any other $\xi(\fq_\gamma)$ for $\gamma\ne \al,\beta$ since $C$ consist of exactly
two columns, namely the columns $\alpha$ and $\beta$.
Thus $h_{q_\al^1,q_\al^2}$ or its transpose must appear in the expansion
of  $G^{[\al]}$ and therefore  we can find   $q^2_\al$ as one of the indices
of an extreme off-diagonal Green function. Hence
we have $O(C) \ge 1/2$ in the second case. Since $W(C)\ge 3/2$, we obtain in both cases that
 \eqref{2371} holds.

\item  {\it The non-location vertices of $C$ consist  of exactly one weight $1/2$ 
 vertex  one  weight zero vertex.}

Since the two vertices have different weights, they are in different
columns of the matrix. Without loss of generality, we can
assume that the weight 1/2 vertex is $(2,\al)$ and the weight zero vertex
is $(2,\beta)$ with $\al\ne \beta$.
 In this case, both $h_ {q^1_\alpha  q^2_\alpha}$ and $h_ {q^1_\beta  q^2_\beta}$
(or their transposes) 
 have to appear in the expansion, thus $n(C)\ge 2$  and \eqref{2371} holds.

\item {\it The non-location vertices of $C$ consist  of exactly three weight $1/2$ vertices.}

Similar arguments as in the first case, we can show that these three
vertices are in different columns and we can thus assume that they
 are of the form $(2,\al)$, $(2, \beta)$ and $(2,\gamma)$
with different $\al, \beta,\gamma$. If $q_\al^1=q_\beta^1=q_\gamma^1$, then
 $q^2_\al$  appears  as an index of an extreme  off-diagonal Green function
in the expansion of $G^{(q^1_\al)}_{q^2_\al, q^3_\al}$ and $O(C)\ge 1/2$. On the other hand,
if one of the three location indices, say $q_\al^1$, differed from the other
two, then $h_ {q^1_\alpha  q^2_\alpha}$  (or its transpose) have to appear in 
the expansion and $n(C)\ge 1$. In either case, together with $W(C)\ge 3/2$,
we obtain  \eqref{2371}.

\end{enumerate} 

 \end{description} 

The main reason of the previous proof is that any weight $1/2$
 vertex either associated with an index of an extreme 
 off-diagonal Green function or there is an $h$ factor associated with it. 
We have thus proved Proposition \ref{prop:powercount} \qed
\bigskip

Finally, we have to prove the inequality \eqref{2361}. 
Let $d$ denote the number of duplexes.
 Let
$a_1$ be the number of nontrivial components $C$ 
that contain only one non-location vertex
and let $a_2$ be the number of nontrivial components $C$ that contain
at least two nonlocation vertices. 
Since by Lemma \ref{lemma:Nleq1} we have 
 $N_{ind} = \sum_{C} N_{ind}(C) \le a_1+a_2$ 
and obviously $\ell \le  p$, it is sufficient
to show that 
$$ 
   2p + n \ge  2(a_1+a_2) .
$$
Since we there are  $2p-d$ nonlocation vertices, we have
$2p -d \ge a_1 + 2a_2$, thus it is sufficient to show that
$n+d\ge a_1$. But each component with a single non-location vertex,
say $(2, \al)$, is either a duplex
or it gives rise to a factor $h_{q_\al^1 q_\al^2}$ (or its
transpose) that must appear in the expansion, hence it
contributes to $n$. This shows  \eqref{2361}
and this completes the proof of Lemma \ref{motN}. \qed

\thebibliography{hhhhh}

\bibitem{Alon}  Alon, N.; Krivelich, M.; Vu, V.: On the concentration
of eigenvalues of random symmetric matrices.
{\it Israel J. Math.} {\bf 131} (2002), 259-267.

\bibitem{AGZ}  Anderson, G., Guionnet, A., Zeitouni, O.:  An Introduction
to Random Matrices. Studies in advanced mathematics, {\bf 118}, Cambridge
University Press, 2009.

\bibitem{AZ} Anderson, G.; Zeitouni, O. : 
 A CLT for a band matrix model. {\it Probab. Theory Related Fields} {\bf 134} (2006), no. 2, 283--338.

\bibitem{ABP} Auffinger, A., Ben Arous, G.,
 P\'ech\'e, S.: Poisson Convergence for the largest eigenvalues of
heavy-taled matrices.
{\it  Ann. Inst. Henri Poincar??? Probab. Stat.}
{\bf 45}  (2009),  no. 3, 589--610. 

\bibitem{BMT} Bai, Z. D., Miao, B.,
 Tsay, J.: Convergence rates of the spectral distributions
 of large Wigner matrices.  {\it Int. Math. J.}  {\bf 1}
  (2002),  no. 1, 65--90.

\bibitem{BP} Ben Arous, G., P\'ech\'e, S.: Universality of local
eigenvalue statistics for some sample covariance matrices.
{\it Comm. Pure Appl. Math.} {\bf LVIII.} (2005), 1--42.

\bibitem {BBP} Biroli, G., Bouchaud,J.-P.,
 Potters, M.: On the top eigenvalue of heavy-tailed random matrices,
{\it Europhysics Letters}, {\bf 78} (2007), 10001.

\bibitem{BI} Bleher, P.,  Its, A.: Semiclassical asymptotics of 
orthogonal polynomials, Riemann-Hilbert problem, and universality
 in the matrix model. {\it Ann. of Math.} {\bf 150} (1999), 185--266.

\bibitem{BH} Br\'ezin, E., Hikami, S.: Correlations of nearby levels induced
by a random potential. {\it Nucl. Phys. B} {\bf 479} (1996), 697--706, and
Spectral form factor in a random matrix theory. {\it Phys. Rev. E}
{\bf 55} (1997), 4067--4083.

\bibitem{De1} Deift, P.: Orthogonal polynomials and
random matrices: a Riemann-Hilbert approach.
{\it Courant Lecture Notes in Mathematics} {\bf 3},
American Mathematical Society, Providence, RI, 1999

\bibitem{De2} Deift, P., Gioev, D.: Random Matrix Theory: Invariant
Ensembles and Universality. {\it Courant Lecture Notes in Mathematics} {\bf 18},
American Mathematical Society, Providence, RI, 2009

\bibitem{DKMVZ1} Deift, P., Kriecherbauer, T., McLaughlin, K.T-R,
 Venakides, S., Zhou, X.: Uniform asymptotics for polynomials 
orthogonal with respect to varying exponential weights and applications
 to universality questions in random matrix theory. 
{\it  Comm. Pure Appl. Math.} {\bf 52} (1999):1335--1425.

\bibitem{DKMVZ2} Deift, P., Kriecherbauer, T., McLaughlin, K.T-R,
 Venakides, S., Zhou, X.: Strong asymptotics of orthogonal polynomials 
with respect to exponential weights. 
{\it  Comm. Pure Appl. Math.} {\bf 52} (1999): 1491--1552.

\bibitem{DPS} Disertori, M., Pinson, H., Spencer, T.: Density of
states for random band matrices. {\it Commun. Math. Phys.} {\bf 232},
83--124 (2002)

\bibitem{Dy} Dyson, F.J.: A Brownian-motion model for the eigenvalues
of a random matrix. {\it J. Math. Phys.} {\bf 3}, 1191-1198 (1962)

\bibitem{ESY1} Erd{\H o}s, L., Schlein, B., Yau, H.-T.:
Semicircle law on short scales and delocalization
of eigenvectors for Wigner random matrices.
{\it Ann. Probab.} {\bf 37}, No. 3, 815--852 (2009)

\bibitem{ESY2} Erd{\H o}s, L., Schlein, B., Yau, H.-T.:
Local semicircle law  and complete delocalization
for Wigner random matrices. {\it Commun.
Math. Phys.} {\bf 287}, 641--655 (2009)

\bibitem{ESY3} Erd{\H o}s, L., Schlein, B., Yau, H.-T.:
Wegner estimate and level repulsion for Wigner random matrices.
{\it Int. Math. Res. Notices.} {\bf 2010}, No. 3, 436-479 (2010)

\bibitem{ESY4} Erd{\H o}s, L., Schlein, B., Yau, H.-T.: Universality
of random matrices and local relaxation flow.
{\it Invent. Math.} {\bf 185} (2011), no.1, 75--119.

\bibitem{EPRSY}
Erd\H{o}s, L.,  P\'ech\'e, G.,  Ram\'irez, J.,  Schlein,  B.,
and Yau, H.-T., Bulk universality 
for Wigner matrices. 
{\it Commun. Pure Appl. Math.} {\bf 63}, No. 7,  895--925 (2010)

\bibitem{ESYY} Erd{\H o}s, L., Schlein, B., Yau, H.-T., Yin, J.:
The local relaxation flow approach to universality of the local
statistics for random matrices. 
To appear in Annales Inst. H. Poincar\'e (B),  Probability and Statistics.
Preprint arXiv:0911.3687

\bibitem{EYY} Erd{\H o}s, L.,  Yau, H.-T., Yin, J.: 
Bulk universality for generalized Wigner matrices. 
To appear in Prob. Theor. Rel. Fields.  Preprint arXiv:1001.3453

\bibitem{EYY2}  Erd{\H o}s, L.,  Yau, H.-T., Yin, J.: 
Universality for generalized Wigner matrices with Bernoulli
distribution.  {\it J. of Combinatorics}, {\bf 1} (2011), no. 2, 15--85

\bibitem{gui}  Guionnet, A.:
Large deviation upper bounds
and central limit theorems for band matrices,
{\it Ann. Inst. H. Poincar\'e Probab. Statist }
{\bf 38 }, (2002), 341-384.

\bibitem{GEG1} Gustavsson, J.: Gaussian Fluctuations of Eigenvalues in the GUE, 
{\it Ann. Inst. H. Poincar\'e Probab. Statist}. {\bf 41} (2005), no. 2, 151-178

\bibitem{J} Johansson, K.: Universality of the local spacing
distribution in certain ensembles of Hermitian Wigner matrices.
{\it Comm. Math. Phys.} {\bf 215} (2001), no.3. 683--705.

\bibitem{J1} Johansson, K.: Universality for certain hermitian Wigner
matrices under weak moment conditions. Preprint 
{arxiv.org/abs/0910.4467}

\bibitem{M} Mehta, M.L.: Random Matrices. Academic Press, New York, 1991.

\bibitem{GEG2}  O'Rourke, S.: Gaussian Fluctuations of Eigenvalues in Wigner Random Matrices, 
{\it J. Stat. Phys.}, {\bf 138} (2009), no.6., pp.1045-1066

\bibitem{PS} Pastur, L., Shcherbina M.:
Bulk universality and related properties of Hermitian matrix models.
{\it J. Stat. Phys.} {\bf 130} (2008), no.2., 205-250.

\bibitem{P1}
P\'ech\'e, S., Soshnikov, A.: On the lower bound of the spectral norm 
of symmetric random matrices with independent entries. 
 {\it Electron. Commun. Probab.}  \textbf{13}  (2008), 280--290.

\bibitem{P2}
P\'ech\'e, S., Soshnikov, A.: Wigner random matrices with non-symmetrically
 distributed entries.  {\it J. Stat. Phys.}  \textbf{129}  (2007),  no. 5-6, 857--884.

\bibitem{Ruz}
Ruzmaikina, A.: Universality of the edge distribution of eigenvalues of Wigner random 
matrices with polynomially decaying distributions of entries,
 {\it Comm. Math. Phys.} {\bf 261} (2006), no. 2, 277--296.

\bibitem{SS} Sinai, Y. and Soshnikov, A.: 
A refinement of Wigner's semicircle law in a neighborhood of the spectrum edge.
{\it Functional Anal. and Appl.} {\bf 32} (1998), no. 2, 114--131.

\bibitem{So1} Sodin, S.: The spectral edge of some random band matrices. 
{\it Ann. of Math.} {\bf 172} (2010), No. 3, 2223-2251

\bibitem{So2} Sodin, S.: The Tracy--Widom law for some sparse random matrices. 
{\it J. Stat. Phys.} {\bf 136} (2009), no. 5, 834-841

\bibitem{Sosh} Soshnikov, A.: Universality at the edge of the spectrum in
Wigner random matrices. {\it  Comm. Math. Phys.} {\bf 207} (1999), no.3.
 697-733.

\bibitem{Sosh2}  A. Soshnikov, Poisson statistics for the largest eigenvalues of 
Wigner matrices with heavy
tails, {\it Elect. Commun. in Probab.}, {\bf 9}  (2004), 82 - 91

\bibitem{Spe} Spencer, T.: Review article on random band matrices. Draft in
preparation.

\bibitem{TV} Tao, T. and Vu, V.: Random matrices: Universality of the 
local eigenvalue statistics, to appear in {\it Acta Math.}, 
 Preprint. arXiv:0906.0510. 

\bibitem{TV2} Tao, T. and Vu, V.: Random matrices: Universality 
of local eigenvalue statistics up to the edge. Preprint. arXiv:0908.1982

\bibitem{TV4} Tao, T. and Vu, V.: Random  matrices: 
localization of the eigenvalues and the necessity of four moments.
Preprint. arXiv:1005.2901

\bibitem{TW}  C. Tracy, H. Widom, Level-Spacing Distributions and the Airy Kernel,
{\it Comm. Math. Phys.} {\bf 159} (1994), 151-174.

\bibitem{TW2}   C. Tracy, H. Widom, On orthogonal and symplectic matrix ensembles,
{\it Comm. Math. Phys.} {\bf 177} (1996), no. 3, 727-754.

\bibitem{W} Wigner, E.: Characteristic vectors of bordered matrices 
with infinite dimensions. {\it Ann. of Math.} {\bf 62} (1955), 548-564.

\end{document}